\documentclass[10pt]{article}
\usepackage[english]{babel}
\usepackage{listings}
\usepackage{rotating}
\usepackage[T1]{fontenc}
\usepackage{amssymb,amsmath,amsfonts,amsthm,scrextend}
\usepackage{mathrsfs}
\usepackage{enumitem}
\usepackage{stmaryrd}
\usepackage{graphics}
\usepackage{graphicx}
\usepackage{subfig}
\usepackage{fullpage}
\usepackage{hyperref}
\usepackage{appendix}
\usepackage{dsfont}
\usepackage{mdframed}
\usepackage{mathtools}
\usepackage{algorithm}
\usepackage{algpseudocode}
\usepackage{scalerel}[2016/12/29]
\usepackage{multicol}
\usepackage{makecell}
\usepackage{tikz}
\usetikzlibrary{calc}

\algrenewcommand\algorithmicrequire{\textbf{Input:}}

\theoremstyle{definition}
\newtheorem{Def}{Definition}[section]
\theoremstyle{plain}
\newtheorem{Thm}[Def]{Theorem}
\newtheorem{Prop}[Def]{Proposition}
\newtheorem{Corol}[Def]{Corollary}
\newtheorem{Lemma}[Def]{Lemma}

\newtheorem{Pb}[Def]{Problem}

\theoremstyle{remark}
\newtheorem{Ex}[Def]{Example}
\newtheorem{Rq}[Def]{Remark}

\SetSymbolFont{stmry}{bold}{U}{stmry}{m}{n}

\newcommand{\ind}[1]{\mathds{1}_{#1}}
\newcommand{\tm}{\text{-}}

\DeclareMathOperator{\ev}{ev}
\DeclareMathOperator{\fibresp}{fs}
\newcommand{\gbc}[3]{\begin{bmatrix}#2\\#1\end{bmatrix}_{\!#3}}
\DeclareMathOperator{\GL}{GL}
\DeclareMathOperator{\im}{im}
\newcommand{\qdeg}{q\text{-}\!\deg}
\DeclareMathOperator{\rank}{rank}
\DeclareMathOperator{\rk}{rk}
\newcommand{\sgbc}[3]{\left[\begin{smallmatrix}#2\\#1\end{smallmatrix}\right]_{#3}}
\DeclareMathOperator{\slicesp}{ss}
\DeclareMathOperator{\Span}{Span}
\DeclareMathOperator{\Supp}{Supp}
\DeclareMathOperator{\tr}{tr}
\DeclareMathOperator{\trank}{trank}
\newcommand{\wt}{\textup{w}}

\newcommand{\F}{\mathbb{F}}
\newcommand{\N}{\mathbb{N}}
\newcommand{\Z}{\mathbb{Z}}
\newcommand{\GG}{\mathscr{G}}
\newcommand{\MM}{\mathscr{M}}
\newcommand{\PP}{\mathscr{P}}
\newcommand{\AAA}{\mathcal{A}}
\newcommand{\BBB}{\mathcal{B}}
\newcommand{\CCC}{\mathcal{C}}
\newcommand{\EEE}{\mathcal{E}}
\newcommand{\III}{\mathcal{I}}
\newcommand{\JJJ}{\mathcal{J}}
\newcommand{\KKK}{\mathcal{K}}
\newcommand{\OOO}{\mathcal{O}}
\newcommand{\SSS}{\mathcal{S}}
\newcommand{\TTT}{\mathcal{T}}
\newcommand{\UUU}{\mathcal{U}}
\newcommand{\Rr}{\mathfrak{R}}
\newcommand{\Aaa}{\mathfrak{a}}
\newcommand{\Ddd}{\mathfrak{d}}
\newcommand{\Mmm}{\mathfrak{m}}
\newcommand{\Sss}{\mathfrak{s}}

\newcommand*\samethanks[1][\value{footnote}]{\footnotemark[#1]}

\title{Decoding Algorithms for Tensor Codes}
\author{Eimear Byrne\thanks{School of Mathematics and Statistics, University College Dublin, \texttt{ebyrne@ucd.ie} and \texttt{lucien.francois@ucd.ie}}, Alain Couvreur \thanks{Inria, Laboratoire LIX, École polytechnique, Institut Polytechnique de Paris, \texttt{alain.couvreur@inria.fr}}, Lucien François\samethanks[1]\ \thanks{This article is the long version of the extended abstract \cite{BCF25} presented at the conference ISIT'25. It includes the missing proofs of the mentioned propositions, additional results and details on the generalisation to higher order tensors.}}

\begin{document}
\maketitle

\begin{center}
    \textbf{Abstract} \\[0.4cm]
    \begin{minipage}{15cm} \bf \small 
    Tensor codes are a generalisation of matrix codes. Such codes are defined as subspaces of order-$r$ tensors for which the ambient space is endowed with the tensor-rank as a metric. A class of these codes was introduced by Roth, who also outlined a decoding algorithm for low tensor-rank errors that can be generalised to an algorithm with exponential complexity in the decoding radius. They may be viewed as a generalisation of the well-known Delsarte-Gabidulin-Roth maximum rank distance codes. We study a generalised class of these codes. We investigate their properties and outline decoding techniques for different metrics that leverage their tensor structure. We first consider a fibre-wise  decoding approach, as each fibre of a codeword corresponds to a Gabidulin codeword. We then give a generalisation of Loidreau-Overbeck's decoding method that corrects errors with properties constrained by the dimensions of the slice spaces and fibre spaces. The metrics we consider are bounded from above by the tensor-rank metric, and therefore these algorithms also decode tensor-rank weight errors. 
    \end{minipage}
\end{center}

\setcounter{tocdepth}{1}
\section{Introduction}

Error-correcting codes are critical for ensuring data reliability in modern communication systems. Rank-metric codes such as the well-known Delsarte-Gabidulin-Roth codes have attracted significant attention due to their applicability in network coding. Such codes have seen generalisations, such as the twisted Gabidulin codes \cite{LUNARDON201879,sheekey} and the tensor codes. Tensor codes as a type of code have been introduced in \cite{RothMaximumArrayCodes} with a view towards criss-cross error correction, and a particular class has been studied in \cite{ROTHTensorCodesForRankMetric}. Roth's tensor codes are a family of subspaces of the vector space of tensors endowed with the tensor-rank as a metric and were constructed to have a known lower bound on the minimum tensor-rank distance. For this family of codes, Roth gave decoding algorithms for tensor-rank one errors and for tensor-rank two errors reaching the decoding radius of the codes, with polynomial complexity for the first and exponential complexity in the decoding radius for the second. He also suggested generalisations to higher radius. 

In this paper, we derive decoding algorithms for a generalisation of the tensor codes that were introduced in \cite{ROTHTensorCodesForRankMetric}. In our treatment we consider a number of new metrics, each of which provides a lower bound on the tensor rank as a weight function. We give bounded-distance decoders for these metrics which can hence be considered as bounded-distance decoders for the tensor rank as a metric with an additional range of decodable errors. The algorithms presented here all have polynomial complexity in the size of the tensors, for each decoding radius chosen. Although the presentation focuses on order-3 tensors, the generalisation of the algorithms to higher order tensors is straightforward and preserves this property. We also give a detailed comparison of the different algorithms. We base the first type of algorithm on the direct sum of codes in which the code is contained, and the second type on the Welch-Berlekamp-like algorithm introduced by Loidreau and Overbeck (see \cite{welch1986error}, \cite[Section~6.8]{citekHandbookCodingTheoryDecodeCyclicCodesey} and \cite{LoidreauWelshBerlekampGabidulin}).

The outline of this paper is as follows.
In Section \ref{sec:prelim}, we introduce basic properties of codes and tensors. In Section \ref{sec:evaluationcodes}, we introduce relevant properties of linearised and bilinearised polynomials and describe the class of tensor codes introduced by Roth in \cite{ROTHTensorCodesForRankMetric}. Furthermore, we introduce a class of evaluation codes that extend Roth's codes, describe its properties and its link with the original codes. We introduce a number of metrics relevant to the decoding and describe the decoding capabilities of the codes in respect of these distance functions. In Section \ref{sec:decoding}, we present four different decoding algorithms. The first two are based on Gabidulin decoders and exploit the fact that the family of tensor codes we study are subcodes of direct sums of Gabidulin codes. The last two algorithms use an approach similar to Loidreau-Overbeck using a left-Euclidean like factoring algorithm on solutions of a linear equation system. After studying the error correction capability of these algorithms, we consider the extended range of decodable errors provided by the application of existing probabilistic methods, namely supercode decoding methods and interleaving. In Section \ref{sec:Comparison}, we compare our different algorithms in terms of their decoding capability and their asymptotic complexity. Finally, in Section \ref{sec:Generalisation}, we give details on the generalisation of the aforementioned codes and algorithms to higher order tensor-codes.

A complementary GitHub repository (\url{https://github.com/lucienfrancois/RothTensorCodes}) provides MAGMA implementations of the algorithms discussed in this paper. Computations of the examples mentioned below can be produced using this MAGMA code.

\section{Preliminaries}\label{sec:prelim}

We will use the following notation. We refer the reader to \cite{lidl1994introduction,GABIDULINRankDistanceCodes,Burgisser1997ch14} for further background reading on finite fields, $q$-polynomials, rank-metric codes, and tensors. 

\subsection{Notation}
Throughout, we let $q$ be a fixed prime power and denote by $\mathbb{F}_q$ the finite field of order $q$. We let $\N_0$ be the set of non-negative integers and $\mathbb N$ be the set of positive integers. We let $n$ be a fixed positive integer. For a predicate $\PP$, we denote by $\ind{\PP}$ the indicator function, which is equal to one if $\PP$ is true, and equal to zero otherwise. We let $X,Y,Z$ be indeterminates. We fix a basis $(\alpha_1,\dots,\alpha_n)$ of $\mathbb F_{q^n}/\mathbb F_q$ and denote by $\alpha$ the vector $\alpha := (\alpha_1,\dots,\alpha_n)$. For each integer $j$, we define 
$\alpha^{q^j}:=(\alpha_1^{q^j},\dots,\alpha_n^{q^j})$.

We define $\llbracket L_1,L_2\rrbracket:=\{L_1,L_1+1,\dots,L_2-1,L_2\}$ for any positive integers $L_1,L_2$ with $L_1\leq L_2$. 
For any tuple ${x}$ of length $L$, we denote by $x_\ell$ its $\ell^{th}$ entry, for each $\ell \in \llbracket1,L\rrbracket$. For any $j \in \{ 1,2\}$, we write $\pi_j$ to denote the projection map 
\[\pi_j : \mathbb{N}_0^2 \to \mathbb{N}_0, x \mapsto x_j.\]  For any $i \in \mathbb{N}_0^2$ and $r \in \mathbb{N}_0$, we define $i + r :=(i_1 + r,i_2 +r)$. Similarly, for 
$I \subseteq {\mathbb{N}_0}^2$ and $R \subseteq {\mathbb{N}_0}$, we define $I + R := \{i +r \ |\  i \in  I,\ r \in R\}$.

For any $\mathbb{F}_q$-bilinear map $b : \mathbb{F}_{q^n} \times \mathbb{F}_{q^n} \to \mathbb{F}_{q^n}$  we denote the {left} (\emph{resp.} right) \textbf{radical} of $b$ by \[\Rr\Aaa\Ddd_1(b) := \{x \in \mathbb F_{q^n} \ | \ \forall y \in \mathbb F_{q^n}: b(x,y)= 0\}\quad \text{\emph{resp.}} \quad \Rr\Aaa\Ddd_2(b):= \{y \in \mathbb F_{q^n} \ | \ \forall x \in \mathbb F_{q^n}: b(x,y)= 0\};\]
see \cite[Def 8.3]{cooperstein2010advanced}.

\subsection{Rank-metric codes and Gabidulin codes}
\begin{Def}
    For any $w \in \F_{q^n}^n$ we define the rank weight or the $\F_q$\textbf{-rank} of $w$ to be $\rank_{\F_q}(w) := \dim_{\F_q}\Span_{\F_q}(w_1,\dots,w_n)$. 
    The rank distance between $u,v \in \F_{q^n}^n$ is defined to be $\rank_{\F_q}(u-v)$.
    The map $d_{\rank}:\F_{q^n}^n \times \F_{q^n}^n \to \N_{ 0}, (u,v)\mapsto \rank_{\F_q}(u-v)$ is a distance function on $\F_{q^n}^n$ called the \textbf{rank-metric}.
\end{Def}

\begin{Def}
    Endowed with the metric $d_{\rank}$, an $\F_{q^n}\tm[n,k]$ \textbf{linear rank-metric code} is an $\F_{q^n}$-linear subspace  $\CCC$ of $\F_{q^n}^n$ of dimension $k$ over $\F_{q^n}$. If $\CCC$ is non-zero, we say that $\CCC$ is an $\F_{q^n}\tm[n,k,d]$ code with $d := d_{\rank}(\CCC) := \min\{ d(u,v) \ : \ u,v \in \CCC \}$.
\end{Def}

\begin{Def}\label{def:GabidulinCode}
    The length $n$ \textbf{Gabidulin code} of dimension $k$ over $\mathbb{F}_{q^n}$ evaluated at the basis $\alpha$ is the $\F_{q^n}\tm[n,k,n-k+1]$ code:
    $$\GG_k(\alpha):=\left\{ (V(\alpha_1),\dots,V(\alpha_n)) \ \left| \ V(Z) =\sum_{\ell = 0}^{k-1}v_\ell Z^{q^\ell}, v_i \in \mathbb{F}_{q^n} \right\}\right. .$$
\end{Def}
 We remind the reader that Gabidulin codes are maximal rank distance codes, \emph{i.e.}  that the minimum $\mathbb{F}_q$-rank distance of $\GG_k(\alpha)$ is $n-k+1$, see \cite{GABIDULINRankDistanceCodes}. 
 \begin{Def}\label{def:MooreMatrixDefinition}
     For each integer $k \in \llbracket 1,n\rrbracket$, we denote respectively by $M(\alpha)$ and $M_k(\alpha)$ the \textbf{Moore matrix} of the basis $\alpha$ and its $k\times n$ submatrix defined by
     
     \begin{equation}
    \label{eq:MooreMatrix}
    M(\alpha) = \begin{bmatrix}
        \alpha_1 & \hdots & \alpha_n \\
        \alpha_1^{q} & \hdots & \alpha_n^{q} \\ 
        \vdots & \ddots & \vdots \\
        \alpha_1^{q^{n-1}} & \hdots & \alpha_n^{q^{n-1}} \\ 
    \end{bmatrix}
    \quad \text{ and } \quad 
    M_k(\alpha) = \begin{bmatrix}
        \alpha_1 & \hdots & \alpha_n \\
        \alpha_1^{q} & \hdots & \alpha_n^{q} \\ 
        \vdots & \ddots & \vdots \\
        \alpha_1^{q^{k-1}} & \hdots & \alpha_n^{q^{k-1}} \\ 
    \end{bmatrix}.
\end{equation}
\end{Def}
Note that $M_k(\alpha)$ is a generator matrix of $\GG_k(\alpha)$ as an $\F_{q^n}$-linear code. In addition, note that $M(\alpha)$ is non-singular over $\F_{q^n}$ and that its inverse is $M(\alpha^{\bot})^\top$ by definition of the dual basis $\alpha^\bot$ of $\alpha$, see \cite[Def 2.30]{lidl1994introduction}.

\subsection{Tensors and matrices} \label{subsec:tensorsandmatrices}
We denote by $\mathbb{F}_{q}^{n\times n}$ the vector space of matrices of size $n \times n$ over the field with $q$ elements. A $\boldsymbol{3}$\textbf{-tensor} of size $n \times n \times n$ over a field $\mathbb{F}_q$ is any element of the $n^3$-dimensional $\mathbb{F}_q$-vector space $\mathbb{F}_q^n \otimes_{\F_q} \mathbb{F}_q^n \otimes_{\F_q} \mathbb{F}_q^n$, also denoted by $(\mathbb{F}_q^{n})^{\otimes 3}$. An element $T \in (\mathbb{F}_q^{n})^{\otimes 3}$ is uniquely associated with its coordinate expression $(T[i])_{i \in \llbracket 1,n\rrbracket^3}$  as a $3$-dimensional array with respect to the basis $(e_{i_1} \otimes e_{i_2}\otimes e_{i_3})_{i \in \llbracket 1,n\rrbracket^3}$, where we denote by $(e_\iota)_{\iota \in \llbracket 1,n\rrbracket}$ the standard basis vector of $\mathbb{F}_q^n$, see \cite[Corollary 2.24]{Lang2002TensorProduct}. In other words, $T$ is uniquely associated with the sequence $(T[i])_{i \in \llbracket 1,n\rrbracket^3}$ of scalars in $\F_q$ satisfying $ T = \sum_{ i \in \llbracket 1,n \rrbracket^3} T[i] e_{i_1} \otimes e_{i_2}\otimes e_{i_3}$. 
{\renewcommand{\thesection}{\arabic{section}}

Let $\omega = \{\omega_1,\dots, \omega_n\}$ be a basis of $\mathbb{F}_{q^n}/\mathbb{F}_q$. 
Then the map $\mathfrak{s}_\omega: (\mathbb{F}_q^n)^{\otimes 3} {\longrightarrow} \mathbb{F}_{q^n}^{n\times n} $ defined by
\[\mathfrak{s}_\omega(\Gamma) := \sum_{i_{3} = 1}^n \omega_{i_{3}} (\Gamma[i_1,i_2,i_{3}])_{(i_1,i_2) \in \llbracket 1,n\rrbracket^2}\] for all 
$\Gamma \in (\mathbb{F}_q^n)^{\otimes 3}$ is an $\mathbb{F}_q$-isomorphism that yields an expression of every $3$-tensor over $\mathbb{F}_{q}$ as an $n\times n$ matrix over $\mathbb{F}_{q^n}$, as shown in Figure~\ref{fig:Isomorphismsomega}. Therefore, every $\mathbb{F}_{q}$-vector subspace $\mathcal{C}$ of $(\mathbb{F}_{q}^n)^{\otimes 3}$ can be viewed as an $\mathbb{F}_q$-vector subspace $\mathfrak{s}_{\omega}(\mathcal{C})$ of $\mathbb{F}_{q^n}^{n\times n}$. We say that $\mathcal{C}$ is the \textbf{associated tensor code} of $\mathfrak{s}_{\omega}(\mathcal{C})$ with respect to the basis $\omega$. This is a generalisation of the associated matrix codes of $\mathbb{F}_{q^n}$-linear codes introduced in \cite{RothMaximumArrayCodes}.  For further details, we refer the reader to \cite{Burgisser1997ch14} and \cite{ConciseCodingTheoryRankMetricCodes}.

\begin{Def}
    An $\F_{q^n}\tm[n\times n,k]$ \textbf{tensor code} is an $\F_{q^n}$-linear subspace of $\F_{q^n}^{n\times n}$ of dimension $k$ over $\F_{q^n}$.
    
\end{Def}
We will consider tensor codes endowed with the tensor-rank metric and other related metrics; see Subsection~\ref{subsec:distancesinthecode}.

\begin{Ex}
    For $n = 3$, the tensor $e_1 \otimes (e_1+e_2+e_3) \otimes e_1 + (e_1 \otimes e_1 + e_2 \otimes e_2 + e_3 \otimes e_3)\otimes e_2 + (e_1+e_2+e_3)\otimes e_1 \otimes e_3 $ in the space $(\F_q^3)^{\otimes 3}$ corresponds to the tensor represented on the left-hand side of Figure~\ref{fig:Isomorphismsomega}.
    
    \begin{figure}[ht]
    \centering
    \begin{tikzpicture}[thick,scale=0.8, every node/.style={scale=0.8}]
    \draw[thick,->] (-3,1) -- (-3,0) node [midway, left] {1};
    \draw[thick,->] (-3,1) -- (-2,1) node [midway, below] {2};
    \draw[thick,->] (-3,1) -- (-2.25,1.5) node [midway, above] {3};
        \node[draw,rectangle] (M) at (0,0) {$\begin{matrix}
            1&1&1\\
            0&0&0\\
            0&0&0
        \end{matrix}$};
        \node[draw,rectangle] (N) at (1.8,0.66) {$\begin{matrix}
            1&0&0\\
            0&1&0\\
            0&0&1
        \end{matrix}$};
        \node[draw,rectangle] (P) at (3.6,1.33) {$\begin{matrix}
            1&0&0\\
            1&0&0\\
            1&0&0
        \end{matrix}$};
        \draw (M)--(N)--(P);
        \draw (M.north west) -- (N.north west)-- (P.north west);
        \draw (M.south east) -- (N.south east)-- (P.south east);

        \node[right] at (M.south east) {$\quad\bullet\  k = 1$};
        \node[right] at (N.south east) {$\quad\bullet\  k = 2$};
        \node[right] at (P.south east) {$\ \bullet\  k = 3$};

        \node[scale = 1.5] at (7,1) {$\overset{\mathfrak{s}_\omega}\longmapsto$};

        \node at (10,1) {$\begin{bmatrix}
            \omega_1 + \omega_2 + \omega_3 & \omega_1 &\omega_1 \\
            \omega_3 &\omega_2 &0\\
            \omega_3 & 0&\omega_2
        \end{bmatrix}$};
    \end{tikzpicture}
    \caption{Illustration of $\mathfrak{s}_{\omega}$ for $n = 3$.}
    \label{fig:Isomorphismsomega}
\end{figure}
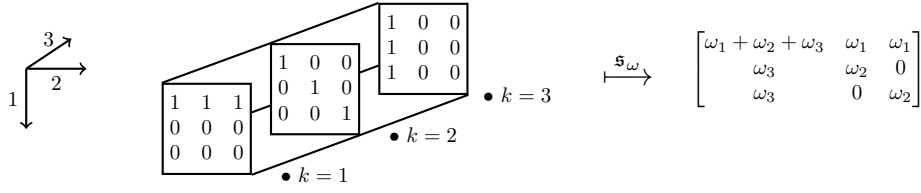

\end{Ex}}

\subsubsection*{Coefficients, submatrices, and subtensors}
Given a matrix $M \in \mathbb{F}_q^{n\times n}$ for any $i_1,i_2 \in \llbracket 1,n\rrbracket$, we denote equivalently by $M_{i_1,i_2}$,$M_{(i_1,i_2)}$, or $M[i_1,i_2]$ the entry of $M$ in the $i_1^{th}$ row and $i_2^{th}$ column of $M$. We use colon notation to denote the element (vector, matrix, tensor, array) obtained if the index at the location of the colon varies. For instance, if $M\in \mathbb{F}_q^{n\times n}$ and if $i_1 \in \llbracket 1,n\rrbracket$, then $M[i_1,:] = {(M_{i_1,i_2})}_{i_2 \in \llbracket 1,n\rrbracket}$ is the $i_1^{th}$ row of the matrix $M$, and if $T \in (\mathbb{F}_q^{n})^{\otimes 3}$ is a $3$-tensor, then $T[i_1,:,:] = (T[i_1,i_2,i_3])_{(i_2,i_3) \in \llbracket 1,n\rrbracket^2}$ is the $n^2$-tuple/matrix of size $n \times n$ whose entry at $(i_2,i_3)$ is $T[i_1,i_2,i_3]$. 

\subsubsection*{Slices and fibres}

For each $j \in \{1,2,3\}$, we consider the \emph{$j$-{slices}} of a tensor in $(\mathbb{F}_q^n)^{\otimes 3}$ to be the matrices obtained by fixing \emph{only the $j^{th}$ coordinate} in its set of coordinates. In parallel, we consider the \emph{$j$-{fibres}} of a tensor in $(\mathbb{F}_q^n)^{\otimes 3}$ to be the vectors obtained by fixing \emph{all but the $j^{th}$ coordinate} in its set of coordinates. More precisely, we introduce the following definitions.

\begin{Def}
    Let $\Gamma$ be a tensor in $(\mathbb{F}_q^n)^{\otimes 3}$ and fix $i_1,i_2,i_3 \in \llbracket 1,n\rrbracket$.\begin{itemize}
       \item  The \emph{$i_1$-th $1$-\textbf{slice}} of $\Gamma$ is the matrix $\Gamma[i_1,:,:] = (\Gamma[i_1,\tilde i_2,\tilde i_3])_{(\tilde i_2,\tilde i_3) \in \llbracket 1,n\rrbracket^2}$. 
     \item The \emph{$i_2$-th $2$-\textbf{slice}} of $\Gamma$ is the matrix $\Gamma[:,i_2,:] = (\Gamma[\tilde i_1, i_2,\tilde i_3])_{(\tilde i_1,\tilde i_3) \in \llbracket 1,n\rrbracket^2}$.
     \item The \emph{$i_3$-th $3$-\textbf{slice}} of $\Gamma$ is the matrix $\Gamma[:,:,i_3] = (\Gamma[\tilde i_1,\tilde i_2, i_3])_{(\tilde i_1,\tilde i_2) \in \llbracket 1,n\rrbracket^2}$.
    \end{itemize}
    For each $j \in \{1,2,3\}$, we denote by $\slicesp_j(\Gamma)$ the $j^{th}$ slice-space of $\Gamma$, that is the $\F_q$-vector subspace of $\F_{q}^{n\times n}$ spanned by its $j$-slices.
\end{Def}

\begin{Def}
    Let $\Gamma$ be a tensor in $(\mathbb{F}_q^n)^{\otimes 3}$ and fix $i_1,i_2,i_3 \in \llbracket 1,n\rrbracket$.\begin{itemize}
       \item  The \emph{$1$-\textbf{fibre}} of $\Gamma$ at $(i_2,i_3)$ is the vector $\Gamma[:,i_2,i_3] = (\Gamma[\tilde i_1,i_2, i_3])_{\tilde i_1 \in \llbracket 1,n\rrbracket}$. 
       \item The \emph{$2$-\textbf{fibre}} of $\Gamma$ at $(i_1,i_3)$ is the vector $\Gamma[i_1,:,i_3] = (\Gamma[ i_1,\tilde i_2, i_3])_{\tilde i_2 \in \llbracket 1,n\rrbracket}$. 
       \item The \emph{$3$-\textbf{fibre}} of $\Gamma$ at $(i_1,i_2)$ is the vector $\Gamma[i_1,i_2,:] = (\Gamma[i_1,i_2,\tilde i_3])_{\tilde i_3 \in \llbracket 1,n\rrbracket}$. 
    \end{itemize}
    For each $j \in \{1,2,3\}$, we denote by $\fibresp_j(\Gamma)$ the $j^{th}$ fibre-space of $\Gamma$, that is the $\F_q$-vector subspace of $\F_q^n$ spanned of all of its $j$-fibres.
\end{Def}

\begin{Rq}
Let $R \in \F_{q^n}^{n\times n}$ and let $\Gamma = \Sss_3^{-1}(R) \in (\F_{q}^n)^{\otimes 3}$ be its associated tensor. 
Note that a row or a column of $R$, corresponds to a slice of 
$\Gamma$, by construction of the map $\Sss_3$. In Section~\ref{sec:decoding}, we will introduce algorithms that we refer to as ``fibre-wise decoding'' as they apply a Gabidulin decoder on every column or every row of such an element $R$, hence on every fibre along one mode. As we discuss in Section~\ref{sec:Generalisation}, this correspondence between fibres of the matrix over the extension field and slices of the corresponding tensor over the base field is no longer true for higher order tensors. The generalisations of Algorithms~\ref{alg:fibrewise},~\ref{alg:fibrewisebis}' or~\ref{alg:fibreWiseTwoDirrs} decode a received word (as a tensor with coefficients over $\F_{q^n}$) fibre-wise, in other words by decoding successively the vectors over $\F_{q^n}$ obtained by fixing all but one coordinate of the received word, which is a fibre of this word. This  fibre corresponds to a matrix over $\F_q$ of the associated tensor over $\F_q$ obtained by fixing all but two coordinates. 
\end{Rq}

\section{Evaluation codes of bilinearised polynomials}
\label{sec:evaluationcodes}
\subsection{Linearised and bilinearised polynomials}

\textbf{Linearised polynomials} or $q$-polynomials over $\mathbb{F}_{q^n}$ are univariate polynomials in the $\mathbb{F}_{q^n}$-linear subspace of $\mathbb{F}_{q^n}[Z]$ spanned by the monomials of the form $Z^{q^r}$ where $r$ is a non-negative integer. These polynomials are endowed with the addition and the composition map, which gives them a structure of non-commutative ring denoted $\mathcal{M}_{q,\mathbb{F}_{q^n}}[Z]$. It is known that such a polynomial induces an $\mathbb{F}_q$--linear map since its monomials have exponents equal to powers of $q$. Moreover, the $\mathbb{F}_q$-dimension of the kernel of a non-zero linearised $q$-polynomial is bounded from above by its $q$-degree, see \cite[Chapter 4]{lidl1994introduction}. We will now generalise the notion and focus on additive vector subspaces of bivariate polynomials in the commutative algebra $\F_{q^n}[X,Y]$.

\begin{Def}
    The \textbf{bilinearised $\boldsymbol q$-polynomials} over 
     $\mathbb{F}_{q^n}$ are the elements of the $\F_q$-subspace
    $$\mathcal{M}_{q,\mathbb{F}_{q^n}}[X,Y] := \Span_{\mathbb{F}_{q^n}}\left\{ \left. X^{q^{ i_1}} Y^{q^{ i_2}}  \right| i \in {\mathbb{N}_0}^2 \right\} \subseteq \mathbb{F}_{q^n}[X,Y].$$
    
    An element $F(X,Y) \in \mathcal{M}_{q,\mathbb{F}_{q^n}}[X,Y]$ has an expression as 
    $F(X,Y) = \sum_{i \in S} F_i X^{q^{i_1}}Y^{q^{ i_2}}$ for some finite set $S \subset {{\mathbb{N}_0}^2}$. 
    We then define the support of $F$ to be $\Supp(F):=\{i \in S : F_{{i}} \neq 0 \}$.
    If $F(X,Y)$ has non-empty support $S$, we define the \textbf{partial $q$-degree} of $F(X,Y)$ along  $X$ (respectively $Y$), expressed as $\qdeg_X F(X,Y)$, (respectively $\qdeg_Y F(X,Y)$) to be $\max\{i_1 : i \in S\}$ (respectively $\max\{i_2 : i \in S\}$). 
\end{Def}

\begin{Ex}
    If $F(X,Y) = X^{q}Y^{q^5} + XY^{q} + X^{q^3}Y$, then $\qdeg_X F(X,Y) = 3$ and $\qdeg_Y F(X,Y) = 5$.
\end{Ex}

By the $\F_q$-linearity of the Frobenius map on $\mathbb{F}_{q^n}$ defined by $x \mapsto x^q$, we have the following observation.

\begin{Prop}
    Let $F(X,Y) \in \mathscr{M}_{q,\mathbb{F}_{q^n}}[X,Y]$. Then the evaluation map given by $\mathbb{F}_{q^n}^2 \to \mathbb{F}_{q^n}, (x,y) \mapsto F(x,y)$ is bilinear over $\mathbb{F}_q$.
\end{Prop}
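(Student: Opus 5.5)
The plan is to reduce to monomials. Write $F(X,Y)=\sum_{i\in S}F_i X^{q^{i_1}}Y^{q^{i_2}}$ with $S\subset\N_0^2$ finite. The set of $\F_q$-bilinear maps $\F_{q^n}^2\to\F_{q^n}$ is an $\F_{q^n}$-vector space, hence also an $\F_q$-vector space, under pointwise operations. So it suffices to show that for each fixed $i\in S$ the map $(x,y)\mapsto F_i\, x^{q^{i_1}}y^{q^{i_2}}$ is $\F_q$-bilinear. Summing over $S$ then gives the claim.

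For a single monomial, I fix $y$ and consider $x\mapsto F_i y^{q^{i_2}}\cdot x^{q^{i_1}}$. This is a constant in $\F_{q^n}$ times the $i_1$-fold iterate of the Frobenius $\phi:x\mapsto x^q$. The map $\phi$ is additive in characteristic $p$, since $(a+b)^q=a^q+b^q$ by the binomial theorem (the middle binomial coefficients vanish mod $p$ and $q$ is a power of $p$). It also fixes $\F_q$ pointwise, because $c^q=c$ for $c\in\F_q$. Together these give
\[
\phi(ca+b)=c^q a^q+b^q=c\,\phi(a)+\phi(b),
\]
so $\phi$ is $\F_q$-linear. Compositions of $\F_q$-linear maps are $\F_q$-linear, and so is multiplication by a fixed element of $\F_{q^n}$. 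Hence the map is linear in $x$. By symmetry the same argument gives linearity in $y$ for fixed $x$.

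No step is a real obstacle. The only point requiring care is the $\F_q$-linearity of $x\mapsto x^{q^{j}}$, which rests on the Frobenius being a field endomorphism fixing $\F_q$. Linearity then holds in each variable separately, which is exactly bilinearity over $\F_q$.
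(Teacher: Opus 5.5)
Your proof is correct and follows the same approach as the paper, which justifies the statement in one line by the $\mathbb{F}_q$-linearity of the Frobenius map $x \mapsto x^q$. Your reduction to monomials, followed by composing iterated Frobenius maps with multiplication by a fixed constant, simply spells out that argument in full.
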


The subspace of bilinearised $q$-polynomials is not a subalgebra of $\mathbb{F}_{q^n}[X,Y]$ with respect to usual polynomial multiplication. 
However, the $q^{th}$ power of a bilinearised $q$-polynomial is also a bilinearised $q$-polynomial, therefore, the space is stable upon left-composition by a (single variable) linearised $q$-polynomial. In particular, the space $\MM_{q,\F_{q^n}}[X,Y]$ has the structure of an $(\MM_{q,\F_{q^n}}[Z],+,\circ)$-module, as stated below.

\begin{Prop}
    The abelian additive group $\mathscr{M}_{q,\mathbb{F}_{q^n}}[X,Y]$ is a left-module over the non-commutative ring $(\mathscr{M}_{q,\mathbb{F}_{q^n}}[Z],+,\circ)$ with the external binary operation:
    $$\circ : \left\{\begin{array}{rcl}
         \mathscr{M}_{q,\mathbb{F}_{q^n}}[Z] \times \mathscr{M}_{q,\mathbb{F}_{q^n}}[X,Y] &\longrightarrow&\mathscr{M}_{q,\mathbb{F}_{q^n}}[X,Y] \\
         (V(Z) , F(X,Y)) & \longmapsto & V(F(X,Y)).
    \end{array}\right.$$
\end{Prop}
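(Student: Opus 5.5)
The plan is to check three things in turn: that the operation $\circ$ lands in $\MM_{q,\F_{q^n}}[X,Y]$, that it is additive in each argument, and that it respects composition and the unit. Well-definedness and additivity in $F$ both come from one elementary fact: the Frobenius $a\mapsto a^q$ is a ring endomorphism of the commutative ring $\F_{q^n}[X,Y]$, because $q$ is a power of the characteristic. Throughout, $(\MM_{q,\F_{q^n}}[X,Y],+)$ is an abelian group, being an additive subgroup (an $\F_{q^n}$-span) of $\F_{q^n}[X,Y]$.

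\emph{Well-definedness.} Write $F=\sum_{i\in S}F_iX^{q^{i_1}}Y^{q^{i_2}}$ and $V(Z)=\sum_{r}v_rZ^{q^r}$. By the Frobenius fact,
\[
F(X,Y)^{q^r}=\sum_{i\in S}F_i^{q^r}X^{q^{i_1+r}}Y^{q^{i_2+r}},
\]
so every summand has its support in $S+r$ and is again a bilinearised $q$-polynomial. Then $V(F)=\sum_r v_rF^{q^r}$ is an $\F_{q^n}$-linear combination of elements of $\MM_{q,\F_{q^n}}[X,Y]$, hence lies in it, as the statement requires.

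\emph{Additivity.} The identity $(V_1+V_2)(F)=V_1(F)+V_2(F)$ is immediate, since substitution $Z\mapsto F$ is a ring homomorphism $\F_{q^n}[Z]\to\F_{q^n}[X,Y]$. For the other argument, $V(F+G)=V(F)+V(G)$ holds because each monomial $Z^{q^r}$ is additive on $\F_{q^n}[X,Y]$, again by the Frobenius fact $(F+G)^{q^r}=F^{q^r}+G^{q^r}$.

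\emph{Compatibility with composition and unit.} We need $(V\circ W)(F)=V(W(F))$. The product in $\MM_{q,\F_{q^n}}[Z]$ is the polynomial $V(W(Z))$, and substituting $Z\mapsto F$ into it gives $V(W(F))$ by associativity of substitution in polynomial rings. The unit of the ring is $Z$, and $Z(F)=F$.

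The only step needing care is well-definedness, because the Frobenius acts on the coefficients $F_i$ as well as on the exponents. This is harmless, since $F_i^{q^r}$ is still an element of $\F_{q^n}$. We need no reduction modulo $X^{q^n}-X$ either, because everything is done at the level of formal polynomials.
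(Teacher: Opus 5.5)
Your proof is correct and takes the same route as the paper. The paper justifies the statement only by remarking that the $q$-th power of a bilinearised $q$-polynomial is again bilinearised, so the space is stable under left composition. You additionally check the module axioms explicitly (additivity in each argument, compatibility with composition, and $Z$ acting as the identity), which the paper leaves implicit.
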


\label{subsec:PolySpaceAndEvalSpaceINTRO}

Define 
\begin{equation}\label{eq:UXY}
U_{X,Y}: = \{f(X^{q^n}-X,Y)  + g(X,Y^{q^n}-Y)\ | \ f(X,Y),g(X,Y) \in \mathscr{M}_{q,\mathbb{F}_{q^n}}[X,Y]\}.
\end{equation}
This is a vector subspace of $\mathscr{M}_{q,\mathbb{F}_{q^n}}[X,Y]$, which is contained in the ideal $(X^{q^n}-X,Y^{q^n}-Y)$ of the ring $\F_{q^n}[X,Y]$. Indeed, for any $f(X,Y) \in \MM_{q,\F_{q^n}}[X,Y]$, since $X$ divides $f(X,Y)$, then $X^{q^n}-X$ divides  $f(X^{q^n}-X,Y)$. For any polynomial, denote by $\overline{F(X,Y)} = F(X,Y) + U_{X,Y}$ the coset containing $F(X,Y)$ in the quotient vector space $\mathscr{M}_{q,\mathbb{F}_{q^n}}[X,Y]/U_{X,Y}$.

\begin{Lemma} \label{lemma:BasisOfQutientSpace}
    The set $\{\overline{X^{q^{ i_1}}Y^{q^{ i_2}}}:i \in \llbracket 0,n-1 \rrbracket^2\}$ is a basis of $\mathscr{M}_{q,\mathbb{F}_{q^n}}[X,Y]/U_{X,Y}$.
\end{Lemma}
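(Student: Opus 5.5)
We prove the two halves separately: the cosets span the quotient, and they are linearly independent. Throughout we use that $U_{X,Y}$ is contained in the ideal $(X^{q^n}-X,Y^{q^n}-Y)$, as observed just before the statement.

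\emph{Spanning.} It suffices to show that every monomial $X^{q^{i_1}}Y^{q^{i_2}}$ with $i\in\N_0^2$ is congruent modulo $U_{X,Y}$ to a monomial with exponents in $\llbracket 0,n-1\rrbracket^2$. We use the reduction
\[
X^{q^{i_1}}Y^{q^{i_2}} - X^{q^{i_1-n}}Y^{q^{i_2}} = f(X^{q^n}-X,Y), \qquad f(X,Y) := X^{q^{i_1-n}}Y^{q^{i_2}},
\]
valid when $i_1\ge n$. It holds because $(X^{q^n}-X)^{q^{i_1-n}} = X^{q^{i_1}}-X^{q^{i_1-n}}$ in characteristic $p$. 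The analogous identity with $g(X,Y^{q^n}-Y)$ handles $i_2\ge n$. Since $\F_{q^n}$-scalar coefficients pass through, induction on $i_1+i_2$ reduces every monomial, and by linearity every element of $\MM_{q,\F_{q^n}}[X,Y]$.

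\emph{Independence.} Suppose $P(X,Y)=\sum_{i\in\llbracket 0,n-1\rrbracket^2} c_i X^{q^{i_1}}Y^{q^{i_2}}$ lies in $U_{X,Y}$. Then $P$ lies in the ideal $(X^{q^n}-X,Y^{q^n}-Y)$, so $P$ vanishes on all of $\F_{q^n}^2$. However, $\deg_X P\le q^{n-1}<q^n$ and $\deg_Y P\le q^{n-1}<q^n$. The classical fact applies: a bivariate polynomial with partial degrees below $q^n$ that vanishes on $\F_{q^n}^2$ is zero. To see this, view $P$ as a polynomial in $Y$ with coefficients in $\F_{q^n}[X]$. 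For each fixed $x\in\F_{q^n}$, the polynomial $P(x,Y)$ has degree less than $q^n$ and $q^n$ roots, so all its coefficients vanish at $x$. Each coefficient is a polynomial in $X$ of degree less than $q^n$ with $q^n$ roots, so it is zero. Therefore every $c_i=0$, since distinct $i$ give distinct monomials.

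The only delicate point is the reduction identity in the spanning step. One must check that $f$ is indeed a bilinearised $q$-polynomial and that substituting $X^{q^n}-X$ into $X^{q^{i_1-n}}$ yields exactly the two-term binomial by additivity of Frobenius. Independence is a standard counting-of-roots argument. An alternative route for independence is to view evaluation on $\F_{q^n}^2$ as a well-defined map from the quotient into the bilinear forms on $\F_{q^n}$ over $\F_q$. That space has $\F_{q^n}$-dimension $n^2$, which matches the number of cosets.
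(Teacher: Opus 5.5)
Your proof is correct and follows the paper's. The spanning step rests on the same identity $(X^{q^n}-X)^{q^{t}}Y^{q^s}=X^{q^{t+n}}Y^{q^s}-X^{q^{t}}Y^{q^s}$. For independence, the paper merely asserts as ``straightforward'' that a nonzero element of $U_{X,Y}$ has partial $q$-degree at least $n$ in $X$ or $Y$; your vanishing-on-$\F_{q^n}^2$ and root-counting argument actually proves this, and it handles possible cancellation between the $f$- and $g$-parts more cleanly than a direct degree count would.

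Only your closing alternative route is incomplete as stated. Matching the dimension $n^2$ gives independence only once you also show that the evaluation map from the quotient onto the $\F_q$-bilinear maps is surjective, e.g.\ via invertibility of the Moore matrix.
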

\begin{proof}
     For each $t,s\in {\mathbb{N}_0}$ we have $X^{q^{t+n}}Y^{q^s} = X^{q^t}Y^{q^s}+ (X^{q^n} - X)^{q^t}Y^{q^{s}}$, thus $\overline{X^{q^{t+n}}Y^{q^s}} = \overline{X^{q^{t}}Y^{q^s}}$, and likewise $\overline{X^{q^{t}}Y^{q^{s+n}}} = \overline{X^{q^{t}}Y^{q^s}}$. By induction, one can check that $\overline{X^{q^a}Y^{q^b}} = \overline{X^{q^{a\!\!\mod n}}Y^{q^{b\!\!\mod n}}}$ for each $a,b\in {\N_0}$. Therefore, the set $\{\overline{X^{q^{ i_1}}Y^{q^{ i_2}}} :i \in \llbracket 0,n-1 \rrbracket^2\}$ spans the vector space $\mathscr{M}_{q,\mathbb{F}_{q^n}}[X,Y]/U_{X,Y}$. 
     It is straightforward to see that any non-zero polynomial in
     $U_{X,Y}$ has partial $q$-degree along $X$ or $Y$ at least $n$ and hence
     for any non-trivial $S\subset \llbracket 0,n-1 \rrbracket^2$ we have $\sum_{i \in S } f_i\overline{X^{q^{i_1}}Y^{q^{i_2}}} \notin U_{X,Y}$ unless $f_i=0$ for each $i$. Therefore, the set $\{\overline{X^{q^{ i_1}}Y^{q^{ i_2}}} :i \in \llbracket 0,n-1 \rrbracket^2\}$ is linearly independent over $\F_{q^n}$.

\end{proof}

This establishes that any bilinearised $q$-polynomial has a unique expression modulo $U_{X,Y}$ whose monomials terms each have partial $q$-degrees less than $n$. Consequently, we can define the \textbf{support} of any element  $\overline{f(X,Y)}$ in the quotient space $\mathscr{M}_{q,\mathbb{F}_{q^n}}[X,Y]/U_{X,Y}$, denoted by $\Supp(\overline{f(X,Y)})$, as the support of this unique representative. Hence, this support is necessarily a subset of $\llbracket 0,n-1\rrbracket^2$.

Since $U_{X,Y}$ is contained in the ideal of the ring $\F_{q^n}[X,Y]$ generated by $\{X^{q^n}-X,Y^{q^n}-Y\}$, the evaluation of a bilinearised $q$-polynomial at any element of $\F_{q^n}^2$ depends only on its corresponding coset in the quotient space.

\begin{Prop}\label{prop:isomPolyAndTensors}
    For each $i \in \llbracket 1,n\rrbracket^2$, denote by $E_{i} = (\ind{\{\tilde i = i\}})_{\tilde i \in \llbracket 1,n\rrbracket^2}$ the elementary matrix with a single non-zero entry at $i = (i_1,i_2)$. Recall that $\alpha$ denotes an $\F_q$--basis of $\F_{q^n}$ an $M(\alpha)$ is the Moore matrix introduced in Definition~\ref{def:MooreMatrixDefinition}. Then the linear maps $I_1$ and $I_2$ uniquely defined by 
    $$\begin{array}{crcll}
         I_1: & \mathscr{M}_{q,\mathbb{F}_{q^n}}[X,Y]/U_{X,Y} &\longrightarrow& \mathbb{F}_{q^n}^{n \times n}, &\forall i \in \llbracket 1,n\rrbracket^2 : I_1\left(\overline{X^{q^{ i_1-1}}Y^{q^{ i_2-1}}}\right) = E_i\\
         I_2: &  \mathbb{F}_{q^n}^{n \times n}& \longrightarrow&  \mathbb{F}_{q^n}^{n \times n},& \forall K \in \mathbb{F}_{q^n}^{n\times n} : I_2(K) = M(\alpha)^\top K M(\alpha)
    \end{array} $$
    are isomorphisms of $\mathbb{F}_{q^n}$-linear spaces. In particular, the evaluation map
    $$\ev:\mathscr{M}_{q,\mathbb{F}_{q^n}}[X,Y]/U_{X,Y} \to \F_{q^n}^{n\times n}, \overline{f(X,Y)}\mapsto (f(\alpha_{i_1},\alpha_{i_2}))_{i\in \llbracket 1,n\rrbracket^2}$$ is an isomorphism of $\mathbb{F}_{q^n}$-linear spaces with $\ev = I_2 \circ I_1$.
\end{Prop}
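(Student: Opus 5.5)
The proof splits into three steps: show $I_1$ is an isomorphism, show $I_2$ is an isomorphism, and then verify the identity $\ev = I_2\circ I_1$ on a basis. Once that identity holds, $\ev$ is a composition of two $\mathbb{F}_{q^n}$-linear isomorphisms and is therefore itself one.

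\textbf{Step 1: $I_1$.} By Lemma~\ref{lemma:BasisOfQutientSpace}, the cosets $\overline{X^{q^{i_1-1}}Y^{q^{i_2-1}}}$ with $i\in\llbracket 1,n\rrbracket^2$ form an $\F_{q^n}$-basis of $\mathscr{M}_{q,\mathbb{F}_{q^n}}[X,Y]/U_{X,Y}$. A linear map is therefore uniquely determined by prescribing its values on these cosets. The map $I_1$ sends this basis bijectively onto the standard basis $(E_i)_{i\in\llbracket 1,n\rrbracket^2}$ of $\F_{q^n}^{n\times n}$, so it is well defined and an isomorphism.

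\textbf{Step 2: $I_2$.} The map $K\mapsto M(\alpha)^\top K M(\alpha)$ is clearly $\F_{q^n}$-linear. Since $M(\alpha)$ is non-singular, with inverse $M(\alpha^\bot)^\top$ as recalled after Definition~\ref{def:MooreMatrixDefinition}, the map has the explicit inverse $K\mapsto (M(\alpha)^{-1})^\top K M(\alpha)^{-1}$. Hence $I_2$ is an isomorphism.

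\textbf{Step 3: $\ev=I_2\circ I_1$.} First, $\ev$ is well defined on the quotient: $U_{X,Y}$ lies in the ideal $(X^{q^n}-X,Y^{q^n}-Y)$, and every element of that ideal vanishes on $\F_{q^n}^2$. Second, $\ev$ is $\F_{q^n}$-linear. So it suffices to check the identity on the basis cosets. For the basis element indexed by $i$:
\begin{itemize}
\item $\ev$ sends $\overline{X^{q^{i_1-1}}Y^{q^{i_2-1}}}$ to the matrix with $(j_1,j_2)$ entry $\alpha_{j_1}^{q^{i_1-1}}\alpha_{j_2}^{q^{i_2-1}}$;
\item $I_2(E_i)=M(\alpha)^\top E_i M(\alpha)$ has $(j_1,j_2)$ entry $M(\alpha)_{i_1,j_1}M(\alpha)_{i_2,j_2}=\alpha_{j_1}^{q^{i_1-1}}\alpha_{j_2}^{q^{i_2-1}}$.
\end{itemize}
The two agree, so $\ev=I_2\circ I_1$ and the proof is complete.

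The only point needing care is the index bookkeeping in Step 3: one must check that row $i_1$ of $M(\alpha)$ carries the exponent $q^{i_1-1}$, and that the transpose on the left produces the $(j_1,j_2)$ entry in the stated form rather than with indices swapped. The rest is formal.
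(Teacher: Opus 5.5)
Your proof is correct and follows essentially the same route as the paper: $I_1$ is an isomorphism because it maps the basis of Lemma~\ref{lemma:BasisOfQutientSpace} onto the standard basis, $I_2$ is one because $M(\alpha)$ is invertible, and $\ev = I_2\circ I_1$ is checked on the basis cosets via the entries of $M(\alpha)^\top E_i M(\alpha)$. You also verify explicitly that $\ev$ is well defined on the quotient, which the paper states only in the text preceding the proposition.
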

\begin{proof}
    The maps $I_1$ and $I_2$ are $\F_{q^n}$-linear by construction. The map $I_1$ is bijective as the image of a basis is also a basis by Lemma~\ref{lemma:BasisOfQutientSpace}, and the map $I_2$ is too as the Moore matrices are invertible. 
    Then for each $i\in\llbracket 1,n\rrbracket^2$ we have
    \[
        (I_2\circ I_1)\left(\overline{X^{q^{ i_1-1}}Y^{q^{ i_2-1}}}\right)
        = M(\alpha)^\top E_i M(\alpha) 
        = \left(\alpha_{j_1}^{q^{i_1-1}}\alpha_{j_2}^{q^{i_2-1}}\right)_{j \in \llbracket 1,n\rrbracket^2}
        = \ev\left(X^{q^{ i_1-1}}Y^{q^{ i_2-1}}\right),
    \]
    which concludes the proof.

\end{proof}

\begin{Ex}
    Let $n = 3$ and let $\F_8 = \F_2(b)$ for $b$ satisfying $b^3 = b+1$. Let $a=b^3$. Then $\{a,a^2,a^4\}$ is a normal $\F_2$-basis of $\F_8$. 
    Let $f(X,Y) = a X^{2}Y^{4} + XY^{2} + a^2X^{2}Y$. Then with respect to the notation above we have:
    $$I_1(\overline{f(X,Y)}) = \begin{bmatrix}
        0 & 1 & 0\\
        a^2&0&a \\
        0&0&0
    \end{bmatrix} \qquad \text{and} \qquad ev(\overline{f(X,Y)}) = \begin{bmatrix}
        b^4 & b^5 &1 \\
        b^4 & 1&b^4 \\
        1 & 0 & b^5
    \end{bmatrix}.$$
\end{Ex}

\subsection{Tensor codes}
\label{subsec:CodeRoth} 

In \cite{ROTHTensorCodesForRankMetric}, Roth introduced and studied a family of codes that can be expressed as the image of the evaluation map $\ev$ defined in Proposition~\ref{prop:isomPolyAndTensors} of a well-chosen subspace of $\mathscr{M}_{q,\mathbb{F}_{q^n}}[X,Y]/U_{X,Y}$. We recall the definition of these codes and some of their properties and then continue with a generalisation of this family. Note that in \cite{ROTHTensorCodesForRankMetric}, the codes were introduced with respect to three (possibly different) bases of $\F_{q^n}$. Since the natural action of $\GL_n(\F_q)^3$ on $(\F_q^n)^{\otimes 3}$ give rise to a tensor-rank preserving automorphism of $(\F_q^n)^{\otimes 3}$ that maps a such a code to a code with the same parameters defined with the same identical bases, we will only consider here the case of a single defining basis $\alpha$. For each $\mu \in \llbracket 0,2n-1\rrbracket$, the code 
$\CCC(n,\mu,3;q)$ is defined to be the $\F_q$-linear subspace of $\F_{q^n}^{n\times n \times n}$ given by 
$$\CCC(n,\mu,3;q) := \left\{ \Gamma \in (\mathbb{F}_q^{n})^{\otimes 3} \ \left| \ \forall r \in \mathcal{S}(n,\mu,3;q) : \sum_{ i \in \llbracket 1,n \rrbracket^3} \Gamma[i] (\alpha^{\bot}_{i_1})^{q^{ r_1}}(\alpha^{\bot}_{i_2})^{q^{ r_2}}\alpha_{i_3} = 0 \right. \right\},$$
where the set $\mathcal{S}(n,\mu,3;q)$ is the subset of $\{r \in \llbracket 0,n-1\rrbracket^{2} \ | \ r_1 + r_{2} \leq \mu -2  \}$ for which there exists an $\F_q$-$[\mu-1,r_1+1,r_2+1]$ Hamming-metric code; 
see \cite[Section~3.1]{ROTHTensorCodesForRankMetric}. We denote by 
$\SSS^{(Roth)}_{\mu} = \llbracket 0,n-1\rrbracket^2 \setminus \SSS(n,\mu,3;q)$ the complement of the set $\mathcal{S}(n,\mu,3;q)$ in $\llbracket 0,n-1\rrbracket^2$. Then we have
\[
    \Sss_{\alpha}\left(\CCC(n,\mu,3;q)\right) = \ev\left(\Span_{\F_{q^n}} \left\{ \overline{X^{q^{s_1}}Y^{q^{s_2}}} \mid (i_1,i_2) \in \SSS^{(Roth)}_{\mu}\right\}\right).
\]

Roth showed that $\CCC(n,\mu,3;q)$ has redundancy $n|\SSS(n,\mu,3;q)| \leq \left(\begin{smallmatrix}  \mu\\2 \end{smallmatrix}\right)n$ as an $\F_q$-vector space, and hence $\F_q$-dimension $n|\SSS_\mu^{(Roth)}| \geq n^3 - n\left(\begin{smallmatrix} \mu\\2 \end{smallmatrix}\right)$. Moreover, he proved that that any element in $\CCC(n,\mu,3;q)$ has tensor-rank at least $\mu$ and \cite[Theorem 5]{ROTHTensorCodesForRankMetric}. Along with this lower bound on the tensor-rank, in \cite{ROTHTensorCodesForRankMetric} two decoding algorithms for the codes $\CCC(n,3,3;q)$ and $\CCC(n,5,3;q)$ are given that correct errors of tensor-rank at most one and two, respectively. There is a generalisation of the second algorithm to correct errors of higher tensor-rank for suitable codes. We will discuss the complexity of these algorithms in Section~\ref{sec:Comparison}. We now study the properties of similar codes defined with this second characterisation as evaluation codes of $q$-polynomials with support in an arbitrary subset of $\llbracket 0,n-1\rrbracket^2$.

We now consider the subspace of polynomials with a specified support in the quotient vector space, as well as its image under the isomorphism described in Proposition~\ref{prop:isomPolyAndTensors}, and study their properties. Throughout this section, let $\mathcal S$ be a subset of $\llbracket 0,n-1\rrbracket^2$.

\begin{Def} \label{def:DefnCodePolyExpression}
   We define the \textbf{(polynomial) Roth-tensor code} associated to $\mathcal{S}$ to be the  $\mathbb{F}_{q^n}$-vector subspace of $\mathscr{M}_{q,\mathbb{F}_{q^n}}[X,Y]/U_{X,Y}$
    $$\mathcal{C}(\mathcal{S}) := \left\{ \left.  \overline{f(X,Y)} \ \right| \ \overline{f(X,Y)} \in \mathscr{M}_{q,\mathbb{F}_{q^n}}[X,Y]/U_{X,Y} , \Supp(\overline{f(X,Y)}) \subseteq  \mathcal{S}\right\}.$$

    We define the \textbf{(evaluation) Roth-tensor code} associated to $\mathcal{S}$ and $\alpha$ to be the $\mathbb{F}_{q^n}$-linear code in $\mathbb{F}_{q^n}^{n\times n}$:
    $$\mathcal{C}_{\alpha}(\mathcal{S}) := \left\{ \left. \big(f(\alpha_{i_1},\alpha_{i_{2}}) \big)_{{i} \in \llbracket 1,n\rrbracket^2} \right| \begin{array}{l} f \in \mathcal{M}_{n,\mathbb{F}_{q^n}}[X,Y], \\\Supp(f) \subseteq  \mathcal{S}\end{array}\right\}.$$
\end{Def}

It follows immediately from the definitions of $\CCC(\SSS)$ and $\CCC_\alpha(\SSS)$ and Proposition~\ref{prop:isomPolyAndTensors} that $\ev(\CCC(\SSS)) = \CCC_\alpha(\SSS)$.

Furthermore, we observe that every column and row of any codeword of $\CCC_\alpha(\SSS)$ is a codeword of a Gabidulin code, as stated in the following proposition.

\begin{Prop}
\label{prop:GabidulinColumns}
    The $\mathbb{F}_{q^n}$-vector spaces $\mathcal{C}_\alpha(\mathcal{S})$ and $\mathcal{C}(\mathcal{S})$ are isomorphic and have $\mathbb{F}_{q^n}$-dimension $|\mathcal{S}|$. Moreover, the maps 
    
    $$\phi_2 : \left\{
        \begin{array}{rcl}    
            \mathcal{C}_\alpha(\mathcal{S}) & \longrightarrow& \Big( \GG_{\max(\pi_2(\mathcal{S}))+1}(\alpha) \Big)^{n} \\[0.2cm]
            C & \longmapsto & \Big( C[i_1,:] \Big)_{i_1 \in \llbracket 1,n\rrbracket}
        \end{array}\right. \quad \text{and} \quad 
    \phi_1 : \left\{
        \begin{array}{rcl}    
            \mathcal{C}_\alpha(\mathcal{S}) & \longrightarrow& \Big( \GG_{\max(\pi_1(\mathcal{S}))+1}(\alpha) \Big)^{n} \\[0.2cm]
            C & \longmapsto & \Big( C[:,i_2] \Big)_{i_2 \in \llbracket 1,n\rrbracket}
         \end{array}\right.$$

are $\mathbb{F}_{q^n}$-vector space monomorphisms, such that for each $j\in \{1,2\}$, $\phi_j$ is an isomorphism if and only if  $\mathcal{S} = \llbracket 0,n-1 \rrbracket \times \llbracket 0,\max(\pi_j(\mathcal{S})) \rrbracket$.
\end{Prop}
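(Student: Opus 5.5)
The first claim follows from Proposition~\ref{prop:isomPolyAndTensors} and Lemma~\ref{lemma:BasisOfQutientSpace}. The cosets $\overline{X^{q^{s_1}}Y^{q^{s_2}}}$ with $s\in\SSS$ are part of a basis of $\MM_{q,\F_{q^n}}[X,Y]/U_{X,Y}$, and they span exactly $\CCC(\SSS)$. Hence $\dim_{\F_{q^n}}\CCC(\SSS)=|\SSS|$. Since $\ev$ is an $\F_{q^n}$-isomorphism and $\ev(\CCC(\SSS))=\CCC_\alpha(\SSS)$, its restriction is an isomorphism $\CCC(\SSS)\to\CCC_\alpha(\SSS)$, and the dimensions agree.

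For $\phi_2$, I will take $C=\ev(\overline f)$ with $f=\sum_{s\in\SSS}f_sX^{q^{s_1}}Y^{q^{s_2}}$ and set $k_2=\max(\pi_2(\SSS))+1$. The $i_1$-th row is $(V_{i_1}(\alpha_{i_2}))_{i_2}$, where
\[
V_{i_1}(Z)=f(\alpha_{i_1},Z)=\sum_{\ell=0}^{k_2-1}\Big(\sum_{s\in\SSS,\,s_2=\ell}f_s\alpha_{i_1}^{q^{s_1}}\Big)Z^{q^\ell}.
\]
This is a linearised polynomial of $q$-degree less than $k_2$, so each row lies in $\GG_{k_2}(\alpha)$. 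The map $\phi_2$ is clearly $\F_{q^n}$-linear, since it only reorganises coordinates. It is injective because it merely reads off all entries of $C$: if every row vanishes, then $C=0$. The argument for $\phi_1$ is symmetric, using columns and $X$.

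For the characterisation, I compare dimensions. Both spaces are finite-dimensional over $\F_{q^n}$, and $\dim(\GG_{k_j}(\alpha))^n=nk_j$. A monomorphism is therefore an isomorphism iff $|\SSS|=nk_j$. Since $\SSS\subseteq\pi_j$-bounded box $\{s\in\llbracket0,n-1\rrbracket^2: s_j\le k_j-1\}$, which has exactly $nk_j$ elements, equality of cardinalities holds iff $\SSS$ equals that whole box.

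The statement writes the box as $\llbracket0,n-1\rrbracket\times\llbracket0,\max\pi_j(\SSS)\rrbracket$. This matches the case $j=2$ literally; for $j=1$ the factors should be swapped. I will note this and prove the correctly oriented version.

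The only step needing care is the dimension bookkeeping: the ambient dimension $nk_j$ and the inclusion of $\SSS$ in the box. Both are routine, so I expect no real obstacle.
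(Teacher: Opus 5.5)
Your proof is correct and follows essentially the same route as the paper: the dimension count from the monomial basis, the rows as evaluations of $f(\alpha_{i_1},Y)$ giving membership in $\GG_{\max\pi_2(\SSS)+1}(\alpha)$, injectivity as a restriction of the row-splitting isomorphism, and the characterisation via $|\SSS| = n(\max\pi_2(\SSS)+1)$. Your remark about $j=1$ is also right: the box there should be $\llbracket 0,\max\pi_1(\SSS)\rrbracket\times\llbracket 0,n-1\rrbracket$, which the paper glosses over by treating only $\phi_2$.
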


\begin{proof}
The monomials $\{X^{q^{s_1}}Y^{q^{s_2}}: s \in \mathcal{S}\}$ are linearly independent over $\F_{q^n}$ and generate $\CCC(\SSS)$ as an $\F_{q^n}$-vector space, thus $\dim_{\F_{q^n}}\CCC(\SSS) = |\mathcal{S}|$. Since $\mathcal{C}(\mathcal{S})$ and $\mathcal{C}_\alpha(\mathcal{S})$ are isomorphic, we have $\dim_{\F_{q^n}}\mathcal{C}_\alpha(\mathcal{S}) = |\SSS|$. We only need to prove the statement for $\phi_2$. For each $q$-polynomial $f(X,Y) \in \mathscr{M}_{q,\mathbb{F}_{q^n}}[X,Y]$ such that $\Supp(f(X,Y)) \subseteq  \mathcal{S}$ and each $i_1 \in \llbracket 1,n\rrbracket$ the polynomial $f(\alpha_{i_1},Y)$ is a linearised $q$-polynomial with $q$-degree (if non-zero) bounded from above by $\max (\pi_2(\SSS))$. Let $C := \big(f(\alpha_{i_1},\alpha_{i_2}) \big)_{{i} \in \llbracket 1,n\rrbracket^2}$ be its associated codeword in $\mathcal{C}_\alpha(\mathcal{S})$. Then for each $i_1 \in \llbracket 1,n\rrbracket$, $C[i_1,:]$ is the evaluation at $\alpha$ of $f(\alpha_{i_1},X)$ and thus is an element of $\GG_{\max(\pi_2(\SSS))+1}(\alpha)$. Hence, the map $\phi_2$ is well-defined.  Moreover, the map $\phi_2$ is $\F_{q^n}$-linear and injective since it is a restriction of the isomorphism  $\F_{q^n}^{n\times n} \to (\F_{q^n}^n)^n$ that splits any matrix into its rows. Hence, it is an isomorphism if and only if $|\mathcal{S}| = n(\max(\pi_2(\mathcal{S})) + 1)$, in other words, if and only  $\mathcal{S} = \llbracket 0,n-1 \rrbracket \times \llbracket 0,\max(\pi_2(\mathcal{S})) \rrbracket$.
\end{proof}

\begin{Corol}\label{corol:GabidulinTwoSidesAndTensor}
    Let $\mu_1,\mu_2 \in \llbracket 0,n-1\rrbracket$ be integers. We have
    $$\CCC_\alpha(\llbracket 0,\mu_1\rrbracket \times \llbracket 0,\mu_2\rrbracket) = \left\{ C \in \F_{q^n}^{n\times n} \ \left| \  \forall i \in \llbracket 1,n\rrbracket^2: C[i_1,:] \in  \GG_{\mu_2+1}(\alpha) \text{ and } C[:,i_2] \in  \GG_{\mu_1+1}(\alpha)  \right.  \right\},$$
    and we have the isomorphism of $\F_{q^n}$-vector spaces 
    $$\CCC_\alpha(\llbracket 0,\mu_1\rrbracket \times \llbracket 0,\mu_2\rrbracket) \simeq \GG_{\mu_2+1}(\alpha) \otimes_{\F_{q^n}} \GG_{\mu_1+1}(\alpha).$$
\end{Corol}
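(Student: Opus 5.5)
Write $\SSS := \llbracket 0,\mu_1\rrbracket \times \llbracket 0,\mu_2\rrbracket$ and let $\mathcal{D}$ be the right-hand side of the claimed equality. The plan is to show $\CCC_\alpha(\SSS)\subseteq\mathcal{D}$, then prove equality by comparing $\F_{q^n}$-dimensions, and finally obtain the tensor product isomorphism from the same dimension count.

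The inclusion $\CCC_\alpha(\SSS)\subseteq\mathcal{D}$ follows from Proposition~\ref{prop:GabidulinColumns}. Since $\max\pi_1(\SSS)=\mu_1$ and $\max\pi_2(\SSS)=\mu_2$, the maps $\phi_2$ and $\phi_1$ are well defined. Hence every row of a codeword lies in $\GG_{\mu_2+1}(\alpha)$ and every column lies in $\GG_{\mu_1+1}(\alpha)$. The same proposition gives $\dim_{\F_{q^n}}\CCC_\alpha(\SSS)=|\SSS|=(\mu_1+1)(\mu_2+1)$.

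For the reverse inclusion I will describe $\mathcal{D}$ with matrices. By Definition~\ref{def:MooreMatrixDefinition}, $M_k(\alpha)$ generates $\GG_k(\alpha)$.
\begin{itemize}
\item The condition on rows says the row space of $C$ lies in the row space of $M_{\mu_2+1}(\alpha)$, so $C = A\,M_{\mu_2+1}(\alpha)$ for a unique $A\in\F_{q^n}^{n\times(\mu_2+1)}$. Uniqueness holds because $M_{\mu_2+1}(\alpha)$ has full row rank, being a submatrix of the invertible $M(\alpha)$.
\item The condition on columns says each column of $C$ lies in the column space of $M_{\mu_1+1}(\alpha)^\top$. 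Choose a right inverse $R$ of $M_{\mu_2+1}(\alpha)$; for instance, take the relevant columns of $M(\alpha)^{-1}=M(\alpha^\bot)^\top$. Then $A = CR$, so the columns of $A$ also lie in that column space.
\item Therefore $A = M_{\mu_1+1}(\alpha)^\top K$ with $K\in\F_{q^n}^{(\mu_1+1)\times(\mu_2+1)}$.
\end{itemize}
This gives $\mathcal{D}=\{M_{\mu_1+1}(\alpha)^\top K\, M_{\mu_2+1}(\alpha)\}$, so $\dim\mathcal{D}\le(\mu_1+1)(\mu_2+1)$. Combined with the inclusion, equality follows.

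Alternatively, one can read off directly that $M_{\mu_1+1}(\alpha)^\top K M_{\mu_2+1}(\alpha)=\ev\big(\sum K_{s}X^{q^{s_1}}Y^{q^{s_2}}\big)$, exactly as in the computation $I_2\circ I_1$ of Proposition~\ref{prop:isomPolyAndTensors}. I expect the main care to lie in this step: justifying that the column condition transfers from $C$ to $A$, which needs a right inverse and hence that $M_k(\alpha)$ has full rank.

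For the tensor isomorphism, consider the $\F_{q^n}$-bilinear map
\[\GG_{\mu_2+1}(\alpha)\times\GG_{\mu_1+1}(\alpha)\to\F_{q^n}^{n\times n},\qquad (u,v)\mapsto v u^\top.\]
Take $u=\ev(Y^{q^{s_2}})$ and $v=\ev(X^{q^{s_1}})$. Then $vu^\top=\ev(X^{q^{s_1}}Y^{q^{s_2}})$, which lies in $\CCC_\alpha(\SSS)$ by bilinearity. Hence the induced linear map from the tensor product lands in $\CCC_\alpha(\SSS)$ and maps a basis of the tensor product onto the basis $\{\ev(X^{q^{s_1}}Y^{q^{s_2}}) : s\in\SSS\}$. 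Both sides have dimension $(\mu_1+1)(\mu_2+1)$, so the map is an isomorphism.
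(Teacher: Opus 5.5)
Your proof is correct and follows the same skeleton as the paper: the inclusion comes from Proposition~\ref{prop:GabidulinColumns}, and equality comes from comparing dimensions with $|\SSS| = (\mu_1+1)(\mu_2+1)$. The only difference is how the dimension of the right-hand side is obtained: the paper identifies that set ``by definition'' with $\GG_{\mu_2+1}(\alpha)\otimes_{\F_{q^n}}\GG_{\mu_1+1}(\alpha)$ and reads off both the dimension and the isomorphism from this, whereas you prove the dimension bound directly through the factorisation $C = M_{\mu_1+1}(\alpha)^\top K\, M_{\mu_2+1}(\alpha)$ and build the isomorphism separately via $(u,v)\mapsto vu^\top$, which makes rigorous a step the paper leaves implicit.
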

\begin{proof}
    Let $\tilde \CCC=\left\{ C \in \F_{q^n}^{n\times n} \ \left| \  \forall i \in \llbracket 1,n\rrbracket: C[i_1,:] \in  \GG_{\mu_2+1}(\alpha) \text{ and } C[:,i_2] \in  \GG_{\mu_1+1}(\alpha)  \right.  \right\}$. Set $\SSS=\llbracket 0,\mu_1\rrbracket \times \llbracket 0,\mu_2\rrbracket$. By definition we have $\tilde \CCC = \GG_{\mu_2+1}(\alpha) \otimes_{\F_{q^n}} \GG_{\mu_1+1}(\alpha)$. By Proposition~\ref{prop:GabidulinColumns}, $\CCC_\alpha(\SSS)\subseteq\tilde \CCC$, and since 
    \[
        \dim_{\F_q} \tilde \CCC = \dim_{\F_q}\left(\GG_{\mu_2+1}(\alpha)\right) \cdot \dim_{\F_q}\left(\GG_{\mu_1+1}(\alpha)\right) = (\mu_2 +1)(\mu+1 +1) = |\mathcal S| = \dim_{\F_q} \CCC_\alpha(\SSS),
    \]
    we have the wanted result.
\end{proof}

We recall the definition of the dual of a matrix code:

\begin{Def}
The \emph{dual} code of a matrix code 
$\CCC \subseteq \F_{q^n}^{n\times n}$, is defined to be
$$ \CCC^\perp:= \{ X \in \F_{q^n}^{n\times n} ~|~ \forall Y \in \CCC,\  \tr(XY^\top) = 0\}.$$
\end{Def}

As in \cite[Section~3.2]{ROTHTensorCodesForRankMetric}, we have an expression of the dual of $\mathcal{C}_\alpha(\mathcal{S})$ in terms of parity check equations. That is, 
$$\mathcal{C}_\alpha(\mathcal{S})^\perp= \left\{ \Gamma \in (\mathbb{F}_q^{n})^{\otimes 3} \ ~\left|~ \ \forall r \in \llbracket 0,n-1\rrbracket^2 \backslash \mathcal{S} ~|~ \sum_{ i \in \llbracket 1,n \rrbracket^3} \Gamma[i] (\alpha^{\bot}_{i_1})^{q^{ r_1}}(\alpha^{\bot}_{i_2})^{q^{ r_2}}\alpha_{i_3} = 0 \right. \right\}.$$
This expression is the consequence of the two following facts. 
    \begin{enumerate}
    \item For each $r \in \llbracket 0,n-1\rrbracket^2$, consider the matrix $M_r := \alpha^{q^{r_1}} \otimes \alpha^{q^{r_2}} \in \F_{q^n}^{n\times n}$, i.e. the matrix whose 
    $(i_1,i_2)$-component is $M_{r,(i_1,i_2)} =\alpha_{i_1}^{q^{r_1}}\alpha_{i_2}^{q^{r_2}}$ for each $i_1,i_2 \in  \llbracket 1,n\rrbracket$. Then the set $\{M_{r}: r\in \llbracket0,n-1\rrbracket^2\}$ is an $\F_{q^n}$-basis of $\F_{q^n}^{n\times n}$; see \cite[Lemma 4]{ROTHTensorCodesForRankMetric}.
    \item Let $\alpha^\bot$ be the dual basis of $\alpha$, so that $\sum_{\ell = 0}^{n-1} (\alpha_i \alpha^\bot_j)^{q^\ell} = \ind{i=j}$ for each $i,j \in \llbracket 1,n\rrbracket$. This is equivalent to stating that $M(\alpha)$ is invertible and has $M(\alpha^\bot)^\top$ as its inverse. Hence, one also has $\sum_{i = 1}^{n} \alpha_i^{q^u} (\alpha^\bot_i)^{q^v} = \ind{u=v}$ for each $u,v \in \llbracket 0,n-1\rrbracket$.
    \end{enumerate}

These two properties let us state that the dual code of an evaluated Roth-tensor code is also an evaluated Roth-tensor code evaluated at the dual basis; see \cite{RavagnaniRankMetricCodes}. 

\begin{Prop} 
    We have $\mathcal{C}_{\alpha}(\mathcal{S})^\bot = \mathcal{C}_{\alpha^\bot}(\llbracket 0,n-1 \rrbracket^2 \backslash S)$.
\end{Prop}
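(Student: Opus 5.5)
The plan is to prove the inclusion $\mathcal{C}_{\alpha^\bot}(\llbracket 0,n-1\rrbracket^2\setminus\mathcal S)\subseteq \mathcal{C}_\alpha(\mathcal S)^\bot$ by a direct computation on basis elements, and then to conclude by a dimension count.

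For the inclusion, it suffices to check orthogonality of spanning sets. By Proposition~\ref{prop:isomPolyAndTensors} (equivalently, by \cite[Lemma 4]{ROTHTensorCodesForRankMetric}):
\begin{itemize}
\item $\mathcal{C}_\alpha(\mathcal S)$ is spanned over $\F_{q^n}$ by the matrices $M_s=\alpha^{q^{s_1}}\otimes\alpha^{q^{s_2}}$ for $s\in\mathcal S$;
\item $\mathcal{C}_{\alpha^\bot}(\llbracket 0,n-1\rrbracket^2\setminus\mathcal S)$ is spanned by the matrices $N_r=(\alpha^\bot)^{q^{r_1}}\otimes(\alpha^\bot)^{q^{r_2}}$ for $r\notin\mathcal S$, $r\in\llbracket 0,n-1\rrbracket^2$.
\end{itemize}
Since $\tr(XY^\top)=\sum_{i_1,i_2}X_{i_1,i_2}Y_{i_1,i_2}$ is $\F_{q^n}$-bilinear, it is enough to compute $\tr(N_rM_s^\top)$. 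Because both matrices are rank-one tensors, this quantity factors:
\[
\tr(N_rM_s^\top)=\Big(\sum_{i_1=1}^n (\alpha^\bot_{i_1})^{q^{r_1}}\alpha_{i_1}^{q^{s_1}}\Big)\Big(\sum_{i_2=1}^n (\alpha^\bot_{i_2})^{q^{r_2}}\alpha_{i_2}^{q^{s_2}}\Big).
\]
By the second fact recalled before the statement, namely $\sum_i\alpha_i^{q^u}(\alpha_i^\bot)^{q^v}=\ind{u=v}$ for $u,v\in\llbracket 0,n-1\rrbracket$, the product equals $\ind{r_1=s_1}\ind{r_2=s_2}=\ind{r=s}$. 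This vanishes since $r\notin\mathcal S\ni s$, which establishes the inclusion.

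For equality, note that $\tr(XY^\top)$ is a non-degenerate symmetric $\F_{q^n}$-bilinear form on $\F_{q^n}^{n\times n}$ (it is the standard dot product on $n^2$ coordinates). Hence $\dim_{\F_{q^n}}\mathcal C_\alpha(\mathcal S)^\bot=n^2-|\mathcal S|$. Since $\alpha^\bot$ is also a basis of $\F_{q^n}/\F_q$, Proposition~\ref{prop:GabidulinColumns} applied to $\alpha^\bot$ gives $\dim_{\F_{q^n}}\mathcal C_{\alpha^\bot}(\llbracket 0,n-1\rrbracket^2\setminus\mathcal S)=n^2-|\mathcal S|$. An inclusion between spaces of equal finite dimension is an equality.

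The main point needing care is the orthogonality identity $\sum_i\alpha_i^{q^u}(\alpha^\bot_i)^{q^v}=\ind{u=v}$. It is obtained by reading the matrix identity $M(\alpha)M(\alpha^\bot)^\top=I$ as the reverse product $M(\alpha^\bot)^\top M(\alpha)=I$ and comparing entries, and one must use the convention that the entries of the Moore matrix are indexed by exponents $u,v\in\llbracket 0,n-1\rrbracket$. The one mild subtlety is that the bilinear form is $\F_{q^n}$-bilinear rather than Hermitian, so no Frobenius twist appears when we expand by linearity over the spanning sets.
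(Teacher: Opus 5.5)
Your proof is correct and follows the paper's argument: the factorisation of $\tr(N_rM_s^\top)$ into two sums equal to $\ind{r_1=s_1}\ind{r_2=s_2}$ gives the inclusion, and the common dimension $n^2-|\mathcal S|$ turns it into an equality. One small slip in your last paragraph is that the directions are reversed: the entries of $M(\alpha)M(\alpha^\bot)^\top$ are already $\sum_i\alpha_i^{q^u}(\alpha_i^\bot)^{q^v}$, so that is the product to read off directly, whereas $M(\alpha^\bot)^\top M(\alpha)=I$ is the trace-duality definition of $\alpha^\bot$, from which the first identity follows because a one-sided inverse of a square matrix is two-sided.
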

\begin{proof}
Since $\CCC^\perp$ is a dual relatively to the canonical dot product in $\F_{q^n}^{n\times n}$, we have $\dim_{\F_{q^n}}\CCC_\alpha(\mathcal S)^{\perp} = n^2 - |\mathcal S| = \dim_{\F_{q^n}} \mathcal{C}_{\alpha^\bot}(\llbracket 0,n-1 \rrbracket^2 \backslash S)$, therefore one inclusion suffices to conclude. Let $r = (r_1,r_2) \in \llbracket 0,n-1\rrbracket^2 \backslash \SSS$. Then for each $s = (s_1,s_2) \in \mathcal{S}$, if we denote by $D = \left((\alpha_{i_1}^\perp)^{q^{r_1}}(\alpha_{i_2}^{\perp})^{q^{r_2}}\right)_{i \in \llbracket 1,n\rrbracket^2}$ and $C = \left(\alpha_{i_1}^{q^{s_1}}\alpha_{i_2}^{q^{s_2}}\right)_{i \in \llbracket 1,n\rrbracket^2}$ the respective codewords of $\mathcal{C}_{\alpha^\bot}(\llbracket 0,n-1 \rrbracket^2 \backslash S)$ and $\mathcal C_\alpha(\SSS)$ associated to the monomials $X^{q^{r_1}}Y^{q^{r_2}}$ and $X^{q^{s_1}}Y^{q^{s_2}}$, we have
\begin{align*}
    \tr(CD^\top) &= \sum_{i \in \llbracket 1,n\rrbracket^2} \alpha_{i_1}^{q^{s_1}}\alpha_{i_2}^{q^{s_2}}(\alpha_{i_1}^\perp)^{q^{r_1}}(\alpha_{i_2}^{\perp})^{q^{r_2}}\\
    &= \left(\sum_{i_1 = 1}^n \alpha_{i_1}^{q^{s_1}}(\alpha_{i_1}^\perp)^{q^{r_1}}\right) \left(\sum_{i_2 = 1}^n \alpha_{i_2}^{q^{s_2}}(\alpha_{i_2}^\perp)^{q^{r_2}}\right)\\
    &= \ind{r_1 = s_1}\ind{r_2 = s_2}\\
    &= 0,
\end{align*}
since $r \not\in \mathcal S$ by assumption . Therefore we have $\mathcal{C}_{\alpha}(\mathcal{S})^\bot \supseteq \mathcal{C}_{\alpha^\bot}(\llbracket 0,n-1 \rrbracket^2 \backslash \mathcal{S})$ which concludes the proof.
\end{proof}

\begin{Rq}
    In particular, $\CCC_\alpha(\SSS)^\bot$ and $\CCC_{\alpha}(\llbracket 0,n-1\rrbracket^2\backslash \SSS)$ are equivalent as matrix $\F_{q^n}\tm[n\times n,k]$ rank-metric codes, 
    \emph{i.e.} there exist invertible matrices $P,Q \in \GL_n(\F_{q^n})$ such that $\CCC_\alpha(\SSS)^\bot = P\CCC_{\alpha}(\llbracket 0,n-1\rrbracket^2\backslash \SSS)Q$. This is a direct consequence of the form of the isomorphism $I_2$ in Proposition~\ref{prop:isomPolyAndTensors}.
    
\end{Rq}

Given a basis $\alpha$ of $\F_{q^n}/\F_q$, the families of Roth-tensor codes parametrised with the sets $\SSS$ are ascending families of codes with respect to inclusion. In other words, we have the following statement.

\begin{Lemma}
     Let $\SSS,\SSS' \subseteq \llbracket 0,n-1\rrbracket$ with $\SSS \subseteq \SSS'$. Then $\CCC(\SSS) \subseteq \CCC(\SSS')$ and hence $\CCC_\alpha(\SSS) \subseteq \CCC_\alpha(\SSS')$.
\end{Lemma}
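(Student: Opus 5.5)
The plan is to argue directly from Definition~\ref{def:DefnCodePolyExpression} and then transport the inclusion through the evaluation map. Note first that the statement writes $\SSS,\SSS' \subseteq \llbracket 0,n-1\rrbracket$, but since supports of cosets live in $\llbracket 0,n-1\rrbracket^2$, we read it as $\SSS,\SSS' \subseteq \llbracket 0,n-1\rrbracket^2$, which is the standing convention of this section.

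\emph{First inclusion.} Let $\overline{f(X,Y)} \in \CCC(\SSS)$. By Lemma~\ref{lemma:BasisOfQutientSpace}, this coset has a unique representative whose monomials have partial $q$-degrees less than $n$. Its support $\Supp(\overline{f(X,Y)})$ is by definition the support of that representative, and by hypothesis it satisfies $\Supp(\overline{f(X,Y)}) \subseteq \SSS$. Since $\SSS \subseteq \SSS'$, we get $\Supp(\overline{f(X,Y)}) \subseteq \SSS'$, so $\overline{f(X,Y)} \in \CCC(\SSS')$. Hence $\CCC(\SSS) \subseteq \CCC(\SSS')$.

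\emph{Second inclusion.} We noted right after Definition~\ref{def:DefnCodePolyExpression} that $\ev(\CCC(\SSS)) = \CCC_\alpha(\SSS)$, where $\ev$ is the map of Proposition~\ref{prop:isomPolyAndTensors}; the same holds for $\SSS'$. Applying the map $\ev$ to the first inclusion gives
\[
\CCC_\alpha(\SSS) = \ev(\CCC(\SSS)) \subseteq \ev(\CCC(\SSS')) = \CCC_\alpha(\SSS').
\]
Alternatively, one can check this directly. Any $f$ with $\Supp(f) \subseteq \SSS$ also has $\Supp(f)\subseteq \SSS'$, so its evaluation matrix belongs to both codes.

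There is no real obstacle here. The only subtle point is that support is well defined on cosets, and this is exactly what the uniqueness part of Lemma~\ref{lemma:BasisOfQutientSpace} guarantees.
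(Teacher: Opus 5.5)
Your proof is correct and follows the paper's argument: $\Supp(\overline{f(X,Y)})\subseteq\SSS\subseteq\SSS'$ gives $\CCC(\SSS)\subseteq\CCC(\SSS')$, and the paper's ``hence'' is exactly your application of $\ev$, which it leaves implicit. Reading the hypothesis as $\SSS,\SSS'\subseteq\llbracket 0,n-1\rrbracket^2$ is the intended meaning.
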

\begin{proof}
    For each $\overline{f(X,Y)} \in \mathscr{M}_{q,\mathbb{F}_{q^n}}[X,Y]/U_{X,Y}$ such that $\Supp(\overline{f(X,Y)}) \subseteq \SSS$, we have that $\Supp(\overline{f(X,Y)}) \subseteq \SSS'$, thus $\CCC(\SSS) \subseteq \CCC(\SSS')$. 
\end{proof}

\begin{Prop}\label{prop:TranslationSets}
    Let $r_1,r_2 \in \Z$. Let $\mathcal{S}' = \{((s_1 + r_1) \mod n, (s_2 + r_2) \mod n) \ | \ (s_1,s_2) \in \mathcal{S}\}$. Then there exist matrices $L,R \in \GL(\F_q)$ such that $L \CCC_\alpha(\mathcal{S}) R = \CCC_\alpha(\mathcal{S}')$.
\end{Prop}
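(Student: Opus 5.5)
The plan is to realise the cyclic shift of exponents as the Frobenius $z\mapsto z^{q^{r_j}}$ and then to express raising the evaluation points to a $q$-power as an $\F_q$-linear change of coordinates on the columns and rows. First reduce to $r_1,r_2\in\llbracket 0,n-1\rrbracket$, which is harmless since only residues mod $n$ matter. Fix a codeword $C=(f(\alpha_{i_1},\alpha_{i_2}))_{i}\in\CCC_\alpha(\SSS)$ with $f=\sum_{s\in\SSS}f_sX^{q^{s_1}}Y^{q^{s_2}}$. Consider
\[
g(X,Y):=f\bigl(X^{q^{r_1}},Y^{q^{r_2}}\bigr)=\sum_{s\in\SSS} f_s X^{q^{s_1+r_1}}Y^{q^{s_2+r_2}} .
\]
By the reduction in the proof of Lemma~\ref{lemma:BasisOfQutientSpace}, $\overline{g}$ has support exactly $\SSS'$, and the map $\overline f\mapsto\overline g$ is an $\F_{q^n}$-linear bijection $\CCC(\SSS)\to\CCC(\SSS')$, since it simply relabels monomials bijectively. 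Its inverse is the shift by $(-r_1,-r_2)$. Consequently, $\ev(\overline g)=\bigl(f(\alpha_{i_1}^{q^{r_1}},\alpha_{i_2}^{q^{r_2}})\bigr)_i$ ranges over all of $\CCC_\alpha(\SSS')$ as $C$ ranges over $\CCC_\alpha(\SSS)$.

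Next, note that $\alpha^{q^{r_1}}$ is again an $\F_q$-basis of $\F_{q^n}$, because Frobenius is an $\F_q$-linear automorphism. Hence there is a matrix $A\in\GL_n(\F_q)$ with $\alpha_{i}^{q^{r_1}}=\sum_{j}A_{i,j}\alpha_j$, and similarly $B\in\GL_n(\F_q)$ for $r_2$. Using $\F_q$-bilinearity of $f$ (the evaluation proposition for bilinearised polynomials), we get
\[
f\bigl(\alpha_{i_1}^{q^{r_1}},\alpha_{i_2}^{q^{r_2}}\bigr)=\sum_{j_1,j_2}A_{i_1,j_1}B_{i_2,j_2}f(\alpha_{j_1},\alpha_{j_2}),
\]
that is, $\ev(\overline g)=A\,C\,B^\top$. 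Setting $L=A$ and $R=B^\top$ then gives $L\,\CCC_\alpha(\SSS)\,R=\CCC_\alpha(\SSS')$. The inclusion $\subseteq$ follows from the previous paragraph. Equality holds either by surjectivity of $\overline f\mapsto\overline g$, or because $C\mapsto LCR$ is injective and both codes have dimension $|\SSS|=|\SSS'|$ by Proposition~\ref{prop:GabidulinColumns}.

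The only delicate point is the bookkeeping that the shifted exponents reduce mod $n$ to $\SSS'$ without collisions. This holds because $s\mapsto s+(r_1,r_2)\bmod n$ is a bijection of $\llbracket 0,n-1\rrbracket^2$, and because the evaluation depends only on the coset modulo $U_{X,Y}$. The remaining steps are routine, and $L,R$ have entries in $\F_q$ as claimed.
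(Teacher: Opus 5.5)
Your proposal is correct and follows essentially the paper's argument: the shift is realised by precomposing with Frobenius, and the twisted basis $\alpha^{q^{r_j}}$ is an $\F_q$-change of basis of $\alpha$, which gives $\CCC_\alpha(\SSS')=A\,\CCC_\alpha(\SSS)\,B^\top$. The paper phrases the same computation through $M(\alpha^{q^{r_1}})=M(\alpha)A$ and $\ev=I_2\circ I_1$ (its $A$ is the transpose of yours), while you use $\F_q$-bilinearity of evaluation directly; your only slip is that $\overline{g}$ has support \emph{contained in} $\SSS'$ rather than exactly $\SSS'$, and this does not affect the argument.
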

\begin{proof}
    The elements of $\CCC_\alpha(\mathcal{S}')$ are the evaluations at $\alpha$ of the polynomials with monomials of the form $X^{q^{(s_1 + r_1) \mod n}}Y^{q^{(s_2 + r_2) \mod n}} = (X^{q^{r_1}})^{q^{s_1}} (Y^{q^{r_2}})^{q^{s_2}}$. Therefore, with the notations of Proposition~\ref{prop:isomPolyAndTensors}, the code $\CCC_\alpha(\mathcal{S}')$ is the image of $I_1(\CCC(\SSS))$ under the linear map $K \mapsto M(\alpha^{q^{r_1}})^\top K M(\alpha^{q^{r_2}})$. Since the Frobenius map is an $\F_q$-automorphism of $\F_{q^n}$, then $\alpha^{q^{r_1}}$ forms an $\F_q$-basis of $\F_{q^n}$ and there exists a matrix $A \in \GL_n(\F_q)$ such that $(\alpha_1^{q^{r_1}},\dots,\alpha_n^{q^{r_1}}) = (\alpha_1,\dots,\alpha_n)A$. In particular we have $M(\alpha^{q^{r_1}}) = M(\alpha)A$. Likewise there exists a matrix $B\in\GL_n(\F_q)$ such that $M(\alpha^{q^{r_2}}) = M(\alpha)B$ and therefore
        \begin{align*}
        \CCC_\alpha(\mathcal{S}') &= \{  (M(\alpha)A)^\top K M(\alpha) B \ | \ K \in I_1(\CCC(\SSS))\} \\
        &= A^\top (I_2 \circ I_1)(\CCC(\SSS)) B \\
        &= A^\top \CCC_\alpha(\mathcal{S}) B.
    \end{align*}
    
\end{proof}

\subsection{Properties of bilinearised $q$-polynomials}

\begin{Lemma} \label{lemma:CoefCompositionVof}
Let $\mathcal{S}_V \subseteq \llbracket 0,n-1\rrbracket$ and let 
$\mathcal{S}_f  \subseteq \N_0^2$.
Let $V(Z)= \sum_{\ell \in \mathcal{S}_V} v_\ell Z^{q^\ell} \in \mathscr{M}_{q,\mathbb{F}_{q^n}}[Z]$ 
and let $f(X,Y)= \sum_{ s \in \mathcal{S}_F} f_{ s} X^{q^{ s_1}}Y^{q^{ s_2}} \in \mathscr{M}_{q,\mathbb{F}_{q^n}}[X,Y]$. 
Then $\Supp(V(f(X,Y)))\subseteq \mathcal{S}_f + \mathcal{S}_V$ and for each $t \in \mathcal{S}_f + \mathcal{S}_V$, the coefficient in $V(f(X,Y))$ of the monomial $X^{q^{ t_1}}Y^{q^{ t_2}}$ is given by
$\displaystyle \sum_{\substack{\ell \in \mathcal{S}_V\\  t -\ell \in \mathcal{S}_f}} v_\ell f_{ t - \ell}^{q^\ell}. $
\end{Lemma}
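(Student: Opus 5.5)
The plan is to compute $V(f(X,Y))$ directly, using the fact that raising to the power $q^\ell$ is additive in characteristic $p$. This is the same observation used for the module structure of $\mathscr{M}_{q,\mathbb{F}_{q^n}}[X,Y]$ over $(\mathscr{M}_{q,\mathbb{F}_{q^n}}[Z],+,\circ)$. By definition,
\[
V(f(X,Y)) = \sum_{\ell \in \mathcal{S}_V} v_\ell \, f(X,Y)^{q^\ell}.
\]
The first step is therefore to expand each $f(X,Y)^{q^\ell}$.

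Since $\F_{q^n}[X,Y]$ is a commutative ring of characteristic $p$, the map $g \mapsto g^{q^\ell}$ is a ring endomorphism, and in particular it is additive. Hence
\[
f(X,Y)^{q^\ell} = \sum_{s \in \mathcal{S}_f} f_s^{q^\ell} \left(X^{q^{s_1}}\right)^{q^\ell}\left(Y^{q^{s_2}}\right)^{q^\ell} = \sum_{s \in \mathcal{S}_f} f_s^{q^\ell} X^{q^{s_1+\ell}}Y^{q^{s_2+\ell}}.
\]
In the notation of the paper, the exponent pair here is $s+\ell$.

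Substituting this expansion gives
\[
V(f(X,Y)) = \sum_{\ell\in\mathcal{S}_V}\sum_{s\in\mathcal{S}_f} v_\ell f_s^{q^\ell} X^{q^{(s+\ell)_1}}Y^{q^{(s+\ell)_2}}.
\]
Every monomial that appears is indexed by some $t = s+\ell \in \mathcal{S}_f + \mathcal{S}_V$. Distinct pairs $t \in \N_0^2$ give distinct monomials $X^{q^{t_1}}Y^{q^{t_2}}$ in $\F_{q^n}[X,Y]$, and no reduction modulo $U_{X,Y}$ takes place, because we are working in $\mathscr{M}_{q,\mathbb{F}_{q^n}}[X,Y]$ itself. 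So we may simply group terms according to $t$.

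For a fixed $t$, the pairs $(\ell,s)$ with $s+\ell = t$ correspond bijectively to the $\ell \in \mathcal{S}_V$ with $t-\ell \in \mathcal{S}_f$, via $s = t-\ell$. The coefficient of $X^{q^{t_1}}Y^{q^{t_2}}$ is therefore $\sum_{\ell} v_\ell f_{t-\ell}^{q^\ell}$, summed over exactly those $\ell$, as claimed. The support inclusion follows at once, since monomials outside $\mathcal{S}_f+\mathcal{S}_V$ have coefficient zero. Cancellation within a group may make the support strictly smaller, which is why the statement only asserts an inclusion.

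There is no real obstacle in this argument. The only points needing care are the additivity of the Frobenius power on multivariate polynomials, and the distinctness of the monomials indexed by distinct $t$ in the unreduced ring. I would also note that the hypothesis $\mathcal{S}_V \subseteq \llbracket 0,n-1\rrbracket$ is not used.
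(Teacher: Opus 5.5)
Your proposal is correct and follows essentially the same route as the paper: expand $V(f)=\sum_{\ell}v_\ell f^{q^\ell}$ using additivity of the $q^\ell$-power map, then regroup the double sum by $t=s+\ell$ via the bijection $s=t-\ell$. Your remarks that distinct $t$ give distinct monomials in the unreduced ring and that the hypothesis $\mathcal{S}_V\subseteq\llbracket 0,n-1\rrbracket$ is never used are accurate and only make explicit what the paper leaves implicit.
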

\begin{proof}
    Observe that there is a bijection between the sets $\mathcal{S}_f \times \mathcal{S}_V$ and $\{( t, \ell) \in (\mathcal{S}_f + \mathcal{S}_V)\times \mathcal{S}_V \ | \  t - \ell \in \mathcal{S}_f\}$. We have 
    $$V(f(X,Y)) = \sum_{\ell \in \mathcal{S}_V}\sum_{ s \in \mathcal{S}_f} v_\ell f_{ s}^{q^\ell}X^{q^{ s_1+\ell}} Y^{q^{ s_2+\ell}} = \sum_{ t \in \mathcal{S}_f + \mathcal{S}_{V}} \sum_{\substack{\ell \in \mathcal{S}_V\\  t -\ell \in \mathcal{S}_f}} v_\ell f_{ t - \ell}^{q^\ell}X^{q^{ t_1}} Y^{q^{ t_2}} , $$
    from which the statement follows.
\end{proof}

For each polynomial $F(X,Y)\in \MM_{q,\F_{q^n}}[X,Y]$, there exist unique linearized polynomials $P_\ell(Z), Q_\ell(Z) \in \MM_{q,\F_{q^n}}[Z]$ such that $F(X,Y) = \sum_{{\ell}  \in {\N_0} } P_{{\ell}}(X)Y^{q^{ \ell}}= \sum_{{\ell}  \in {\N_0} } X^{q^{ \ell}}Q_{{\ell}}(Y)$.

\begin{Prop} \label{prop:RadicalPolynomialMultivariable}
    Let $F(X,Y)\in \mathscr{M}_{q,\mathbb{F}_{q^n}}[X,Y]$. 
    For each $\ell \in \N_0$, let $P_\ell(Z), Q_\ell(Z) \in \MM_{q,\F_{q^n}}[Z]$ be such that
    \begin{equation}\label{eq:UniqueWritingMultilinearPolyFactorVariable}
        F(X,Y) = \sum_{{\ell}  \in {\N_0} } P_{{\ell}}(X)Y^{q^{ \ell}}= \sum_{{\ell}  \in {\N_0} } X^{q^{ \ell}}Q_{{\ell}}(Y).
    \end{equation}
     
     Then $\displaystyle\bigcap_{{\ell} \in {\N_0}} \ker P_{{\ell}} \subseteq \Rr\Aaa\Ddd_1(F)$ and we have equality if $\deg_{Y} F(X,Y) < q^n$ . Similarly, $\displaystyle\bigcap_{{\ell} \in {\N_0}} \ker Q_{{\ell}} \subseteq \Rr\Aaa\Ddd_2(F)$ and we have equality if $\deg_{X} F(X,Y) < q^n$ .
\end{Prop}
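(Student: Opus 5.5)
Only the statement about $P_\ell$ and $\Rr\Aaa\Ddd_1(F)$ needs proof; the one about $Q_\ell$ and $\Rr\Aaa\Ddd_2(F)$ follows by exchanging the roles of $X$ and $Y$.

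\emph{Inclusion.} Fix $x \in \bigcap_{\ell} \ker P_\ell$. For every $y \in \F_{q^n}$, evaluating the decomposition \eqref{eq:UniqueWritingMultilinearPolyFactorVariable} gives
\[
F(x,y) = \sum_{\ell} P_\ell(x)\, y^{q^\ell} = 0 .
\]
The sum is finite because $F$ has finite support, so only finitely many $P_\ell$ are non-zero. Hence $x \in \Rr\Aaa\Ddd_1(F)$. Here $\ker P_\ell$ means the kernel of the $\F_q$-linear map $\F_{q^n} \to \F_{q^n}$ induced by $P_\ell$.

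\emph{Equality when $\deg_Y F < q^n$.} Take $x \in \Rr\Aaa\Ddd_1(F)$ and form the univariate linearised polynomial
\[
G_x(Y) := F(x,Y) = \sum_\ell P_\ell(x)\, Y^{q^\ell} \in \MM_{q,\F_{q^n}}[Z].
\]
Its $Y$-degree is at most $\deg_Y F < q^n$, so every $\ell$ with $P_\ell \neq 0$ satisfies $q^\ell < q^n$, i.e.\ $\ell \le n-1$.

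By assumption $G_x$ vanishes at all $q^n$ elements of $\F_{q^n}$. Two arguments show that $G_x$ is then the zero polynomial, and I would give both.
\begin{itemize}
\item[(a)] A non-zero univariate polynomial of degree less than $q^n$ has fewer than $q^n$ roots.
\item[(b)] Using the paper's own tools, $G_x$ vanishing on the basis $\alpha$ means that $(P_\ell(x))_{0\le \ell\le n-1}$ times the Moore matrix $M(\alpha)$ is zero. Since $M(\alpha)$ is invertible, every $P_\ell(x)$ is zero.
\end{itemize}
Either way, each coefficient $P_\ell(x)$ of $G_x$ is zero, which gives $x \in \bigcap_\ell \ker P_\ell$.

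The only point needing care, which I expect to be the main subtlety, is the role of the hypothesis $\deg_Y F < q^n$. The map from coefficient vectors to evaluation functions is injective only for $q$-degree at most $n-1$. Without the hypothesis the reverse inclusion can fail: for example $F = X(Y^{q^n} - Y)$ has $\Rr\Aaa\Ddd_1(F) = \F_{q^n}$, yet $P_0 = -X$ has trivial kernel.
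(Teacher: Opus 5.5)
Your proof is correct and follows the paper's argument: the inclusion comes from direct evaluation, and the reverse inclusion from the fact that $F(x,Y)$ is a linearised polynomial of $q$-degree below $n$ vanishing on all of $\F_{q^n}$, so each coefficient $P_\ell(x)$ is zero. Your explicit justification of that vanishing (by root counting or by invertibility of $M(\alpha)$) and your counterexample $X(Y^{q^n}-Y)$, which shows the degree hypothesis is needed, are sound additions that the paper leaves implicit.
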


\begin{proof}
  Let $x\in\bigcap_{ \ell \in {\N_0}} \ker P_\ell$. For each $y \in \mathbb{F}_{q^n}$, we have $F(x,y) = 0$, therefore $x \in \Rr\Aaa\Ddd_1(F)$. Conversely, let $x \in \Rr\Aaa\Ddd_1(F)$ and assume that $\deg_{Y} F(X,Y) < q^n$. Then the polynomial $F(x,Y)$ is a linearised $q$-polynomial of $q$-degree less than $n$ that annihilates every element of $\mathbb{F}_{q^n}$, thus $F(x,Y) = 0$. Hence, every coefficient of $F(x,Y)$ is zero \emph{i.e.} $P_{{\ell}}(x) = 0$ for each $ \ell \in {\mathbb{N}_0}$ and so $x \in \bigcap_{{\ell} \in{\N_0}} \ker P_{ \ell}$. A similar argument applied to the polynomials $Q_\ell, \ell \in {\N_0}$ completes the proof.
\end{proof}

\begin{Corol} \label{corol:CorolRadicalPolynomialMulivar}
    Let $F(X,Y)\in \mathscr{M}_{q,\mathbb{F}_{q^n}}[X,Y]$ be non-zero. 
    If $q$-$\deg_{Y} F(X,Y) < n$, then 
    $\dim_{\mathbb{F}_q} \Rr\Aaa\Ddd_1(F) \leq \qdeg_{X} F(X,Y)$. Similarly, if $q$-$\deg_{X} F(X,Y) < n$, then $\dim_{\mathbb{F}_q} \Rr\Aaa\Ddd_2(F) \leq \qdeg_{Y} F(X,Y)$.
\end{Corol}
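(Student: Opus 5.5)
Fix a non-zero $F(X,Y)$ with $\qdeg_Y F(X,Y) < n$. Then $\deg_Y F(X,Y) \le q^{n-1} < q^n$, so the equality case of Proposition~\ref{prop:RadicalPolynomialMultivariable} applies. With the decomposition $F(X,Y)=\sum_{\ell} P_\ell(X)Y^{q^\ell}$ of \eqref{eq:UniqueWritingMultilinearPolyFactorVariable}, this gives
\[
\Rr\Aaa\Ddd_1(F)=\bigcap_{\ell\in\N_0}\ker P_\ell .
\]
We therefore only have to bound the $\F_q$-dimension of this intersection.

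Since $F$ is non-zero, and the $P_\ell$ are its coefficients in the linearly independent monomials $Y^{q^\ell}$, there is some $\ell_0$ with $P_{\ell_0}(Z)\neq 0$. Each monomial $X^{q^{i_1}}$ occurring in $P_{\ell_0}(X)$ comes from a term $X^{q^{i_1}}Y^{q^{\ell_0}}$ of $F$. Hence $\qdeg P_{\ell_0}\le \qdeg_X F(X,Y)$.

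We then use the fact recalled at the start of the section on linearised polynomials: the kernel of a non-zero linearised $q$-polynomial has $\F_q$-dimension at most its $q$-degree. The reason is that the kernel is an $\F_q$-subspace of roots, and the number of roots is at most the degree $q^{\qdeg}$. This gives
\[
\dim_{\F_q}\Rr\Aaa\Ddd_1(F)\le \dim_{\F_q}\ker P_{\ell_0}\le \qdeg P_{\ell_0}\le \qdeg_X F(X,Y).
\]

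The second statement follows symmetrically. Use $F(X,Y)=\sum_\ell X^{q^\ell}Q_\ell(Y)$ and the $\Rr\Aaa\Ddd_2$ part of Proposition~\ref{prop:RadicalPolynomialMultivariable}, which applies because $\qdeg_X F(X,Y)<n$ gives $\deg_X F(X,Y)<q^n$. Then bound $\dim_{\F_q}\ker Q_{\ell_0}$ for some non-zero $Q_{\ell_0}$ by $\qdeg_Y F(X,Y)$.

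There is no real obstacle. The only points needing care are the conversion from the $q$-degree hypothesis to the ordinary-degree hypothesis of the proposition, and the existence of a non-zero $P_{\ell_0}$. In fact only the inclusion $\Rr\Aaa\Ddd_1(F)\subseteq\ker P_{\ell_0}$ is needed, and that inclusion is exactly what the degree hypothesis provides.
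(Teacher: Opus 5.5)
Your proof is correct and follows essentially the same route as the paper: you apply the equality case of Proposition~\ref{prop:RadicalPolynomialMultivariable}, pick a non-zero $P_{\ell_0}$, and chain $\dim_{\F_q}\Rr\Aaa\Ddd_1(F)\le\dim_{\F_q}\ker P_{\ell_0}\le\qdeg P_{\ell_0}\le\qdeg_X F(X,Y)$, with the second statement handled symmetrically. Your explicit check that $\qdeg_Y F<n$ gives $\deg_Y F\le q^{n-1}<q^n$ is right, and so is your remark that only the inclusion $\Rr\Aaa\Ddd_1(F)\subseteq\ker P_{\ell_0}$ is needed.
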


\begin{proof}
     Assume that $q$-$\deg_{Y} F(X,Y) < n$. Let $P_{{\ell}}(X)$ be the unique linearised $q$-polynomials over $\mathbb{F}_{q^n}$  satisfying (\ref{eq:UniqueWritingMultilinearPolyFactorVariable}). For each $\ell \in {\mathbb{N}_0}$, denote by $P_{{\ell}}$ the corresponding $\mathbb{F}_q$-linear endomorphism of $\mathbb{F}_{q^n}$. Proposition~\ref{prop:RadicalPolynomialMultivariable} ensures that $\bigcap_{ \ell \in {\N_0}} \ker P_{ \ell} = \Rr\Aaa\Ddd_1(F)$, hence we have:

    $$\dim_{\mathbb{F}_q}\Rr\Aaa\Ddd_1(F) = \dim_{\mathbb{F}_q}\bigcap_{ \ell \in {\N_0}} \ker P_{ \ell} \leq \min_{ \ell \in {\N_0}} \dim_{\mathbb{F}_q} \ker P_{ \ell}  \leq \min_{\substack{ \ell \in {\N_0}\\ P_{ \ell}(X) \neq 0}} \qdeg P_{ \ell}(X).$$

    Since $F(X,Y)$ is non-zero, there exists $\ell \in {\mathbb{N}_0}$ such that $P_{ \ell}(X) \neq 0$. Since $\qdeg_{X}P_{ \ell}(X)$ is bounded from above by $\qdeg_{X} F(X,Y)$, we have $\dim_{\mathbb{F}_q} \Rr\Aaa\Ddd_1(F) \leq \qdeg_{X} F(X,Y)$.
    The proof of the second statement is analogous.
\end{proof}

\subsection{Distance functions}
\label{subsec:distancesinthecode}

\begin{Def}
    We define the \textbf{fibre weight} of a matrix $T \in \mathbb{F}_{q^n}^{n\times n}$, denoted $\wt_{\fibresp_{3}}(T)$, to be the dimension of the $\mathbb{F}_q$-span of the entries of $T$, in other words, $\wt_{\fibresp_3}(T) = \dim_{\F_q}\Span_{\F_q}(\{T_{i_1,i_2} \ | \ i_1,i_2 \in \llbracket 1,n\rrbracket\})$.
\end{Def}

Let $\sigma$ be a bijection from $\llbracket1,n^2\rrbracket$ to $\llbracket1,n\rrbracket^2$. 
This yields an isomorphism $\hat{\sigma} :\mathbb{F}_{q^n}^{n\times n} \rightarrow \mathbb{F}_{q^n}^{n^2}, M \mapsto (M[\sigma(\ell)])_{\ell \in \llbracket 1,n^2\rrbracket}$ for all $M \in \mathbb{F}_{q^n}^{n\times n}$. Therefore, we have that $\wt_{\fibresp_3}(M) = \rank_{\F_q}(\hat{\sigma}(M))$ for all $M \in \mathbb{F}_{q^n}^{n\times n}$.

In particular, decoding any tensor code with respect to this metric is  equivalent to decoding its isomorphic image in $\mathbb{F}_{q^n}^{n^2}$ for the rank-metric. 

For each $T \in \mathbb{F}_{q^n}^{n\times n}$, we denote by $\UUU_1(T)$ (\emph{resp.} $\UUU_2(T)$) the $\mathbb{F}_q$-subspace of $\mathbb{F}_{q^n}^n$ spanned by all the rows (\emph{resp.} the columns) of $T$ over $\mathbb{F}_q$.

\begin{Def}
    Let $j \in \{1,2\}$. We define the \textbf{$j$-slice space weight} of a tensor $T \in \mathbb{F}_{q^n}^{n\times n}$, denoted $\wt_{\slicesp_j}(T)$, to be the dimension of the $\F_q$-span of the $j$-slices of the associated tensor of $T$, in other words, $\wt_{\slicesp_j}(T) = \dim_{\mathbb{F}_q}\UUU_j(T)$.
\end{Def}

\begin{Prop}\label{prop:correspondancetensormatrix}
    Let $T\in\mathbb{F}_{q^n}^{n\times n}$ and let $\Gamma \in (\mathbb{F}_q^n)^{\otimes 3}$ be its associated tensor with respect to a basis $\omega$ of $\mathbb{F}_{q^n}/\mathbb{F}_{q}$. Then $\wt_{\slicesp_1}(T)$, $\wt_{\slicesp_2}(T)$ and $\wt_{\fibresp_3}(T)$ are exactly the $\mathbb{F}_{q}$-dimensions of $\slicesp_1(\Gamma)$, $\slicesp_2(\Gamma)$ and $\fibresp_3(\Gamma)$ respectively, which do not depend on the choice of $\omega$.
\end{Prop}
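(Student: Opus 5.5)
The plan is to make the isomorphism $\Sss_\omega$ explicit coordinate-wise and to show that each of the three weights is the dimension of the image of the corresponding space under an injective $\F_q$-linear map. The two ingredients are the coordinate map $\phi_\omega:\F_{q^n}\to\F_q^n$, $\sum_{i_3}c_{i_3}\omega_{i_3}\mapsto (c_{i_3})_{i_3}$, which is an $\F_q$-isomorphism, and its entrywise extensions to vectors and matrices. By definition of $\Sss_\omega$, if $T=\Sss_\omega(\Gamma)$ then $T_{i_1,i_2}=\sum_{i_3}\omega_{i_3}\Gamma[i_1,i_2,i_3]$, so that $\phi_\omega(T_{i_1,i_2})=\Gamma[i_1,i_2,:]$ is the $3$-fibre of $\Gamma$ at $(i_1,i_2)$.

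\textbf{Fibre weight.} The map $\phi_\omega$ is an $\F_q$-isomorphism sending the family of entries $\{T_{i_1,i_2}\}$ onto the family of $3$-fibres $\{\Gamma[i_1,i_2,:]\}$. It therefore maps $\Span_{\F_q}\{T_{i_1,i_2}\}$ isomorphically onto $\fibresp_3(\Gamma)$, and so $\wt_{\fibresp_3}(T)=\dim_{\F_q}\fibresp_3(\Gamma)$.

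\textbf{Slice space weights.} Apply $\phi_\omega$ entrywise to a row $T[i_1,:]\in\F_{q^n}^n$. This produces the $n\times n$ matrix over $\F_q$ whose $(i_2,i_3)$ entry is $\Gamma[i_1,i_2,i_3]$, which is exactly the $1$-slice $\Gamma[i_1,:,:]$. The entrywise map $\Phi:\F_{q^n}^n\to\F_q^{n\times n}$ is an $\F_q$-isomorphism. Hence $\Phi(\UUU_1(T))=\slicesp_1(\Gamma)$ and $\wt_{\slicesp_1}(T)=\dim_{\F_q}\slicesp_1(\Gamma)$. The argument for columns and $2$-slices is identical. This is presumably the intended matching, since the remark after Figure~\ref{fig:Isomorphismsomega} says rows and columns correspond to slices. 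The only point needing care is bookkeeping: $\UUU_1$, built from rows (fixing $i_1$), must be matched with $\slicesp_1$.

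\textbf{Independence of $\omega$.} The left-hand quantities $\wt_{\slicesp_j}(T)$ and $\wt_{\fibresp_3}(T)$ are defined purely in terms of $T$, without reference to $\omega$. The equalities therefore show directly that the dimensions on the tensor side are independent of $\omega$, where $\Gamma=\Sss_\omega^{-1}(T)$ varies with $\omega$.

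The main obstacle is notational rather than mathematical. The $\F_q$-span of the rows must be taken inside $\F_{q^n}^n$ viewed as an $\F_q$-space, not as an $\F_{q^n}$-space. Once that convention is fixed, the argument reduces to the fact that $\F_q$-isomorphisms preserve the dimensions of spans.
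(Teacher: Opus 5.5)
Your proof is correct and follows the paper's argument. The paper uses the $\F_q$-isomorphisms $M\mapsto\sum_k M[:,k]\omega_k$ on $\F_q^{n\times n}$ and $u\mapsto\sum_k u_k\omega_k$ on $\F_q^n$, which are the inverses of your entrywise coordinate maps $\Phi$ and $\phi_\omega$; with them it identifies $\UUU_j(T)$ with $\slicesp_j(\Gamma)$ and the span of the entries of $T$ with $\fibresp_3(\Gamma)$, and it deduces independence from $\omega$ in the same way you do.
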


\begin{proof}
    Since $\omega$ is a basis of $\mathbb{F}_{q^n}/\mathbb{F}_q$, the maps $\phi,\psi$ defined by  
    \[
    \phi : \left\{\begin{array}{ccc}
    \mathbb{F}_{q}^{n\times n} &\longrightarrow& \mathbb{F}_{q^n}^n\\
    M &\longmapsto& \sum_{k = 1}^n M[:,k]\omega_k
    \end{array}\right. \qquad \text{and} \qquad \psi : \left\{
    \begin{array}{ccc}
    \mathbb{F}_{q}^n &\longrightarrow& \mathbb{F}_{q^n}\\
    u &\longmapsto& \sum_{k = 1}^n u_k\omega_k
    \end{array}\right.
    \]
    are $\mathbb{F}_q$-isomorphisms. Moreover, one can verify that $\UUU_1(T) = \phi(\slicesp_1(\Gamma))$, that $\UUU_2(T) = \phi(\slicesp_2(\Gamma))$ 
    
     and that $\psi(\fibresp_3(\Gamma))$ is exactly the $\mathbb{F}_q$-span of the entries of $T$. Since $\phi$ and $\psi$ are isomorphisms,
    we have that the $\mathbb{F}_q$-dimensions of $\slicesp_i(\Gamma)$ and  $\fibresp_3(\Gamma)$ are preserved under these isomorphisms, respectively.
    Since the weights $\wt_{\slicesp_i}(T)$ and $\wt_{\fibresp_3}(T)$ are independent of $\omega$, then so are the dimensions of $\slicesp_i(\Gamma)$ and $\fibresp_3(\Gamma)$.
    
\end{proof}

We note the following link between these weights and the $\mathbb{F}_q$-rank of the fibres of the tensors.

\begin{Lemma} \label{lemma:maxrankandslicespaceweight}
    Let $T \in \mathbb{F}_{q^n}^{n\times n}$. We have the following properties.
    \begin{enumerate}
        \item The $\mathbb{F}_q$-rank of each column of $T$ is bounded from above  by the dimension of the $\mathbb{F}_q$-span of the rows of $T$. In particular, we have the inequality $\max_{i_2 \in \llbracket 1,n\rrbracket} \rank_{\mathbb{F}_q} T[:,i_2] \leq \wt_{\slicesp_1}(T)$.
        \item The $\mathbb{F}_q$-rank of each row of $T$ is bounded from above by the dimension of the $\mathbb{F}_q$-span of its columns. In particular, we have the inequality $\max_{i_1 \in \llbracket 1,n\rrbracket} \rank_{\mathbb{F}_q} T[i_1,:] \leq \wt_{\slicesp_2}(T)$.
    \end{enumerate}
\end{Lemma}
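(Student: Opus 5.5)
The plan is to reduce both items to one elementary observation: if $u$ lies in the $\F_q$-span of $v_1,\dots,v_m$, then every coordinate of $u$ is the same $\F_q$-combination of the corresponding coordinates of the $v_\ell$. We prove the first item in detail; the second follows by applying the first to $T^\top$, since transposition swaps rows and columns, so $\UUU_1(T^\top)=\UUU_2(T)$ and the columns of $T^\top$ are the rows of $T$.

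For the first item, fix a column index $i_2$. Let $r:=\wt_{\slicesp_1}(T)=\dim_{\F_q}\UUU_1(T)$, and choose rows $T[j_1,:],\dots,T[j_r,:]$ of $T$ forming an $\F_q$-basis of $\UUU_1(T)$. This is possible because $\UUU_1(T)$ is spanned by the rows, so a basis can be extracted from them. For each $i_1\in\llbracket 1,n\rrbracket$ there are scalars $\lambda_{i_1,1},\dots,\lambda_{i_1,r}\in\F_q$ with
\[
T[i_1,:]=\sum_{\ell=1}^r \lambda_{i_1,\ell}\,T[j_\ell,:].
\]
Reading off the $i_2$-th coordinate gives $T_{i_1,i_2}=\sum_{\ell} \lambda_{i_1,\ell}\,T_{j_\ell,i_2}$. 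Hence every entry of the column $T[:,i_2]$ lies in $\Span_{\F_q}(T_{j_1,i_2},\dots,T_{j_r,i_2})$.

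It follows that $\rank_{\F_q}T[:,i_2]=\dim_{\F_q}\Span_{\F_q}(T_{1,i_2},\dots,T_{n,i_2})\le r$. Since $i_2$ was arbitrary, taking the maximum over $i_2$ yields the stated inequality.

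There is no real obstacle here. The only point needing care is that the linear combinations have coefficients in $\F_q$, not in $\F_{q^n}$, because $\UUU_1(T)$ is defined as an $\F_q$-span. This is exactly what makes projecting onto a single coordinate preserve $\F_q$-spans, which is what the rank of the column measures.
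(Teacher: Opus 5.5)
Your proof is correct and is essentially the paper's argument. The paper phrases it as the $\F_q$-span of the entries of $T[:,i_2]$ being the image of $\UUU_1(T)$ under the $\F_q$-linear coordinate projection $x\mapsto x_{i_2}$, which you carry out explicitly through a row basis; your transposition reduction for the second item matches the paper's ``similarly''.
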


\begin{proof}
    Let $i_2 \in \llbracket 1,n\rrbracket$. Consider the $\mathbb{F}_q$-linear epimorphism  $p_{i_2} : \mathbb{F}_{q^n}^n \to \mathbb{F}_{q^n}, x \mapsto x_{i_2}$ and for each $i_1 \in \llbracket 1,n\rrbracket$ we have $p_{i_2}(T[i_1,:]) = T_{i_1,i_2}$. Thus $\Span_{\F_q}(\{T_{i_1,i_2} : i_1 \in \llbracket 1,n\rrbracket\}) = p_{i_2}({\mathcal U}_1(T))$. Finally, since $p_{i_2}$ is surjective we have $\rank_{\F_q}  T[:,i_2] = \dim_{{\mathbb F}_q}(p_{i_2}({\mathcal U}_1(T))) \leq \dim_{\mathbb F_q}({\mathcal U}_1(T)) = w_{\slicesp_1}(T)$. 
    
    The second inequality follows similarly.
\end{proof}

These bounds are sharp. 
For instance, consider $T \in \F_{q^n}^{n\times n}$ such that $T[i_1,i_2] = 0$ for each $i \in \llbracket 1,n\rrbracket^2$ with $i_2 \neq 1$, \emph{i.e.} only the first column of $T$ is non-zero. Then, the $\F_q$-span of the entries of $T$, which is $\Span_{\F_q}\{ T[i_1,1] \ | \ i_1 \in \llbracket 1,n\rrbracket \}$, has the same $\F_q$-dimension as $\UUU_1(T) = \Span_{\F_q}\{ (T[i_1,1],0,\dots,0) \ | \ i_1 \in \llbracket 1,n\rrbracket \}$.

Clearly, the aforementioned weights are both subadditive and such that the zero tensor has weight zero. Therefore, for any $j \in \{1,2\}$, the maps $d_{\fibresp_3} : (T_1,T_2)\in (\mathbb{F}_{q^n}^{n\times n} )^2 \mapsto \wt_{\fibresp_{3}}(T_1 - T_2) $ and $d_{\slicesp_j} : (T_1,T_2)\in (\mathbb{F}_{q^n}^{n\times n} )^2 \mapsto \wt_{\slicesp_{j}} (T_1 - T_2) $ are distance functions on $\F_{q^n}^{n\times n}$. If $\CCC$ is a non-zero tensor code, we denote by $d_{\slicesp_j}(\CCC) := \min\{d_{\slicesp_j}(C,C'): C,C' \in \CCC, C \neq C'\}$ and $d_{\fibresp_3}(\CCC) := \min\{d_{\fibresp_3}(C,C'):C,C' \in \CCC, C \neq C'\}$ the \textbf{minimum distances} of the code with respect to those metrics. 

Likewise, since the $\mathbb{F}_q$-rank yields a metric on $\mathbb{F}_{q^n}^n$, the maximum $\mathbb{F}_q$-rank of all columns or all rows of an element in $\mathbb{F}_{q^n}^{n\times n}$ yields a metric on 
$\mathbb{F}_{q^n}^{n\times n}$. More precisely, the maps $d_{\rk_1},d_{\rk_2} : \mathbb{F}_{q^n}^{n\times n} \times \mathbb{F}_{q^n}^{n\times n} \to \N_0$ defined by \begin{align*}
d_{\rk_1}(T_1,T_2) &:=\max\{\rank_{\mathbb{F}_q} (T_1-T_2)[:,i_2] ~|~ i_{2} \in \llbracket 1,n\rrbracket\}\\
\text{and}\qquad  d_{\rk_2}(T_1,T_2) &:=\max\{ \rank_{\mathbb{F}_q} (T_1-T_2)[i_1,:] ~|~ i_{1} \in \llbracket 1,n\rrbracket\},
\end{align*}
 for all $(T_1,T_2) \in  \mathbb{F}_{q^n}^{n\times n} \times \mathbb{F}_{q^n}^{n\times n}$ 
define metrics on the vector space $\mathbb{F}_{q^n}^{n\times n}$.

\begin{Prop}\label{prop:UpperBoundDistance}
    Let $\mathcal{S}$ be a non-empty subset of $\llbracket 0,n -1\rrbracket^2$. Then the code $\CCC = \mathcal{C}_{\alpha}(\mathcal{S})$ has the following properties. 
    \begin{enumerate}
        \item We have $d_{\fibresp_3}(\CCC) \geq n - \max\{s_j \ : \ (s_1,s_2) \in \SSS, j \in \{1,2\}\}$.
        \item We have $d_{\slicesp_j}(\CCC) \geq n - \max \{\pi_j(\mathcal{S})\}$, $j \in \{1,2\}$.
    \end{enumerate}
\end{Prop}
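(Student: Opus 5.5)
The plan is to fix a nonzero codeword and bound its weights below using the bilinearised polynomial that produces it. Let $C\in\CCC_\alpha(\SSS)$ be nonzero. Since $\ev$ is an isomorphism (Proposition~\ref{prop:isomPolyAndTensors}), $C=(f(\alpha_{i_1},\alpha_{i_2}))_{i}$ for a unique reduced representative $f\in\MM_{q,\F_{q^n}}[X,Y]$ with $f\neq 0$ and $\Supp(f)\subseteq\SSS\subseteq\llbracket 0,n-1\rrbracket^2$. In particular, both partial $q$-degrees of $f$ are $<n$. By linearity, both minimum distances equal the minimum weight over nonzero codewords, so it suffices to bound the weights of $C$.

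For item 2 with $j=1$, consider the $\F_q$-linear map $\Phi:\F_{q^n}\to\F_{q^n}^n$, $x\mapsto (f(x,\alpha_{i_2}))_{i_2\in\llbracket1,n\rrbracket}$.
\begin{itemize}
\item The rows of $C$ are $\Phi(\alpha_{i_1})$, and $\alpha$ is an $\F_q$-basis, so $\UUU_1(C)=\im\Phi$.
\item By $\F_q$-bilinearity of $f$ in the second variable, and again because $\alpha$ is a basis, $\ker\Phi=\Rr\Aaa\Ddd_1(f)$.
\item Rank--nullity then gives $\wt_{\slicesp_1}(C)=n-\dim_{\F_q}\Rr\Aaa\Ddd_1(f)$.
\item Since $\qdeg_Y f<n$, Corollary~\ref{corol:CorolRadicalPolynomialMulivar} gives $\dim_{\F_q}\Rr\Aaa\Ddd_1(f)\le \qdeg_X f\le \max\pi_1(\SSS)$.
\end{itemize}
Hence $\wt_{\slicesp_1}(C)\ge n-\max\pi_1(\SSS)$. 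The case $j=2$ is symmetric, using columns, $\Rr\Aaa\Ddd_2$ and the second half of the corollary.

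For item 1, the fibre weight is the dimension of the $\F_q$-span of all entries of $C$. This span contains the $\F_q$-span of the entries of any single row, so $\wt_{\fibresp_3}(C)\ge \rank_{\F_q}C[i_1,:]$ for every $i_1$. The remaining step is to show that some row is a nonzero Gabidulin codeword.
\begin{itemize}
\item As $C\neq0$, some row $C[i_1,:]$ is nonzero.
\item By Proposition~\ref{prop:GabidulinColumns}, this row lies in $\GG_{\max\pi_2(\SSS)+1}(\alpha)$.
\item That code has minimum rank distance $n-\max\pi_2(\SSS)$.
\end{itemize}
So $\wt_{\fibresp_3}(C)\ge n-\max\pi_2(\SSS)\ge n-\max\{s_j: s\in\SSS,\ j\in\{1,2\}\}$, which is the claimed bound. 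Using columns instead would give the analogous bound with $\pi_1$; the stated bound is weaker than either.

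The only delicate point is identifying $\UUU_1(C)$ with the image of $\Phi$ and its kernel with the radical. Both rely on $\alpha$ being a basis together with bilinearity. One must also check that the degree hypothesis $\qdeg_Y f<n$ of the corollary holds, and it does because the representative is reduced.
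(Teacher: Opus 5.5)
Your proof is correct. For item 1 you argue exactly as the paper does: a nonzero row of $C$ is a nonzero codeword of $\GG_{\max\pi_2(\SSS)+1}(\alpha)$, and the fibre weight dominates its rank.

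For item 2 you take a different route. The paper reuses the same Gabidulin argument on a nonzero column of $C$, which lies in $\GG_{\max\pi_1(\SSS)+1}(\alpha)$, and then invokes Lemma~\ref{lemma:maxrankandslicespaceweight}, which says every column rank is at most $\wt_{\slicesp_1}(C)$. You instead identify $\UUU_1(C)$ with the image of $x\mapsto (f(x,\alpha_{i_2}))_{i_2}$ and its kernel with $\Rr\Aaa\Ddd_1(f)$. This gives the exact identity $\wt_{\slicesp_1}(C)=n-\dim_{\F_q}\Rr\Aaa\Ddd_1(f)$, and you then bound the radical with Corollary~\ref{corol:CorolRadicalPolynomialMulivar}.

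The paper's route is shorter and treats both items uniformly through the MRD property. Yours yields an exact formula for the slice weight of every codeword rather than just a lower bound. It is also precisely the rank--nullity/radical mechanism the paper later uses in the proof of Theorem~\ref{thm:RadicalCriterion}. Both approaches ultimately rest on the same fact: a nonzero linearised polynomial of $q$-degree $k$ has an $\F_q$-kernel of dimension at most $k$.
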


\begin{proof}
Let $C \in \mathcal{C}_{\alpha}(\mathcal{S})$ be a non-zero codeword and let $j \in \{1,2\}$.  Firstly, Proposition~\ref{prop:GabidulinColumns} states that each column (\emph{resp.} row) of $C$ is an element of the MRD code $\GG_{\max \pi_1(\mathcal{S}) +1 }(\alpha)$ (\emph{resp.} $\GG_{\max \pi_2(\mathcal{S}) +1 }(\alpha)$) and thus is either zero or has rank at least $n - \max \pi_1(\mathcal{S})$ (\emph{resp.} $n - \max \pi_2(\mathcal{S})$). Since $C$ is non-zero, there exists at least one such non-zero column (\emph{resp.} row) in the matrix $C$. 
In particular the rank of the space spanned by all the entries in $C$ is at least $n - \max_{j \in\{ 1,2\}}\max \pi_j(\mathcal{S})$. The second inequality is an application of Lemma~\ref{lemma:maxrankandslicespaceweight} and Proposition~\ref{prop:GabidulinColumns} using the same argument.
\end{proof}

In the sequel, we will remark that the fibre and slice weights introduced above all give lower bounds on the tensor-rank. The \textbf{tensor-rank} of an element $\Gamma$ in $(\mathbb{F}_q^{n})^{\otimes 3}$, denoted by $\trank(\Gamma)$, is the least integer $R \in {\mathbb{N}_0}$ such that there exist 
vectors $a_r, b_r, c_r \in \mathbb{F}_q^n$ for each $ r \in \llbracket 1,R\rrbracket$ such that
$$\Gamma = \sum_{r = 1}^R a_r \otimes b_r \otimes c_r$$
in which case the expression above is called a \emph{minimal tensor-rank form} of $\Gamma$.
As a coordinate tensor, we have $\Gamma[i_1,i_2,i_3] = \sum_{r = 1}^R a_{r,i_1}b_{r,i_2}c_{r,i_3}$ for each $i\in\llbracket 1,n\rrbracket^3$. 
This generalises the notion of matrix rank since the rank of a matrix $M$ is the smallest integer $R$ such that $M$ can be expressed as a sum of $R$ rank-$1$ matrices, \emph{i.e.} matrices of the form $a \otimes b = a^\top b$. Note that the map $(\Gamma_1, \Gamma_2) \mapsto \trank(\Gamma_1 - \Gamma_2)$ is a metric on the vector space $(\mathbb{F}_q^n)^{\otimes 3}$.

\begin{Lemma}\label{lemma:tensorrankdecompositionasmatrices}
    Let $\omega$ be a basis of $\mathbb{F}_{q^n}/\mathbb{F}_q$. Let $T \in \mathbb{F}_{q^n}^{n\times n}$ and let $\Gamma \in (\mathbb{F}_{q}^n)^{\otimes 3}$ be the tensor associated to $T$ for the basis $\omega$. Then $\trank(\Gamma)$ does not depend on $\omega$ and is the smallest integer $R \in {\mathbb{N}_0}$ such that there exists $A_1,\dots,A_R \in \mathbb{F}_q^{n\times n}$ of rank one and $\gamma_1,\dots,\gamma_R \in \mathbb{F}_{q^n}$ such that: 
    $$T = \sum_{r = 1}^R \gamma_rA_r.$$
    In particular, for each $T\in \F_{q^n}^{n\times n}$ and each $L,R \in \F_q^{n\times n}$ we have $\trank_{\F_q}(LTR) \leq \trank_{\F_q}(T)$, and this inequality is an equality if $L,R \in \GL_n(\F_q)$.
\end{Lemma}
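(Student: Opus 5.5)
The plan is to transport a minimal tensor-rank form of $\Gamma$ through the isomorphism $\Sss_\omega$, and conversely to expand each scalar $\gamma_r$ in the basis $\omega$. The first step is to write $\Gamma=\sum_{r=1}^R a_r\otimes b_r\otimes c_r$ with $R=\trank(\Gamma)$. Since $\Sss_\omega$ is $\F_q$-linear, we get
\[
T=\Sss_\omega(\Gamma)=\sum_{r=1}^R \Sss_\omega(a_r\otimes b_r\otimes c_r).
\]
A direct computation from the definition of $\Sss_\omega$ gives $\Sss_\omega(a\otimes b\otimes c)=\big(\sum_{i_3} c_{i_3}\omega_{i_3}\big)\, a^\top b$. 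So we may take $\gamma_r=\sum_{i_3}c_{r,i_3}\omega_{i_3}\in\F_{q^n}$ and $A_r=a_r^\top b_r$, which is a rank-one matrix over $\F_q$ (if some term vanishes, the form was not minimal). This shows that the minimal $R'$ for which $T=\sum_{r=1}^{R'}\gamma_rA_r$ satisfies $R'\le\trank(\Gamma)$.

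For the converse, suppose $T=\sum_{r=1}^{R'}\gamma_rA_r$ with each $A_r=a_r^\top b_r$ of rank one over $\F_q$. Expand each $\gamma_r=\sum_k c_{r,k}\omega_k$ with $c_r\in\F_q^n$. Then $\Sss_\omega^{-1}(\gamma_rA_r)=a_r\otimes b_r\otimes c_r$, by the same computation read backwards using the injectivity of $\Sss_\omega$. Summing over $r$ gives $\trank(\Gamma)\le R'$, hence equality.

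Independence of $\omega$ is then immediate, because the characterisation only involves $T$, $\F_{q^n}$-scalars, and rank-one $\F_q$-matrices. For the final claim, take $L,R\in\F_q^{n\times n}$ and an optimal expression $T=\sum\gamma_rA_r$. Then
\[
LTR=\sum_r\gamma_r\,(LA_rR),
\]
and each $LA_rR$ has rank at most one over $\F_q$. Discarding the zero terms yields an expression of length at most $\trank(T)$, so $\trank(LTR)\le\trank(T)$. When $L,R\in\GL_n(\F_q)$, applying the same inequality to $L^{-1}(LTR)R^{-1}$ gives the reverse inequality.

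The only point requiring care, rather than real difficulty, is the explicit identity for $\Sss_\omega(a\otimes b\otimes c)$ and its interplay with the row/column convention ($a\otimes b=a^\top b$). I will verify it entrywise: the $(i_1,i_2)$ entry equals $\sum_{i_3}\omega_{i_3}a_{i_1}b_{i_2}c_{i_3}$.
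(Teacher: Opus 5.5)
Your proof is correct and follows the paper's argument: you expand each $\gamma_r$ in the basis $\omega$ (the paper writes $c_r$ via the dual basis $\omega^*$) to pass between rank-one decompositions of $\Gamma$ and expressions $T=\sum_r\gamma_rA_r$, then push $L$ and $R$ through the sum and use invertibility for the reverse inequality. The one difference is that you prove the converse direction explicitly through the identity $\Sss_\omega(a\otimes b\otimes c)=\big(\sum_k c_k\omega_k\big)\,a^\top b$, a step the paper only asserts ``by definition of the isomorphism''.
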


\begin{proof}
        Let $R \in {\mathbb{N}_0}$. Let $(\omega_1^*,\dots,\omega_n^*)$ be
        the dual basis of $\omega$, \emph{i.e.}
        $\omega_i^*(\omega_j) = \ind{i = j}$ for each $i,j \in \llbracket 1,n\rrbracket$. Let $A_1,\dots,A_R \in \mathbb{F}_q^{n\times n}$ be a collection  of rank-$1$ matrices and let $\gamma_1,\dots,\gamma_R \in \mathbb{F}_{q^n}$ such that $T = \sum_{r = 1}^R A_r \gamma_r$. Let $a_r,b_r, c_r \in \mathbb{F}_{q}^n$ such that for each $r$ we have $A_r = a_r^\top b_r$ and $c_r = (\omega_1^*(\gamma_r),\dots,\omega_n^*(\gamma_r))$. Then $\Gamma = \sum_{r = 1}^R a_r \otimes b_r \otimes c_r$, thus $\trank(\Gamma) \leq R$.     We have the converse inequality by definition of the $\mathbb{F}_q$-isomorphism $(\mathbb{F}_{q}^{n})^{\otimes 3} \to \mathbb{F}_{q^n}^{n\times n}$ associated to the basis $\omega$. Additionally, denote by $L,R \in \F_q^{n\times n}$ two matrices over $\F_q$ and denote by $T\in \F_{q^n}^{n\times n}$ a matrix with a decomposition $T = \sum_{r = 1}^{R} \gamma_rA_r$ with $\gamma_1,\dots,\gamma_R \in \F_q$, with $A_1,\dots,A_R \in \F_q^{n\times n}$ matrices of rank one and with $R$ a positive integer. Then we have $LTC = \sum_{r = 1}^{R} \gamma_{r}(LA_rR)$ and $LA_rR \in \F_{q}^{n\times n}$ has rank at most one for each integer $r \in \llbracket 1,r\rrbracket$. Hence $\trank_{\F_q}(LTR) \leq R$ and in particular $\trank_{\F_q}(LTR) \leq \trank_{\F_q}(T)$, and assuming that $L$ and $R$ are invertible, the other inequality can be obtained symmetrically.
\end{proof}

\begin{Rq}
The second part of the lemma is a particular case of a known result: the group action of $\GL_n(\F_q)^3$ on $(\F_q^n)^{\otimes 3}$ preserves the tensor-rank, as seen in \cite[Section 14.4]{Burgisser1997ch14}.
\end{Rq}

\begin{Def}
    Let $T \in 
    \mathbb{F}_{q^n}^{n\times n}$. 
    The \textbf{tensor-rank} of $T$ is defined to be $\trank_{\mathbb{F}_q}(T):=\trank(\Gamma)$ of its associated tensor $\Gamma \in (\mathbb{F}_q^n)^{\otimes 3}$.
\end{Def}

By abuse of notation, we will denote by $d_{\trank}(\Gamma,\Gamma'):=\trank(\Gamma-\Gamma')$ the tensor-rank distance between any $\Gamma,\Gamma' \in (\F_{q}^n)^{\otimes 3} $ and likewise
$d_{\trank}(T,T'):=\trank_{\mathbb{F}_q}(T-T')$ for any $T,T' \in \F_{q^n}^{n\times n}$. 

\begin{Def}
    Let $\CCC$ be a non-zero $\F_{q^n}\tm[n\times n,k]$ {tensor code}. We define the \textbf{minimum distance} of the code $\CCC$ to be 
    $d_{\trank}(\CCC) := \min\{d_{\trank}(C,C') \;|\; C,C' \in \CCC, C \neq C'\}  = \min\{\trank_{\F_q}(C)\;|\;C \in \CCC, C \neq 0\}$. 
    If $d_{\trank}(\CCC)=d$, we say that $\CCC$ is an $\F_{q^n}\tm[n\times n,k,d]$ {tensor code}.
\end{Def}

\begin{Ex} \label{ex:GabidulinInCode}
    Consider $\mu \in \llbracket 0,n-1\rrbracket$.     For all $i_2 \in \llbracket 1,\mu+1\rrbracket$, consider a linearised $q$-polynomial $\phi_{i_2}(X)$ of $q$-degree $\mu$ such that $\phi_{i_2}(\alpha_{t}) = \ind{i_2=t}$ for each $t \in \llbracket 1,\mu+1\rrbracket$. Such a polynomial exists for one can consider for instance 
    \[\phi_{i_2}(X) =  \prod_{u \in \Span_{\F_q}(\{\alpha_t | i_2 \neq t, 1\leq t \leq \mu +1\})}(X-u)(\alpha_{i_2} - u)^{-1},\] see \cite[Theorem 3.52]{lidl1994introduction}. Then, the polynomial $f(X,Y) = \sum_{i=0}^\mu \phi_{i+1}(X)Y^{q^i}$ is a bilinearised $q$-polynomial such that  $\Supp(f)\subseteq \llbracket 0,\mu\rrbracket^2$ and such that $f(\alpha_{i_1+1},Y) = Y^{q^{i_1}}$ for each $i_1 \in \llbracket 0,\mu\rrbracket$.
    
    Denote by $C_f:= \ev(\overline{f(X,Y)})$ the codeword of $\CCC_\alpha(\llbracket 0,\mu\rrbracket^2)$ associated to $f$, and denote by $C_f' \in \F_{q^n}^{(\mu+1)\times n}$ the matrix comprising the first $\mu + 1$ rows of $C_f$.     Then $C_f'$ is a generator matrix of the $\F_{q^n}$-linear code $\GG_{\mu +1}(\alpha)$. Thus, if we denote by $\Gamma := \Sss_{\omega}^{-1}(C_f) \in (\F_{q}^n)^{\otimes 3}$ and if we denote by $\Gamma' \in \F_{q}^{(\mu+1)} \otimes \F_q^n \otimes \F_{q}^n$ the tensor constructed from the first $\mu+1$ $1$-slices of $\Gamma$, then the tensor $\Gamma'$ is a generator tensor of the matrix code associated to $\GG_{\mu+1}(\alpha)$, in the sense of \cite[Definition 11.1.3]{ConciseCodingTheoryRankMetricCodes} and \cite[Definition 4.1]{ByrneNeriRavagnaniSheekeyTensorRepresentation2019}. 
    Moreover, the surjective function $p:(\F_{q}^n)^{\otimes 3} \to \F_{q}^{(\mu+1)} \otimes \F_q^n \otimes \F_{q}^n$ that associates to any $\Omega \in (\F_{q}^n)^{\otimes 3}$ the tensor constructed from the first $\mu+1$ $1$-slices of $\Omega$, is tensor-rank decreasing. 
    Consequently, $\trank_{\F_q}(C_f)$ is bounded from below by the tensor-rank of the matrix code associated to $\GG_{\mu +1}(\alpha)$, in the sense of \cite[Definition 4.3]{ByrneNeriRavagnaniSheekeyTensorRepresentation2019}. 
\end{Ex}

\begin{Rq}
    Example~\ref{ex:GabidulinInCode} shows that $d_{\trank}( \CCC_\alpha(\llbracket 0,\mu\rrbracket^2))$ is bounded from above by the tensor-rank of the Gabidulin code $\GG_{\mu+1}(\alpha)$. Obtaining the tensor-rank of an element in $(\mathbb{F}_q^n)^{\otimes 3}$ is NP-complete \cite{HASTAD1990644}. The families of tensors for which the tensor-rank is known are few. In particular, computing the minimum tensor-rank of a code $\CCC_\alpha(\SSS)$ is a complicated problem, and Example~\ref{ex:GabidulinInCode} illustrates a link between the minimum tensor-rank of such a code and the tensor-rank of a Gabidulin code, which is not something known, except for particular cases \cite{BYRNE20241}.
\end{Rq}

\begin{Lemma} \label{lemma:SliceSpacefibreSpaceDimension}
   Let $\Gamma \in (\mathbb{F}_q^n)^{\otimes 3}$. Then for each $j \in \llbracket 1,3\rrbracket$ we have $\dim_{\mathbb{F}_{q}} \slicesp_j(\Gamma) = \dim_{\mathbb{F}_q} \fibresp_j(\Gamma)$. 
\end{Lemma}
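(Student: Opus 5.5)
The claim is that the $j$-slice space and the $j$-fibre space have the same dimension; both are controlled by the rank of a single flattening of $\Gamma$. By the symmetry of the definitions under permuting the three tensor factors, it suffices to treat $j=1$.

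First, define the $1$-flattening $\Phi_1(\Gamma)\in\F_q^{n\times n^2}$. Its rows are indexed by $i_1\in\llbracket 1,n\rrbracket$ and its columns by pairs $(i_2,i_3)\in\llbracket 1,n\rrbracket^2$, with entry $\Gamma[i_1,i_2,i_3]$ at position $(i_1,(i_2,i_3))$.

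Next, identify both spaces with row or column spaces of $\Phi_1(\Gamma)$.
\begin{itemize}
\item The $1$-fibre at $(i_2,i_3)$ is the vector $\Gamma[:,i_2,i_3]$, which is exactly the column of $\Phi_1(\Gamma)$ indexed by $(i_2,i_3)$. Hence $\fibresp_1(\Gamma)$ is the column space of $\Phi_1(\Gamma)$, and $\dim_{\F_q}\fibresp_1(\Gamma)=\rank\Phi_1(\Gamma)$.
\item The $i_1$-th $1$-slice $\Gamma[i_1,:,:]\in\F_q^{n\times n}$ becomes the $i_1$-th row of $\Phi_1(\Gamma)$ under the $\F_q$-linear isomorphism $\F_q^{n\times n}\to\F_q^{n^2}$ that lists entries in the column order chosen above.
\item A linear isomorphism preserves the dimension of spans, so $\dim_{\F_q}\slicesp_1(\Gamma)$ equals the dimension of the row space of $\Phi_1(\Gamma)$.
\end{itemize}

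Row rank equals column rank over a field, so the two dimensions coincide.

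For $j=2,3$ the argument is identical, using the flattening with rows indexed by $i_j$ and columns indexed by the remaining two coordinates. There is no real obstacle here. The only point needing care is to use one consistent indexing convention, so that the vectorized slices and the fibres are literally the rows and columns of the same matrix.
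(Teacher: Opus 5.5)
Your proof is correct, but it takes a different route from the paper's. The paper treats $j=3$. It chooses a decomposition $\Gamma=\sum_{r=1}^{\eta}M_r\otimes c_r$ with $M_r\in\F_q^{n\times n}$, $c_r\in\F_q^n$ and $\eta$ minimal, and uses minimality to show that the $M_r$ are linearly independent and that the $c_r$ are linearly independent. It then extracts an invertible $\eta\times\eta$ submatrix to show that the $c_r$ form a basis of $\fibresp_3(\Gamma)$ and the $M_r$ form a basis of $\slicesp_3(\Gamma)$, so both dimensions equal $\eta$.

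In effect, this is a self-contained reproof, for the mode-$j$ flattening, that row rank, column rank and the minimal length of a rank factorisation all agree. You instead identify the slices and fibres literally as the rows and columns of that flattening and quote row rank $=$ column rank. That is shorter and makes the reason for the equality transparent.

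What the paper's version buys is that the common value is exhibited as the minimal number $\eta$ of terms $M_r\otimes c_r$. A tensor-rank decomposition $\sum_r a_r\otimes b_r\otimes c_r$ is such a decomposition with $M_r=a_r\otimes b_r$, so this gives $\dim\slicesp_3(\Gamma)\le\trank(\Gamma)$ almost for free, which is what Proposition~\ref{prop:weightsandtensorrank} needs. From your proof you would get the same bound by adding the standard remark that the rank of a matrix is the least number of rank-one terms summing to it.
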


\begin{proof}
We will prove the property for $j =3$; the case $j=1$ is similar. Let $\eta$ be the least integer such that there exist $M_1,\dots,M_\eta \in \mathbb{F}_q^{n \times n}$ and $c_1,\dots,c_\eta \in \mathbb{F}_q^n$ such that
   \begin{equation}
   \label{MinimalMcWriting}
       \Gamma = \sum_{r =1}^\eta M_r \otimes c_r, \qquad \emph{i.e.} \qquad  \forall  (i,j,k)  \in \llbracket 1,n\rrbracket^{3} : \Gamma[ i,j,k] = \sum_{r = 1}^\eta M_{r,(i,j)}c_{r,k}.
    \end{equation}
   
Let $\lambda_1,\dots,\lambda_\eta \in \mathbb{F}_q$ be scalars such that $\sum_{r = 1}^\eta \lambda_rM_r = 0$. If there exists $\ell \in \llbracket 1,\eta \rrbracket$ such that $\lambda_\ell \neq 0$, then $\Gamma = \sum_{r \in \llbracket 1,\eta \rrbracket \backslash \{\ell\}} M_r \otimes (c_r - \lambda_r\lambda_\ell^{-1}c_{\ell})$, which contradicts the minimality of $\eta$. Therefore, $\lambda_r = 0$ for each $r \in \llbracket 1,\eta \rrbracket$ proving that the $M_1,\dots,M_\eta$ are linearly independent over $\mathbb{F}_q$. Similarly, the $c_1,\dots,c_\eta$ are also linearly independent. We will prove that $\{M_1,\dots,M_\eta\}$ and $\{ c_1,\dots,c_\eta\}$ generate 
$ \slicesp_j(\Gamma)$ and $\fibresp_j(\Gamma)$, respectively.

Since $\Gamma[i,j,:] = \sum_{r = 1}^\eta M_{r,(i,j)}c_r$ for each $i,j\in \llbracket 1,n\rrbracket$, we have the inclusion $\Span_{\mathbb{F}_q}(c_1,\dots,c_\eta) \supseteq \fibresp_{3}(\Gamma)$. Conversely, let $\phi : \llbracket 1,n^2 \rrbracket \to \llbracket 1,n\rrbracket^2$ be a bijection. Since $(M_1,\dots,M_{\eta})$ are linearly independent over $\mathbb{F}_q$, the matrix of size $n^2 \times \eta$ below has $\eta$ linearly independent columns over $\F_q$ and thus has rank $\eta$:
          $$\begin{pmatrix}
             M_{1,\phi(1)} & \hdots &  M_{\eta,\phi(1)}\\
            \vdots & \ddots &  \vdots\\
             M_{1,\phi(n^2)} & \hdots &  M_{\eta,\phi(n^2)}
         \end{pmatrix}.$$
In particular, there exist distinct integers $\ell_1,\dots,\ell_\eta \in \llbracket 1,n^2\rrbracket$  such that $(M_{r,\phi(\ell_\delta)})_{(\delta,r) \in \llbracket 1,\eta\rrbracket^2}$ is a non-singular matrix in $\F_{q}^{\eta \times \eta}$. Hence, there exists an invertible matrix $Q \in \GL_{\eta}(\mathbb{F}_q)$ with $\sum_{\delta = 1}^\eta Q_{i,\delta}M_{r,\phi(\ell_\delta)} = \ind{i = r}$ for each $i,r \in \llbracket 1,\eta\rrbracket$. In particular, for each $i \in \llbracket 1,\eta \rrbracket$, we can write the following:
         $$c_i = \sum_{r = 1}^n \ind{r = i}c_r =  \sum_{\delta = 1}^\eta \sum_{r = 1}^\eta Q_{i,\delta}M_{r,\phi(\ell_\delta)}c_r = \sum_{\delta = 1}^\eta Q_{i,\delta}\Gamma[\phi(\ell_\delta),:].$$
Then $c_1,\dots,c_{\eta}\in \fibresp_{3}(\Gamma)$, which proves the converse inclusion. Similarly, we have $\Span_{\mathbb{F}_q}(M_1,\dots,M_\eta) \supseteq \slicesp_{3}(\Gamma)$ by construction while the converse inclusion is holds by the $\F_q$-linear independence of the $c_1,\dots,c_\eta$. This ensures that there exist distinct $\ell_1,\dots,\ell_\eta \in \llbracket 1,n\rrbracket$ such that $(c_{r,\ell_\delta})_{(\delta,r) \in \llbracket 1,\eta\rrbracket^2}$ is invertible with inverse $P \in \GL_\eta(\mathbb{F}_q)$ say, so that for each $i \in \llbracket 1,\eta\rrbracket$ we have $M_i = \sum_{\delta = 1}^\eta P_{i,\delta}\Gamma[:,:,\ell_\delta]$. Therefore, $\{c_1,\dots,c_\eta\}$ and $\{M_1,\dots,M_\eta\}$ are respective bases of the $\mathbb{F}_q$-vector spaces $\fibresp_{3}(T)$ and $\slicesp_{3}(T)$, which proves the lemma.
\end{proof}

With the lemma and the previous remarks we have the following immediate consequences.

\begin{Prop} \label{prop:weightsandtensorrank}
    For each  $T \in \mathbb{F}_{q^n}^{n\times n}$, we have $\wt_{\fibresp_3}(T) \leq \trank_{\mathbb{F}_q}(T)$ and we have $\wt_{\slicesp_j}(T) \leq \trank_{\mathbb{F}_q}(T)$ for each $j \in \{ 1,2\}$.
\end{Prop}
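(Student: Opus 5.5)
Fix a basis $\omega$ of $\F_{q^n}/\F_q$ and let $\Gamma = \Sss_\omega^{-1}(T)$ be the associated tensor. By Proposition~\ref{prop:correspondancetensormatrix}, it suffices to show $\dim_{\F_q}\fibresp_3(\Gamma) \leq \trank(\Gamma)$ and $\dim_{\F_q}\slicesp_j(\Gamma)\leq \trank(\Gamma)$ for $j\in\{1,2\}$. The definition of the tensor-rank gives a minimal form $\Gamma=\sum_{r=1}^R a_r\otimes b_r\otimes c_r$ with $R=\trank(\Gamma)$. The plan is to bound each fibre space and each slice space directly by the span of $R$ vectors or $R$ matrices coming from this decomposition.

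\emph{Fibre weight.} Every $3$-fibre is
\[
\Gamma[i_1,i_2,:]=\sum_{r=1}^R a_{r,i_1}b_{r,i_2}\,c_r .
\]
Hence $\fibresp_3(\Gamma)\subseteq\Span_{\F_q}(c_1,\dots,c_R)$, which has dimension at most $R$. This gives $\wt_{\fibresp_3}(T)\leq R$.

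\emph{Slice weights.} For $j=1$, each $1$-slice is
\[
\Gamma[i_1,:,:]=\sum_{r=1}^R a_{r,i_1}\,(b_r\otimes c_r).
\]
So $\slicesp_1(\Gamma)\subseteq\Span_{\F_q}(b_r\otimes c_r : r\in\llbracket 1,R\rrbracket)$, again of dimension at most $R$. The case $j=2$ is symmetric, using the matrices $a_r\otimes c_r$. Alternatively, one can bound $\dim\fibresp_j(\Gamma)\leq R$ exactly as in the fibre-weight case and then transfer the bound to slices using Lemma~\ref{lemma:SliceSpacefibreSpaceDimension}. Either route yields $\wt_{\slicesp_j}(T)\leq\trank_{\F_q}(T)$.

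The main point requiring care is the identification of the matrix weights with the tensor dimensions. That identification is already supplied by Proposition~\ref{prop:correspondancetensormatrix}, including its independence of $\omega$, so no real obstacle remains: the rest is the direct span inclusion above.
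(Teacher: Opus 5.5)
Your proof is correct and is essentially the paper's argument, which also bounds each weight by the span of $\trank(\Gamma)$ pieces of a decomposition and transfers to $T$ via Proposition~\ref{prop:correspondancetensormatrix}. The paper cites \cite[14.45]{Burgisser1997ch14} for the fact that each slice space lies in a span of $\trank(\Gamma)$ rank-one matrices and passes from $\slicesp_3$ to $\fibresp_3$ via Lemma~\ref{lemma:SliceSpacefibreSpaceDimension}, whereas you read both inclusions directly off a minimal decomposition, which makes the argument self-contained.
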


\begin{proof}
    A consequence of \cite[14.45]{Burgisser1997ch14} is that any slice-space of $\Gamma$ is contained in a space generated by $\trank(\Gamma)$ matrices of rank 1, which proves that the tensor-rank is bounded from below by the slice-space dimension. Then, the proposition is a direct consequence of Proposition~\ref{prop:correspondancetensormatrix} and Lemma~\ref{lemma:SliceSpacefibreSpaceDimension}.
\end{proof}

\begin{Corol} \label{corol:maxrankslicesvstrank}
    For each $T \in \mathbb{F}_{q^n}^{n\times n}$ we have
    $$ \max_{i_1 \in \llbracket 1,n\rrbracket} \rank_{\mathbb{F}_q} T[i_1,:] \leq \trank_{\mathbb{F}_q}(T) \qquad \text{and} \qquad \max_{i_2 \in \llbracket 1,n\rrbracket} \rank_{\mathbb{F}_q} T[:,i_2] \leq \trank_{\mathbb{F}_q}(T). $$
    Hence, if $\mathcal{S} \subseteq \llbracket 0,n-1\rrbracket^2$ is non-empty, then  $d_{\trank}(\CCC_\alpha(\SSS)) \geq n-\max\{s_j \ | \ (s_1,s_2) \in \SSS, j \in \{1,2\}\}$.
\end{Corol}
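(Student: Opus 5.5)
The plan is to chain Lemma~\ref{lemma:maxrankandslicespaceweight} with Proposition~\ref{prop:weightsandtensorrank}, and then use the minimum distance bound of Proposition~\ref{prop:UpperBoundDistance}. For the first inequality, fix $T \in \F_{q^n}^{n\times n}$ and $i_1 \in \llbracket 1,n\rrbracket$. By the second item of Lemma~\ref{lemma:maxrankandslicespaceweight} we have $\rank_{\F_q} T[i_1,:] \leq \wt_{\slicesp_2}(T)$. By Proposition~\ref{prop:weightsandtensorrank} with $j=2$ we have $\wt_{\slicesp_2}(T) \leq \trank_{\F_q}(T)$. Taking the maximum over $i_1$ gives the claim. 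The column statement is identical: use the first item of Lemma~\ref{lemma:maxrankandslicespaceweight} together with $\wt_{\slicesp_1}(T)\leq \trank_{\F_q}(T)$.

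For the bound on the minimum distance, let $C \in \CCC_\alpha(\SSS)$ be non-zero. Since $d_{\trank}(\CCC_\alpha(\SSS)) = \min\{\trank_{\F_q}(C) : C \neq 0\}$, it suffices to bound $\trank_{\F_q}(C)$ from below. By Proposition~\ref{prop:GabidulinColumns}, every row of $C$ lies in $\GG_{\max\pi_2(\SSS)+1}(\alpha)$ and every column lies in $\GG_{\max\pi_1(\SSS)+1}(\alpha)$. Since $C \neq 0$, some row $C[i_1,:]$ is non-zero. That row is a non-zero codeword of an MRD code of dimension $\max\pi_2(\SSS)+1$, so its $\F_q$-rank is at least $n-\max\pi_2(\SSS)$. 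The first inequality then gives
\[
\trank_{\F_q}(C) \geq n - \max\pi_2(\SSS) \geq n - \max\{s_j \mid (s_1,s_2)\in\SSS,\ j\in\{1,2\}\}.
\]
Arguing with columns instead would give the analogous bound with $\pi_1$, and either one suffices.

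An alternative, even shorter route is to combine $\wt_{\slicesp_j}(C) \leq \trank_{\F_q}(C)$ (Proposition~\ref{prop:weightsandtensorrank}) with the second item of Proposition~\ref{prop:UpperBoundDistance}. This gives the stronger bound $\trank_{\F_q}(C) \geq n-\min_j \max\pi_j(\SSS)$, which implies the stated one.

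There is no genuine obstacle here, since everything is a composition of earlier results. The only point to check is that the earlier results are stated in the matrix language rather than the tensor language. This is harmless: the link between $\trank_{\F_q}$ of a matrix over $\F_{q^n}$ and the slice-space dimensions of its associated tensor is given by Proposition~\ref{prop:correspondancetensormatrix}, and Proposition~\ref{prop:weightsandtensorrank} is already phrased for $T$ itself.
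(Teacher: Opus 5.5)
Your proposal is correct and follows essentially the paper's route: the paper's proof just cites Proposition~\ref{prop:weightsandtensorrank}, Lemma~\ref{lemma:maxrankandslicespaceweight} and Proposition~\ref{prop:UpperBoundDistance}, which is your chain. Your main argument simply inlines the Gabidulin/MRD reasoning behind Proposition~\ref{prop:UpperBoundDistance}, and your alternative route matches the paper's citation; both give the slightly stronger bound $n-\min_j \max\pi_j(\SSS)$, as you note.
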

\begin{proof}
    This is a direct consequence of Proposition~\ref{prop:weightsandtensorrank}, of Lemma~\ref{lemma:maxrankandslicespaceweight} and of Proposition~\ref{prop:UpperBoundDistance}.
\end{proof}

\begin{Ex}
Fix $q = 3$, $n = 2$ and $\omega = (\omega_1,\omega_2)$ a basis of $\mathbb{F}_9/\mathbb{F}_3$. Consider the tensors $\Gamma_1,\Gamma_2,\Gamma_3 \in (\F_3^2)^{\otimes 3}$ defined below and also illustrated in Figure~\ref{fig:IllustrTRANK123}.
\begin{itemize}
    \item Let $\Gamma_1 := (1,1) \otimes (1,2) \otimes (1,2)$.
    \item Let $\Gamma_2 := (1,1) \otimes (1,1) \otimes (1,1) + (0,2) \otimes (1,1) \otimes (1,0)$.
    \item Let $\Gamma_3 := (1,0) \otimes (1,0) \otimes (1,0) + (0,1) \otimes (0,1) \otimes (1,0) + (1,0) \otimes (0,1) \otimes (0,1)$.
\end{itemize}
For example, we have $\Gamma_1[1,1,1] = 1$, $\Gamma_1[1,1,2] = 2$, and $\Gamma_1[2,1,1] = 0$. Moreover, one can check that $\mathfrak{s}_\omega(\Gamma_i)=M_i, i=\{1,2.3\}$, for the matrices $M_1$, $M_2$ and $M_3$ shown in Figure~\ref{fig:IllustrTRANK123}. 
We note the following properties.
\begin{itemize}
    \item The expression of $\Gamma_1$ contains only one elementary tensor, and is non-zero. Consequently, we have $\trank(\Gamma_1) = \trank_{\F_q}(M_1)= 1$. In accordance with Proposition~\ref{prop:weightsandtensorrank}, we can check that the rows (\emph{resp.} columns) of $M_1$ are $\F_q$-proportional, and that the span of the entries of $M_1$ is one dimensional, hence $\wt_{\slicesp_1}(M_1) = \wt_{\slicesp_2}(M_1) = \wt_{\fibresp_3}(M_1) = 1$. 
    \item The $\F_q$-space $\UUU_1(M_2)$ is spanned by the two vectors $(\omega_1 + \omega_2, \omega_1 + \omega_2)$ and $(\omega_2,\omega_2)$, hence $\wt_{\slicesp_1}(M_2) = 2$. Similarly, $\UUU_2(M_2)$ is spanned by $(\omega_1 + \omega_1,\omega_2)$ hence $\wt_{\slicesp_2}(M_2) = 1$. Finally $\wt_{\fibresp_3}(M_2) = 2$. By Proposition~\ref{prop:weightsandtensorrank}, the expression of $\Gamma_2$ above is minimal and we have $\trank(\Gamma_2) = \trank_{\F_q}(M_2) = 2$.
    \item We have $\wt_{\slicesp_1}(M_3) = \wt_{\slicesp_2}(M_3) = \wt_{\fibresp_3}(M_3) = 2$ and $\trank(\Gamma_3) = 3$ while none of the inequalities listed by the proposition are met. Indeed, if $\Gamma_3$ had tensor-rank 2, then there would exist a basis of $\slicesp_1(\Gamma_3) = \Span_{\F_q} (\left[ \begin{smallmatrix} 1&0\\0&0\end{smallmatrix}\right],\left[ \begin{smallmatrix} 0&1\\1&0\end{smallmatrix}\right])$ of the form $(A_1,A_2)$ with $\rank A_1 = \rank A_2 = 1$, by definition of the tensor-rank. But such a basis cannot exist since the only matrices of rank one in this space are the scalar multiples of $\left[ \begin{smallmatrix} 1&0\\0&0\end{smallmatrix}\right]$. 
\end{itemize}

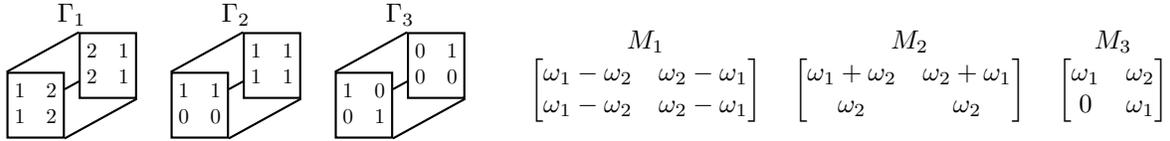
\begin{figure}[ht]
\centering
\begin{tabular}{c c c} $\Gamma_1$&$\Gamma_2$&$\Gamma_3$\\
 \begin{tikzpicture}[thick,scale=0.8, every node/.style={scale=0.8}]
        \node[draw,rectangle] (M) at (0,0) {$\begin{matrix}
            1&2\\1&2
        \end{matrix}$};
        \node[draw,rectangle] (N) at (1.2,0.66) {$\begin{matrix}
            2&1\\2&1
        \end{matrix}$};
        \draw (M)--(N);
        \draw (M.north west) -- (N.north west);
        \draw (M.south east) -- (N.south east);
    \end{tikzpicture}
    &\begin{tikzpicture}[thick,scale=0.8, every node/.style={scale=0.8}]
        \node[draw,rectangle] (M) at (0,0) {$\begin{matrix}
            1&1\\0&0
        \end{matrix}$};
        \node[draw,rectangle] (N) at (1.2,0.66) {$\begin{matrix}
            1&1\\1&1
        \end{matrix}$};
        \draw (M)--(N);
        \draw (M.north west) -- (N.north west);
        \draw (M.south east) -- (N.south east);
    \end{tikzpicture}
    &\begin{tikzpicture}[thick,scale=0.8, every node/.style={scale=0.8}]
        \node[draw,rectangle] (M) at (0,0) {$\begin{matrix}
            1&0\\0&1
        \end{matrix}$};
        \node[draw,rectangle] (N) at (1.2,0.66) {$\begin{matrix}
            0&1\\0&0
        \end{matrix}$};
        \draw (M)--(N);
        \draw (M.north west) -- (N.north west);
        \draw (M.south east) -- (N.south east);
    \end{tikzpicture}
\end{tabular} \phantom{x}
\begin{tabular}{ccc} $M_1$&$M_2$&$M_3$\\
 $\begin{bmatrix}
        \omega_1 - \omega_2 & \omega_2 - \omega_1 \\ \omega_1 - \omega_2 & \omega_2 - \omega_1
    \end{bmatrix}$
    & $\begin{bmatrix}
        \omega_1 + \omega_2 & \omega_2 + \omega_1 \\  \omega_2 &\omega_2
    \end{bmatrix}$ &
     $\begin{bmatrix}
        \omega_1 & \omega_2 \\ 0& \omega_1 
    \end{bmatrix}$
\end{tabular}
\caption{Illustration of tensor-rank one, two, and three elements in $\F_{3^2}^{2 \times 2}$.}
\label{fig:IllustrTRANK123}
\end{figure}
\end{Ex}

\begin{Prop}\label{prop:trianglesincomplement}
Let $\SSS \subseteq \llbracket 0,n-1 \rrbracket^2$ be non-empty. Then the code $\CCC_\alpha(\SSS)$ is an $\F_{q^n}\tm[n\times n, |\SSS|]$ tensor code of minimum distance $d_{\trank}(\CCC_\alpha(\SSS)) \geq \max\{\mu \in \llbracket 0,n-1\rrbracket \mid \exists x,y \in \llbracket 0,n-1\rrbracket \text{ s.t. } \TTT_{x,y,\mu} \cap \SSS = \emptyset\} $ where we denote by $\TTT_{x,y,\mu} = \{(r_1,r_2) \in \llbracket 0,n-1\rrbracket^2 \ | \ r_1 \geq x,r_2 \geq y, r_1 + r_2 \leq \mu + x + y -2\}$ for each $x, y \in \llbracket 0,n-1 \rrbracket$. 
\end{Prop}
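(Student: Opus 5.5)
Fix $\mu$, $x$, $y$ with $\TTT_{x,y,\mu}\cap\SSS=\emptyset$. The plan is to reduce to Roth's theorem that every non-zero element of $\CCC(n,\mu,3;q)$ has tensor-rank at least $\mu$. The dimension claim $|\SSS|$ is already given by Proposition~\ref{prop:GabidulinColumns}, so only the distance bound needs proof.

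First, I will use Proposition~\ref{prop:TranslationSets} with $(r_1,r_2)=(-x,-y)$. This gives matrices $A,B\in\GL_n(\F_q)$ with $A^\top\CCC_\alpha(\SSS)B=\CCC_\alpha(\SSS')$, where $\SSS'$ is the cyclic translate of $\SSS$. By Lemma~\ref{lemma:tensorrankdecompositionasmatrices}, multiplying on the left and right by invertible matrices over $\F_q$ preserves tensor-rank. Hence $d_{\trank}(\CCC_\alpha(\SSS))=d_{\trank}(\CCC_\alpha(\SSS'))$. Under the translation, the triangle $\TTT_{x,y,\mu}$ maps to a subset of $\TTT_{0,0,\mu}=\{r\in\llbracket0,n-1\rrbracket^2 : r_1+r_2\le\mu-2\}$. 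I still need to check that it maps onto the whole of $\TTT_{0,0,\mu}$. If some points of $\TTT_{0,0,\mu}$ would come from coordinates exceeding $n-1$ before reduction, I would instead take $\TTT_{x,y,\mu}$ as the truncated set and use the monotonicity lemma (inclusion of codes) on a suitably smaller $\mu$. I would try to verify that the definition already forces $\mu-2+x\le n-1$ in the relevant cases, or else weaken the statement accordingly. After this step we have $\SSS'\cap\TTT_{0,0,\mu}=\emptyset$, i.e.\ $\SSS'\subseteq\llbracket0,n-1\rrbracket^2\setminus\TTT_{0,0,\mu}$.

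Second, I need to relate $\TTT_{0,0,\mu}$ to Roth's set $\mathcal S(n,\mu,3;q)$. Roth's set consists of those $r$ with $r_1+r_2\le\mu-2$ for which an $\F_q$-$[\mu-1,r_1+1,r_2+1]$ Hamming code exists. Such a code need not exist for every such $r$, so the inclusion runs $\mathcal S(n,\mu,3;q)\subseteq\TTT_{0,0,\mu}$. This gives $\SSS'\subseteq\SSS^{(Roth)}_\mu$, and by the inclusion lemma $\CCC_\alpha(\SSS')\subseteq\CCC_\alpha(\SSS^{(Roth)}_\mu)=\Sss_\alpha(\CCC(n,\mu,3;q))$. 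Roth's Theorem~5 then bounds the tensor-rank of every non-zero codeword below by $\mu$. Tensor-rank is independent of the basis used in $\Sss$ (Lemma~\ref{lemma:tensorrankdecompositionasmatrices}), so the bound transfers to $\CCC_\alpha(\SSS')$. Taking the maximum over admissible $\mu$ finishes the proof.

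The main obstacle is making the translation step exact: showing that the cyclic shift sends $\TTT_{x,y,\mu}$ onto all of $\TTT_{0,0,\mu}$ with no wrap-around issues. If this fails, a fallback is to avoid Roth's theorem and argue directly. I would multiply a codeword's polynomial by suitable powers and, viewing $\mu$ rank-one terms, build a nonzero linearised annihilator $V$ of $q$-degree $<\mu$ as in Lemma~\ref{lemma:CoefCompositionVof}. Its composition with the codeword would have to place a coefficient inside the triangle, contradicting $\Supp\cap\TTT=\emptyset$.
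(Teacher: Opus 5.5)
Your outline is exactly the paper's argument: a cyclic shift by $(-x,-y)$ via Proposition~\ref{prop:TranslationSets}, invariance of $d_{\trank}$ under $C\mapsto A^\top CB$ with $A,B\in\GL_n(\F_q)$, then $\SSS'\subseteq\SSS^{(Roth)}_\mu$ and Roth's bound. The obstacle you flag is the real one, but it cannot be closed, because the statement is false when the triangle is truncated.

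\textbf{Why the shift step fails.} What the argument needs is that every $u\in\TTT_{0,0,\mu}$ pulls back to a point of $\TTT_{x,y,\mu}$. The preimage of $u$ is $((u_1+x)\bmod n,(u_2+y)\bmod n)$. If $\mu+\max(x,y)>n+1$, say $x+\mu-2>n-1$, then $u=(\mu-2,0)$ has preimage $(x+\mu-2-n,\,y)$. Its first coordinate is below $x$, so it lies outside $\TTT_{x,y,\mu}$ and may belong to $\SSS$.

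\textbf{Counterexample.} Let $n\ge 4$, $\SSS=\llbracket 0,n-1\rrbracket^2\setminus\{(n-1,n-1)\}$ and $x=y=\mu=n-1$. Then $\TTT_{n-1,n-1,n-1}=\{(n-1,n-1)\}$ is disjoint from $\SSS$, so the statement claims $d_{\trank}(\CCC_\alpha(\SSS))\ge n-1\ge 3$. By the duality proposition, $\CCC_\alpha(\SSS)^\bot=\CCC_{\alpha^\bot}(\{(n-1,n-1)\})$. Hence $C\in\CCC_\alpha(\SSS)$ if and only if $\sum_{i_1,i_2}C_{i_1,i_2}(\alpha^\bot_{i_1}\alpha^\bot_{i_2})^{q^{n-1}}=0$. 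Therefore
\[
C=(\alpha^\bot_2)^{2q^{n-1}}E_{(1,1)}-(\alpha^\bot_1)^{2q^{n-1}}E_{(2,2)}
\]
is a non-zero codeword, and by Lemma~\ref{lemma:tensorrankdecompositionasmatrices} it has $\trank_{\F_q}(C)\le 2$. In your shifted picture, $\SSS'=\llbracket 0,n-1\rrbracket^2\setminus\{(0,0)\}$ meets $\TTT_{0,0,n-1}$, for instance at $(1,0)$, which is the image of $(0,n-1)\in\SSS$.

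So your hope that the definition forces $\mu-2+x\le n-1$ fails, and no direct annihilator argument can rescue the statement as written. The paper's own proof has the same gap. It writes $\SSS'=\{(s_1-x,s_2-y)\}$ without the reduction mod $n$ that Proposition~\ref{prop:TranslationSets} requires, and then asserts $\SSS'\cap\{r_1+r_2\le\mu-2\}=\emptyset$ without justification.

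\textbf{The repair.} It is the one you half-suggested: restrict the maximum to triples with $\mu+\max(x,y)\le n+1$. Equivalently, define $\TTT_{x,y,\mu}$ cyclically as the set of points $((x+u_1)\bmod n,(y+u_2)\bmod n)$ with $u_1,u_2\ge 0$ and $u_1+u_2\le\mu-2$. Under that hypothesis every $u\in\TTT_{0,0,\mu}$ has preimage $(u_1+x,u_2+y)\in\TTT_{x,y,\mu}$, so $\SSS'\cap\TTT_{0,0,\mu}=\emptyset$. Your first two steps are then a complete proof:
\begin{itemize}
\item $\mathcal S(n,\mu,3;q)\subseteq\TTT_{0,0,\mu}$ gives $\SSS'\subseteq\SSS^{(Roth)}_\mu$, hence $\CCC_\alpha(\SSS')\subseteq\Sss_\alpha(\CCC(n,\mu,3;q))$;
\item Roth's bound, together with the basis-independence of tensor-rank, finishes.
\end{itemize}
Since $\TTT_{x,y,\mu'}\subseteq\TTT_{x,y,\mu}$ for $\mu'\le\mu$, your ``smaller $\mu$'' fallback gives the correct general bound with $\mu$ replaced by $\min(\mu,\,n+1-\max(x,y))$. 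In the counterexample above this yields $2$, which is sharp. Note that the monotonicity used here is that of the triangles in $\mu$, not the inclusion lemma for codes.
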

\begin{proof} Let $\mu \in \llbracket 0,n-1\rrbracket$ be such that $\SSS \cap \TTT_{x,y,\mu} = \emptyset$ for some $x,y \in \llbracket 0,n-1\rrbracket$. Proposition~\ref{prop:TranslationSets} ensures that there exists matrices $L,R \in \GL_n(\F_q)$ such that $\CCC_\alpha(\SSS') = L \CCC_\alpha(\SSS) R$, where $\SSS' = \{(s_1-x,s_2-y) \mid s \in \SSS \}$, and hence, by Lemma~\ref{lemma:tensorrankdecompositionasmatrices} we have $d_{\trank}(\CCC_\alpha(\SSS')) = d_{\trank}(\CCC_\alpha(\SSS))$. Moreover, since $\SSS'\cap \{(r_1,r_2) \in \llbracket 0,n-1\rrbracket^2 \mid r_1 + r_2 \leq \mu - 2\} = \emptyset$, we have $\SSS' \subseteq \SSS^{(Roth)}_\mu$ and in particular $d_{\trank}(\CCC_\alpha(\SSS')) \geq \mu$.
\end{proof}

\begin{Corol}
    Let $\mu_1,\mu_2\in\llbracket 0,n-1\rrbracket$ and let $\SSS = \llbracket 0,\mu_1\rrbracket \times \llbracket 0,\mu_2\rrbracket$. Then we have $d_{\trank_{\F_q}}(\CCC_\alpha(\SSS)) \geq 2n-1-\mu_1-\mu_2$.
\end{Corol}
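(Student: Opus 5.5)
The plan is to reduce the rectangle $\SSS=\llbracket 0,\mu_1\rrbracket\times\llbracket 0,\mu_2\rrbracket$, by a cyclic translation, to a set lying inside the complement $\SSS^{(Roth)}_\mu$ of Roth's parity-check support with $\mu := 2n-1-\mu_1-\mu_2$. We then invoke Roth's lower bound \cite[Theorem 5]{ROTHTensorCodesForRankMetric}, exactly as in the proof of Proposition~\ref{prop:trianglesincomplement}. We do not quote Proposition~\ref{prop:trianglesincomplement} directly, because it only allows $\mu\le n-1$, whereas here $\mu$ may be as large as $2n-1$. Roth's code $\CCC(n,\mu,3;q)$ is defined for all $\mu\in\llbracket 0,2n-1\rrbracket$, so we use the underlying argument. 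Note that $\mu\in\llbracket 1,2n-1\rrbracket$, since $0\le \mu_j\le n-1$.

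First, apply Proposition~\ref{prop:TranslationSets} with $r_1=n-1-\mu_1$ and $r_2=n-1-\mu_2$. Since $s_j+r_j\le n-1$ for every $s\in\SSS$, no reduction modulo $n$ occurs, and the translated set is
\[
\SSS'=\llbracket n-1-\mu_1,n-1\rrbracket\times\llbracket n-1-\mu_2,n-1\rrbracket .
\]
There exist $L,R\in\GL_n(\F_q)$ with $\CCC_\alpha(\SSS')=L\,\CCC_\alpha(\SSS)\,R$. By Lemma~\ref{lemma:tensorrankdecompositionasmatrices}, multiplication by invertible $\F_q$-matrices preserves tensor-rank, so $d_{\trank}(\CCC_\alpha(\SSS'))=d_{\trank}(\CCC_\alpha(\SSS))$.

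Second, check that $\SSS'$ avoids the triangle $\{r\in\llbracket 0,n-1\rrbracket^2 : r_1+r_2\le \mu-2\}$. Every $s\in\SSS'$ satisfies
\[
s_1+s_2\ge (n-1-\mu_1)+(n-1-\mu_2)=2n-2-\mu_1-\mu_2=\mu-1>\mu-2 .
\]
Since $\SSS(n,\mu,3;q)$ is by definition a subset of that triangle, $\SSS'\subseteq \SSS^{(Roth)}_\mu$. Hence $\CCC_\alpha(\SSS')\subseteq \Sss_\alpha(\CCC(n,\mu,3;q))$, using the inclusion of supports (the lemma on ascending families of Roth-tensor codes). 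Every nonzero element of Roth's code has tensor-rank at least $\mu$, so $d_{\trank}(\CCC_\alpha(\SSS'))\ge\mu$, which gives the claim.

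The only step needing care is the applicability of Roth's theorem for $\mu$ possibly larger than $n-1$. In particular, we must confirm that the description $\Sss_\alpha(\CCC(n,\mu,3;q))=\ev(\Span\{\overline{X^{q^{s_1}}Y^{q^{s_2}}} : s\in\SSS^{(Roth)}_\mu\})$ stated earlier holds on the full range $\mu\in\llbracket 0,2n-1\rrbracket$. That description was stated in the paper for this full range, so we take it as given. Everything else is bookkeeping.
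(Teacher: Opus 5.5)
Your proposal is correct and is essentially the paper's proof. Both arguments translate $\SSS$ to the corner set $\llbracket n-1-\mu_1,n-1\rrbracket\times\llbracket n-1-\mu_2,n-1\rrbracket$ via Proposition~\ref{prop:TranslationSets} and Lemma~\ref{lemma:tensorrankdecompositionasmatrices}, observe that this set misses the triangle $\{r_1+r_2\le \mu-2\}$ with $\mu=2n-1-\mu_1-\mu_2$, and conclude from Roth's bound. The only difference is that the paper cites Proposition~\ref{prop:trianglesincomplement} directly, while you rerun its argument because that proposition only allows $\mu\le n-1$, and here $\mu$ exceeds $n-1$ whenever $\mu_1+\mu_2<n$; your version is the more careful of the two.
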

\begin{proof}
    Let $\SSS' = \llbracket n-1-\mu_1,n-1\rrbracket \times \llbracket n-1-\mu_2,n-1\rrbracket$. By Proposition~\ref{prop:TranslationSets} and Lemma~\ref{lemma:tensorrankdecompositionasmatrices}, we have $d_{\trank_{\F_q}}(\CCC_\alpha(\SSS)) = d_{\trank_{\F_q}}(\CCC_\alpha(\SSS'))$. Let $\eta = 2n-1-\mu_1-\mu_2$, and note that $\TTT_{0,0,\eta} \cap \SSS' = \emptyset$. Indeed, if $(r_1,r_2) \in \TTT_{0,0,\eta}$ with $r_1 \geq n-1-\mu_1$, then we have $r_2 \leq \eta -2 - r_1 \leq 2n-1-\mu_1-\mu_2-2 -n+1+\mu_1 = n-\mu_2-2$. Therefore, $d_{\trank_{\F_q}}(\CCC_\alpha(\SSS')) \geq \eta$ by Proposition \ref{prop:trianglesincomplement} and we have the wanted result.
\end{proof}

\section{Decoding}
\label{sec:decoding}
\subsection{Decoding in a subspace of a direct sum of Gabidulin codes.}
\label{subsec:decodingfibrewise}

In \cite{GABIDULINRankDistanceCodes}, Gabidulin introduced a decoding algorithm for the Gabidulin code $\GG_k(\alpha)$ of length $n$ that corrects any error of rank at most $\lfloor{\frac{n-k}{2}}\rfloor$. Since Proposition~\ref{prop:GabidulinColumns} states that every column and row of a codeword in $\mathcal{C}_{\alpha}(\mathcal{S})$ is a Gabidulin codeword, we can infer a decoding algorithm that uses the fact that $\mathcal{C}_{\alpha}(\mathcal{S})$ is embedded in a direct sum of Gabidulin codes. This motivates us to study this code with respect to the metric inherited from the structure of direct sums of rank-metric codes, a metric linked to the tensor-rank metric by Proposition~\ref{prop:weightsandtensorrank}.
We denote by \texttt{GabDec}$(r,k,\alpha)$ the function that returns the unique Gabidulin codeword $c \in \GG_{k}(\alpha)$ for $r = c +e$ with $e$ a vector in $\mathbb{F}_{q^n}^n$ of rank at most $\left\lfloor\frac{n-k}{2}\right\rfloor$. In the event of a decoding failure, the output of \texttt{GabDec}$(r,k,\alpha)$ is the received word $r$.

\begin{algorithm}[ht]
\caption{Column-wise decoding (\emph{i.e.} fibre-wise decoding for $j = 1$)}\label{alg:fibrewise}
\begin{algorithmic}
\Require $n$ an integer, $q$ a prime power, $\mathcal{S} \subseteq \llbracket 0,n -1 \rrbracket^2$ and $R \in \mathbb{F}_{q^n}^{n \times n}$.
\State $C\leftarrow 0 \in \mathbb{F}_{q^n}^{n \times n}$.
\For{$i_2\in \llbracket 1,n\rrbracket$}
\State $C[:,i_2]  \leftarrow \texttt{GabDec}(R[:,i_2], \max(\pi_1(\mathcal{S}))+1, \alpha)$
\EndFor
\State \textbf{return} $C$.
\end{algorithmic}
\end{algorithm} 

\setcounter{algorithm}{0}

\begin{algorithm}[ht]
\caption{\!\!\!{\footnotesize\textbf{'}}\ Row-wise decoding (\emph{i.e.} fibre-wise decoding for $j = 2$)}\label{alg:fibrewisebis}
\begin{algorithmic}
\Require $n$ an integer, $q$ a prime power, $\mathcal{S} \subseteq \llbracket 0,n -1 \rrbracket^2$ and $R \in \mathbb{F}_{q^n}^{n \times n}$.
\State $C\leftarrow 0 \in \mathbb{F}_{q^n}^{n \times n}$.

\For{$i_1\in \llbracket 1,n\rrbracket$}
\State $C[i_1,:]  \leftarrow \texttt{GabDec}(R[i_1,:], \max(\pi_2(\mathcal{S}))+1, \alpha)$
\EndFor
\State \textbf{return} $C$.
\end{algorithmic}
\end{algorithm} 

Given a received word $R$ as input, Algorithm~\ref{alg:fibrewise} returns a unique codeword $C \in \mathcal{C}_\alpha(\mathcal{S})$ satisfying $R = C + E$ for some error matrix $E$ such that the $\mathbb{F}_q$-rank of every $j$-fibre (\emph{i.e.} column if $j =1$, and row if $j = 2$) of $E$ has rank at most the decoding radius of the Gabidulin code of dimension $\max \pi_j(\mathcal{S})$. That is, it can correct any error $E$ subject to the following constraints:
\begin{equation}
    \label{eq:CriterionFibreWiseNaive}
    \begin{array}{cc}
         {\displaystyle \max_{i_{2} \in \llbracket 1,n\rrbracket}} \rank_{\mathbb{F}_q} E[:,i_2] \leq \left\lfloor{\frac{n-\max \pi_1(\mathcal{S}) - 1}{2}}\right\rfloor &  \text{if} \ j = 1;\\[0.5cm]
         {\displaystyle\max_{i_{1} \in \llbracket 1,n\rrbracket}} \rank_{\mathbb{F}_q} E[i_1,:] \leq \left\lfloor{\frac{n-\max \pi_2(\mathcal{S}) - 1}{2}}\right\rfloor&  \text{if} \ j=2.
    \end{array}
\end{equation}

We can improve this initial approach with the additional information that the codeword sent is not any codeword of a product of Gabidulin codes. We remind the reader that the Hamming weight of a vector $u \in \mathbb{F}_{q^n}^n$ is its number of non-zero entries, \emph{i.e.} $\textup{w}_H(u):=|\left\{i \in \llbracket 1,n\rrbracket\ | \ u_i \neq 0\right\}|$. In addition, the Hamming weight of $u$ is bounded from below by its $\F_q$-rank. Indeed, the dimension of $\Span_{\F_q}(\{u_1,\dots,u_n\})$ is at most the number of non-zero elements among $u_1,\dots,u_n$. 

\begin{Lemma} \label{lemma:ErrorMaxRankcolumnsHammingRows}
Let $\theta,\kappa$ be integers. Let $E \in \mathbb{F}_{q^n}^{n \times n}$ such that
\begin{equation}
    \label{eq:maxrankupperboundedbutnoteverywhere} 
    \min_{\substack{\JJJ \subseteq \llbracket 1,n\rrbracket \\|\JJJ| = \kappa }} \max_{i_2 \in \JJJ} \ \rank_{\mathbb{F}_q} E[:,i_2] \leq \theta.
\end{equation}
Assume that Algorithm~\ref{alg:fibrewise} with input a received word of the form $R = C + E$ with $C \in \mathcal{C}_\alpha(\mathcal{S})$ returns a matrix $\Tilde{C}$ such that 
$$\forall i_2 \in \llbracket 1,n\rrbracket :  \rank_{\F_q}E[:,i_2] \leq \theta \implies C[:,i_2] = \Tilde C[:,i_2].$$
Then $\Tilde{C} = C + \Tilde E$ where every row of $\Tilde{E}$ has Hamming weight at most $n - \kappa$.
\end{Lemma}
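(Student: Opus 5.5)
The plan is to look at which columns are guaranteed to be decoded correctly, and then argue row by row. Define $\tilde E := \tilde C - C$, so that $\tilde C = C + \tilde E$ holds by definition. What we must show is that for every $i_1 \in \llbracket 1,n\rrbracket$ the row $\tilde E[i_1,:]$ has at most $n-\kappa$ non-zero entries.

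First, I will use hypothesis (\ref{eq:maxrankupperboundedbutnoteverywhere}) to pick a set $\JJJ \subseteq \llbracket 1,n\rrbracket$ with $|\JJJ| = \kappa$ that attains the minimum. For this set, $\max_{i_2 \in \JJJ}\rank_{\F_q}E[:,i_2] \leq \theta$. In other words, every column indexed by $\JJJ$ satisfies $\rank_{\F_q}E[:,i_2]\leq\theta$. The second hypothesis says exactly that such columns are decoded correctly, so $\tilde C[:,i_2] = C[:,i_2]$ and hence $\tilde E[:,i_2] = 0$ for all $i_2 \in \JJJ$.

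Next, fix a row index $i_1$. The non-zero entries of $\tilde E[i_1,:]$ can only occur at positions $i_2 \notin \JJJ$. Therefore
\[
\wt_H(\tilde E[i_1,:]) = |\{i_2 : \tilde E[i_1,i_2]\neq 0\}| \leq |\llbracket 1,n\rrbracket\setminus\JJJ| = n-\kappa .
\]
Since $i_1$ was arbitrary, this bound holds for every row, which is the claim.

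There is no real obstacle in this argument. The only points needing care are, first, that the minimum in (\ref{eq:maxrankupperboundedbutnoteverywhere}) is taken over a finite non-empty family (implicitly $0\le\kappa\le n$), so a minimising $\JJJ$ exists. Second, the case of decoding failure, where \texttt{GabDec} returns the received column, is covered by the stated implication hypothesis, since that hypothesis is only invoked for columns in $\JJJ$. If $\kappa$ is allowed outside $\llbracket 0,n\rrbracket$, the statement should be read with that range implicitly assumed.
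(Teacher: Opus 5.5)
Your proof is correct and follows essentially the same argument as the paper: you choose a minimising set $\JJJ$ of size $\kappa$, use the hypothesis to get $\tilde E[:,i_2]=0$ for every $i_2\in\JJJ$, and conclude that each row of $\tilde E$ is supported on the $n-\kappa$ indices outside $\JJJ$. You also state the final step correctly in terms of $\tilde E$; at that point the paper's write-up mistakenly writes $E$ instead.
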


\begin{proof}
    Let $\JJJ \subseteq \llbracket 1,n \rrbracket$ be a set of cardinality $\kappa$. Assume that $\JJJ$ is an argument of the minima in (\ref{eq:maxrankupperboundedbutnoteverywhere}). Then for each $i_2 \in \JJJ$, the $\mathbb{F}_q$-rank of $E[:,i_2]$ is at most $\theta$ thus $\Tilde{E}[:,i_2] = 0$ since the Gabidulin decoder corrects each such $E[:,i_2]$. In particular, for each $(i_1,i_2) \in \llbracket 1,n\rrbracket \times \JJJ$ we have $E[i_1,i_2] = 0$, thus $E[i_1,:]$ has at most $n - |\JJJ| = n - \kappa$ non-zero entries.
\end{proof}

Assume that $\tilde C$ is the output of Algorithm~\ref{alg:fibrewise} for the input $R = C+ E$ for which $C$ a codeword of $\CCC_\alpha(\SSS)$ with $E$ satisfying (\ref{eq:maxrankupperboundedbutnoteverywhere}). 
Any row of $\Tilde C$ is a Gabidulin codeword, or is a row of the received word $R$ with an error given by the corresponding row of $\Tilde C-C$. In particular, the $i_2$-th row of $\Tilde C-C$ is non-zero only if a decoding failure occurred for the $i_2$-th row of $R$.   
Furthermore, the Hamming weight of any row of $\Tilde{C}-C$ is at most 
$n-\kappa$. Therefore, applying a Gabidulin decoder to each row of the output of Algorithm~\ref{alg:fibrewise} can correct the remaining errors where 
$n-\kappa$ is below a certain threshold. In what follows, we will assume that $\mathcal{S} = \llbracket 0,\mu_1\rrbracket \times \llbracket 0,\mu_2\rrbracket$.

\begin{algorithm}[ht]
\caption{Two-way fibre-wise decoding}\label{alg:fibreWiseTwoDirrs}
\begin{algorithmic}
\Require $n$ an integer, $q$ a prime power, $\mu_1,\mu_2 \in \llbracket 0,n-1\rrbracket$, and $R \in \mathbb{F}_{q^n}^{n\times n}$
\State $C\leftarrow 0 \in \mathbb{F}_{q^n}^{n\times n}$
\State $\Tilde C\leftarrow 0 \in \mathbb{F}_{q^n}^{n\times n}$
\For{$i_2\in \llbracket 1,n\rrbracket$}
\State $\Tilde C[:,i_2] \leftarrow \texttt{GabDec}(R[:,i_2],\mu_1+1,\alpha)$
\EndFor
\For{$i_1 \in \llbracket 1,n\rrbracket$}
\State $C[i_1,:] \leftarrow \texttt{GabDec}(\Tilde{C}[i_1,:],\mu_2 +1,\alpha)$
\EndFor
\State \textbf{return} $C$.
\end{algorithmic}
\end{algorithm}

In particular, setting $\theta = \left\lfloor \frac{n - \mu_1 - 1}{2} \right\rfloor$ and $\kappa=\left\lceil \frac{n + \mu_2 + 1}{2} \right\rceil$, we get the following statement.

\begin{Thm} \label{thm:twowayscriterion}
    Let $\mu_1,\mu_2 \in \llbracket 0,n-1\rrbracket$. Let $R = C + E$ where $C \in \mathcal{C}_{\alpha}(\llbracket 0,\mu_1\rrbracket \times \llbracket 0,\mu_2\rrbracket)$ and $E \in \mathbb{F}_{q^n}^{n \times n}$ satisfies
    \begin{equation}
    \label{eq:maxrankupperboundedbutnoteverywherePRACTICAL} 
    \min_{\substack{\JJJ \subseteq \llbracket 1,n\rrbracket \\|\JJJ| =   \left\lceil \frac{n + \mu_2 + 1}{2} \right\rceil}} \max_{i_2 \in \JJJ} \ \rank_{\mathbb{F}_q} E[:,i_2] \leq  \left\lfloor \frac{n - \mu_1 - 1}{2} \right\rfloor.
\end{equation}
 Then running Algorithm~\ref{alg:fibreWiseTwoDirrs} with input $R$ returns $C$.
\end{Thm}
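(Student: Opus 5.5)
The plan is to apply Lemma~\ref{lemma:ErrorMaxRankcolumnsHammingRows} to the first loop of Algorithm~\ref{alg:fibreWiseTwoDirrs}, with $\theta = \left\lfloor \frac{n-\mu_1-1}{2}\right\rfloor$ and $\kappa = \left\lceil \frac{n+\mu_2+1}{2}\right\rceil$. This reduces the second loop to Gabidulin decoding of rows with small Hamming weight, hence small rank.

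\textbf{First loop.} Write $\SSS = \llbracket 0,\mu_1\rrbracket\times\llbracket 0,\mu_2\rrbracket$. By Proposition~\ref{prop:GabidulinColumns} (or Corollary~\ref{corol:GabidulinTwoSidesAndTensor}), every column $C[:,i_2]$ lies in $\GG_{\mu_1+1}(\alpha)$. This code has minimum rank distance $n-\mu_1$, so its unique decoding radius is exactly $\theta$.

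Hence for every $i_2$ with $\rank_{\F_q} E[:,i_2]\le\theta$, the call $\texttt{GabDec}(R[:,i_2],\mu_1+1,\alpha)$ returns $C[:,i_2]$. So the intermediate matrix $\tilde C$ satisfies the hypothesis of Lemma~\ref{lemma:ErrorMaxRankcolumnsHammingRows}. Assumption~(\ref{eq:maxrankupperboundedbutnoteverywherePRACTICAL}) is precisely~(\ref{eq:maxrankupperboundedbutnoteverywhere}) for these values of $\theta$ and $\kappa$.

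The lemma then gives $\tilde C = C + \tilde E$, where every row of $\tilde E$ has Hamming weight at most $n-\kappa$. I will take care to note that the lemma's conclusion concerns $\tilde E = \tilde C - C$ and not $E$. The lemma's proof writes $E$ where $\tilde E$ is meant: the columns of $\tilde E$ indexed by the set $\JJJ$ attaining the minimum vanish, which yields the Hamming bound.

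\textbf{Second loop.} The Hamming weight of a vector bounds its $\F_q$-rank, so $\rank_{\F_q}\tilde E[i_1,:]\le n-\kappa$ for every row $i_1$. We have
\[
n-\kappa = \left\lfloor \tfrac{n-\mu_2-1}{2}\right\rfloor ,
\]
since $n - \lceil (n+\mu_2+1)/2\rceil = \lfloor (2n - n-\mu_2-1)/2\rfloor$. This is exactly the decoding radius of $\GG_{\mu_2+1}(\alpha)$. Each row $C[i_1,:]$ lies in $\GG_{\mu_2+1}(\alpha)$ by Proposition~\ref{prop:GabidulinColumns}. So $\texttt{GabDec}(\tilde C[i_1,:],\mu_2+1,\alpha)$ returns $C[i_1,:]$ for every $i_1$, and the output of Algorithm~\ref{alg:fibreWiseTwoDirrs} equals $C$.

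\textbf{Main obstacle.} The only delicate point is the behaviour of the first loop on the columns outside $\JJJ$. There the Gabidulin decoder may fail (returning the received column) or miscorrect (returning a wrong codeword). Either way the column of $\tilde E$ is arbitrary but confined to those $n-\kappa$ positions, which is all the Hamming-weight argument needs. I will state this explicitly so that no assumption on decoder failure behaviour is used beyond correctness within the radius.
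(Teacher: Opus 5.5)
Your proposal is correct and is essentially the paper's own proof: instantiate Lemma~\ref{lemma:ErrorMaxRankcolumnsHammingRows} with $\theta = \left\lfloor \frac{n-\mu_1-1}{2}\right\rfloor$ and $\kappa = \left\lceil \frac{n+\mu_2+1}{2}\right\rceil$, then use $n-\kappa = \left\lfloor \frac{n-\mu_2-1}{2}\right\rfloor$ and the fact that Hamming weight bounds $\F_q$-rank, so the row-wise decoder recovers each row $C[i_1,:] \in \GG_{\mu_2+1}(\alpha)$. Two of your additions are accurate refinements of the paper's terse write-up: the lemma's conclusion, and the last line of its proof, should refer to $\tilde E = \tilde C - C$ rather than $E$, and you handle explicitly the failed or miscorrected columns outside $\JJJ$.
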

\begin{proof}
    Let $R = C + E$ with $E \in \mathbb{F}_{q^n}^{n \times n}$ satisfying inequality  (\ref{eq:maxrankupperboundedbutnoteverywherePRACTICAL}). Let $\tilde{C} = C +\tilde{E}$ be the output of Algorithm~\ref{alg:fibrewise} for the input $R$. For each $i_2\in \llbracket 1,n\rrbracket$, the column $R[:,i_2]$ of $R$ is of the form $R[:,i_2] = c + e$ where $c \in \GG_{\mu_1+1}(\alpha)$ and $e \in \mathbb{F}_{q^n}^n$. By Proposition~\ref{prop:GabidulinColumns}, $\rank_{\F_q}(e) \leq \left\lfloor \frac{n - \mu_1 - 1}{2} \right\rfloor$ for at least $\left\lceil \frac{n + \mu_2 + 1}{2} \right\rceil$ different choices of $i_2$. Note that $n - \left\lceil\frac{n + \mu_2 + 1}{2} \right\rceil = \left\lfloor \frac{n - \mu_2 -1}{2}\right\rfloor $. By Lemma~\ref{lemma:ErrorMaxRankcolumnsHammingRows} and by the result above, $\Tilde C$ is such that its rows are of the form $\tilde C[i,:] = \tilde c + \tilde e$ where $c \in \GG_{\mu_2 + 1}(\alpha)$ and $e \in \F_{q^n}^n$ of Hamming weight at most $\left\lfloor\frac{n - \mu_2 -1}{2}\right\rfloor$, and in particular of $\mathbb{F}_q$-rank at most $\frac{n - \mu_2 -1}{2}$. Thus applying a Gabidulin decoder on each row of $\Tilde C$ will return $C$, therefore, so will Algorithm~\ref{alg:fibreWiseTwoDirrs}. 
\end{proof}

\begin{Ex}\label{ex:DiffAlg1Alg2}
    Fix $n = 5$, $q = 3$ and $\mu=\mu_1=\mu_2 = 2$. Note that $\left\lceil \frac{n + \mu + 1}{2} \right\rceil= 4 $ and $ \left\lfloor \frac{n - \mu - 1}{2} \right\rfloor = 1$. Let $M = 0 \in \CCC_\alpha(\llbracket 0,2\rrbracket^2)$ and let     $$E_1 := \begin{bmatrix}
        \alpha_1 & \alpha_2 & \alpha_3 & \alpha_4 & \alpha_5 \\
        \alpha_1 & \alpha_2 & 0 & 0 & \alpha_5 \\ 
        \alpha_1 & \alpha_2 & 0 & 0 & \alpha_5 \\ 
        \alpha_1 & \alpha_2 & 0 & 0 & \alpha_5 \\ 
        \alpha_1 & \alpha_2 & 0 & 0 & \alpha_5 \\ 
    \end{bmatrix},
    \quad  E_2:= \begin{bmatrix}
        \alpha_1 & \alpha_2 & \alpha_3 & \alpha_4 & \alpha_5 \\
        \alpha_1 & \alpha_2 & 0 & 0 & \alpha_4 \\ 
        \alpha_1 & \alpha_2 & 0 & 0 & \alpha_3 \\ 
        \alpha_1 & \alpha_2 & 0 & 0 & \alpha_2 \\ 
        \alpha_1 & \alpha_2 & 0 & 0 & \alpha_1 \\ 
    \end{bmatrix}.$$
    Since every column of $E_1$ has a rank 1, running Algorithm~\ref{alg:fibrewise} (for $j = 1$) on $\CCC_\alpha(\llbracket 0,2\rrbracket^2)$ using $R = M + E_1$ as input will indeed return $M$ since \texttt{GabDec} returns the zero codeword if its input is any column of $E_1$. The last column of $E_2$ has rank $5$. Therefore, running \texttt{GabDec} on this column may return (a decoding failure or) a random codeword in the Gabidulin code. Since $E_2$ satisfies (\ref{eq:maxrankupperboundedbutnoteverywherePRACTICAL}), with an argument of the minima $\JJJ = \{1,2,3,4\}$, $E_2$ will be decoded by Algorithm~\ref{alg:fibreWiseTwoDirrs}.
\end{Ex}

Denote by $\EEE_i$ the set of errors corrected by Algorithm $i$   applied to the code $\CCC_\alpha(\llbracket 0,\mu_1\rrbracket \times \llbracket 0,\mu_2\rrbracket)$. In other words,
$$\begin{array}{lrcl}
     &\EEE_1 &:=& \{ E\in \F_{q^n}^{n\times n} \ | \ \forall C \in \CCC_\alpha(\llbracket 0,\mu_1\rrbracket\times \llbracket0,\mu_2\rrbracket) : \texttt{Alg1}(C+E) = C\},\\
    &\EEE_2 &:=& \{ E\in \F_{q^n}^{n\times n} \ | \ \forall C \in \CCC_\alpha(\llbracket 0,\mu_1\rrbracket\times \llbracket0,\mu_2\rrbracket) : \texttt{Alg2}(C+E) = C\}.
\end{array}$$
where we denote by \texttt{Alg1}$(R)$ (\emph{resp.} \texttt{Alg2}$(R)$) for each $R \in \F_{q^n}^{n\times n}$ the output of Algorithm~\ref{alg:fibrewise} (\emph{resp.} Algorithm~\ref{alg:fibreWiseTwoDirrs}) with input $R$ for the parameters $\alpha$, $\mu_1$ and $\mu_2$.

Clearly, we have $\EEE_1 \subseteq \EEE_2$. Moreover, if we denote by 
$\mathfrak{n}(\mathbb{F}_{q}^{n\times n},r)$ the number of matrices $n\times n$ over $\mathbb{F}_q$ of rank exactly $r$, then we have 
\begin{equation}
    \label{eq:numbererrorcolumnwise}|\EEE_1| \geq 
    \left(\sum_{r = 0}^{\left\lfloor \frac{n - \mu_1 - 1}{2} \right\rfloor} \mathfrak{n}(\mathbb{F}_{q}^{n\times n},r)\right)^n
\end{equation}
since Algorithm~\ref{alg:fibrewise} corrects any error for which every column has rank at most $\left\lfloor \frac{n - \mu_1 - 1}{2} \right\rfloor$, and
\begin{equation}
    \label{eq:numbererrorfibrewise} |\EEE_2| \geq 
\sum_{k = \left\lceil \frac{n + \mu_2 + 1}{2} \right\rceil}^n \binom{n}{k} \left(\sum_{r = 0}^{\left\lfloor \frac{n - \mu_1 - 1}{2} \right\rfloor} \mathfrak{n}(\mathbb{F}_{q}^{n\times n},r)\right)^k\left(q^{n^2}  - \sum_{r = 0}^{\left\lfloor \frac{n- \mu_1 - 1}{2} \right\rfloor} \mathfrak{n}(\mathbb{F}_{q}^{n\times n},r)\right)^{n-k}
\end{equation}
since Algorithm~\ref{alg:fibreWiseTwoDirrs} corrects any error with at least $\left\lceil \frac{n + \mu_2 + 1}{2} \right\rceil$ columns that have rank at most $\left\lfloor \frac{n - \mu_1 - 1}{2} \right\rfloor$. 
We will compare the integers in the upper-bounds (\ref{eq:numbererrorcolumnwise}) and (\ref{eq:numbererrorfibrewise}) in Section~\ref{sec:Comparison}.
\label{location:numbersofcorrectederrors}

\begin{Rq}
    Note that the set of errors that can be corrected by Algorithm~\ref{alg:fibreWiseTwoDirrs} depends on the choice of first decoding along the columns of the matrix and then decoding along its rows or vice versa. Indeed, if the algorithm first decodes along the rows of and the error $E_2$ of Example~\ref{ex:DiffAlg1Alg2} occurs, then no row would be decoded correctly at this step as each row of $E_2$ has rank at least 3.
\end{Rq}

\subsection{A factorisation algorithm}
\label{subsec:FactoringOnTheLeft}

We now focus on resolving the following problem, which is relevant to obtaining an extension of the Loidreau-Overbeck decoder.
\begin{Pb}\label{pb:leftdivisionbilinear}
    Let $t \in \N$, $V(Z) \in \MM_{q,\F_{q^n}}[Z]$ and $N(X,Y) \in \MM_{q,\F_{q^n}}[X,Y]$ such that $\qdeg V(Z) \leq t$, $\Supp N(X,Y) \subseteq \SSS + \llbracket 0,t\rrbracket$.
    Compute $f(X,Y) \in \CCC(\SSS)$ satisfying $V(f(X,Y)) = N(X,Y)$.
    \end{Pb}

This is a generalisation of symbolic left-hand division of polynomials, \emph{i.e.} given two $q$-polynomials $f(Z)$ and $g(Z)$ in $\MM_{q,\F_{q^n}}[Z]$ compute $h(Z) \in \MM_{q,\F_{q^n}}[Z]$ satisfying $f(h(Z)) = g(Z)$; see \cite{Ore1933OnASpecialClassOfPolynomials}. With the observation below, we can use an existing division algorithm on the ring of linearised polynomials to solve Problem \ref{pb:leftdivisionbilinear}. 

\begin{Lemma}\label{lemma:diagonaldecompositionbilinearpolys}
        Let $V(Z) \in \MM_{q,\F_{q^n}}[Z]$ and let $N(X,Y) = \sum_{s \in \N_0^2} n_sX^{q^{s_1}}Y^{q^{s_2}} \in \MM_{q,\F_{q^n}}[X,Y]$ be $q$-polynomials with $(n_s)_{s\in\N_0^2}$ a sequence in $\F_{q^n}$ with finitely many non-zero terms.\\
    For each integer $\delta \in \Z$, define     $\displaystyle N_\delta(Z) := \sum_{\substack{\tau \in \N_0\\\tau +\delta \geq 0}} n_{(\tau,\delta+\tau)} Z^{q^\tau}.$
    The following are equivalent.
    \begin{enumerate}[label = (\roman*)]
        \item There exists $f(X,Y) \in \MM_{q,\F_{q^n}}[X,Y]$ such that $V(f(X,Y)) = N(X,Y)$.
        \item There exists $(f_\delta(Z))_{\delta \in \Z}$ a sequence of $q$-polynomials in $\MM_{q,\F_{q^n}}[X,Y]$ with finitely many non-zero terms such that all monomials in $f_\delta(Z)$ have $q$-degree at least $n-\delta$, and $V(f_\delta(Z)) = N_\delta(Z)$ for each $\delta \in \Z$.
    \end{enumerate}
    In particular, if (ii) is satisfied for a sequence $(f_\delta(Z))_{\delta \in \Z}$, then $f(X,Y) = \sum_{\delta \in \Z} f_\delta(XY^{q^\delta})$ satisfies (i).
\end{Lemma}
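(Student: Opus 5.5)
The plan is to decompose $\MM_{q,\F_{q^n}}[X,Y]$ into ``diagonals'' indexed by $\delta = s_2 - s_1$ and to show that left composition by $V(Z)$ preserves each diagonal. On each diagonal, composition then behaves exactly like composition in $\MM_{q,\F_{q^n}}[Z]$.

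I will read the substitution $f_\delta(XY^{q^\delta})$ as the $\F_{q^n}$-linear map
\[
\Phi_\delta : Z^{q^\tau} \longmapsto X^{q^\tau}Y^{q^{\tau+\delta}},
\]
defined on the subspace $\MM^{(\delta)}$ of $q$-polynomials whose monomials $Z^{q^\tau}$ all satisfy $\tau+\delta \geq 0$. This is the natural reading of the degree condition in (ii). The printed bound ``$n-\delta$'' should presumably be read as $-\delta$, up to the reduction modulo $U_{X,Y}$, and I would check this against the intended use. Let $D_\delta := \Span_{\F_{q^n}}\{X^{q^\tau}Y^{q^{\tau+\delta}} : \tau\in\N_0,\ \tau+\delta\geq 0\}$. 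Since every monomial $X^{q^{s_1}}Y^{q^{s_2}}$ lies in exactly one $D_\delta$, namely $\delta = s_2-s_1$, we get $\MM_{q,\F_{q^n}}[X,Y] = \bigoplus_{\delta\in\Z} D_\delta$. Each $\Phi_\delta : \MM^{(\delta)} \to D_\delta$ is an $\F_{q^n}$-linear bijection. With this notation, $N = \sum_\delta \Phi_\delta(N_\delta)$ by definition of $N_\delta$.

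The key step is the intertwining identity $V(\Phi_\delta(g)) = \Phi_\delta(V(g))$ for every $g\in\MM^{(\delta)}$. To prove it, apply Lemma~\ref{lemma:CoefCompositionVof} both to $V\circ g$ in one variable and to $V\circ\Phi_\delta(g)$. The support of $\Phi_\delta(g)$ is $\{(\tau,\tau+\delta)\}$, and adding $\ell$ to both coordinates keeps $\delta$ fixed. The coefficient of $X^{q^{t}}Y^{q^{t+\delta}}$ in $V(\Phi_\delta(g))$ is $\sum_\ell v_\ell\, g_{t-\ell}^{q^\ell}$, which is the coefficient of $Z^{q^t}$ in $V(g)$. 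One also checks directly that $V(g)\in\MM^{(\delta)}$. Consequently $V(D_\delta)\subseteq D_\delta$ for all $\delta$.

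Given this, both implications follow from uniqueness of components in the direct sum. For (i)$\Rightarrow$(ii), write $f = \sum_\delta \Phi_\delta(f_\delta)$ with $f_\delta\in\MM^{(\delta)}$; only finitely many are nonzero. Then $N = V(f) = \sum_\delta \Phi_\delta(V(f_\delta))$. Comparing $D_\delta$-components and using injectivity of $\Phi_\delta$ gives $V(f_\delta) = N_\delta$. For (ii)$\Rightarrow$(i), set $f := \sum_\delta \Phi_\delta(f_\delta)$. Linearity of $V$ together with the intertwining identity gives $V(f) = \sum_\delta\Phi_\delta(N_\delta) = N$; this also proves the final ``in particular'' claim.

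I expect the main obstacle to be bookkeeping rather than substance: making the meaning of $f_\delta(XY^{q^\delta})$ precise for negative $\delta$, and matching the exact degree constraint in (ii). If the statement is meant modulo $U_{X,Y}$, with exponents taken mod $n$, I would first pass to representatives as in Lemma~\ref{lemma:BasisOfQutientSpace}, and handle the wrap-around by the shift $\delta \mapsto \delta \pm n$.
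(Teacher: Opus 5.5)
Your proposal is correct and follows essentially the same route as the paper: you split $\MM_{q,\F_{q^n}}[X,Y]$ into the diagonals $\{(\tau,\tau+\delta)\}$, observe that left composition by $V(Z)$ preserves each diagonal, and compare diagonal components in both directions. Your reading of the degree bound as $-\delta$ rather than the printed $n-\delta$ is the one the paper's own proof uses, and your explicit intertwining identity $V(\Phi_\delta(g))=\Phi_\delta(V(g))$, together with the injectivity of $\Phi_\delta$, makes precise the step that the paper states only as a support inclusion.
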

\begin{proof}
    Since the sets $(\{(x,x+\delta) \ | \ x \in \Z\})_{\delta \in \Z}$ form a partition of $\Z^2$, we have
    $$N(X,Y) = \sum_{s \in \N_0^2} n_{s}X^{q^s}Y^{q^s} = \sum_{\delta \in \Z}  \sum_{\substack{\tau \in \N_0\\\tau +\delta \geq 0}} n_{(\tau,\delta+\tau)} (XY^{q^\delta})^{q^\tau} = \sum_{\delta \in \Z} N_\delta(XY^{q^\delta}).$$
    Note that for each $\delta \in \Z$, $N_\delta(XY^{q^{\delta}})$ is indeed a bilinearised $q$-polynomial, as the definition of $N_\delta(Z)$ implies that, if not zero, its monomials have $q$-degree at least $-\delta$. 
    
    Let $f(X,Y) \in \MM_{q,\F_{q^n}}[X,Y]$ be a bilinearised $q$-polynomial and let $(f_\delta(Z))_{\delta \in \Z}$ be the sequence of $q$-polynomials satisfying 
    $f(X,Y) = \sum_{\delta \in \Z} f_{\delta}(XY^{q^{\tau}})$. 
    Suppose that $V(f(X,Y)) = N(X,Y)$. Since we have
    $$\sum_{\delta \in \Z} N_\delta(XY^{q^\delta}) = N(X,Y) = V(f(X,Y)) = \sum_{\delta \in \Z} V(f_\delta(XY^{q^\delta})), $$
    and since $\Supp(V(f_\delta(XY^{q^{\delta}})))\subseteq \{(x,x+\delta) \ | \ x \in \Z\}$, we obtain that $N_\delta(XY^{q^\delta}) = V(f(XY^{q^\delta}))$ and thus $N_\delta(Z) = V(f_\delta(Z))$.

    Conversely, if there exists a sequence $(f_\delta(Z))_{\delta \in \Z}$ of $q$-polynomials in $\MM_{q,\F_{q^n}}[X,Y]$ with finitely many non-zero terms such that all monomials in $f_\delta(Z)$ have $q$-degree at least $-\delta$, and $V(f_\delta(Z)) = N_\delta(Z)$ for each $\delta \in \Z$, then setting $f(X,Y) = \sum_{\delta \in \Z} f_{\delta}(XY^{q^{\tau}})$, we obtain $V(f(X,Y)) = N(X,Y)$. 
\end{proof}

Using the existing left-Euclidean division algorithm as described in \cite{Ore1933NonCommutativePolynomials}, we can solve Problem~\ref{pb:leftdivisionbilinear} as follows: given $t \in \N$, $V(Z) \in \MM_{q,\F_{q^n}}[Z]$, and $N(X,Y) \in \MM_{q,\F_{q^n}}[X,Y]$ such that $\qdeg V(Z) \leq t$, $\Supp N(X,Y) \subseteq \SSS + \llbracket 0,t\rrbracket$, we apply the left-Euclidean division algorithm to factor $V(Z)$ from $N_\delta(Z)$ for each $\delta\in \Z$ such that $N_\delta(Z)\neq 0$ to obtain $(f_\delta(Z))_{\delta \in \Z}$, and then compute $f(X,Y)$ as in Lemma~\ref{lemma:diagonaldecompositionbilinearpolys}. We will refer to this algorithm as \textbf{Algorithm~\ref{alg:divisionalgorithm}}; see Appendix~\ref{sec:DetailsFactoringAlgorithm}.

\subsection{An extension of the Loidreau-Overbeck decoding algorithm}\label{subsec:decodinglinearisedproblem}
We will now use the technique introduced in \cite{LoidreauDecodingBeyondECC_ACCT} to decode a different range of errors using the $j$-slice space weight. Let $\mu \in \llbracket 0,n-2\rrbracket$ be an integer. Throughout this section, we fix $\mathcal{S}=\llbracket 0,\mu \rrbracket^2$, we fix a codeword $C \in \CCC_{\alpha}(\SSS)$, and we fix $R,E \in \F_{q^n}^{n\times n}$ such that $R = C + E$.

\begin{Pb} \label{pb:Linearisedproblem} Given $\SSS$, $t \in {\mathbb{N}_0}$ and $R$, find $(V(Z),N(X,Y)) \in \mathscr{M}_{q,\mathbb{F}_{q^n}}[Z]\times  \mathscr{M}_{q,\mathbb{F}_{q^n}}[X,Y]$ satisfying $\qdeg V(Z) \leq t$ and $\Supp(N(X,Y)) \subseteq \mathcal{S} + \llbracket 0,t\rrbracket$ in 
\begin{equation}
    \label{eq:linearisedProblem}
      V(R_{i_1,i_2}) = N(\alpha_{i_1},\alpha_{i_2}),\:\forall i \in \llbracket 1,n\rrbracket^2.
\end{equation}
\end{Pb}

\begin{Rq}
    Observe that if $P(X,Y)$ is a representative of the polynomial coset associated to $R$, then equation (\ref{eq:linearisedProblem}) is equivalent to $\overline{V(Z)} \circ \overline{P(X,Y)} = \overline{N(X,Y)}$ in $\mathscr{M}_{q,\mathbb{F}_{q^n}}[X,Y]/U_{X,Y}$ (where $U_{X,Y}$ was defined in \eqref{eq:UXY}) since we have an isomorphism $\mathscr{M}_{q,\mathbb{F}_{q^n}}[X,Y]/U_{X,Y} {\to}\mathbb{F}_{q^n}^{n\times n}$ , see Proposition~\ref{prop:isomPolyAndTensors}.
\end{Rq}

In this situation, $(V(Z),V(Z) \circ f(X,Y))$ is a solution of equation (\ref{eq:linearisedProblem}) for any $q$-polynomial $V(Z)$ of $q$-degree at most $t$ such that $V(Z)$ vanishes at any entry of $E =R-C$ with $C$ the associated codeword of $f(X,Y)$, and the following theorem states a criterion that establishes when all solutions of the problem have this form.

\begin{Thm}\label{thm:RadicalCriterion}
    Let $f(X,Y) \in \mathcal{C}(\mathcal{S})$ be such that $\ev_\alpha(f) = C$. Let $\Theta \in \llbracket 0,n-\mu-2\rrbracket$ and assume that $\min_{j \in \{1,2\}}\wt_{\slicesp_j}(E) \leq \Theta$.
    Then any solution $(V(Z),N(X,Y))$ of Problem~\ref{pb:Linearisedproblem} for $\SSS$, $t = n-\mu-1-\Theta$ and $R=C + E$ is such that $V(f(X,Y)) = N(X,Y)$. 
    \end{Thm}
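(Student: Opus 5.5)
Let $(V(Z),N(X,Y))$ be a solution of Problem~\ref{pb:Linearisedproblem} with $t = n-\mu-1-\Theta$, and consider the bilinearised $q$-polynomial
\[
F(X,Y) := N(X,Y) - V(f(X,Y)).
\]
By Lemma~\ref{lemma:CoefCompositionVof}, $\Supp(V(f(X,Y)))\subseteq \SSS + \llbracket 0,t\rrbracket$. Since $\SSS=\llbracket 0,\mu\rrbracket^2$, both partial $q$-degrees of $F$ are at most $\mu+t = n-1-\Theta\le n-1 < n$. The aim is to show $F=0$.

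\textbf{Step 1: relating $F$ to the error.} Since $V$ is $\F_q$-linear and $R=C+E$, equation (\ref{eq:linearisedProblem}) gives, for all $i\in\llbracket 1,n\rrbracket^2$,
\[
F(\alpha_{i_1},\alpha_{i_2}) = V(R_{i_1,i_2}) - V(C_{i_1,i_2}) = V(E_{i_1,i_2}).
\]
Assume without loss of generality that $\wt_{\slicesp_1}(E)\le\Theta$; the case $j=2$ is symmetric, with columns in place of rows and $\Rr\Aaa\Ddd_2$ in place of $\Rr\Aaa\Ddd_1$.

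\textbf{Step 2: a large subspace of the left radical.} Consider the $\F_q$-linear map
\[
\phi:\F_{q^n}\to\F_{q^n}^n,\qquad x=\sum_{i_1}x_{i_1}\alpha_{i_1}\longmapsto \sum_{i_1} x_{i_1}E[i_1,:].
\]
Its image lies in $\UUU_1(E)$, which has dimension at most $\Theta$, so $K:=\ker\phi$ satisfies $\dim_{\F_q}K\ge n-\Theta$. Take $x\in K$. By $\F_q$-bilinearity of $F$ and $\F_q$-linearity of $V$, for every $i_2$ we get
\[
F(x,\alpha_{i_2}) = V\Bigl(\sum_{i_1} x_{i_1}E_{i_1,i_2}\Bigr) = V(0)=0.
\]
Since $y\mapsto F(x,y)$ is $\F_q$-linear and vanishes on the basis $\alpha$, it vanishes on all of $\F_{q^n}$. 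Hence $K\subseteq\Rr\Aaa\Ddd_1(F)$, so $\dim_{\F_q}\Rr\Aaa\Ddd_1(F)\ge n-\Theta$.

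\textbf{Step 3: degree count.} Suppose $F\neq0$. Since $\qdeg_Y F<n$, Corollary~\ref{corol:CorolRadicalPolynomialMulivar} gives
\[
\dim_{\F_q}\Rr\Aaa\Ddd_1(F)\le\qdeg_X F\le n-1-\Theta .
\]
This contradicts Step 2, so $F=0$, that is, $N(X,Y)=V(f(X,Y))$.

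The main obstacle is Step 2: one has to see that the correct subspace to exhibit in the radical is the kernel of the map sending coefficient vectors to the corresponding combinations of rows of $E$, rather than working with the error entries directly. A secondary point to check is that $F$ is nonzero as a genuine polynomial, not merely modulo $U_{X,Y}$. This holds because both partial degrees are below $n$, which is exactly what the corollary's hypothesis requires. Note that $\Theta\le n-\mu-2$ ensures $t\ge1$, though this is not needed for the argument itself.
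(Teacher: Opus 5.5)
Your proof is correct and follows essentially the same route as the paper: relate $N - V\circ f$ to $V(E)$, show the left radical has dimension at least $n-\Theta$, then contradict Corollary~\ref{corol:CorolRadicalPolynomialMulivar} using the partial-degree bound $\mu+t=n-1-\Theta$. The only differences are cosmetic. You exhibit the kernel of the row-combination map of $E$ and show it is \emph{contained} in the radical, whereas the paper identifies the radical with the kernel of $x\mapsto xW$ for $W=V(E)$ and bounds the rank of $W$'s row space by that of $E$. You also check explicitly the hypothesis $\qdeg_Y F<n$ of the corollary, which the paper leaves implicit.
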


\begin{proof} 
Let $(V(Z),N(X,Y))$ be a solution of Problem~\ref{pb:Linearisedproblem} with parameters $\SSS$, $t = n-\mu-1-\Theta$ and $R = C+ E$. By the linearity of $V(Z)$, for each $(i_1,i_2) \in \llbracket 1,n \rrbracket^2$ we have
$$(V\circ f - N)(\alpha_{i_1},\alpha_{i_2}) = V((R - E)[i_1,i_2]) - N(\alpha_{i_1},\alpha_{i_2})= - V(E[i_1,i_2]).$$

Let $W := (V(E[i_1,i_2]))_{(i_1,i_2) \in \llbracket 1,n\rrbracket^2} \in \mathbb{F}_{q^n}^{n \times n}$ be such that $W$ is the matrix defining the bilinear map $N-V\circ f$ over $\mathbb{F}_q$ with respect to the basis $\alpha$. Assume without loss of generality that $\wt_{\slicesp_{1}}(E) \leq \Theta$, \emph{i.e.} $\UUU_{1}(E)$ is of dimension at most $\Theta$. Let $\ell : \F_q^n \to \F_{q^n}^n$ be the $\mathbb{F}_{q}$-linear map  given by $\ell(x) = xW$. Since $W[i_1,i_2] = V(E[i_1,i_2])$ for each $i_1,i_2 \in \llbracket 1,n\rrbracket^2$, then $\dim_{\F_q} \UUU_1(W) = \dim_{\F_q} \im \ell \leq \dim_{\F_q} \UUU_1(E)$. Moreover, since $-W$ is the matrix associated to the $\F_q$-bilinear form $V \circ f - N$, then we have $\Rr\Aaa\Ddd_1(V\circ f - N) \simeq \ker \ell$. Therefore, by the rank nullity theorem we have
$$\dim \mathfrak{Rad}_1(V\circ f - N) = \dim_{\mathbb{F}_q}  \ker (\ell)  = \dim_{\mathbb{F}_q} \mathbb{F}_q^n - \wt_{\slicesp_1}(W) \geq n - \wt_{\slicesp_1}(E) \geq n - \Theta.$$

Finally, assume that $(V \circ f  - N)(X,Y)$ is not the zero polynomial. Both $V(f(X,Y))$ and $N(X,Y)$ have support in $\mathcal{S} + \llbracket 0, t \rrbracket$ by Lemma~\ref{lemma:CoefCompositionVof} and by the assumed constraints on the solutions of Problem~\ref{pb:Linearisedproblem}, so does $(V \circ f  - N)(X,Y)$. Therefore, we have:
    $$ \qdeg_{X}(V \circ f - N)(X,Y) \leq \max(\pi_{j}(\mathcal{S})) + t \leq n-\Theta-1.$$
This contradicts the assertions of Corollary~\ref{corol:CorolRadicalPolynomialMulivar}. Therefore, $(V \circ f - N)(X,Y) = 0$, and in other words, $V(f(X,Y)) = N(X,Y)$.
\end{proof}

\begin{Corol} \label{corol:DecodingConditionRadical}
    Assume that we have
    \begin{equation}\label{eq:ConditionRadicalDecoding}
        \wt_{\Sigma\!\slicesp}(E) := \wt_{\fibresp_{3}}(E) +  \min_{j \in \{1,2\}}\wt_{\slicesp_j}(E) \leq n - \mu -1.
    \end{equation}
    Then there exists $t \in \llbracket 0,n-1\rrbracket$ such that every solution $(V(Z),N(X,Y))$ of Problem~\ref{pb:Linearisedproblem} with parameters $\SSS$, $t$ and $R$ is of the form $N(X,Y) = V(Z) \circ f(X,Y)$. 
\end{Corol}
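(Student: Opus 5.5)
The plan is to obtain this corollary by specialising Theorem~\ref{thm:RadicalCriterion}. We choose $\Theta$ and $t$ explicitly from the weights of $E$ and check that they lie in the ranges the theorem requires. Set
\[
\Theta := \min_{j \in \{1,2\}} \wt_{\slicesp_j}(E) \qquad \text{and} \qquad t := n-\mu-1-\Theta .
\]
With this choice the hypothesis $\min_{j}\wt_{\slicesp_j}(E)\leq \Theta$ of the theorem holds trivially, with equality.

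The only real step is to verify $\Theta \in \llbracket 0,n-\mu-2\rrbracket$; everything else follows from this. I would split into two cases according to whether $E$ is zero.
\begin{itemize}
\item If $E \neq 0$, some entry of $E$ is non-zero, so $\wt_{\fibresp_3}(E)\geq 1$. Condition~\eqref{eq:ConditionRadicalDecoding} then gives $\Theta \leq n-\mu-1-\wt_{\fibresp_3}(E) \leq n-\mu-2$.
\item If $E=0$, then $\Theta = 0$, and $0 \leq n-\mu-2$ holds because $\mu \in \llbracket 0,n-2\rrbracket$ was fixed at the start of this subsection.
\end{itemize}
This is exactly where the fibre weight enters: it guarantees the strict inequality needed for the upper end of the range of $\Theta$. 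The hypothesis $\mu\leq n-2$ handles the degenerate case $E=0$.

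Since $0\leq \Theta \leq n-\mu-2$, we get $t = n-\mu-1-\Theta \in \llbracket 1, n-\mu-1\rrbracket \subseteq \llbracket 0,n-1\rrbracket$, as the statement requires. Let $f(X,Y)\in\CCC(\SSS)$ with $\ev(\overline{f})=C$. Theorem~\ref{thm:RadicalCriterion}, applied with this $\Theta$ and $t$, states that every solution $(V(Z),N(X,Y))$ of Problem~\ref{pb:Linearisedproblem} with parameters $\SSS$, $t$ and $R=C+E$ satisfies $V(f(X,Y))=N(X,Y)$, which is the claim.

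I do not expect a genuine obstacle; the bookkeeping on the range of $\Theta$ is the only delicate point. I would also remark, although it is not needed for the statement, that the bound $\wt_{\fibresp_3}(E)\leq n-\mu-1-\Theta = t$ ensures that non-trivial solutions exist. Indeed, the subspace annihilator polynomial of the $\F_q$-span of the entries of $E$ has $q$-degree $\wt_{\fibresp_3}(E)\leq t$. Composed with $f$, it gives a solution with $V\neq 0$, so the conclusion is not vacuous, and $f$ can then be recovered via Algorithm~\ref{alg:divisionalgorithm}.
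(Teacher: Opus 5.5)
Your proof is correct and follows the paper's route of specialising Theorem~\ref{thm:RadicalCriterion}. The only difference is the choice of parameters: the paper takes $t=\wt_{\fibresp_3}(E)$ (the smallest $t$ admitting non-zero solutions) with $\Theta=n-\mu-1-t$, whereas you take $\Theta=\min_j\wt_{\slicesp_j}(E)$ and the largest admissible $t$. Your case split also keeps $\Theta$ inside the theorem's stated range when $E=0$. In that case the paper's choice gives $\Theta=n-\mu-1$, so it tacitly relies on the theorem's proof extending to that value.
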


\begin{proof}
    Let $t = \wt_{\fibresp_{3}}(E)$. Then we have $\min_{j \in \{1,2\}}\wt_{\slicesp_j}(E) \leq \Theta := n - \mu -1-t$ and the statement follows from Theorem~\ref{thm:RadicalCriterion}.
\end{proof}

Therefore, assuming that Equation~(\ref{eq:ConditionRadicalDecoding}) is satisfied, solving the linear system (\ref{eq:linearisedProblem}) for a good choice of $t$, and factoring any non-zero solution $(V(Z),N(X,Y))$ using Algorithm~\ref{alg:divisionalgorithm} allows one to recover $f(X,Y)$ and thus the transmitted codeword. Algorithm~\ref{alg:RadicalDecoding} can retrieve any codeword in $\mathcal{C}_\alpha(\mathcal{S})$ with $\mathcal{S} = \llbracket 0,\mu\rrbracket^2$ with an error $E\in \mathbb{F}_{q^n}^{n\times n}$ such that $\min_{j \in \{1,2\}} \wt_{\slicesp_j}(E)$ at most $n - \mu- t - 1$ and $\wt_{\fibresp_{3}}(\mathcal{S})$ at most $t$, for a given $t \in \llbracket 0,n-1\rrbracket$. 

\begin{algorithm}
\caption{Radical decoding with specified fibre-weight.\label{alg:RadicalDecoding}}
\begin{algorithmic}

\Require $n$ an integer, $q$ a prime power, $t \in \llbracket 0,n-1\rrbracket$, $\mu \in \llbracket 0,n-1\rrbracket$ defining $\mathcal{S}$, $R \in \mathbb{F}_{q^n}^{n \times n}$ received.
\If{Problem~\ref{pb:Linearisedproblem} has no non-zero solution for $\SSS$,$t$ and $R$}
\State \textbf{return} ``Decoding failure''
\EndIf
\State Pick a non-zero solution $(V(Z),N(X,Y))$ of Problem~\ref{pb:Linearisedproblem}.\State Run \textbf{Algorithm~\ref{alg:divisionalgorithm}} to factorise $N(X,Y)$ into $V(f(X,Y))$.
\State Generate the codeword $C$ associated to $f(X,Y)$.
\State \textbf{return} $C$
\end{algorithmic}
\end{algorithm}

Moreover, we can use the following fact to retrieve a solution that can be factored from the resolution of Problem \ref{pb:Linearisedproblem} for a greater value of $t$ by checking the support of the solutions.

\begin{Corol}\label{corol:findsmallestsupport}
    Assume that $\wt_{\Sigma\!\slicesp}(E) \leq n-\mu -1$ and let $t \in \llbracket 1,n-\mu -2 \rrbracket$. Then Problem~\ref{pb:Linearisedproblem} with parameters $\SSS$, $t$ and $R$ has a non-zero solution if and only if $t \geq \wt_{\fibresp_3}(E)$. Moreover, if $(V(Z),N(X,Y))$ is a non-zero solution of Problem~\ref{pb:Linearisedproblem} with parameters $\SSS$, $t$ and $R$ such that $\qdeg V(Z) \leq \delta$, $\Supp(N(X,Y)) \subseteq \SSS + \llbracket 0,\delta\rrbracket$, and such that $\delta \in \llbracket 0,n-1\rrbracket$ is minimal with this property, then $N(X,Y) = V(Z) \circ f(X,Y)$.
\end{Corol}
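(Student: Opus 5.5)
The plan has two parts: the equivalence $t\geq \wt_{\fibresp_3}(E)$ comes from an explicit construction in one direction and from Theorem~\ref{thm:RadicalCriterion} plus a kernel-dimension count in the other; the minimality statement then follows from the equivalence. Throughout, write $w:=\wt_{\fibresp_3}(E)$ and let $f(X,Y)\in\CCC(\SSS)$ be such that $\ev(\overline{f})=C$.

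\emph{Step 1: if $t\geq w$, a non-zero solution exists.} Let $\EEE$ be the $\F_q$-span of the entries of $E$, which has dimension $w$. Take $V(Z)=\prod_{u\in\EEE}(Z-u)$, the subspace polynomial of $\EEE$. It is a non-zero linearised polynomial of $q$-degree $w\leq t$ that vanishes on every entry of $E$. Put $N:=V\circ f$. Lemma~\ref{lemma:CoefCompositionVof} gives $\Supp(N)\subseteq \SSS+\llbracket 0,w\rrbracket\subseteq \SSS+\llbracket 0,t\rrbracket$. By $\F_q$-linearity of $V$,
\[
V(R_{i_1,i_2})=V(C_{i_1,i_2})+V(E_{i_1,i_2})=N(\alpha_{i_1},\alpha_{i_2}),
\]
so $(V,N)$ is a non-zero solution of Problem~\ref{pb:Linearisedproblem}.

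\emph{Step 2: if $t<w$, the only solution is zero.} Suppose $(V,N)$ is a solution with $t<w$, and set $\Theta:=n-\mu-1-t$. Since $t\geq 1$ we have $\Theta\leq n-\mu-2$. From $\wt_{\Sigma\!\slicesp}(E)\leq n-\mu-1$ we get
\[
\min_j\wt_{\slicesp_j}(E)\leq n-\mu-1-w<\Theta .
\]
Theorem~\ref{thm:RadicalCriterion} therefore applies and gives $N=V\circ f$. Evaluating at $(\alpha_{i_1},\alpha_{i_2})$ then yields $V(E_{i_1,i_2})=0$ for every $i$, so $V$ vanishes on the $w$-dimensional space $\EEE$. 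If $V\neq 0$, its kernel has dimension at most $\qdeg V\leq t<w$, a contradiction; hence $V=0$. Then $N$ vanishes at all pairs $(\alpha_{i_1},\alpha_{i_2})$. Because $\Supp N\subseteq\llbracket 0,\mu+t\rrbracket^2$ with $\mu+t\leq n-2$, the polynomial $N$ is its own reduced representative. Injectivity of $\ev$ (Proposition~\ref{prop:isomPolyAndTensors}) then forces $N=0$, so the solution is zero. Steps 1 and 2 together prove the equivalence.

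\emph{Step 3: the minimal-$\delta$ solution factors.} I read $\delta$ as minimal over all non-zero solutions, so that $\delta$ is the least parameter for which Problem~\ref{pb:Linearisedproblem} is non-trivially solvable. By Steps 1 and 2, $\delta=w$. The edge case $w=0$ means $E=0$. Then $V\circ f-N$ vanishes on all evaluation points and has partial $q$-degrees at most $\mu+\delta<n$, so it is zero by injectivity of $\ev$ as in Step 2.

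For $w\geq 1$, consider the given pair $(V,N)$ as a solution with parameter $\delta=w$. Apply Theorem~\ref{thm:RadicalCriterion} with $t=w$ and $\Theta=n-\mu-1-w$. Then $\Theta\geq \min_j\wt_{\slicesp_j}(E)\geq 0$ by the hypothesis on $\wt_{\Sigma\!\slicesp}(E)$, and $\Theta\leq n-\mu-2$ because $w\geq1$. The theorem gives $N=V\circ f$.

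\emph{Main obstacle.} The delicate point is Step 3: if $\delta$ were minimal only for the particular pair rather than globally, nothing would force $\delta\leq n-\mu-1-\min_j\wt_{\slicesp_j}(E)$. The argument therefore relies on identifying $\delta$ with $w$ through the equivalence of Steps 1 and 2. The boundary values of $t$ and $\Theta$ also need to be checked so that the range $\llbracket 0,n-\mu-2\rrbracket$ in Theorem~\ref{thm:RadicalCriterion} is respected.
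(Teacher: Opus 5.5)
Your proof is correct and is essentially the paper's argument. You use the subspace polynomial of the entries of $E$ for existence when $t\geq\wt_{\fibresp_3}(E)$. You use Theorem~\ref{thm:RadicalCriterion} together with $\dim\ker V\leq\qdeg V$ to rule out non-zero solutions when $t<\wt_{\fibresp_3}(E)$. Then, reading $\delta$ as a global minimum (as the paper implicitly does), you identify $\delta=\wt_{\fibresp_3}(E)$ and apply the theorem again. Your bookkeeping of the admissible range of $\Theta$ is more careful than the paper's, whose proof also writes $\max_j$ where only $\min_j$ is available.

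The one loose end, which the paper shares, is that your Step~2 needs $t\geq 1$. So ``By Steps 1 and 2, $\delta=w$'' does not yet exclude $\delta=0$ when $w\geq 1$. There are two easy fixes:
\begin{itemize}
\item Observe that the proof of Theorem~\ref{thm:RadicalCriterion} goes through verbatim for $\Theta=n-\mu-1$, i.e.\ $t=0$.
\item Argue directly: a non-zero solution with $V=v_0Z$ forces $v_0\neq 0$, hence $0\neq E\in\CCC_\alpha(\SSS)$. Proposition~\ref{prop:UpperBoundDistance} then gives $\wt_{\fibresp_3}(E)\geq n-\mu$, contradicting $\wt_{\fibresp_3}(E)\leq n-\mu-1$.
\end{itemize}
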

\begin{proof}
    If $E = 0$, the statement is obvious. Assume now that $E \neq 0$. First, let $(V(Z),N(X,Y))$ be a solution of Problem~\ref{pb:Linearisedproblem} with parameters $\SSS$, $t$ and $R$ with $t < \wt_{\fibresp_3}(E)$. By Theorem \ref{thm:RadicalCriterion}, since $\max_{j \in \{1,2\}}\wt_{\slicesp_j}(E) < n-\mu -1 -t$, we have $N(X,Y) = V(f(X,Y))$, and thus $V(E[i]) = 0$ for each $i \in \llbracket 1,n\rrbracket^2$. 
    Since $\qdeg V(Z) < \wt_{\fibresp_3}(E)$, we have $V(Z) = 0$ and $N(X,Y) = 0$. Conversely, $(V(Z),V(f(X,Y)))$ is a solution of Problem~\ref{pb:Linearisedproblem} with parameters $\SSS$, $t\geq \wt_{\fibresp_3}(E)$ and $R$   with $V(Z) \in \MM_{q,\F_{q^n}}[Z]$ such that $\qdeg V(Z) = \wt_{\fibresp_3}(E)$ and $V(E[i]) = 0$ for each $i \in \llbracket 1,n\rrbracket$, which proves the first part of the statement. Now, if $(V(Z),N(X,Y))$ is a non-zero solution of Problem~\ref{pb:Linearisedproblem} with parameters $\SSS$, $t$ and $R$ such that $\qdeg V(Z) \leq \delta$, $\Supp(N(X,Y)) \subseteq \SSS + \llbracket 0,\delta\rrbracket$, and such that $\delta \in \llbracket 0,n-1\rrbracket$ is minimal with this property, then it is also a solution of Problem~\ref{pb:Linearisedproblem} with parameters $\SSS$, $\delta$ and $R$, and $\delta = \wt_{\fibresp_3}(E)$ by the first part. Thus the solution is factorisable by Theorem \ref{thm:RadicalCriterion}, since $\max_{j \in \{1,2\}}\wt_{\slicesp_j}(E) \leq n-\mu -1-\delta$.
\end{proof}

{Therefore, we can amend Algorithm~\ref{alg:RadicalDecoding} by searching for 
a solution $(V(Z), N(X,Y))$ of Problem~\ref{pb:Linearisedproblem} for which
$V(Z)$ has least degree $\delta$ and $N(X,Y)$ has support contained in $ \SSS + \llbracket 0,\delta \rrbracket$. That is, we find a solution corresponding to the least degree of $V(Z)$ such that $N(X,Y)$ that can be factored.}
In this way, Algorithm~\ref{alg:RadicalDecodingFindMin} can correct any error such that $\wt_{\fibresp_{3}}(E) + \min_{j \in \{1,2\}} \wt_{\slicesp_j}(E) \leq n - \mu -1$.

\begin{algorithm}
\caption{Radical decoding.\label{alg:RadicalDecodingFindMin}}
\begin{algorithmic}

\Require $n$ an integer, $q$ a prime power, $\mu \in \llbracket 0,n-1\rrbracket$ defining $\mathcal{S}$, $R \in \mathbb{F}_{q^n}^{n\times n}$ received.
\If{Problem~\ref{pb:Linearisedproblem} has no non-zero solution for $\SSS$,$(n-\mu-2)$ and $R$}
\State \textbf{return} ``Decoding failure''
\EndIf
\State Find the least $\delta \geq 1$ such that Problem~\ref{pb:Linearisedproblem} with parameters $\SSS$, $n-\mu - 2$ and $R$ has a non-zero solution $(V(Z),N(X,Y))$ such that $\qdeg V(Z) \leq \delta$ and $\Supp N(X,Y) \subseteq \SSS + \llbracket 0,\delta \rrbracket$.
\State Pick $(V(Z),N(X,Y))$ such a solution.
\State Run \textbf{Algorithm~\ref{alg:divisionalgorithm}} to factorise $N(X,Y)$ into $V(f(X,Y))$.
\State Generate the codeword $C$ associated to $f(X,Y)$.
\State \textbf{return} $C$
\end{algorithmic}
\end{algorithm}

\subsection{Further comments on radical decoding}
Here we get more into the linear system to be solved in order to show that
if Algorithm~\ref{alg:RadicalDecoding} or Algorithm~\ref{alg:RadicalDecodingFindMin} corrects any error pattern up to a given threshold given by Corollary~\ref{corol:DecodingConditionRadical}, then it is possible to adapt the algorithm and, with a high probability, correct errors $E$ such that 
\[
\wt_{\fibresp_{3}}(E) \leq n-\mu-1.
\]

\begin{Rq} Note that the claimed decoding radius is $d-1$ where $d$ denotes the minimum distance of $\CCC_{\alpha}(\SSS)$ with respect to $\wt_{\fibresp_{3}}(\cdot)$. Hence, this can be up to twice the decoding radius that Algorithms~\ref{alg:fibrewise} and~\ref{alg:fibrewisebis}' can achieve.
Even Algorithm~\ref{alg:fibreWiseTwoDirrs} cannot succeed if the error $E \in \F_{q^n}^{n \times n}$ has too many rows of rank weight greater than $ \frac{n-\mu-1}{2}$.
\end{Rq}

The idea is comparable to that of \cite[Section~4]{BC21} where the decoding of
supercodes of Gabidulin codes is studied. Namely,
let $t\leq n-\mu-1$ and suppose we are given $R = C+E$ where $C \in \CCC_{\alpha}(\mathcal S)$ with $\SSS = \llbracket 0, \mu \rrbracket^2$ and $E \in \F_{q^n}^{n\times n}$ such that $\wt_{\fibresp_3}(E) \leq t$.
Then,
\begin{enumerate}
\item Solve Problem~\ref{pb:Linearisedproblem} with parameters $\SSS$, $t$ and $R$. 
\item Given a nonzero solution $(V(Z), N(X,Y))$ of that problem, compute the kernel $K$ of $V$ (when regarded as an $\F_q$--endomorphism of $\F_{q^n}$);
\item Solve the affine system whose unknowns are the entries of $E \in \F_{q^n}^{n\times n}$ and such that
\begin{equation}\label{eq:final_sys}
    \left\{
    \begin{array}{crl}
        \forall (i_1,i_2) \in \llbracket 1, n \rrbracket^2, & E[i_1,i_2] &\in K\\
        &R - E & \in \CCC_{\alpha}(\SSS).
    \end{array}
    \right.
\end{equation}
\end{enumerate}

By interpolation, there exists $V \in \F_{q^n}[Z]$ of $q$-degree $\leq t$ that vanishes on the entries of $E$. Hence a solution $(V(Z), N(X,Y))$ of Problem~\ref{pb:Linearisedproblem}, with those parameters, exists. What is not guaranteed is that any such solution $(V(Z),N(X,Y))$ satisfies
\[
\forall (i_1,i_2) \in \llbracket 1, n \rrbracket^2,\quad V(E[i_1,i_2]) = 0.
\]

We have the following statement inspired from \cite[Lem.~1]{BC21}.
To state it, let us introduce a notation. For any $V \in \F_{q^n}[Z]$, we denote $V(E) \in \F_{q^n}^{n \times n}$ the matrix $(V(E[i_1,i_2]))_{i \in \llbracket 1,n\rrbracket^2}$. Then, we set
\begin{equation}\label{eq:VE}
    \mathcal{V}_{t,E} := \{V(E) ~|~ V \in \F_{q^n}[Z], \ \qdeg(V) \leq t\}
\end{equation}

\begin{Lemma}
    Suppose that
    \begin{equation}\label{eq:assump_intersection}\CCC_{\alpha}(\SSS + \llbracket 0, t \rrbracket) \cap \mathcal{V}_{t,E} = \{0\},
    \end{equation}
    then any solution $(V(Z),N(X,Y))$ of Problem~\ref{pb:Linearisedproblem}, for given $\SSS$, $t$, and $R$ satisfies $V(E) = 0$.
\end{Lemma}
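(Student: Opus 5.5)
The plan is to rewrite the defining equation of Problem~\ref{pb:Linearisedproblem} as an identity between matrices in $\F_{q^n}^{n\times n}$. We then split it into a part lying in $\CCC_\alpha(\SSS+\llbracket 0,t\rrbracket)$ and a part lying in $\mathcal{V}_{t,E}$, and conclude using the trivial intersection assumption~\eqref{eq:assump_intersection}. This is the same mechanism as in the proof of Theorem~\ref{thm:RadicalCriterion}, but without the radical argument.

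Let $(V(Z),N(X,Y))$ be a solution of Problem~\ref{pb:Linearisedproblem} with parameters $\SSS$, $t$, $R$, so that $\qdeg V \leq t$ and $\Supp N \subseteq \SSS+\llbracket 0,t\rrbracket$. Let $f(X,Y)\in\CCC(\SSS)$ with $\ev(\overline{f})=C$. By $\F_q$-linearity of $V$, for every $(i_1,i_2)$ we have
\[
N(\alpha_{i_1},\alpha_{i_2}) = V(R_{i_1,i_2}) = V(f(\alpha_{i_1},\alpha_{i_2})) + V(E[i_1,i_2]).
\]
In matrix form this reads $V(E) = \ev(\overline{N}) - \ev(\overline{V\circ f})$.

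Next we place the right-hand side in $\CCC_\alpha(\SSS+\llbracket 0,t\rrbracket)$. By assumption $\Supp N\subseteq \SSS+\llbracket 0,t\rrbracket$. By Lemma~\ref{lemma:CoefCompositionVof}, $\Supp(V\circ f)\subseteq \SSS+\llbracket 0,t\rrbracket$ as well. Here $\SSS=\llbracket 0,\mu\rrbracket^2$ and $t\leq n-\mu-1$, so these supports lie in $\llbracket 0,n-1\rrbracket^2$; hence no reduction modulo $U_{X,Y}$ changes the support, and Lemma~\ref{lemma:BasisOfQutientSpace} applies. If one wants to allow larger $t$, one reads $\SSS+\llbracket 0,t\rrbracket$ modulo $n$, and the reduction of Lemma~\ref{lemma:BasisOfQutientSpace} keeps the support inside that reduced set. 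Thus $\overline{N-V\circ f}\in\CCC(\SSS+\llbracket 0,t\rrbracket)$, and applying $\ev$ gives $V(E)\in\CCC_\alpha(\SSS+\llbracket 0,t\rrbracket)$. On the other hand $V(E)\in\mathcal{V}_{t,E}$ by definition~\eqref{eq:VE}, since $\qdeg V\leq t$.

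Therefore $V(E)$ lies in $\CCC_\alpha(\SSS+\llbracket 0,t\rrbracket)\cap\mathcal{V}_{t,E}=\{0\}$, so $V(E)=0$. As a byproduct, $\ev(\overline N)=\ev(\overline{V\circ f})$, and injectivity of $\ev$ (Proposition~\ref{prop:isomPolyAndTensors}) gives $\overline{N}=\overline{V\circ f}$.

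The only delicate point is the support bookkeeping in the second step. We must make sure that $V\circ f$, after reduction modulo $U_{X,Y}$, still has support in the set indexing the code, so that its evaluation really lies in $\CCC_\alpha(\SSS+\llbracket 0,t\rrbracket)$. For the range of $t$ used in this section this is immediate. The rest is linearity.
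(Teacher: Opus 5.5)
Your proof is correct and follows the same route as the paper: you write $V(E)$ as the evaluation of $N - V\circ f$, place it in $\CCC_{\alpha}(\SSS+\llbracket 0,t\rrbracket)$ using the support bound of Lemma~\ref{lemma:CoefCompositionVof}, note that it lies in $\mathcal{V}_{t,E}$ by definition, and conclude from the trivial intersection. The paper phrases this as a contradiction and omits the support bookkeeping; you usefully make explicit that $\mu+t\leq n-1$ keeps all supports inside $\llbracket 0,n-1\rrbracket^2$.
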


\begin{proof}
    Let $(V(Z),N(X,Y))$ be a solution of Problem~\ref{pb:Linearisedproblem} with parameters $\SSS$, $t$ and $R$ such that $V(E) \neq 0$. Then, \[V(E) = V(R-C) = N - V(C) \in \CCC_{\alpha}(\SSS + \llbracket 0, t \rrbracket).\]
    This contradicts Assumption \eqref{eq:assump_intersection}.
\end{proof}

Of course, since $E$ is unknown, Assumption \eqref{eq:assump_intersection} is 
not checkable. Still,
it is very likely that 
\[\CCC_{\alpha}(\SSS + \llbracket 0, t \rrbracket) \cap 
    \mathcal{V}_{t,E} = \{0\}\]
as long as the sum of the two spaces remains below $n^2$ (which is the dimension of the ambient space). It turns out that with $t = n-\mu-1$ then
\begin{equation}\label{eq:dimension_condition}
\dim_{\F_{q^n}} \CCC_\alpha (\SSS+ \llbracket 0, t \rrbracket) + \dim_{\F_{q^n}} \mathcal{V}_{t,E} \leq n^2. 
\end{equation}
Indeed, the dimension of $\CCC_\alpha (\SSS+ \llbracket 0, t \rrbracket)$
equals $(\mu+1)^2 + t(2\mu-1)$. The dimension of $\mathcal{V}_{t,E}$ is that
of the codomain of the $\F_{q^n}$--linear map
\[
\begin{array}{ccc}
\{V \in \F_{q^n}[Z],\ \qdeg V \leq t\} & \longrightarrow & \mathcal{V}_{t,E}\\
V & \longmapsto & V(E).
\end{array}
\]
The domain $\{V \in \F_{q^n}[Z],\ \qdeg V \leq t\}$ of the map has $\F_{q^n}$-dimension $t+1$ and since $\wt_{\fibresp_3}(E) \leq t$, the above map has a non trivial kernel. By the rank nullity theorem, we deduce that $\dim_{\F_{q^n}} \mathcal{V}_{t,E}\leq t$.

Therefore, \eqref{eq:dimension_condition} is satisfied if and only if,
\begin{align*}
(\mu+1)^2 + t(2\mu-1) + t & \leq n^2\\
(\mu+1)^2 + 2t\mu & \leq n^2 \\
t & \leq \frac{n^2 - (\mu+1)^2}{2\mu}\\
t & \leq (n-\mu-1) \frac{(n+\mu+1)}{2\mu}
\end{align*}
Since $n > \mu$ then $\frac{n+\mu+1}{2\mu} > 1$ and the above inequality
is actually always satisfied since $t \leq n-\mu-1$. Consequently, condition \eqref{eq:assump_intersection} is very likely to hold and hence whenever $\wt_{\fibresp_3}(E) \leq n - \mu - 1$ our algorithm succeeds with a good probability.

\subsection{Comments on interleaving techniques.} \label{subsec:interleaving}
In \cite{Sidorenko2020OnIR}, for a given integer $L \in {\N}$, and for a list $\mathcal{C}^{(1)},\dots,\mathcal{C}^{(L)}$ of $\mathbb{F}_{q^n}$-linear rank-metric codes of the same length $n$ and having respective dimensions $k^{(1)},\dots,k^{(L)}$, the \textbf{vertically} $L$\textbf{-interleaved codes} of this sequence of codes is the $\mathbb{F}_{q^n}\tm[L \times n,k^{(1)}+\cdots + k^{(L)}]$ matrix code $\mathcal{C}_V$ given by
\begin{equation}\label{eq:CV}\mathcal{C}_V := \left.\left\{ \begin{pmatrix}
    c^{(1)} \\ \vdots \\c^{(L)}\end{pmatrix} \right| \forall \ell \in \llbracket 1,L \rrbracket : c^{(\ell)} \in \mathcal{C}^{(\ell)}
\right\} \subseteq \F_{q^n}^{L\times n}.
\end{equation}

As we stated before, every fibre of a codeword in $\mathcal{C}_{\alpha}(\mathcal{S})$ is a Gabidulin codeword, and more precisely an evaluation codeword of a space of $q$-polynomials. In particular, $\mathcal{C}_{\alpha}(\mathcal{S})$ is contained in such a vertically interleaved code as well as in its transposition (\emph{i.e.} the code of rectangular matrices whose columns are the codewords of the $\mathcal{C}^{(\ell)}$'s) as stated below.

\begin{Prop} \label{prop:InterleavingIsomorphisms}
   Let $\mathcal{S} \subseteq \llbracket 0, n-1\rrbracket^2$. Denote by $C_1 = \{(P(\alpha_1),\dots,P(\alpha_n)) \ | \ P \in \Span_{\mathbb{F}_{q^n}}(\{X^{q^{s_1}} |\ s_1 \in \pi_1(\mathcal{S})\}\})$ and $C_2 = \{(P(\alpha_1),\dots,P(\alpha_n)) \ | \ P \in \Span_{\mathbb{F}_{q^n}}(\{Y^{q^{s_2}} |\ s_2 \in \pi_2(\mathcal{S})\}\})$ the evaluation (vector) codes of linearised polynomials with support in $\pi_1(\mathcal{S})$ and $\pi_2(\mathcal{S})$ respectively.  
    \begin{enumerate}
        \item The code $\mathcal{C}_{\alpha}(\mathcal{S})$ is a subcode of the transpose of the vertically $n$-interleaved code for $C_1 = \CCC^{(1)} = \cdots = \CCC^{(n)}$. Moreover, the two codes are equal if and only if $\mathcal{S} = \pi_1(\mathcal{S}) \times \llbracket 0,n-1\rrbracket$. 
        
        \item The code $\mathcal{C}_\alpha(\mathcal{S})$ is a subcode of the vertically $n$-interleaved code for $C_2 = \CCC^{(1)} = \cdots = \CCC^{(n)}$. Moreover, the two codes are equal if and only if $\mathcal{S} =  \llbracket 0,n-1\rrbracket \times \pi_2(\mathcal{S})$. 

        \item Let $\mu_1,\mu_2 \in \llbracket 0,n-1\rrbracket$. Assume that $\mathcal{S} = \llbracket 0,\mu_1\rrbracket \times \llbracket 0,\mu_2\rrbracket$. Then $\mathcal{C}_\alpha(\mathcal{S})$ is $\mathbb{F}_{q^n}$-isomorphic to the vertically $(\mu_2 + 1)$-interleaved code for $\GG_{\mu_1+1}(\alpha) = \CCC^{(1)} = \cdots = \CCC^{(\mu_2 + 1)}$, as well as to the vertically $(\mu_1 + 1)$-interleaved code for $\GG_{\mu_2+1}(\alpha) = \CCC^{(1)} = \cdots = \CCC^{(\mu_1 + 1)}$.

        \item The map $\psi: \F_{q^n}^{n\times n}\times \F_{q^n}^{2n^2}, C \mapsto [C[1,:],\dots,C[n,:],C[:,1],\dots,C[:,n]]$ is a monomorphism such that $\psi(\CCC_{\alpha}(\SSS))$ is a subcode of the horizontal $2n$-interleaved code for $\CCC^{(1)} = \cdots = \CCC^{(n)} =  \GG_{\mu_2+1}(\alpha)$ and  $\CCC^{(n+1)} = \cdots = \CCC^{(2n)}= \GG_{\mu_1+1}(\alpha)$, and such that $\rank_{\F_q} \psi(C) = \wt_{\fibresp_{3}}(C)$ for each $C \in \F_{q^n}^{n\times n}$.
    \end{enumerate}  
    \end{Prop}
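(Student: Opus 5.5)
Each of the four items follows from Proposition~\ref{prop:GabidulinColumns} and a dimension count; I take them in order.

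\textbf{Items 1 and 2.} Let $f$ have support in $\SSS$ and $C=\ev(\overline{f})$. For fixed $i_2$, the column $C[:,i_2]$ is the evaluation at $\alpha$ of $f(X,\alpha_{i_2})$. This is a linearised polynomial whose monomials $X^{q^{s_1}}$ all satisfy $s_1\in\pi_1(\SSS)$, so $C[:,i_2]\in C_1$. Hence $C^\top$ has every row in $C_1$, and $\CCC_\alpha(\SSS)$ lies in the transpose of the $n$-interleaved code. Item 2 is the same argument with rows and $C_2$. For the equality criterion I compare $\F_{q^n}$-dimensions. The interleaved code has dimension $n|\pi_1(\SSS)|$, and $\dim\CCC_\alpha(\SSS)=|\SSS|$ by Proposition~\ref{prop:GabidulinColumns}. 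Since $\SSS\subseteq\pi_1(\SSS)\times\llbracket 0,n-1\rrbracket$, equality of dimensions holds exactly when $\SSS=\pi_1(\SSS)\times\llbracket 0,n-1\rrbracket$. The converse direction also needs $\CCC_\alpha(\pi_1(\SSS)\times\llbracket0,n-1\rrbracket)$ to equal the whole interleaved code, and the dimension count gives this.

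\textbf{Item 3.} Take $\SSS=\llbracket0,\mu_1\rrbracket\times\llbracket0,\mu_2\rrbracket$ and write $f=\sum_{\ell=0}^{\mu_2}P_\ell(X)Y^{q^\ell}$ with $\qdeg P_\ell\le\mu_1$. I send $C$ to the $(\mu_2+1)\times n$ matrix whose $\ell$-th row is $(P_\ell(\alpha_1),\dots,P_\ell(\alpha_n))\in\GG_{\mu_1+1}(\alpha)$. Equivalently, the map is $C\mapsto$ (first $\mu_2+1$ rows of) $(C\,M(\alpha)^{-1})^\top$, since $C=M(\alpha)^\top K M(\alpha)$ by Proposition~\ref{prop:isomPolyAndTensors}. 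This map is $\F_{q^n}$-linear. It is injective because $f$ is determined by the $P_\ell$, and the $P_\ell$ are determined by their evaluations since $M_{\mu_1+1}(\alpha)$ has full rank. Both sides have dimension $(\mu_1+1)(\mu_2+1)$, so the map is bijective. Alternatively, Corollary~\ref{corol:GabidulinTwoSidesAndTensor} already gives the tensor-product description, and $\GG_{\mu_2+1}\otimes\GG_{\mu_1+1}\cong\GG_{\mu_1+1}^{\,\mu_2+1}$ via a basis of $\GG_{\mu_2+1}$. The second isomorphism in item 3 is symmetric.

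\textbf{Item 4.} Injectivity of $\psi$ is clear, since the first $n$ blocks already recover $C$. The subcode statement follows from items 1 and 2 and Proposition~\ref{prop:GabidulinColumns}, noting that $\max\pi_j(\SSS)\le\mu_j$ gives the appropriate Gabidulin codes. For the weight identity, the entries of $\psi(C)$ are the entries of $C$, each repeated twice. So their $\F_q$-span equals the span of the entries of $C$, and $\rank_{\F_q}\psi(C)=\wt_{\fibresp_3}(C)$.

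The main obstacle is stating item 3 correctly. I expect it to be a routine check of $\F_{q^n}$-linearity and of the dimension count, for which the explicit map through $M(\alpha)^{-1}$ is the cleanest route.
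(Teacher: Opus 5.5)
Your proposal is correct and follows the paper's proof. You get the inclusions from the row/column evaluation structure behind Proposition~\ref{prop:GabidulinColumns}, and the equality criteria by comparing $|\SSS|$ with $n|\pi_j(\SSS)|$. For item~3 you use the same explicit linear bijection: your map $C\mapsto$ (first $\mu_2+1$ rows of) $(C\,M(\alpha)^{-1})^\top$ is exactly the paper's $C\mapsto M_{\mu_2+1}(\alpha^\bot)C^\top$, since $M(\alpha)^{-1}=M(\alpha^\bot)^\top$. Your explicit check of $\rank_{\F_q}\psi(C)=\wt_{\fibresp_3}(C)$ in item~4 supplies a detail the paper only asserts.
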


\begin{proof}
    Let $C\in\mathcal{C}_\alpha(\mathcal{S})$. Proposition~\ref{prop:GabidulinColumns} indicates that the columns of $C$ are codewords of $C_1$ and that the rows of $c$ are codewords of $C_2$. Therefore, we have the inclusions mentioned in the first and second point. Moreover, the last point is also direct consequence of Proposition~\ref{prop:GabidulinColumns}. Additionally, the dimension of the vertically $n$-interleaved code of the repetition of $C_i$ is $n |\pi_i(\mathcal{S})|$, for $i\in \{1,2\}$. So $\mathcal{C}_{\alpha}(\mathcal{S})$ is exactly the vertically $n$-interleaved code if and only if $|\mathcal{S}| = n |\pi_1(\mathcal{S})|$ (\emph{resp.} $|\mathcal{S}| = n |\pi_2(\mathcal{S})|$) and this is equivalent to $\mathcal{S} = \pi_1(\mathcal{S}) \times \llbracket 0,n-1\rrbracket$ (\emph{resp.} $\mathcal{S} = \llbracket 0,n-1\rrbracket \times \pi_2(\mathcal{S})$). Now assume that $\mathcal{S} = \llbracket 0,\mu_1\rrbracket \times \llbracket 0,\mu_2\rrbracket$.     Recall that $M_{k+1}(\alpha)(M_{k+1}(\alpha^\bot)^\top)=I_{k+1}$.     Let $f(X,Y) = \sum_{i_1 = 0}^{\mu_1}\sum_{i_2 = 0}^{\mu_2} f_{(i_1,i_2)}X^{q^{i_1}}Y^{q^{i_2}} \in \mathscr{M}_{q,\mathbb{F}_{q^n}}[X,Y]$ with $\Supp(f) \subseteq \mathcal{S}$ be associated to $C$.         Then
    \begin{align*}
        C = \begin{bmatrix}
        f(\alpha_1,\alpha_1) & \hdots & f(\alpha_1,\alpha_n) \\
        \vdots & \ddots & \vdots \\
        f(\alpha_n,\alpha_1) & \hdots & f(\alpha_n,\alpha_n)
    \end{bmatrix} &= M_{\mu_1+1}(\alpha)^\top \begin{bmatrix}
        f_{(0,0)} & \hdots & f_{(0,\mu_2)} \\
        \vdots & \ddots & \vdots \\
        f_{(\mu_1,0)} & \hdots & f_{(\mu_1,\mu_2)}
    \end{bmatrix}M_{\mu_2+1}(\alpha) \\
    &= \begin{bmatrix}
        \sum_{i_1 = 0}^{\mu_1}f_{(i_1,0)}\alpha_1^{q^{i_1}} & \hdots & \sum_{i_1 = 0}^{\mu_1}f_{(i_1,\mu_2)}\alpha_1^{q^{i_1}} \\
        \vdots & \ddots & \vdots \\
        \sum_{i_1 = 0}^{\mu_1}f_{(i_1,0)}\alpha_n^{q^{i_1}} & \hdots & \sum_{i_1 = 0}^{\mu_1}f_{(i_1,\mu_2)}\alpha_n^{q^{i_1}} \\
    \end{bmatrix}M_{\mu_2+1}(\alpha). 
    \end{align*}
    Therefore, the morphism $\phi : \mathcal{C}_\alpha(\mathcal{S}) \to \mathcal{C}_V,\ C \mapsto  M_{\mu_2+1}(\alpha^\bot) C^\top$ (where $\mathcal C_V$ is the vertically interleaved code of \eqref{eq:CV})
    is well-defined and bijective and hence is an isomorphism. By a similar argument, the map $C \mapsto M_{\mu_1+1}(\alpha^\bot)C$ yields an isomorphism between $\mathcal{C}_\alpha(\mathcal{S})$ and the $(\mu_1+1)$-interleaved code of the repetition of $\GG_{\mu_2 +1}(\alpha)$. 
\end{proof}

\begin{Rq}
    Note that the isomorphism $\phi: \F_{q^n}^{n\times n} \to \F_{q^n}^{n\times n}, M \mapsto M_{\mu_2 +1}(\alpha^\bot) M^\top$, from which is derived the isomorphism explained in the third point of Proposition~\ref{prop:InterleavingIsomorphisms}, is not an isometry for the metric given by $\wt_{\fibresp_3}$. In other words, there exists matrices $M\in \F_{q^n}^{n\times n}$ such that $\wt_{\fibresp_3}(M) < \wt_{\fibresp_3}(\phi(M))$, \emph{e.g.} we have $\wt_{\fibresp_3}(I_n) = 1$ while $\wt_{\fibresp_3}(M_{\mu_2 +1}(\alpha^\bot)) =n$ as the Moore matrix has the $\F_q$-basis $\alpha^\bot$ as its first row. Obviously, this is not the case of the inclusion maps described in the first two points.
\end{Rq}

\begin{Rq}
    Assume that $\mu = \mu_1 = \mu_2$, the embedding $\varphi$ of Proposition \ref{prop:InterleavingIsomorphisms} 4. shows that it is possible to apply interleaved decoder on the subcode $\CCC_\alpha(\SSS)$ of an interleaved Gabidulin code, see \cite{Sidorenko2020OnIR}. Given $C \in \CCC_\alpha(\llbracket 0,\mu\rrbracket^2)$, an interleaved Gabidulin code decoder can correct any error $E\in \F_{q^n}^{n\times n}$ on a received word $R = C+E$ such that $\wt_{\fibresp_3}(E) \leq \frac{1}{2}(n-\mu -1)$ and furthermore it can correct an error $E \in \F_{q^n}^{n\times n}$ such that $\wt_{\fibresp_3}(E) \leq \frac{2n}{2n+1}(n - \mu -1)$ with low     failure probability, see \cite[Theorem 5]{Sidorenko2020OnIR} and \cite[Section~VIII]{SidorenkoJiangBossert2011SkewFeedbackAndInterleaved}. We will discuss the comparison of that technique with the previous ones in the next section.

    \end{Rq}

\section{Comparisons} \label{sec:Comparison}

\subsection{Roth algorithms} \label{sec:RothCodes}

In his paper \cite{ROTHTensorCodesForRankMetric}, Roth gave two algorithms to decode the $3$-tensor codes ``$\mathcal{C}(n,3,3;q)$'' and ``$\mathcal{C}(n,5,3;q)$'' for errors respectively of tensor-rank at most one and two, and introduced the generalisation of the second for higher tensor-rank correction for suitable codes. In terms of computations over the field $\F_q$, the complexity of the first algorithm is polynomial in $n$, while the complexity of the second and of its generalisations is polynomial in $n$ and exponential in the correction radius. For the case of order-3 tensors, the codes are respectively isomorphic to the codes $\CCC_\alpha(\SSS_1)$ and $\CCC_{\alpha}(\SSS_2)$ with $\SSS_1\supseteq \{s \in \llbracket 0,n-1\rrbracket^{2} \ | \ s_1 + s_{2} > 1  \}$ and $\SSS_2 \supseteq  \{s \in \llbracket 0,n-1\rrbracket^2 \ | \ s_1 + s_2 > 3  \}$. Since the sets $\SSS_1$ and $\SSS_2$ have maximum $n-1$ in each direction, running Algorithms~\ref{alg:fibrewise} or~\ref{alg:RadicalDecoding} would result in no error correction capability. We summarize the correction capabilities of the codes as follows.

\begin{itemize}
    \item The code {$\mathcal{C}(n,3,3;q)$} has $\mathbb{F}_{q^n}$-dimension at least $n^2 - 3$ and can correct every error of tensor-rank at most $1$ in $(\mathbb{F}_{q}^{n})^{\otimes 3}$, in particular the decoding algorithm can correct at least $\frac{(q^n -1)^3}{(q-1)^2} + 1$ different errors. 
    \item The code {$\mathcal{C}(n,5,3;q)$} has $\mathbb{F}_{q^n}$-dimension at least $n^2 - 10$ and can correct every error of tensor-rank at most $2$ in $(\mathbb{F}_{q}^{n})^{\otimes 3}$, in particular the decoding algorithm can correct at least 
    \begin{equation}
        \frac{q(q^n-1)^3(q^{n-1}-1)^2}{(q-1)^3(q^2-1)} \left(\frac 12 q^2(q+1)(q^{n-1}-1) + 3(q-1)\right) + \frac{(q^n -1)^3}{(q-1)^2} + 1
        \label{eq:NbTrk2}
    \end{equation} 
    different errors.
\end{itemize}

The number of tensors of tensor-rank one in $(\mathbb{F}_q^{n})^2$ is easily obtained, while (\ref{eq:NbTrk2}) is a particular case of Lemma~\ref{lemma:CountingNbTrank2} below, (see appendix~\ref{proof:CountingNbTrank2} for a proof). In the following sections, we will compare those quantities to the known lower-bounds on the number of correctable errors with respect to the algorithms described previously.

\begin{Lemma} \label{lemma:CountingNbTrank2}
    Let $k,m,n \geq 2$. In the space of (non-necessarily square) tensors $\F_{q}^k \otimes \mathbb{F}_q^{m} \otimes \F_q^n$, the number of elements of tensor-rank exactly two is
    $$\textstyle \frac{q(q^n-1)(q^m-1)(q^k-1)}{(q-1)^3(q^2-1)} \left(\frac{(q^{n-1} - 1)(q^{m-1}-1)(q^{k-1}-1)q^2(q+1)}{2} + (q-1)\left(\frac{q^{n + m} +q^{k+n} + q^{k+m}}{q^2} -2 \frac{q^{k} + q^{m} +q^{n}}{q} +3 \right)\right).$$
\end{Lemma}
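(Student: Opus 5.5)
The plan is to sort rank-two tensors by which factor vectors are collinear, and to count each class directly. Write a tensor of tensor-rank at most two as $\Gamma = a\otimes b\otimes c + a'\otimes b'\otimes c'$ with $a,a'\in\F_q^k$, $b,b'\in\F_q^m$, $c,c'\in\F_q^n$, all non-zero. If two of the pairs $\{a,a'\}$, $\{b,b'\}$, $\{c,c'\}$ are collinear, then $\Gamma$ is a single elementary tensor, so it has rank at most one. Hence a tensor of rank exactly two falls into one of two cases:
\begin{itemize}
\item[(A)] all three pairs are linearly independent;
\item[(B)] exactly one pair is collinear.
\end{itemize}
I would show that these cases can be told apart by the flattening ranks, i.e. the dimensions $\dim\slicesp_j(\Gamma)=\dim\fibresp_j(\Gamma)$ from Lemma~\ref{lemma:SliceSpacefibreSpaceDimension}. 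In case (A) all three are $2$. In case (B), with say $a'\in\F_q a$, the mode-$1$ dimension is $1$ and the other two are $2$. So the classes are disjoint, the three subcases of (B) are disjoint from each other, and every tensor in (A) or (B) indeed has rank exactly two by Proposition~\ref{prop:weightsandtensorrank}.

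For case (B) with $a'$ collinear to $a$, we have $\Gamma = a\otimes M$ with $M\in\F_q^m\otimes\F_q^n$ a matrix of rank exactly $2$. The pair $(a,M)$ is determined up to $(\lambda a,\lambda^{-1}M)$. The count is therefore
\[
\frac{q^k-1}{q-1}\cdot\frac{(q^m-1)(q^m-q)(q^n-1)(q^n-q)}{(q^2-1)(q^2-q)},
\]
where the second factor is the standard number of rank-$2$ matrices. This simplifies to $\frac{q(q^k-1)(q^m-1)(q^n-1)}{(q-1)^2(q^2-1)}(q^{m-1}-1)(q^{n-1}-1)$. Summing over the three choices of collinear mode, and expanding $\sum(q^{m-1}-1)(q^{n-1}-1)$ into $\frac{q^{n+m}+q^{k+n}+q^{k+m}}{q^2}-2\frac{q^k+q^m+q^n}{q}+3$, gives exactly the second summand of the formula.

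For case (A), the key step, and the main obstacle, is uniqueness of the decomposition up to ordering and scalars. I would argue as follows. The mode-$1$ slice space of $\Gamma$ is $\Span_{\F_q}(b\otimes c,\ b'\otimes c')$; this holds because $a,a'$ are independent, so one can pick functionals isolating each term. An element $x\,b\otimes c + y\,b'\otimes c'$ is the matrix $B\,\mathrm{diag}(x,y)\,C^\top$ with $B=[b\ b']$ and $C=[c\ c']$ of full column rank. So it has rank $2$ unless $xy=0$. Hence the only rank-one lines in the slice space are $\F_q\, b\otimes c$ and $\F_q\, b'\otimes c'$. This determines the unordered pair $\{(b,c),(b',c')\}$ up to scalars. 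The vectors $a,a'$ are then forced, since $b\otimes c$ and $b'\otimes c'$ are linearly independent.

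Consequently each tensor in (A) corresponds to exactly one unordered pair of rank-one tensors with pairwise independent factors. The count is
\[
\tfrac12\cdot\frac{(q^k-1)(q^k-q)(q^m-1)(q^m-q)(q^n-1)(q^n-q)}{(q-1)^4}.
\]
Here each rank-one tensor absorbs $(q-1)^2$ scalings and the ordering contributes the factor $2$. This rewrites as $\frac{q(q^k-1)(q^m-1)(q^n-1)}{(q-1)^3(q^2-1)}\cdot\frac{q^2(q+1)(q^{k-1}-1)(q^{m-1}-1)(q^{n-1}-1)}{2}$, which is the first summand. Adding the counts from (A) and (B) gives the stated formula.
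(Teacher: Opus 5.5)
Your proof is correct, and it organises the count differently from the paper.

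The paper fixes one slicing mode and invokes B\"urgisser's characterisation \cite[Prop.~14.45]{Burgisser1997ch14}: $\Gamma$ has tensor-rank two if and only if its slice space is either a line spanned by a rank-two matrix or a plane spanned by two rank-one matrices. It then enumerates such subspaces of $\F_q^{m\times n}$. The planes are split according to whether their row and column spaces have dimensions $(2,2)$, $(1,2)$ or $(2,1)$, with overcount factors $2$, $q(q+1)$ and $q(q+1)$ respectively. Each count is multiplied by the number of tensors with a prescribed $t$-dimensional slice space, namely the number of rank-$t$ matrices in $\F_q^{k\times t}$, and the final summation is left to the reader.

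Your classes match these exactly:
the line case is your case (B) with the collinear pair in the slicing mode;
the $(1,2)$ and $(2,1)$ planes are case (B) in the other two modes;
the $(2,2)$ planes are case (A).
The fact the paper uses to count $(2,2)$ planes, that they contain only two rank-one lines, is precisely your uniqueness-of-decomposition step.

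Your route gains symmetry and economy. You parametrise decompositions directly and use flattening ranks to make the collinearity pattern an invariant of $\Gamma$. As a result:
you treat the three modes on an equal footing;
you handle each subcase of (B) with the same $(q-1)$-to-one map $(a,M)\mapsto a\otimes M$;
you bypass the $q(q+1)$ overcount analysis for planes consisting entirely of rank-one matrices;
you need only the elementary bound ``flattening rank $\le$ tensor-rank'' rather than B\"urgisser's characterisation;
the symmetric shape of the formula (one term for (A), a sum over three modes for (B)) follows from a one-line expansion.
The paper's slice-space framework is more systematic: count subspaces of matrices by how they are generated by rank-one matrices, then count generating families. It is the natural one to extend to higher ranks, at the cost of breaking the symmetry between modes.

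One cosmetic point: Proposition~\ref{prop:weightsandtensorrank} is stated for cubic tensors. The one-line argument behind it (every mode-$j$ fibre lies in the span of the mode-$j$ factors of any decomposition) applies verbatim to $\F_q^k\otimes\F_q^m\otimes\F_q^n$.
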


\subsection{Comparison of the fibre-wise decoders}
As mentioned in Section~\ref{location:numbersofcorrectederrors}, the first version of the fibre-wise decoder, \emph{i.e.} column- and row-wise decoders, can be upscaled into a second version that corrects more errors at the cost of doubling the complexity. Since the number of matrices of a given size and a given rank over a given field is known (see \cite[Theorem 25.2]{Van_Lint_A_Course_In_Combinatorics}), we can compute the aforementioned lower-bounds on the number of correctable errors by the respective algorithms and compute the improvement. In Figure~\ref{fig:ImprovBoundFibreDec} we display the logarithm of $N_2/N_1$ for different values of $\mu_1$ and $\mu_2$ and for the choice of $n = 10$ and $q= 2$, where $N_j$ is the known lower-bound on the number of errors that Algorithm $j$ can correct on the code $\CCC_\alpha(\llbracket 0,\mu_1\rrbracket\times \llbracket 0,\mu_2 \rrbracket)$.

\begin{Rq} 
Let $\kappa \in \llbracket 1,n-1\rrbracket$. We have $\trank_{\F_q}(T) \geq \min_{\substack{\JJJ ,|\JJJ| = \kappa }} \max_{i_2 \in \JJJ} \ \rank_{\mathbb{F}_q} T[:,i_2]$ for each $T \in \mathbb{F}_{q^n}^{n\times n}$, as a consequence of Corollary~\ref{corol:maxrankslicesvstrank}. Note that this inequality is met for any choice of $\kappa \in \llbracket 1,n\rrbracket$. For instance, consider $T = [c^\top,\dots,c^\top]$ where $c \in \F_{q^n}^n$ is non-zero, then $\trank_{\F_q}(T) = \rank_{\F_q}(c)$ and the right-hand side of the inequality is also equal to $\rank_{\F_q}(T)$ as the ranks of all columns is identical.

Moreover, consider $T_1,T_2 \in \mathbb{F}_{q^n}^{n \times n}$ defined for each $(i_1,i_2) \in \llbracket 0,n-1\rrbracket$ by $T_{1,(i_1,i_2)} = \alpha_{i_1}\ind{i_1\leq a}\ind{i_2 \leq n - \kappa}$  and $T_{2,(i_1,i_2)} = \alpha_{i_1}\ind{i_1\leq a}$. Then both have tensor-rank exactly $a$, and the corresponding values of the right-hand side term respectively are $0$ and $a$. 
\end{Rq}

Since Roth codes introduced in Section~\ref{sec:RothCodes} have in general dimension strictly larger than any Roth-tensor code $\CCC_\alpha(\SSS)$ for which at least one of the algorithms above corrects non-trivial errors, we can compare the former with the largest Roth-tensor codes for which Algorithms~\ref{alg:fibrewise} and~\ref{alg:fibreWiseTwoDirrs} can decode non-trivial errors.
\begin{Prop} \label{prop:NumberOfCorrectableErrorsMethods}
    Consider the algorithms and codes in $\mathbb{F}_{q^n}^{n\times n}$ introduced before, where  $n \geq 3$. Figure~\ref{fig:AssymptoticComparisionRothAndSimpleAlgs} lists their respective dimensions, as well as the asymptotic behaviour of the known lower-bounds of their respective sets of correctable errors.

\end{Prop}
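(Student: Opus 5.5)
The statement refers to Figure~\ref{fig:AssymptoticComparisionRothAndSimpleAlgs}, which we have not seen. I will therefore establish, for each row, the dimension of the relevant code and the leading-order behaviour (as $n\to\infty$, or as $q\to\infty$) of the lower bound on the number of correctable errors. The rows are presumably Roth's two codes, and Algorithms~\ref{alg:fibrewise} and~\ref{alg:fibreWiseTwoDirrs} on the largest codes $\CCC_\alpha(\llbracket 0,\mu_1\rrbracket\times\llbracket 0,\mu_2\rrbracket)$ on which they correct non-trivial errors.

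\textbf{Dimensions.} For Roth's codes I will use the bounds recalled in Section~\ref{sec:RothCodes}: dimension at least $n^2-3$ and $n^2-10$. For the fibre-wise decoders, Proposition~\ref{prop:GabidulinColumns} gives $\dim\CCC_\alpha(\SSS)=|\SSS|=(\mu_1+1)(\mu_2+1)$. A non-trivial decoding radius $\lfloor (n-\mu_1-1)/2\rfloor\ge 1$ forces $\mu_1\le n-3$. For Algorithm~\ref{alg:fibrewise}, $\mu_2$ is unconstrained, so the largest code has $\mu_2=n-1$ and dimension $n(n-2)$. For Algorithm~\ref{alg:fibreWiseTwoDirrs} I need $\lceil (n+\mu_2+1)/2\rceil\le n-1$ so that the row step does something, which gives $\mu_2\le n-3$ as well. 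The largest code is then $\llbracket 0,n-3\rrbracket^2$, of dimension $(n-2)^2$.

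\textbf{Counting errors.} For Roth's codes I take the exact counts of tensors of tensor-rank at most $1$ and $2$: $(q^n-1)^3/(q-1)^2+1$ and \eqref{eq:NbTrk2}, i.e.\ Lemma~\ref{lemma:CountingNbTrank2} with $k=m=n$. Their leading terms are $q^{3n-2}$ and $\tfrac12 q^{6n-4}$ respectively, since the dominant term of \eqref{eq:NbTrk2} is
\[
\frac{q\cdot q^{3n}\cdot q^{2n-2}}{q^{3}\cdot q^{2}}\cdot\frac{q^{n+2}}{2}\approx \tfrac12 q^{6n-4}.
\]
For Algorithm~\ref{alg:fibrewise} I plug $r=1$ into \eqref{eq:numbererrorcolumnwise}. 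Using $\mathfrak n(\F_q^{n\times n},1)=(q^n-1)^2/(q-1)$ from \cite[Theorem 25.2]{Van_Lint_A_Course_In_Combinatorics}, this gives $\bigl(1+(q^n-1)^2/(q-1)\bigr)^n\sim q^{(2n-1)n}$. For Algorithm~\ref{alg:fibreWiseTwoDirrs} with $\mu_2=n-3$, we have $\kappa=n-1$. Then \eqref{eq:numbererrorfibrewise} has two terms, $k=n$ and $k=n-1$:
\[
A^n+nA^{n-1}\bigl(q^{n^2}-A\bigr),\qquad A=1+\frac{(q^n-1)^2}{q-1}.
\]
The term $nA^{n-1}q^{n^2}\sim n\,q^{(2n-1)(n-1)+n^2}$ dominates.

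The final step is to compare the exponents of all rows and record them in the figure. The main obstacle is making sure the asymptotic regime and normalisation match the figure's conventions, in particular whether it lists leading terms or logarithms. I would state each entry as $\log_q(\cdot)=\text{(leading exponent)}+O(1)$ (or $+\log_q n$) and verify it by elementary estimates $q^a-1=q^a(1+O(q^{-a}))$.
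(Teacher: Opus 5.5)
Your counting is the same as the paper's. With $S_n=1+\frac{(q^n-1)^2}{q-1}$ (your $A$), the two fibre-wise bounds are $S_n^n$ and $nS_n^{n-1}(q^{n^2}-S_n)+S_n^n$. The Roth rows use the exact counts of tensors of tensor-rank at most $1$ and at most $2$.

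\textbf{Where the argument fails.} The gap is in extracting the asymptotics. The table works in the regime $n\to\infty$ with $q$ fixed; that is why its entries carry the factors $(q-1)^{-n}$, $(q-1)^{1-n}$, $(q-1)^{-2}$ and $(q-1)^{-3}(q^2-1)^{-1}$. In that regime your step $S_n^n\sim q^{(2n-1)n}$ is false. Replacing $q-1$ by $q$ inside $S_n$ costs a constant factor $q/(q-1)$, and raising to the $n$-th power turns it into $(q/(q-1))^n\to\infty$.

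Your fallback normalisation ``$\log_q(\cdot)=\text{exponent}+O(1)$'' does not rescue this. The discrepancy is $n\log_q\frac{q}{q-1}$, which equals $n$ when $q=2$: the true bound is then about $2^{2n^2}$, not $2^{2n^2-n}$. The same error appears for Algorithm~\ref{alg:fibreWiseTwoDirrs}, where you get $nq^{3n^2-3n+1}$ instead of $nq^{3n^2-2n}(q-1)^{1-n}$; the ratio is $(q/(q-1))^{n-1}$. Your formulas are only the leading terms as $q\to\infty$ with $n$ fixed.

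For the Roth rows the damage is just a constant factor, $(q/(q-1))^2$ and $(q/(q-1))^4$ respectively. So $q^{3n-2}$ and $\tfrac12 q^{6n-4}$ agree with the table's $\frac{q^{3n}}{(q-1)^2}$ and $\frac{(q+1)q^{6n}}{2(q-1)^3(q^2-1)}$ only up to $O(1)$ in the logarithm, and they are not asymptotically equivalent to them.

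\textbf{The missing step.} Keep $q-1$ exact and write $S_n=\frac{q^{2n}}{q-1}(1-\varepsilon_n)$ with $\varepsilon_n=q^{-n}(2-q^{1-n})$. Then show that $(1-\varepsilon_n)^n\to 1$ and $(1-\varepsilon_n)^{n-1}\to 1$, which holds because $n\varepsilon_n\to 0$. This gives $S_n^n\sim q^{2n^2}(q-1)^{-n}$. Since $S_n=o(q^{n^2})$, you get $q^{n^2}-S_n\sim q^{n^2}$ and $S_n^n=o(nS_n^{n-1}q^{n^2})$. Hence the second bound is $\sim nq^{3n^2-2n}(q-1)^{1-n}$. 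Your estimate $q^a-1=q^a(1+O(q^{-a}))$ is the right tool, provided you apply it only to the $q^n-1$ factors and check that the accumulated error $(1+O(q^{-n}))^n$ tends to $1$.

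\textbf{Dimensions.} Your values $n(n-2)$ and $(n-2)^2$ are correct. By Proposition~\ref{prop:GabidulinColumns} the dimension is $|\SSS|$, and $|\llbracket 0,n-1\rrbracket\times\llbracket 0,n-3\rrbracket|=n(n-2)$. The table's $n(n-3)$ and $(n-3)^2$ appear to be an off-by-one slip in the paper, so do not adjust your answer to match them.
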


\begin{table}
    \centering
    \begin{tabular}{|c|c|c|c|c|}\hline
        \textbf{Algorithm}&\makecell{Algorithm~\ref{alg:fibrewise} on\\$\mathcal{C}_\alpha(\llbracket 0,n-1\rrbracket \times \llbracket 0,n-3\rrbracket)$} & \makecell{Algorithm~\ref{alg:fibreWiseTwoDirrs}\\ on $\mathcal{C}_\alpha(\llbracket 0,n-3\rrbracket^2)$} &
        \makecell{Roth decoder\\on $\mathcal{C}(n,5,3,q)$} & \makecell{Roth decoder\\on $\mathcal{C}(n,3,3,q)$}\\ \hline
        \makecell{\textbf{Dimension of}\\\textbf{the code}} & $n(n-3)$ & $(n-3)^2$ & $ \geq n^2 - 10$ & $ \geq n^2 - 3$\\\hline
        \makecell{\textbf{Asymptotic}\\\textbf{lower bound on}\\\textbf{number of}\\\textbf{correctable}\\\textbf{errors}}& $q^{2n^2} (q-1)^{-n}$ & $n q^{3n^2 - 2n} (q-1)^{1-n}$ &  $\frac{(q+1) q^{6n}}{2(q-1)^3(q^2-1)}$ & $\frac{q^{3n}}{(q-1)^2} $ \\\hline
    \end{tabular}
    \caption{Comparison of the decoding algorithms.}
    \label{fig:AssymptoticComparisionRothAndSimpleAlgs}
\end{table}

\begin{proof}
    The dimensions of the codes have been specified before.         For each $n\in \N$, denote by $S_n := 1 + \frac{(q^n -1)^2}{q-1}$ the number of vectors in $\F_{q^n}^{n}$ of $\F_q$-rank at most one. Since Algorithm~\ref{alg:fibrewise} on $\mathcal{C}_\alpha(\llbracket 
    0,n-1\rrbracket\times \llbracket 0,n-3\rrbracket)$ can correct any error in $\F_{q^n}^{n\times n}$ whose rows have all rank at most one, a lower bound on the number of errors that are correctable using  Algorithm~\ref{alg:fibrewise} on $\mathcal{C}_\alpha(\llbracket 0,n-3\rrbracket^2)$ is given by $T_n^{(1)} = S_n^n$. Likewise, since Algorithm~\ref{alg:fibreWiseTwoDirrs} on $\mathcal C_\alpha(\llbracket 0,n-3\rrbracket^2)$ can correct any error in $\F_{q^n}^{n\times n}$ that has at least $n-1$ rows that have $\F_q$-rank at most one, a lower bound on the number of errors that are correctable by this algorithm is $T^{(2)}_n := n S_n^{n-1} (q^{n^2}- S_n) + S_n^n$. Note that
    $$S_n = \frac{q^{2n}}{q-1}(1 - q^{-n}(2 - q^{1-n})).$$
    Since $1-2q^{-n} \leq 1 - q^{-n}(2 - q^{1-n}) \leq 1-q^{-n}$ for $n$ sufficiently large, and since the limits of $(1-2q^{-n})^n$, of $(1-q^{-n})^n$, of $(1-2q^{-n})^{n-1}$ and of $(1-q^{-n})^{n-1}$ all exist and are equal to $1$ as $n$ tends to infinity, we have the following limits.
    $$\lim_{n \to \infty} (1 - q^{-n}(2 - q^{1-n}))^{n-1} =\lim_{n \to \infty} (1 - q^{-n}(2 - q^{1-n}))^{n} = 1.$$
    As a consequence, $T^{(1)}_n = S_n^n \underset{n \to \infty}{\sim} \frac{q^{2n^2}}{(q-1)^n}$. Additionally, since $T_n =nS_n^{n-1}(q^{n^2} -S_n + n^{-1}S_n)$ and since $S_n$ is asymptotically dominated by $q^{n^2}$, we obtain the following equivalences.
    $$T^{(2)}_n \underset{n \to \infty}{\sim} n S_n^{n-1} q^{n^2}  \underset{n \to \infty}{\sim} n \frac{q^{2n(n-1)}}{(q-1)^{n-1}}q^{n^2} = \frac{nq^{3n^2 - 2n}}{(q-1)^{n-1}}.$$
    The asymptotic equivalents for the lower bound of the number of correctable errors by Roth decoders can be computed from the formulas given in Section~\ref{sec:RothCodes}.
\end{proof}

\begin{Rq}\label{rq:ContextualisationTabular} Note that the codes given by Roth studied in Figure~\ref{fig:AssymptoticComparisionRothAndSimpleAlgs} have a lower redundancy than the codes considered for Algorithms~\ref{alg:fibrewise} and~\ref{alg:fibreWiseTwoDirrs}, and that it is possible to find a subcode of $\CCC(n,3,3;q)$ and an algorithm induced by Roth's decoding algorithm on the ambient code that has a higher asymptotic lower bound on the number of correctable errors than the one of Algorithm~\ref{alg:fibrewise} on $\CCC_\alpha(\llbracket 0,n-1\rrbracket \times \llbracket 0,n-3\rrbracket)$, while having the same redundancy. We outline our reasoning below.

Let $n\geq 4$ and denote by $\III = \{ (i_1,i_2) \ | \ i_1= n \text{ or } (i_1 = n-1 \text{ and } i_2 > 3)\}$. Note that $|\III|=2n-3$. Consider $\CCC^\star$, the $\F_{q^n}$-subcode of $\CCC(n,3,3;q)$ defined as 
\[
\CCC^\star = \{ C \in \CCC(n,3,3;q) \ | \ C[i] = 0, \forall i \in \III \}.
\]
Let us show that $\dim_{\F_q} \mathcal C^\star = n^2 -2n$. The code $\mathcal C^\star$ is the kernel the map $\mathcal C(n,3,3;q) \to \F_{q^n}^{\III}, C \mapsto (C[i])_{i \in \III}$, we will show that this map is surjective. Let $(y_i)_{i \in \III} \in \F_{q^n}^{\III}$ and consider the polynomial 
\[
f(X,Y) = L_{n-1}(Y)^{q^{n-2}} f_n(X) +  L_{n}(Y)^{q^{n-2}} f_{n-1}(X),
\]
where $L_{n-1}(Y), L_n(Y) \in \F_{q^n}[Y]$ and $f_n(X), f_{n-1}(X) \in \F_{q^n}[X]$ are linearised $q$-polynomials satisfying the following properties.
\begin{itemize}
    \item $L_{n-1}(Y) = \prod_{\lambda \in \F_q} (Y - \lambda \alpha_n)(\alpha_{n-1} - \lambda \alpha_n)^{-1}$ is of $q$-degree one with $L_{n-1}(\alpha_{n-1}) = 1$ and $L_{n-1}(\alpha_{n}) = 0$,
    \item $L_{n}(Y) = \prod_{\lambda \in \F_q} (Y - \lambda \alpha_{n-1})(\alpha_{n} - \lambda \alpha_{n-1})^{-1}$ is of $q$-degree one with $L_{n}(\alpha_{n-1}) = 0$ and $L_{n}(\alpha_{n}) = 1$,
    \item $f_{n-1}(X)$ is a $q$-polynomial of $q$-degree at most $n-4$ that interpolates $(\alpha_{4},y_{n-1,4}),\dots,(\alpha_n,y_{n-1,n})$, \emph{i.e.} $f_{n-1}(\alpha_{i_2}) = y_{(n-1,i_2)}$ for each $i_2 \in \llbracket 4,n\rrbracket$,
    \item $f_{n}(X)$ is a $q$-polynomial of $q$-degree at most $n-1$ that interpolates $(\alpha_{1},y_{n,1}),\dots,(\alpha_n,y_{n,n})$, \emph{i.e.} $f_{n}(\alpha_{i_2}) = y_{(n,i_2)}$ for each $i_2 \in \llbracket 1,n\rrbracket$.
\end{itemize}
Since there exists such polynomials, and since the $q$-polynomial $f(X,Y)$ has no monomial of the form $XY$, $X^q Y$ or $XY^q$, its associate codeword $C$ is a codeword in $\CCC(n,3,3;q)$. Therefore, we have $C[i] = f(\alpha_{i_1},\alpha_{i_2})$ for each $i \in \III$. As a consequence, the codimension of $\CCC^\star$ in $\CCC(n,3,3;q)$ is $2n-3$, and so $\dim_{\F_{q^n}}\CCC^\star = n^2 - 2n$.

Let $C \in \CCC^\star$ be a codeword and let $R = C + E$ with a certain $E \in \F_{q^n}^{n\times n}$. Suppose that $\trank_{\F_q}(E) = 1$ and write $E = E^{(1)} + E^{(2)}$ for $E^{(i)} \in \F_{q^n}^{n\times n}$ such that $E^{(1)}[i_1,i_2] = 0$ for each $i \in \llbracket 1,n\rrbracket^2$ with $i_1 \in \{n-1,n\}$ and $E^{(2)}[i_1,i_2] = 0$ for each $i \in \llbracket 1,n\rrbracket \backslash \III$. Then it is possible to recover $(C,E)$ using the following algorithm: one first deletes the entries of $R$ with indices in $\III$ to recover $E^{(2)}$, and then runs Roth's decoding algorithm on the resulting matrix $R' = C + E^{(1)}$ to recover $C$ and $E^{(1)}$, which is possible since $\trank_{\F_q}(E^{(1)}) = 1$. The number of errors that can be corrected by this algorithm is at least $\frac{q^{n + (n-2) + n}}{(q-1)^2} (q^n)^{|\III|} = \frac{q^{3n - 2 + 2n^2 - 3n}}{(q-1)^2} = q^{2n^2 - 2}(q-1)^{-2}$.\end{Rq}

\subsection{Comparison between fibre-wise and radical decoding.}

\begin{Prop}\label{prop:comparisonfibreradical} With a given code $\mathcal{C}_\alpha(\llbracket 0,\mu\rrbracket^2)$, we have the following observations.
    \begin{enumerate}
        \item There exists $E \in \F_{q^n}^{n\times n}$ satisfying equation (\ref{eq:CriterionFibreWiseNaive}) (with a given $j \in \{1,2\}$) but not equation (\ref{eq:ConditionRadicalDecoding}).
        \item If $E \in \F_{q^n}^{n\times n}$ satisfies equation (\ref{eq:ConditionRadicalDecoding}), then there exists a $j \in \{1,2\}$ such that $E$ will satisfy equation (\ref{eq:CriterionFibreWiseNaive}) for this choice of $j$. Similarly, there exists elements $E \in \F_{q^n}^{n\times n}$ satisfying equation (\ref{eq:ConditionRadicalDecoding}) but not equation (\ref{eq:CriterionFibreWiseNaive}) for at least one $j \in \{1,2\}$. 
    \end{enumerate}
\end{Prop}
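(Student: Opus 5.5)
The plan is to prove each item by either a direct inequality chase or an explicit error matrix, working with $\SSS=\llbracket 0,\mu\rrbracket^2$. Throughout, set $\theta:=\left\lfloor \frac{n-\mu-1}{2}\right\rfloor$. Then criterion (\ref{eq:CriterionFibreWiseNaive}) says that every column (for $j=1$), respectively every row (for $j=2$), of $E$ has $\F_q$-rank at most $\theta$. Criterion (\ref{eq:ConditionRadicalDecoding}) says that $\wt_{\fibresp_3}(E)+\min_j\wt_{\slicesp_j}(E)\le n-\mu-1$.

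\textbf{Item 1.} I would use a diagonal error $E=\mathrm{diag}(\alpha_1,\dots,\alpha_n)$.
\begin{itemize}
\item Every column and every row has exactly one non-zero entry, hence $\F_q$-rank $1\le\theta$. So (\ref{eq:CriterionFibreWiseNaive}) holds for both $j$, provided $\theta\ge 1$, i.e.\ $\mu\le n-3$. I will state this non-degeneracy assumption; when $\theta=0$ the criterion forces $E=0$ and the item is vacuous.
\item The entries span $\F_{q^n}$, so $\wt_{\fibresp_3}(E)=n$.
\item The rows $\alpha_i e_i$ are $\F_q$-independent, so $\wt_{\slicesp_1}(E)=n$, and likewise $\wt_{\slicesp_2}(E)=n$.
\end{itemize}
The radical weight is therefore $2n>n-\mu-1$, and (\ref{eq:ConditionRadicalDecoding}) fails.

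\textbf{Item 2, first claim.} This is the main step. Assume (\ref{eq:ConditionRadicalDecoding}) holds, and pick $j$ attaining the minimum, say $j=1$ (the case $j=2$ is symmetric). Fix a column $E[:,i_2]$ and bound its rank in two ways.
\begin{itemize}
\item By Lemma~\ref{lemma:maxrankandslicespaceweight}, $\rank_{\F_q}E[:,i_2]\le \wt_{\slicesp_1}(E)$.
\item The column's entries lie in the $\F_q$-span of all entries of $E$, so $\rank_{\F_q}E[:,i_2]\le\wt_{\fibresp_3}(E)$.
\end{itemize}
The two upper bounds sum to at most $n-\mu-1$, so their minimum is at most $\frac{n-\mu-1}{2}$. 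Since ranks are integers, every column has rank at most $\theta$, which is (\ref{eq:CriterionFibreWiseNaive}) for $j=1$. If the minimum in (\ref{eq:ConditionRadicalDecoding}) is attained at $j=2$, the same argument gives the row version.

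\textbf{Item 2, second claim.} I would use an error supported on one column. Let $r:=n-\mu-2$ and let $E$ have first column $(\alpha_1,\dots,\alpha_r,0,\dots,0)^\top$ and zeros elsewhere.
\begin{itemize}
\item $\wt_{\fibresp_3}(E)=r$.
\item All columns are multiples of a single vector, so $\wt_{\slicesp_2}(E)=1$.
\end{itemize}
The radical weight is then at most $r+1=n-\mu-1$, so (\ref{eq:ConditionRadicalDecoding}) holds. However, the first column has rank $r$, and $r>\theta$ as soon as $n-\mu\ge 4$. So (\ref{eq:CriterionFibreWiseNaive}) fails for $j=1$, while it holds for $j=2$ since every row has rank at most $1$.

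The main obstacle is the parameter range rather than the arguments. For very small $n-\mu$ the examples degenerate, so I would state the needed mild assumption explicitly, or adjust the examples in those edge cases.
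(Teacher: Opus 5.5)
Your proof is correct and follows the paper's. The first claim of item 2 is the same argument: each column has rank at most $\min(\wt_{\fibresp_3}(E),\wt_{\slicesp_1}(E))\le\lfloor\frac{n-\mu-1}{2}\rfloor$, via Lemma~\ref{lemma:maxrankandslicespaceweight}. Your counterexamples are minor variants of the paper's, which uses $I_n$ and an error whose first $n-\mu-2$ rows are $(\alpha_i,\dots,\alpha_i)$.

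Your explicit ranges $\mu\le n-3$ and $n-\mu\ge 4$ are a refinement the paper omits. Outside those ranges, however, the claims are genuinely false, not ``vacuous'' or fixable by other examples. For instance, when $\theta=0$ the only error satisfying (\ref{eq:CriterionFibreWiseNaive}) is $E=0$, and $E=0$ satisfies (\ref{eq:ConditionRadicalDecoding}).
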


\begin{proof} Let $\mu \in \llbracket 0,n-2\rrbracket$ and define $a = n -\mu -2$. Let $E \in \F_{q^n}^{n\times n}$ be such that $\wt_{\fibresp_{3}}(E) +  \min_{j \in \{1,2\}}\wt_{\slicesp_j}(E) \leq n - \mu -1$. If $\wt_{\fibresp_{3}}(E) +  \wt_{\slicesp_1}(E) \leq n - \mu -1$, then $\wt_{\fibresp_3}(E) \leq \left\lfloor\frac{n-\mu-1}{2}\right\rfloor$ or $\wt_{\slicesp_1}(E) \leq \left\lfloor\frac{n-\mu-1}{2}\right\rfloor$. In the first case, equation (\ref{eq:CriterionFibreWiseNaive}) is satisfied for both $j$, and in the second case, Lemma~\ref{lemma:maxrankandslicespaceweight} shows that equation (\ref{eq:CriterionFibreWiseNaive}) is satisfied for $j=1$. A similar reasoning when $\wt_{\fibresp_{3}}(E) +  \wt_{\slicesp_2}(E) \leq n - \mu -1$ concludes the proof of first part of the second point. Additionally, consider the following two matrices in $\F_{q^n}^{n\times n}$.
$$E_1 = I_n \quad \text{ and } \quad E_2 = \begin{bmatrix}
    \alpha_1 & \cdots & \alpha_1 \\
    \vdots & \ddots & \vdots \\
    \alpha_a & \cdots & \alpha_a \\
    &&\\
    &0& \\
    &&
\end{bmatrix},$$
where we recall that $\alpha_1, \dots, \alpha_n$ are $\F_q$--linearly independent.
Then we have $\wt_{\slicesp_1}(E_1) = \wt_{\slicesp_2}(E_1) =  n$, $\wt_{\fibresp_3}(E_1) = 1$, and the rank of every row and column of $E_1$ is $1$. Additionally, we have $\wt_{\slicesp_1}(E_2) = a$, $\wt_{\slicesp_2}(E_2) = 1$, $\wt_{\fibresp_3}(E_2) = a$, the rank of every column of $E_2$ is $a$, and the rank of every row of $E_2$ is $1$. 
\end{proof}
\begin{Rq}
    The first part of the second point in the proposition above can be rephrased as the following~: if an error $E$ satisfies equation (\ref{eq:ConditionRadicalDecoding}), and with a fixed $j$, $E$ or its (\emph{resp.} any of its, for the generalised case below) transpose is correctable by Algorithms~\ref{alg:fibrewise},~\ref{alg:fibreWiseTwoDirrs}.
\end{Rq}

In particular, the two algorithms have different sets of correctable errors, and one can verify it experimentally as well. We will now compare the computational asymptotic complexity of both algorithms. The computational complexity of the column-wise decoder (Algorithm~\ref{alg:fibrewise}) is $n$ times the complexity of the Gabidulin decoder for a code of length $n$ which requires a number of computations on $\mathbb{F}_{q}$ of the order of $n^3\log n$ \cite{Wachter-Zeh2013}, yielding a total complexity of $\mathcal{O}(n^4\log n)$ operations on $\F_q$. On the other hand, finding all solutions of Problem~\ref{pb:Linearisedproblem} for a given $\SSS$, $t$ and $R$ corresponds to the resolution of a linear system of $n^2$ equations over $\mathbb{F}_{q^n}$ with $|\mathcal{S} + \llbracket 0,t\rrbracket|$ unknowns in $\mathbb{F}_{q^n}$. One can verify that for $\mathcal{S} = \llbracket 0,\mu \rrbracket^2$ we have $|\mathcal{S} + \llbracket 0,t \rrbracket| = (\mu +1)^2 + (2\mu + 1)t$. Indeed, except the $(\mu+1)^2$ points in $\mathcal{S}$ there are $t$ points on $(2\mu +1)$ segments corresponding to the points $(x,\mu)$ and $(\mu,y)$ in $\mathcal{S}$, with $x,y \in \llbracket 0,\mu\rrbracket$. Hence, solving the problem sums up to solving a system of $n^3$ equations and $n((\mu +1)^2 + (2\mu + 1)t)$ unknowns on $\mathbb{F}_q$. Consequently, solving this system has an asymptotic complexity $\mathcal{O}(n^5\mu^4)$ or $\mathcal{O}(n^7\mu^2)$ respectively if $t \leq \frac{n^2-(\mu+1)^2}{2\mu +1}$ or not, and in particular $\mathcal{O}(n^9)$. Moreover, the search for the smallest $\delta$ in Corollary \ref{corol:findsmallestsupport} in the computed set of solution of the problem is equivalent to compute at most $n$ intersections of $\F_{q^n}$-vector spaces of dimension at most $n^2$ each, the complexity of Algorithms \ref{alg:RadicalDecoding} and \ref{alg:RadicalDecodingFindMin} are both $\OOO(n^9)$; see \cite[Chapter 3]{golub2013matrix}.

\subsection{Discussion on the tensor-rank}

As shown in Section~\ref{subsec:distancesinthecode}, for each $T \in \F_{q^n}^{n\times n}$, the weights $\wt_{\slicesp_1}(T)$, $\wt_{\slicesp_2}(T)$ and $\wt_{\fibresp_3}(T)$ as well as the rank of any column or row of $T$ are all bounded from above by $\trank_{\F_q}$. Similarly, we have $\wt_{\Sigma\!\slicesp}(T)\leq 2\trank_{\F_{q}}(T)$. In particular, if a code has minimum distance at least $d$ when endowed with the metrics $d_{\slicesp_1}$, $d_{\slicesp_2}$, $d_{\fibresp_3}$, $d_{\rk_1}$, or $d_{\rk_2}$, then it has minimum distance at least $d$ when endowed with the tensor-rank as a metric, \emph{i.e.} with respect to the metric $d_{\trank}$ (see Sections~\ref{subsec:distancesinthecode} and~\ref{subsec:decodingfibrewise}). We obtain the same conclusion if a code has minimum distance $2d$ when endowed with  $\wt_{\Sigma\!\slicesp}$ as a metric. In terms of decoding, we obtain the following statement.

\begin{Corol} 
    Let $\mu \in \llbracket 0,n-1\rrbracket$. We have $d_{\trank}(\CCC(\llbracket0,\mu\rrbracket^2) \geq n-\mu$. If $R = C + E$ with $C \in \CCC(\llbracket 0,\mu\rrbracket^2)$ and $E \in \F_{q^n}^{n\times n}$ such that $\trank_{\F_q}E\leq \left\lfloor\frac{n-\mu-1}{2}\right\rfloor$, then Algorithms~\ref{alg:fibrewise},~\ref{alg:fibrewise}',~\ref{alg:fibreWiseTwoDirrs} and~\ref{alg:RadicalDecodingFindMin} all return $C$. \end{Corol}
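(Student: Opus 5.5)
The plan is to reduce every claim to the error-correction criteria already proved for the individual algorithms. The tensor-rank bounds all other weights from above: by Proposition~\ref{prop:weightsandtensorrank} and Corollary~\ref{corol:maxrankslicesvstrank}, $\wt_{\fibresp_3}$, $\wt_{\slicesp_1}$, $\wt_{\slicesp_2}$ and every row or column $\F_q$-rank are at most $\trank_{\F_q}$. The minimum-distance claim is then immediate from Corollary~\ref{corol:maxrankslicesvstrank} with $\SSS=\llbracket 0,\mu\rrbracket^2$: the maximum coordinate of $\SSS$ is $\mu$, so $d_{\trank}(\CCC_\alpha(\SSS))\geq n-\mu$. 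Here I read $\CCC(\llbracket0,\mu\rrbracket^2)$ in the statement as its evaluation image $\CCC_\alpha$, which is legitimate because $\ev$ is an isomorphism.

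For decoding, fix $E$ with $\trank_{\F_q}(E)\leq \tau:=\lfloor\frac{n-\mu-1}{2}\rfloor$.

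\textbf{Algorithms~\ref{alg:fibrewise} and~\ref{alg:fibrewisebis}'.} Every column and every row of $E$ has $\F_q$-rank at most $\tau$. Since $\max\pi_1(\SSS)=\max\pi_2(\SSS)=\mu$, condition~(\ref{eq:CriterionFibreWiseNaive}) holds for both $j$. Each call to \texttt{GabDec} with dimension $\mu+1$ therefore returns the correct Gabidulin codeword, and the output is $C$.

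\textbf{Algorithm~\ref{alg:fibreWiseTwoDirrs}.} I would avoid Theorem~\ref{thm:twowayscriterion} and argue directly. After the column pass, $\Tilde C=C$ exactly. Each row of $\Tilde C$ is then a codeword of $\GG_{\mu+1}(\alpha)$ by Proposition~\ref{prop:GabidulinColumns}, so the row pass returns it unchanged. One could alternatively note that~(\ref{eq:maxrankupperboundedbutnoteverywherePRACTICAL}) holds trivially with any $\JJJ$, since all columns have rank at most $\tau$.

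\textbf{Algorithm~\ref{alg:RadicalDecodingFindMin}.} We need $\wt_{\Sigma\!\slicesp}(E)\leq n-\mu-1$. Indeed,
\[
\wt_{\fibresp_3}(E)+\min_j\wt_{\slicesp_j}(E)\leq 2\trank_{\F_q}(E)\leq 2\tau\leq n-\mu-1 .
\]
Corollary~\ref{corol:findsmallestsupport} then shows that a minimal-$\delta$ solution factors as $N=V\circ f$, with $V$ nonzero. Algorithm~\ref{alg:divisionalgorithm}, through Lemma~\ref{lemma:diagonaldecompositionbilinearpolys}, recovers $f$ and hence $C$.

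The main obstacle is the edge-case hypotheses of the radical-decoding results. Theorem~\ref{thm:RadicalCriterion} needs $\mu\leq n-2$ and $\Theta\in\llbracket 0,n-\mu-2\rrbracket$, and Corollary~\ref{corol:findsmallestsupport} restricts $t$ to $\llbracket 1,n-\mu-2\rrbracket$. These ranges must be checked against the algorithm's choice $t=n-\mu-2$. The degenerate cases are $E=0$, and $\tau=0$, which holds when $\mu\geq n-2$. In those cases the claim is vacuous or trivial: if $\trank_{\F_q}(E)=0$ then $E=0$, the input is already a codeword, and every algorithm returns it.
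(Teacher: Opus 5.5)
Your proposal is correct and is essentially the paper's proof, which is a one-line appeal to Proposition~\ref{prop:weightsandtensorrank}, Corollary~\ref{corol:maxrankslicesvstrank} and the correction criteria of the individual algorithms; your direct check for Algorithm~\ref{alg:fibreWiseTwoDirrs} and your attention to the admissible ranges of $t$ only make explicit what the paper leaves implicit. One small caveat comes from the statement itself: for $\mu=n-1$, Algorithm~\ref{alg:RadicalDecodingFindMin} is not well defined, since its parameter $t=n-\mu-2$ is negative, so your claim that every algorithm returns $C$ when $E=0$ really needs $\mu\leq n-2$, where it holds because $N-V\circ f$ vanishes at all evaluation points and has support in $\llbracket 0,n-1\rrbracket^2$.
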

\begin{proof}
    This is a consequence of Proposition~\ref{prop:weightsandtensorrank}, Corollary~\ref{corol:maxrankslicesvstrank} and the results on the algorithms in Sections~\ref{subsec:decodingfibrewise} and~\ref{subsec:decodinglinearisedproblem}.
 \end{proof}

 By construction, Algorithms~\ref{alg:fibrewise},~\ref{alg:fibrewise}',~\ref{alg:fibreWiseTwoDirrs} and~\ref{alg:RadicalDecodingFindMin} can correct errors above the tensor-rank radius given in the corollary. Proposition~\ref{prop:NumberOfCorrectableErrorsMethods} provides an example illustrating the additional errors corrected by particular cases of Algorithms~\ref{alg:fibrewise},~\ref{alg:fibrewise}' and~\ref{alg:fibreWiseTwoDirrs}. Regarding the additional errors corrected by~\ref{alg:RadicalDecodingFindMin}, with the technical lemma below proven in appendix~\ref{proof:UBontensorofrankatmost} we obtain the asymptotic behaviour stated in Proposition~\ref{prop:assymptoticcountingextraerrors}.

\begin{Lemma} \label{lemma:UBontensorsofrankatmost}
    Let $n \geq 2$ and let $R \geq 2$. Any element in $(\mathbb{F}_q^n)^{\otimes 3}$ of tensor-rank at most $R$ can be written as a sum of $R$ elements of tensor-rank one. In particular, the number of elements in $(\mathbb{F}_q^n)^{\otimes 3}$ of tensor-rank at most $R$ is bounded from above by $\frac{(q^n-1)^{3R}}{(q-1)^{2R}}$. 
\end{Lemma}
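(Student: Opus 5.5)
The plan is to prove the two claims in order. The second follows from the first by counting the image of a surjective parametrisation.

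For the first claim, take $\Gamma$ of tensor-rank $r \leq R$. If $r = R$ there is nothing to prove. If $r = 0$, write $\Gamma = 0$ as $a\otimes b\otimes c + (-a)\otimes b\otimes c$ with $a,b,c$ non-zero, and add further pairs of this kind when needed. If $1 \leq r < R$, take a minimal form $\Gamma = \sum_{i=1}^{r} a_i\otimes b_i\otimes c_i$ and pad it to exactly $R$ rank-one terms.

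The padding has to use non-zero terms only, because the zero tensor has tensor-rank $0$, not $1$. I split the last term $a_r\otimes b_r\otimes c_r$ into rank-one pieces. Since $n \geq 2$, I can pick $a' \in \F_q^n$ linearly independent of $a_r$. Then
\[
a_r\otimes b_r\otimes c_r = (a_r - a')\otimes b_r\otimes c_r + a'\otimes b_r\otimes c_r ,
\]
and both summands are non-zero, so each has tensor-rank one. This raises the count by one. Repeating it on the newest term, or using the $\pm$ pair trick to add two terms at a time, reaches exactly $R$ terms. The only point needing care is the non-vanishing of each new term, and that is why the hypothesis $n\geq 2$ is there. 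The hypothesis $R \geq 2$ is what makes the case $r=0$ work.

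For the counting, I will first count the rank-one tensors. The map $(a,b,c)\mapsto a\otimes b\otimes c$ sends $((\F_q^n\setminus\{0\}))^3$ onto the set of rank-one tensors. Its fibres are the orbits of $\{(\lambda,\mu,\nu)\in(\F_q^*)^3 : \lambda\mu\nu=1\}$ acting by scaling. This group has size $(q-1)^2$, and it acts freely on the triples, so there are exactly $(q^n-1)^3/(q-1)^2$ rank-one tensors.

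By the first claim, the map $(T_1,\dots,T_R)\mapsto \sum_i T_i$ from $R$-tuples of rank-one tensors covers every tensor of tensor-rank at most $R$. Hence the number of such tensors is at most $\left((q^n-1)^3/(q-1)^2\right)^R = \frac{(q^n-1)^{3R}}{(q-1)^{2R}}$.

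I do not expect a real obstacle here. The step to state carefully is the padding, since each added term must be non-zero.
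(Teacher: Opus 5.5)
Your proof is correct and has the same overall structure as the paper's. You pad a minimal decomposition up to exactly $R$ non-zero rank-one terms, then bound the count by the number of $R$-tuples of rank-one tensors. The padding step is done differently.

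The paper argues by cases on the parity of $R-r$:
\begin{itemize}
\item for $R-r$ even it appends cancelling pairs $\pm\Gamma_0$;
\item for odd $R-r\geq 3$ it uses a hand-built three-term decomposition of the zero tensor, plus further pairs;
\item for $R-r=1$ it splits an existing term, with a further case split. If some $\Gamma_t$ has two non-zero entries, it splits that term's support along a mode. Otherwise it uses an ad hoc two-term construction.
\end{itemize}

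Your identity $a\otimes b\otimes c=(a-a')\otimes b\otimes c+a'\otimes b\otimes c$ raises the term count by exactly one. Any $a'\notin\{0,a\}$ works, and your choice of an independent $a'$ is one such choice. This identity subsumes all of the paper's constructions. The paper's three-term decomposition of $0$ is a cancelling pair with one term split by your identity, and both of its $R-r=1$ constructions are direct instances of it. So once $r\geq 1$ you need no parity or support analysis, and the argument shows exactly where $n\geq 2$ enters. You also derive the count $(q^n-1)^3/(q-1)^2$ of rank-one tensors from the free action of $\{(\lambda,\mu,\nu):\lambda\mu\nu=1\}$; the paper only quotes this number as known.

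One small slip: in the case $r=0$ you only add cancelling pairs, which always gives an even number of terms. For odd $R\geq 3$ that procedure as stated does not reach $R$. The fix is already in your proposal: write $0=a\otimes b\otimes c+(-a)\otimes b\otimes c$, then apply your splitting identity repeatedly, exactly as in the case $r\geq 1$.
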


\begin{Prop}\label{prop:assymptoticcountingextraerrors}
    Let $(R_n)_{n\in {\N_0}}$ and $(\mu_n)_{n \in {\N_0}}$ be non-negative integer sequences with 
    $\mu_n = \lceil{n-1-2R_n}\rceil$ and $R_n \geq 2$ for each $n \in \N\backslash \{1\}$. Then     \begin{equation}
        \label{eq:MinNumberofTensorswithinssdim}\left| \left\{ E\in\F_{q^n}^{n \times n} \ | \ \wt_{\slicesp_1}(E) \leq R_n, \wt_{\slicesp_2}(E) \leq n,\wt_{\fibresp_3}(E) \leq R_n  \right\}\right| \geq q^{n{R_n}^{2}}.
    \end{equation}
    In particular, assuming that $R_n \geq \frac{3+\sqrt{13}}{2}$ for sufficiently large $n$, using the ``big omega'' notation as in \cite[1.2.11.1 (20)]{knuth1997fundamental}, we have
    \begin{equation}
        \label{eq:AssBhvrAdditionalErrors}\left| \left\{ E\in\F_{q^n}^{n \times n} \ | \ \trank(E) > R_n , \text{$\forall C \in\CCC(\llbracket 0,\mu_n\rrbracket^2) : $ \texttt{Alg~\ref{alg:RadicalDecodingFindMin}}($C+E$) $= C$ with parameter $\mu_n$}\right\}\right| = \Omega(q^{nR_n^2}).
    \end{equation}
\end{Prop}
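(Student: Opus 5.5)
Two bounds are needed: a lower bound on the set in \eqref{eq:MinNumberofTensorswithinssdim} by an explicit family, and then the removal of the tensors of small tensor-rank using Lemma~\ref{lemma:UBontensorsofrankatmost}.

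\emph{Step 1: the lower bound \eqref{eq:MinNumberofTensorswithinssdim}.} Write $R=R_n$ and consider the family of matrices $E$ whose rows $E[i_1,:]$ vanish for $i_1>R$, and whose first $R$ rows have all entries in $\Span_{\F_q}(\alpha_1,\dots,\alpha_R)$. Equivalently,
\[
E=\sum_{k=1}^{R}\alpha_k A_k \quad\text{with } A_k\in\F_q^{n\times n} \text{ supported on the first } R \text{ rows}.
\]
For such $E$ the span of the rows has dimension at most $R$, so $\wt_{\slicesp_1}(E)\le R$. All entries lie in an $R$-dimensional space, so $\wt_{\fibresp_3}(E)\le R$. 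The condition $\wt_{\slicesp_2}(E)\le n$ is automatic. The number of such $E$ is $q^{R\cdot n\cdot R}=q^{nR^2}$, since there are $Rn$ free entries, each in a space of size $q^R$. This gives \eqref{eq:MinNumberofTensorswithinssdim}.

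\emph{Step 2: these errors are corrected by Algorithm~\ref{alg:RadicalDecodingFindMin}.} With $\mu_n=n-1-2R$, the family satisfies
\[
\wt_{\fibresp_3}(E)+\min_j\wt_{\slicesp_j}(E)\le 2R= n-\mu_n-1.
\]
Condition \eqref{eq:ConditionRadicalDecoding} therefore holds, and Corollary~\ref{corol:findsmallestsupport} together with the discussion preceding Algorithm~\ref{alg:RadicalDecodingFindMin} shows that the algorithm returns $C$ for every $C$. Hence the set in \eqref{eq:AssBhvrAdditionalErrors} contains the family of Step 1 minus its elements of tensor-rank at most $R$. By Lemma~\ref{lemma:UBontensorsofrankatmost}, there are at most $(q^n-1)^{3R}/(q-1)^{2R}\le q^{3nR}$ of those. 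The count is therefore at least $q^{nR^2}-q^{3nR}$.

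\emph{Step 3: the asymptotics.} We need $q^{3nR}=o(q^{nR^2})$. This holds once $R^2-3R>0$ by a margin, and with $R_n\ge(3+\sqrt{13})/2$ we get $R^2-3R\ge 1$. Then
\[
q^{3nR}/q^{nR^2}\le q^{-n}\to 0,
\]
so $q^{nR^2}-q^{3nR}\ge \tfrac12 q^{nR^2}$ for large $n$, which gives $\Omega(q^{nR_n^2})$.

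The main obstacle is Step 2. The decoding guarantee of Algorithm~\ref{alg:RadicalDecodingFindMin} is only cleanly stated through Corollary~\ref{corol:findsmallestsupport}, which assumes $t\le n-\mu-2$, so I must check that $\wt_{\fibresp_3}(E)\le R\le n-\mu_n-2=2R-1$; this holds since $R\ge 2$. I also need $\mu_n\ge 0$, i.e.\ $2R_n\le n-1$, which I will take as implicit in the hypotheses, the sequence $\mu_n$ being non-negative.
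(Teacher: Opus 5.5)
Your proof is correct, and it follows the same outline as the paper's. Correctability comes from $\wt_{\Sigma\!\slicesp}(E)\le 2R_n=n-\mu_n-1$. The elements of tensor-rank at most $R_n$ are removed via Lemma~\ref{lemma:UBontensorsofrankatmost}. The asymptotics follow from $R_n^2\ge 3R_n+1$.

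The one real difference is how you obtain the lower bound $q^{nR_n^2}$. The paper first applies the mode-$2$/mode-$3$ swap of the associated tensor, which trades the condition $\wt_{\fibresp_3}(E)\le R_n$ for $\wt_{\slicesp_2}(E)\le R_n$. It then builds a surjection $\theta$ onto $\F_{q^n}^{R_n\times R_n}$, whose surjectivity rests on the block matrices $\left[\begin{smallmatrix}B&0\\0&0\end{smallmatrix}\right]$. You instead write the family down directly. Your family is exactly the image of those block matrices under the swap (taking $\omega=\alpha$), so the two counts coincide. Your version avoids both the bijection and the rather delicate definition of $\theta$.

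Your explicit check that $\wt_{\fibresp_3}(E)\le R_n\le n-\mu_n-2$ also makes the appeal to Corollary~\ref{corol:findsmallestsupport} more careful than the paper's one-line assertion that such errors are correctable.
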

\begin{proof}
    Note that $\mu_n = \lceil{n-1-2R_n}\rceil$ is equivalent to $R_n = \lfloor \frac{n-\mu_n-1}{2}\rfloor$, which implies $\mu_n \in \llbracket 0,n-1\rrbracket$ for each $n \in \N$. In particular, any element $E$ satisfying $\wt_{\slicesp_1}(E) \leq R_n$ and $\wt_{\fibresp_3}(E) \leq R_n$ is such that $\wt_{\Sigma\slicesp}(E) \leq 2R_n \leq n-\mu_n -1$ and thus is a correctable error with Algorithm~\ref{alg:RadicalDecodingFindMin} on the code $\CCC(\llbracket 0,\mu_n\rrbracket^2)$.
    
     Consider $\SSS := \left\{ E\in\F_{q^n}^{n \times n} \ | \ \wt_{\slicesp_1}(E) \leq R_n,\wt_{\fibresp_3}(E) \leq R_n  \right\}$. Since the transposition map $t:(\mathbb{F}_q^n)^{\otimes 3} \to (\mathbb{F}_q^n)^{\otimes 3}, \Gamma \mapsto (\Gamma[i,\ell,j])_{i,j,\ell}$ swapping the second and third component is a bijection with $\dim \slicesp_1(t(\Gamma)) = \dim \slicesp_1(\Gamma)$, with $\dim \slicesp_2(t(\Gamma)) = \dim \slicesp_3(\Gamma)$ and with $\dim \slicesp_3(t(\Gamma)) = \dim \slicesp_2(\Gamma)$, then the set $\SSS$ as same cardinality as the set $\Tilde\SSS := \left\{ E\in\F_{q^n}^{n \times n} \ | \ \wt_{\slicesp_1}(E) \leq R_n, \wt_{\slicesp_2}(E) \leq R_n  \right\}$.  
     
   Let $s_1,s_2 \in \llbracket 1,R_n\rrbracket$ be integers. Consider the injective map $\phi_{s_1,s_2} : \F_{q^n}^{s_1 \times s_2} \to \F_{q^n}^{R_n \times R_n}, A \mapsto \left[\begin{smallmatrix}
         A&0\\0&0
     \end{smallmatrix}\right]$. Denote by $\AAA_{s_1,s_2} := \left\{ E\in\F_{q^n}^{n \times n} \ | \ \wt_{\slicesp_1}(E) = s_1, \wt_{\slicesp_2}(E) = s_2  \right\}$. Extending the definition of slice-space weight to non-square matrices as the dimension of the $\F_q$-span of rows or columns, we can consider $\BBB_{s_1,s_2} := \left\{ E\in\F_{q^n}^{s_1 \times n} \ | \  \wt_{\slicesp_1}(E) = s_1, \wt_{\slicesp_2}(E) = s_2  \right\}$ and $\CCC_{s_1,s_2} :=  \left\{ E\in\F_{q^n}^{s_1 \times s_2} \ | \  \wt_{\slicesp_1}(E) = s_1, \wt_{\slicesp_2}(E) = s_2  \right\}$. We can then construct the surjective maps $\psi_{s_1}^{(1)} : \AAA_{s_2,s_2} \to \BBB_{s_1,s_2}$, $\psi_{s_2}^{(2)} : \BBB_{s_2,s_2} \to \CCC_{s_1,s_2}$ and $\psi_{s_1,s_2} : \AAA_{s_1,s_2} \to \CCC_{s_1,s_2}$ defined by $\psi_{s_1,s_2} = \psi_{s_2}^{(2)} \circ \psi_{s_1}^{(1)}$ and by the following relations, where the minima below are defined with a fixed total order, e.g. the lexicographical order, on the set of tuples of integers.
     $$\begin{array}{c}
          \forall E \in \AAA_{s_1,s_2}, \forall i\in\llbracket 1,s_1\rrbracket, \forall j \in \llbracket 1,n\rrbracket:\psi_{s_1}^{(1)}(E)[i,j] = E[I_i,j] \phantom{xxxxxx} \\
          \phantom{x}\text{where } I := \min\{I' \in \llbracket 1,n\rrbracket^{s_1} \ | \ I'_1 < \cdots < I'_{s_1}\ , \ (E[I'_1,:],\cdots,E[I'_{s_1},:]) \ \text{linearly independent over $\F_q$}\}.\\[0.2cm]
        \forall E \in \BBB_{s_1,s_2}, \forall i\in\llbracket 1,s_1\rrbracket, \forall j \in \llbracket 1,s_2\rrbracket:\psi_{s_2}^{(2)}(E)[i,j] = E[i,J_j] \phantom{xxxxxx} \\
          \phantom{x}\text{where } J := \min\{J' \in \llbracket 1,n\rrbracket^{s_2} \ | \ J'_1 < \cdots < J'_{s_1}\ , \ (E[:,J'_1],\cdots,E[:,J'_{s_2}]) \ \text{linearly independent over $\F_q$}\}.
     \end{array}$$
     These maps are well-defined. Indeed, with $E \in \AAA_{s_1,s_2}$ and with $I$ as defined above, $(E[I_1',:],\dots,E[I_{s_1}',:])$ is a basis of $\UUU_{1}(E)$ and thus, in $(\mathbb{F}_q^n)^{\otimes 3}$, $\psi_{s_1}^{(1)}$ corresponds to a succession of two operations preserving the slice-spaces dimension : a multiplication by a well-chosen invertible matrix along the first mode moving the considered rows up and deleting the rows below, and cropping the created zero rows, see \cite[Section~3]{ByrneNeriRavagnaniSheekeyTensorRepresentation2019}. Same goes for $\psi_{s_2}^{(2)}$ and the surjectivity of the two maps is immediate if one considers matrices of the form $\left[\begin{smallmatrix}
         B&0\\0&0
     \end{smallmatrix}\right]$ with $B \in \F_{q^n}^{s_1\times s_2}$ with rows and column linearly independent over $\F_q$. Finally, we can consider the map $\theta : \Tilde{\SSS} \to \F_{q^n}^{R_n\times R_n}$ given by $\theta(E) = (\phi_{\wt_{\slicesp_1}(E),\wt_{\slicesp_2}(E)} \circ \psi_{\wt_{\slicesp_1}(E),\wt_{\slicesp_2}(E)})(E)$ for each $E \in \Tilde{\SSS}$ and verify that the map is surjective, proving inequality (\ref{eq:MinNumberofTensorswithinssdim}).

    Let $\EEE_n :=  \left\{ E\in\F_{q^n}^{n \times n} \ | \ \trank(E) > R_n \ , \ \forall C \in\CCC(\llbracket 0,\mu_n\rrbracket^2) : \text{\texttt{Alg\ref{alg:RadicalDecodingFindMin}}}(C+E,\mu_n) = C \right\}$ be the set such that $\nu_n := |\EEE_n|$ is the integer involved in equation (\ref{eq:AssBhvrAdditionalErrors}).
    By Proposition~\ref{prop:weightsandtensorrank}, $\trank_{\F_q}(E) \leq R_n$ implies $\wt_{\Sigma\slicesp}(E) \leq 2R_n$ for each $E \in \F_{q^n}^{n\times n}$. In particular, we have the following relations.
    \begin{align*}
       \EEE_n & \supseteq  \{E \in \F_{q^n}^{n\times n}\  |\  \trank_{\F_q}(E) > R_n\ ,\  \wt_{\Sigma\slicesp}(E) \leq 2R_n\}\\
       & \supseteq  \{E \in \F_{q^n}^{n\times n}\  | \ \wt_{\Sigma\slicesp}(E) \leq 2R_n\} \backslash   \{E \in \F_{q^n}^{n\times n}\  |\  \trank_{\F_q}(E) \leq  R_n\}\\
       & \supseteq \{E \in \F_{q^n}^{n\times n}\  |\  \wt_{\slicesp_1}(E) \leq R_n\ ,\  \wt_{\slicesp_2}(E) \leq n \ ,\  \wt_{\fibresp_3}(E) \leq R_n \} \backslash   \{E \in \F_{q^n}^{n\times n}\  |\  \trank_{\F_q}(E) \leq  R_n\}
    \end{align*}
    Hence, we have the following inequality by Lemma~\ref{lemma:UBontensorsofrankatmost}.
    \begin{align*}
       \nu_n&  \geq |\{E \in \F_{q^n}^{n\times n}\  |\  \wt_{\slicesp_1}(E) \leq R_n\ ,\  \wt_{\slicesp_2}(E) \leq n \ ,\  \wt_{\fibresp_3}(E) \leq R_n \}|-    |\{E \in \F_{q^n}^{n\times n}\  |\  \trank_{\F_q}(E) \leq  R_n\}|\\
       &  \geq \underbrace{q^{nR_n^2} - \frac{(q^n-1)^{3R_n}}{(q-1)^{2R_n}} }_{:= \eta_n}.
    \end{align*}
    Note that $ \frac{(q^n-1)^{3R_n}}{(q-1)^{2R_n}}=o(q^{(3R_n+1)n})$. Since $R_n \geq \frac{3+\sqrt{13}}{2}$ for sufficiently large $n$, we have $nR_n^2 \geq (3R_n+1)n$ and hence $\eta_n \sim q^{nR_n^2}$. In particular, we obtain that $\frac{\nu_n}{q^{nR_n^2}} \geq \frac{\eta_n}{q^{nR_n^2}}$, thus there are positive constants $L$ and $n_0$ such that $\nu_n \geq Lq^{nR_n^2}$ for each $n \geq n_0$. In other words, we have $\nu_n = \Omega(q^{nR_n^{2}})$.
\end{proof}

\begin{Rq}
    It is non-trivial to check if the lower-bound on the tensor-rank of Corollary~\ref{corol:maxrankslicesvstrank} is met or not. It is however the case for $\mu = n-1$ as the evaluation of $a\tr(bX)\tr(cY)$ has tensor-rank one, with $a,b,c\in \mathbb F_q$ with an argument similar to \cite[Prop~14.44]{Burgisser1997ch14}, with \cite[Theorem 2.24]{lidl1994introduction}.
\end{Rq}

Let $\mu \in \llbracket 0,n-1\rrbracket$. We have seen in Section~\ref{subsec:interleaving} that an interleaved decoder can be applied to $\CCC(\llbracket 0,\mu\rrbracket ^2)$. We obtain an algorithm that can correct with high probability any error $E\in\F_{q^n}^{n\times n}$ such that $\wt_{\fibresp_3}(E) \leq \frac{2n}{2n+1}(n - \mu -1)$. With Proposition~\ref{prop:weightsandtensorrank}, it is possible to consider the algorithm as a tensor-rank decoder. In other words, if $E\in \F_{q^n}^{n\times n}$ , the decoder for the interleaved code can decode any received word $R = C + E$ with $C \in \CCC_\alpha(\llbracket 0,\mu\rrbracket^2)$ if $\trank_{\F_q}(E) \leq \frac{1}{2}(n - \mu -1)$, and if $\trank_{\F_q}(E) \leq \frac{2n}{2n+1}(n - \mu -1)$ with high probability. In this way, like the other decoders introduced above, interleaving decoding allows one to correct errors in the tensor-rank metric with additional correctable errors above the decoding radius inherited from Gabidulin codes. Since the distribution of $\wt_{\fibresp_3}(E)$ is unknown too for a chosen uniformly at random $E$ of given tensor-rank, estimating the probability that the interleaved decoder fails for the tensor-rank metric is an open problem.

\begin{Rq}
    As expressed before, there is a significant asymptotic difference between $|\{E' \in \F_{q^n}^{n\times n}, \wt_{\fibresp_3}(E') = t\}|$ and $|\{E' \in \F_{q^n}^{n\times n}, \trank(E')\leq t\}|$ as $n$ tends towards infinity.     Indeed, the first one is lower-bounded by a power of $q$ with an exponent quadratic in $n$, while the second is upper-bounded by a power of $q$ with an exponent linear $n$. Hence, the comparison of the probability of failure for the uniform distributions of randomly chosen elements respectively with fixed weight or upper-bounded tensor-rank is complicated too, especially without information about the number of high tensor-rank elements in the first set.
\end{Rq}

\section{Generalisation to higher order tensor codes.}
\label{sec:Generalisation}

Let $m \in \mathbb{N}_0$ with $m \geq 2$. {We will call $\boldsymbol{(m+1)}$\textbf{-tensors} of size $n \times \cdots \times n$ over the field $\mathbb{F}_q$ any element of the $n^{m+1}$-dimensional $\mathbb{F}_q$-vector space $\bigotimes_{j = 1}^{m+1} \mathbb{F}_q^n = (\mathbb{F}_q^{n})^{\otimes (m+1)}$. An element $T \in (\mathbb{F}_q^{n})^{\otimes (m+1)}$ will be uniquely associated to its expression as an $(m+1)$-dimensional array in the basis $(e_{i_1} \otimes\cdots \otimes e_{i_{m+1}})_{i \in \llbracket 1,n\rrbracket^{m+1}}$ where we denote by $(e_\iota)_{\iota \in \llbracket 1,n\rrbracket}$ the canonical $\mathbb{F}_q$-basis of $\mathbb{F}_q^n$. Namely, a unique tuple  $(T[i])_{i \in \llbracket 1,n\rrbracket^{m+1}}$ is associated to $T$ such that $ T = \sum_{ i \in \llbracket 1,n \rrbracket^{m+1}} T[i] e_{i_1} \otimes \cdots \otimes e_{i_{m+1}}$. 

The \textbf{tensor-rank} of an element of $\Gamma \in (\mathbb{F}_q^n)^{\otimes (m+1)}$ is  the least amount $\mu$ of \emph{elementary tensors}, \emph{i.e.} tensors of the form $u_{r,1} \otimes \cdots \otimes u_{r,m+1}$ with $u_{r,1}, \dots , u_{r,m+1} \in \mathbb{F}_q^n$ for each $r \in \llbracket 1,\mu\rrbracket$, that sum up to $\Gamma$. 

Note that there is an $\F_{q^n}$-isomorphism $\F_{q^n}\otimes (\F_{q}^n)^{\otimes m} \to \prod_{i \in [n]^m}\F_{q^n}$ and thus every element in $\F_{q^n}\otimes (\F_{q}^n)^{\otimes m}$ can be represented by an $m$-dimensional array with entries in $\F_{q^n}$. Like for the case $m =2$, any basis $\omega$ of $\mathbb{F}_{q^n}/\mathbb{F}_q$ allows us to identify every element of $(\F_q^{n})^{\otimes (m+1)}$ to a unique element of $\F_{q^n}\otimes (\F_{q}^n)^{\otimes m}$ with the $\mathbb{F}_q$-isomorphism $\mathfrak{s}_\omega : (\mathbb{F}_q^n)^{\otimes (m+1)} {\longrightarrow}\F_{q^n}\otimes (\F_{q}^n)^{\otimes m}$, $\Gamma \mapsto  \sum_{I_{m+1} = 1}^n \omega_{I_{m+1}} (\Gamma[i_1,\dots,i_m,I_{m+1}])_{i \in \llbracket 1,n\rrbracket^m}$ and same goes for subspaces.
The theory and the algorithms described above can be generalised to higher order tensors. 
One can define the \textbf{Roth-tensor code} of order $m+1$ of set $\mathcal{S} \subseteq \llbracket 0,n-1 \rrbracket^m$ to be the $\F_{q^n}$-subspace of $\F_{q^n}\otimes (\F_{q}^n)^{\otimes m}$ given by 
$$\mathcal{C}_{\alpha}(\mathcal{S}) := \left\{ \left. \big(f(\alpha_{i_1},\dots,\alpha_{i_{m}}) \big)_{i \in \llbracket 1,n\rrbracket^m} \right| f(X_1,\dots,X_m) \in \mathscr{M}_{n,\mathbb{F}_{q^n}}[X_1,\dots,X_m] , \Supp(f) \subseteq  \mathcal{S}\right\},$$
where $\mathscr{M}_{q,\F_{q^n}}[X_1,\dots,X_m]$ is the $\mathbb{F}_{q^n}$-span of monomials of the form $X_1^{q^{r_1}}\cdots X_m^{q^{r_m}}$ where $(r_1,\dots,r_m)$ ranges in $\llbracket 0,n-1\rrbracket^m$, and where the support $\Supp(P)$ of such a \emph{multilinearised $q$-polynomial} $P(X_1,\dots,X_m)$ is the set of points $(r_1,\dots,r_m)$ corresponding to the multi-degrees of its monomials. Similarly, the code $\mathcal{C}_\alpha(\mathcal{S})$ is a code of $\mathbb{F}_{q^n}$-dimension $|\mathcal{S}|$ and corresponds to an $(m+1)$-order tensor code, \emph{i.e.} an $\mathbb{F}_{q}$-vector subspace of $(\mathbb{F}_q^n)^{\otimes (m+1)}$ defined with parity check equations as shown in \cite{ROTHTensorCodesForRankMetric}, and correspond to a polynomial code denoted $\mathcal{C}(\mathcal{S})$, vector subspace of the quotient $\mathscr{M}_{q,\mathbb{F}_{q^n}}[X_1,\dots,X_m]$ by the subspace $U_{X_\bullet}$ spanned by polynomials of the form $f(X_1,\dots,X_{j-1},X_j^{q^n}-X_j,X_{j+1},\dots,X_n)$ where $f\in \mathscr{M}_{q,\mathbb{F}_{q^n}}[X_1,\dots,X_m]$ and $j \in \llbracket 1,m\rrbracket$. The metrics introduced in the sections above are also generalisable for $m \geq 3$ and we can consider the following elements for a given $T \in \F_{q^n}\otimes (\F_{q}^n)^{\otimes m}$.

\begin{itemize}
    \item The \textbf{tensor-rank} of $T$ is the least amount of tensors of the form $\gamma \otimes \bigotimes_{j = 1}^m u_j$, where $u_j \in \mathbb{F}_{q}^n$ and $\gamma \in \mathbb{F}_{q^n}$, that add up to $T$. It corresponds to the tensor-rank of its associated tensor over $\mathbb{F}_q$, and we denote by $\trank_{\mathbb{F}_q}(T)$ this value.
    \item  The $(m+1)$-\textbf{fibre space} of $T$ is the $\F_q$-span of its entries and $\wt_{\fibresp_{m+1}}(T)$ is its $\mathbb{F}_q$-dimension. The entries correspond to the $(m+1)$-\emph{fibres} of its associated tensor in $(\mathbb{F}_q^n)^{\otimes(m+1)}$, \emph{i.e.} the vectors $\Gamma[i_1,\dots,i_m,:] \in \mathbb{F}_{q}^n$ where $i \in \llbracket 1,n\rrbracket^m$. The $(m+1)$-\textbf{fibre space} to be the $\mathbb{F}_q$-span of the former and $\wt_{\fibresp_{m+1}}(T)$ its $\mathbb{F}_q$-dimension. 
    \item We will discuss in the next part about the fibres of $T \in \F_{q^n} \otimes (\F_q^n)^{\otimes m}$ seen as an $m$-dimensional array over $\F_{q^n}$, that is the vectors of length $n$ and over $\F_{q^n}$ obtained by fixing all coordinates of $T$ but one (that corresponds to the rows and columns of the matrices over $\F_{q^n}$ studied above). They do not correspond to fibres of the associated tensors but correspond to certain of its submatrices instead.
    \item The $j$-\textbf{slices} of $T$ are the subtensors of $T$ of the form $T[:,\dots,:,i_j,:,\dots,:] \in  \F_{q^n}\otimes (\F_{q}^n)^{\otimes (m-1)}$ with $i_j \in \llbracket 1,n\rrbracket$, \emph{i.e.} fixing only the $j^{\text{th}}$ coordinate. The $j$-\textbf{slice space} to be the $\mathbb{F}_q$-span of all $j$-slices and $\wt_{\slicesp_{j}}(T)$ its $\mathbb{F}_q$-dimension. 
\end{itemize}

\begin{Prop}
    Let $\SSS \subseteq \llbracket 0,n-1\rrbracket^m$. Let $j \in \llbracket 1,m \rrbracket$. Denote $\SSS^{\,\hat{\!j}} = \{(s_1,\dots,s_{j-1},s_{j+1},\dots,s_m) \ | \ s\in \SSS\}$. Then any $j$-slice of an element $C \in \CCC_\alpha(\SSS)$ is in $\CCC_\alpha(\SSS^{\,\hat{\!j}})$. Consequently, we have the following:
    $$\min_{\substack{{C} \in \CCC_\alpha(\SSS^{\,\hat{\!j}})\\C\neq 0}}\trank_{\F_q}(C) \leq \min_{\substack{{C} \in \CCC_\alpha(\SSS)\\C \neq 0}}\trank_{\F_q}(C). $$
\end{Prop}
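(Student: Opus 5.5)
Fix $j \in \llbracket 1,m\rrbracket$ and a codeword $C = \big(f(\alpha_{i_1},\dots,\alpha_{i_m})\big)_{i}$ with $\Supp(f)\subseteq \SSS$. Write $f = \sum_{s\in\SSS} f_s X_1^{q^{s_1}}\cdots X_m^{q^{s_m}}$. The $i_j$-th $j$-slice of $C$ is obtained by fixing the $j$-th argument to $\alpha_{i_j}$. Specialising $X_j = \alpha_{i_j}$ in $f$ gives
\[
g(X_1,\dots,\widehat{X_j},\dots,X_m) = \sum_{s\in\SSS} f_s\,\alpha_{i_j}^{q^{s_j}}\prod_{k\neq j}X_k^{q^{s_k}}.
\]
This is a multilinearised $q$-polynomial in the remaining $m-1$ variables, with coefficients in $\F_{q^n}$. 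Its support is contained in $\SSS^{\,\hat{\!j}}$: distinct $s$ may project to the same exponent, and the coefficients then add, but the support can only shrink. The slice is exactly the evaluation of $g$ at $\alpha$ in each remaining coordinate. Hence the slice lies in $\CCC_\alpha(\SSS^{\,\hat{\!j}})$, which proves the first claim. This is the same argument as in Proposition~\ref{prop:GabidulinColumns}.

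For the inequality, take a nonzero $C\in\CCC_\alpha(\SSS)$ of minimal tensor-rank $R$. Since $C\neq 0$, some $j$-slice $C' = C[:,\dots,:,i_j,:,\dots,:]$ is nonzero, and $C'\in \CCC_\alpha(\SSS^{\,\hat{\!j}})$ by the first part. It then suffices to show $\trank_{\F_q}(C')\le \trank_{\F_q}(C)$.

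Write a minimal decomposition $C = \sum_{r=1}^R \gamma_r\otimes u_{r,1}\otimes\cdots\otimes u_{r,m}$ with $\gamma_r \in \F_{q^n}$ and $u_{r,k}\in\F_q^n$. Taking the $i_j$-th slice is linear, and it sends each elementary term to
\[
(u_{r,j})_{i_j}\,\gamma_r\otimes\bigotimes_{k\neq j}u_{r,k},
\]
because $(u_{r,j})_{i_j}\in\F_q$. Each image is therefore either zero or an elementary tensor, so $C'$ is a sum of at most $R$ elementary tensors. Equivalently, slicing is the contraction of the $j$-th $\F_q$-factor against the coordinate form $e_{i_j}^*$, which cannot increase tensor-rank. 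This is the generalisation of the tensor-rank-decreasing projection used in Example~\ref{ex:GabidulinInCode} and of Lemma~\ref{lemma:tensorrankdecompositionasmatrices}.

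Chaining the two steps gives $\min_{C'\neq 0}\trank_{\F_q}(C') \le \trank_{\F_q}(C') \le R$, which is the stated inequality.

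The main obstacle is notational rather than mathematical. One has to check that the identification $\mathfrak{s}_\omega$ commutes with slicing along a mode $j\le m$. It does, because $\omega$ only enters through the last (the $(m+1)$-th) $\F_q$-factor. Consequently the tensor-rank of a slice computed in $\F_{q^n}\otimes(\F_q^n)^{\otimes(m-1)}$ equals that of the corresponding order-$m$ tensor over $\F_q$. A degenerate case is $m=2$, where the slices are vectors, the codes $\CCC_\alpha(\SSS^{\,\hat{\!j}})$ are evaluation codes of linearised polynomials, and tensor-rank reduces to $\F_q$-rank; the same argument applies verbatim.
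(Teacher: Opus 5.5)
Your proposal is correct and follows essentially the same route as the paper. You specialise $X_j=\alpha_{i_j}$ to see that each $j$-slice is the evaluation of a polynomial supported in $\SSS^{\,\hat{\!j}}$, slice a minimal decomposition term by term to show $\trank_{\F_q}$ does not increase, and then use that a nonzero codeword has a nonzero $j$-slice.
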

\begin{proof}
    Consider the elements above. Let $C \in \CCC_\alpha(\SSS)$ and consider the following minimal tensor-rank form of $C$ where $\tau = \trank_{\F_q}(C)$, and where $\gamma^{(r)} \in \F_{q^n}$ and $u_j^{(r)} \in \F_{q}^n$ for each $r \in \llbracket 1,\tau\rrbracket$ and each $j \in \llbracket 1,m\rrbracket$,
    $$C = \sum_{r = 1}^\tau \gamma^{(r)} \cdot u_{1}^{(r)}\otimes \cdots \otimes u_m^{(r)}.$$
    By construction, there exists a polynomial $f(X_1,\dots,X_m) \in \mathscr{M}_{n,\mathbb{F}_{q^n}}[X_1,\dots,X_m]$ with $\Supp(f) \subseteq  \mathcal{S}$ such that $C[i] = f(\alpha_{i_1},\dots,\alpha_{i_m})$ for each $i\in \llbracket 1,n\rrbracket^m$.\\
    Let $I_j \in \llbracket 1,n\rrbracket$. The $I_j^{\ th}$ $j$-slice of $C$ is the tensor $\Tilde C \in (\F_{q^n}^n)^{\otimes (m-1)}$ given for all $(i_1,\dots,i_{j-1},i_{j+1},\dots,i_m) \in \llbracket 1,n\rrbracket^{m-1}$ by $\Tilde{C}[i_1,\dots,i_{j-1},i_{j+1},\dots,i_m] = C[i_1,\dots,i_{j-1},I_j,i_{j+1},\dots,i_m]$. Clearly, $\Tilde{C}$ is the evaluation tensor of the multilinearised $q$-polynomial with $(m-1)$ variables $f(X_1,\dots,X_{j-1},\alpha_{I_j},X_{j+1},\dots,X_m)$ which has support contained in $\SSS^{\,\hat{\!j}}$ by construction. Hence $\Tilde{C} \in \CCC_{\alpha}(\SSS^{\,\hat{\!j}})$, and we have $\trank_{\F_q}(\Tilde C) \leq \trank_{\F_q}(C)$, which can be deduced from  the following expression,
    $$\Tilde{C} =  \sum_{r = 1}^\tau (u_{j,I_{j}}^{(r)} \gamma^{(r)})\cdot u_{1}^{(r)}\otimes \cdots \otimes u_{j-1}^{(r)} \otimes u_{j+1}^{(r)}  \otimes \cdots \otimes u_{m}^{(r)}.$$
    If $C \neq0$, then there exists an index $I_j$ such that the  $I_j^{\ th}$ $j$-slice of $C$ is non-zero, and in that case $\trank_{\F_q}(C)$ is bounded from below by the minimum tensor-rank of all non-zero elements if $\CCC_{\alpha}(\SSS^{\,\hat{\!j}})$. Since this is true for all non-zero element $C \in \CCC_\alpha(\SSS)$, we have the wanted inequality.
\end{proof}

\begin{Rq}
 For $m = 1$, the tensor-rank of an element in $\F_{q^n}\otimes (\F_{q}^n)^{\otimes m} = \F_{q^n}^n$ is exactly its $\F_q$-rank. Hence, the proposition above can be seen as a generalisation to higher order tensors of Proposition~\ref{prop:GabidulinColumns} and Proposition~\ref{prop:weightsandtensorrank}.
\end{Rq}

\subsection{The fibre-wise decoders}\label{subsec:ordermfibre}
As for for $m=2$, where any row or column of a codeword was a Gabidulin codeword, any fibre (as an $m$-tensor) of any element $T \in \mathcal{C}_\alpha(\mathcal{S}) \subseteq \F_{q^n}\otimes (\F_{q}^n)^{\otimes m}$ is a Gabidulin codeword, \emph{i.e.} for each $j \in \llbracket 1,m\rrbracket$ and each $(i_1,\dots,i_{j-1},i_{j+1},\dots,i_m) \in \llbracket 1,n\rrbracket^{m-1}$ we have $T[i_1,\dots,i_{j-1},:,i_{j+1},\dots,i_m]\in \GG_{\max \pi_j(\mathcal{S}) +1}(\alpha)$. And similarly, when $\SSS = \prod_{j=1}^m \llbracket 0,\mu_j\rrbracket$ for certain $\mu_1,\dots,\mu_m \in \llbracket 0,n-1\rrbracket$, then $\CCC_\alpha(\SSS)$ is isomorphic to the tensor product (over $\F_{q^n}$) of Gabidulin codes $\bigotimes_{j=1}^m \GG_{\mu_j+1}(\alpha)$.  

Consequently, one can correct any received word by fixing an integer $j \in \llbracket 1,m\rrbracket$ and correcting every $j$-fibre of the received word with a Gabidulin decoder, hence generalising Algorithm~\ref{alg:fibrewise} which, with $\mathcal{S} = \prod_{j = 1}^m \llbracket 0,\mu_j\rrbracket$, will be able to correct any error $E \in (\mathbb{F}_{q^n}^n)^{\otimes m}$ such that
\begin{equation}
    \label{eq:ordermfibrewisecriterion}\max_{(i_1,\dots,i_{j-1},i_{j+1},\dots,i_m) \in \llbracket 1,n\rrbracket^{m-1}} \rank_{\mathbb{F}_q} E[i_1,\dots,i_{j-1},:,i_{j+1},\dots,i_m] \leq \left\lfloor \frac{n-\mu_j-1}2 \right\rfloor.
\end{equation}

\refstepcounter{algorithm}\label{alg:fibrewisegeneralonedirr}
Therefore, for each integer $j \in \llbracket 1,m\rrbracket$, we obtain an algorithm that we will denote \textbf{Algorithm \thealgorithm${^{[j]}}$} that, with input $n$, $q$, the basis $\alpha$, the set $\SSS = \prod_{j = 1}^m \llbracket 0,\mu_j\rrbracket$, and a tensor $R \in \F_{q^n} \otimes (\F_q^n)^{\otimes m}$, returns the tensor $C \in \F_{q^n} \otimes (\F_q^n)^{\otimes m}$ such that for each $(i_1,\dots,i_{j-1},i_{j+1},\dots,i_m) \in \llbracket 1,n\rrbracket^{m-1}$ we have
\[
C[i_1,\dots,i_{j-1},:,i_{j+1},\dots,i_m] = \texttt{GabDec}(R[i_1,\dots,i_{j-1},:,i_{j+1},\dots,i_m], \mu_j +1,\alpha).
\]
We will again assume that the Gabidulin decoder returns an unspecified word when the rank of the error in the received message exceeds the decoding radius.

\begin{Prop}
    Denote by $S = \prod_{j = 1}^m \llbracket 0,\mu_j\rrbracket$. For each $j \in \llbracket 1,m\rrbracket$, and each $R \in  \F_{q^n}\otimes (\F_q^n)^{\otimes n}$, denote by \texttt{Alg\ref{alg:fibrewisegeneralonedirr}}${}^{[j]}(R)$ the output of Algorithm \ref{alg:fibrewisegeneralonedirr}${}^{[j]}$ on $R$ with input $n$, $q$, $\alpha$ and $\SSS$. For each sequence $(j_1,\dots,j_t) \in \llbracket 1,m\rrbracket^t$ with $t \in \N$, denote by $\KKK_{(j_1,\dots,j_t)} \subseteq \F_{q^n}\otimes (\F_q^n)^{\otimes n}$ the set of correctable errors after a succession of the corresponding algorithm, \emph{i.e.} 
    $$\KKK_{(j_1,\dots,j_t)} = \left\{E \in   \F_{q^n}\otimes (\F_q^n)^{\otimes n} \ | \ \forall C \in \CCC_\alpha(\SSS) : \text{\texttt{Alg\ref{alg:fibrewisegeneralonedirr}}${}^{[j_t]}\circ\cdots \circ$\texttt{Alg\ref{alg:fibrewisegeneralonedirr}}${}^{[j_1]}(C+E) = C$}\right\}.$$
    Then we have the following properties.
    \begin{enumerate}
        \item For each $j^* \in \llbracket 1,m \rrbracket$, we have $\KKK_{(j_1,\dots,j_t)} \subseteq \KKK_{(j_1,\dots,j_t,j^*)}$.
                \item If $t = 2$ and if $j_1,j_2$ are distinct, then $\KKK_{(j_1,j_2)}$ contains any $E \in \F_{q^n}\otimes (\F_q^n)^{\otimes n}$ such that 
        \begin{equation}
        \label{eq:severalwayscountingcriterion}
        \min_{\substack{\III_{j_2} \subseteq \llbracket 0,n-1\rrbracket \\|\III_{j_2}| = \left\lceil \frac{n+\mu_{j_2}+1}{2}\right\rceil}} \max_{i_{j_2} \in \III_{j_2}} \rank_{\F_q} E[i_1,\dots,i_{j_1},:,i_{j_1+1},\dots,i_m] \leq \left\lfloor \frac{n-\mu_{j_1}-1}{2}\right\rfloor 
        \end{equation}
        for each $(i_j)_{j \neq j_1,j_2} \in \llbracket 0,n-1\rrbracket^{m-2}$.
    \end{enumerate}
\end{Prop}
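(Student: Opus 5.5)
For the first property I will use that every $j^*$-fibre of a codeword of $\CCC_\alpha(\SSS)$ lies in $\GG_{\mu_{j^*}+1}(\alpha)$. Take $E\in\KKK_{(j_1,\dots,j_t)}$ and $C\in\CCC_\alpha(\SSS)$. By definition, after applying $\texttt{Alg\ref{alg:fibrewisegeneralonedirr}}^{[j_t]}\circ\cdots\circ\texttt{Alg\ref{alg:fibrewisegeneralonedirr}}^{[j_1]}$ to $C+E$ we hold exactly $C$. Applying $\texttt{Alg\ref{alg:fibrewisegeneralonedirr}}^{[j^*]}$ to $C$ feeds each $j^*$-fibre of $C$ to \texttt{GabDec}. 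Each such fibre is already a Gabidulin codeword, so it is received with a zero error, and \texttt{GabDec} returns it unchanged. Hence the output is again $C$, and $E\in\KKK_{(j_1,\dots,j_t,j^*)}$. The only point to record is that \texttt{GabDec} is the identity on codewords, which holds because the zero error lies within the decoding radius.

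For the second property I will reduce to Theorem~\ref{thm:twowayscriterion}, applied separately to every $2$-dimensional subarray obtained by fixing all coordinates except $i_{j_1}$ and $i_{j_2}$. Fix $(i_j)_{j\neq j_1,j_2}$ and let $C'$, $E'$ and $R'=C'+E'$ be the corresponding $n\times n$ subarrays, with rows indexed by $i_{j_1}$ and columns by $i_{j_2}$ (up to transposition). There are three steps.

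\begin{enumerate}
\item I observe that both algorithms act fibre-by-fibre along modes $j_1$ and $j_2$. So $\texttt{Alg\ref{alg:fibrewisegeneralonedirr}}^{[j_1]}$ acts on $R'$ exactly as Algorithm~\ref{alg:fibrewise} does (decoding columns in $\GG_{\mu_{j_1}+1}(\alpha)$), and $\texttt{Alg\ref{alg:fibrewisegeneralonedirr}}^{[j_2]}$ then acts as the second loop of Algorithm~\ref{alg:fibreWiseTwoDirrs} (decoding rows in $\GG_{\mu_{j_2}+1}(\alpha)$). The subarrays for different fixed indices never interact, so the composite restricted to this subarray is precisely Algorithm~\ref{alg:fibreWiseTwoDirrs} with parameters $\mu_{j_1},\mu_{j_2}$.
\item I check that $C'$ lies in the appropriate code. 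The subarray $C'$ is the evaluation of $f$ with the other variables specialised to fixed $\alpha$'s. This gives a bilinearised polynomial in $X_{j_1},X_{j_2}$ supported in $\llbracket 0,\mu_{j_1}\rrbracket\times\llbracket 0,\mu_{j_2}\rrbracket$, as in the slice proposition just proved. Therefore $C'\in\CCC_\alpha(\llbracket 0,\mu_{j_1}\rrbracket\times\llbracket 0,\mu_{j_2}\rrbracket)$.
\item Hypothesis \eqref{eq:severalwayscountingcriterion} for this choice of fixed indices states that at least $\lceil (n+\mu_{j_2}+1)/2\rceil$ of the $j_1$-fibres of $E'$ have rank at most $\lfloor (n-\mu_{j_1}-1)/2\rfloor$. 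I read the statement's indexing as: $j_1$ is the free coordinate of the fibre and $i_{j_2}$ ranges over $\III_{j_2}$. This is exactly \eqref{eq:maxrankupperboundedbutnoteverywherePRACTICAL}.
\end{enumerate}

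Theorem~\ref{thm:twowayscriterion} (equivalently Lemma~\ref{lemma:ErrorMaxRankcolumnsHammingRows} followed by the Hamming-to-rank bound) then gives that the output on $R'$ is $C'$. Since this holds for every choice of the remaining indices, the full output equals $C$.

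I expect the main obstacle to be bookkeeping rather than mathematics. The real work is making precise that the $m$-dimensional composite decomposes into independent two-dimensional instances. One must also interpret the somewhat garbled indexing in \eqref{eq:severalwayscountingcriterion} consistently, including its $\llbracket 0,n-1\rrbracket$ versus $\llbracket 1,n\rrbracket$ ranges. Finally, the Hamming-weight step in Lemma~\ref{lemma:ErrorMaxRankcolumnsHammingRows} needs the $j_2$-fibres after the first pass to differ from $C$ only in positions $i_{j_2}\notin\III_{j_2}$, which I will state explicitly for each subarray.
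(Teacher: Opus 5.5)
Your argument for the first property is the same as the paper's, whose proof consists only of the remark that \texttt{GabDec} returns its input unchanged when that input is already a codeword of the relevant Gabidulin code. The paper's proof ends there and gives no argument for the second property, so your treatment of it is not a competing route but a completion of what the paper omits.

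Your reduction to Theorem~\ref{thm:twowayscriterion} is sound, for three reasons:
\begin{itemize}
\item Each $j_1$-fibre and each $j_2$-fibre lies in exactly one two-dimensional subarray obtained by fixing $(i_j)_{j\neq j_1,j_2}$. Hence the mode-$j_1$ pass followed by the mode-$j_2$ pass splits into independent copies of Algorithm~\ref{alg:fibreWiseTwoDirrs} with parameters $\mu_{j_1},\mu_{j_2}$.
\item Specialising the remaining variables of $f$ at the fixed $\alpha_{i_j}$ places each subarray of $C$ in $\CCC_\alpha(\llbracket 0,\mu_{j_1}\rrbracket\times\llbracket 0,\mu_{j_2}\rrbracket)$.
\item Your reading of \eqref{eq:severalwayscountingcriterion}, with free coordinate $j_1$ and $i_{j_2}$ ranging over $\III_{j_2}$, turns it into exactly \eqref{eq:maxrankupperboundedbutnoteverywherePRACTICAL} for each subarray. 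This matches the fact that mode $j_1$ is decoded first.
\end{itemize}
How \texttt{GabDec} behaves on the bad $j_1$-fibres (failure, miscorrection, or an unspecified output) does not matter. Only positions with $i_{j_2}\notin\III_{j_2}$ can carry residual error into the second pass, so each $j_2$-fibre has Hamming weight, and hence rank, at most $\lfloor (n-\mu_{j_2}-1)/2\rfloor$. This is the step you planned to spell out via Lemma~\ref{lemma:ErrorMaxRankcolumnsHammingRows}.
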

\begin{proof}
    The first point is clear as Gabidulin decoders (with well chosen parameters) systematically return the input if the input is itself a Gabidulin code (with well chosen parameter).
\end{proof}

\refstepcounter{algorithm}\label{alg:fibrewisegeneralmultidirr}
We will denote by \textbf{Algorithm \thealgorithm} the algorithm that, with input $n$, $q$, the basis $\alpha$, the set $\SSS = \prod_{j = 1}^m \llbracket 0,\mu_j\rrbracket$, and a tensor $R \in \F_{q^n} \otimes (\F_q^n)^{\otimes m}$, returns the tensor $C \in \F_{q^n} \otimes (\F_q^n)^{\otimes m}$ given by 
\[
    C := \text{\texttt{Alg\ref{alg:fibrewisegeneralonedirr}}}{}^{[m]}\circ\cdots \circ \text{\texttt{Alg\ref{alg:fibrewisegeneralonedirr}}}{}^{[1]}(R),
\]
where we denote by \texttt{Alg\ref{alg:fibrewisegeneralonedirr}}${}^{[j]}(R)$ the output of Algorithm \ref{alg:fibrewisegeneralonedirr}${}^{[j]}$ on $R$ with input $n$, $q$, $\alpha$ and $\SSS$.

\subsection{The radical decoder}\label{subsec:ordermrad}
We define the $t^{th}$ radical of an $m$-linear map $f : \mathbb{F}_{q^n} \times \cdots \times \mathbb{F}_{q^n} \to \mathbb{F}_{q^n}$ to be the subspace of $\F_{q^n}$ given by $\Rr\Aaa\Ddd_t(f) :=\{x \in \mathbb{F}_{q^n} \ |  \ \forall (x_j)_{j \in \llbracket 1,m\rrbracket \backslash \{t\}} \in (\mathbb{F}_{q^n})^{ \llbracket 1,m\rrbracket \backslash \{t\}} : f(x_1,\dots,x_m) = 0 \}$. We introduce the generalisation of Problem~\ref{pb:Linearisedproblem} for higher order tensors.

\begin{Pb} \label{pb:LinearisedproblemHOT} Let $\SSS\subseteq \llbracket 0,n-1\rrbracket^m$, $t \in {\mathbb{N}_0}$ and $R \in (\mathbb{F}_{q^n}^{n})^{\otimes m}$. Solve for $(V(Z),N(X_1,\dots,X_m)) \in \mathscr{M}_{q,\mathbb{F}_{q^n}}[Z]\times  \mathscr{M}_{q,\mathbb{F}_{q^n}}[X_1,\dots,X_m]$ satisfying $\qdeg V(Z) \leq t$ and $\Supp(N(X_1,\dots,X_m)) \subseteq \mathcal{S} + \llbracket 0,t\rrbracket$ in 
\begin{equation}
    \label{eq:linearisedProblemHOT}
     \forall i \in \llbracket 1,n\rrbracket^m :  V(R[i]) = N(\alpha_{i_1},\dots,\alpha_{i_m}).
\end{equation}
\end{Pb}

Let $\mu \in \llbracket 0,n-2\rrbracket$ and consider $\mathcal{S} = \llbracket 0,\mu \rrbracket^m$. Let $t \in \mathbb{N}_0$ and let $E \in (\mathbb{F}_{q^n}^{n})^{\otimes m}$ with $\wt_{\fibresp_{m+1}}(E)\leq t$. Then if $R = C + E$ for a given $C \in \mathcal{C}_\alpha(\mathcal{S})$ with associated polynomial $f(X_1,\dots,X_m)$, there exists a polynomial $V(Z) \in \MM_{q,\F_{q^n}}[X_1,\dots,X_m]$ such that $(V(Z),N(X_1,\dots,X_m))$ is a solution of Problem~\ref{pb:Linearisedproblem} with parameters $\SSS$, $t$ and $R$.

Then Theorem~\ref{thm:RadicalCriterion} can be generalised in the following way. The proof of this generalisation is identical to the proof of the initial theorem.

\begin{Thm}\label{thm:RadicalCriterionHOT}
    Let $f(X_1,\dots,X_m) \in \mathcal{C}(\mathcal{S})$ and let $C$ be its corresponding codeword in $\mathcal{C}_{\alpha}(\mathcal{S})$. Consider an error $E \in (\mathbb{F}_{q^n}^{n})^{\otimes m}$ and let $R = E + C$. Let $\Theta \in \llbracket 0,n-\mu-2\rrbracket$ and assume that $\min_{j \in \llbracket 1,m\rrbracket}\wt_{\slicesp_j}(E) \leq \Theta$.
    Then any solution $(V(Z),N(X_1,\dots,X_m))$ of Problem~\ref{pb:LinearisedproblemHOT}  with parameters $\SSS$, $t = n-\mu-1-\Theta$ and $R$ is such that $V(f(X_1,\dots,X_m)) = N(X_1,\dots,X_m)$.     \end{Thm}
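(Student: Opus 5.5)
We follow the proof of Theorem~\ref{thm:RadicalCriterion}, replacing bilinear forms by $m$-linear forms and matrices by order-$m$ arrays over $\F_{q^n}$. Let $(V(Z),N(X_1,\dots,X_m))$ be a solution of Problem~\ref{pb:LinearisedproblemHOT} with $t=n-\mu-1-\Theta$. Set $D := V\circ f - N$. The first step is to note that, since $V$ is $\F_q$-linear and $R=C+E$, we have $D(\alpha_{i_1},\dots,\alpha_{i_m}) = -V(E[i])$ for every $i\in\llbracket 1,n\rrbracket^m$. The evaluation map $(x_1,\dots,x_m)\mapsto D(x_1,\dots,x_m)$ is $\F_q$-multilinear. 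It is therefore entirely determined by its coordinate array $W:=(-V(E[i]))_i$ with respect to the basis $\alpha$ in every variable.

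The second step is to bound a radical from below. Choose $j$ with $\wt_{\slicesp_j}(E)\le\Theta$. Consider the $\F_q$-linear map $\ell:\F_q^n\to\F_{q^n}\otimes(\F_q^n)^{\otimes(m-1)}$ sending $x$ to $\sum_{k} x_k W[:,\dots,:,k,:,\dots,:]$, the combination of $j$-slices of $W$. Applying $V$ entrywise is $\F_q$-linear, so the span of the $j$-slices of $W$ is the image under $V$ of the span of the $j$-slices of $E$. Hence $\dim\im\ell\le\wt_{\slicesp_j}(E)\le\Theta$. By multilinearity, $x$ lies in $\ker\ell$ exactly when $\sum_k x_k\alpha_k\in\Rr\Aaa\Ddd_j(D)$. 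Rank--nullity then gives $\dim_{\F_q}\Rr\Aaa\Ddd_j(D)\ge n-\Theta$.

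The third step bounds the degrees. By the $m$-variable analogue of Lemma~\ref{lemma:CoefCompositionVof}, proved by the same computation, $\Supp(V\circ f)\subseteq\SSS+\llbracket0,t\rrbracket$. The support of $N$ lies in the same set by hypothesis, so every partial $q$-degree of $D$ is at most $\mu+t=n-1-\Theta\le n-1$.

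The final step, and the main obstacle, is the multivariable analogue of Corollary~\ref{corol:CorolRadicalPolynomialMulivar}. The claim is that a nonzero multilinearised $D$ with all partial $q$-degrees below $n$ satisfies $\dim\Rr\Aaa\Ddd_j(D)\le\qdeg_{X_j}D$. To prove it, write $D=\sum_{r} P_r(X_j)\prod_{k\ne j}X_k^{q^{r_k}}$ with $r$ ranging over $\llbracket0,n-1\rrbracket^{m-1}$. The monomials $\prod_{k\ne j}X_k^{q^{r_k}}$ with exponents in $\llbracket0,n-1\rrbracket$ evaluate to linearly independent functions on $\F_{q^n}^{m-1}$; this follows from the invertibility of the Moore matrix tensored $m-1$ times, as in Proposition~\ref{prop:isomPolyAndTensors}. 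Consequently $x\in\Rr\Aaa\Ddd_j(D)$ if and only if $P_r(x)=0$ for all $r$. Some $P_r$ is nonzero, and its kernel has dimension at most $\qdeg P_r\le\qdeg_{X_j}D$. Combining the steps, if $D\ne0$ then $n-\Theta\le\dim\Rr\Aaa\Ddd_j(D)\le n-1-\Theta$, which is a contradiction. Hence $D=0$, that is, $V(f)=N$.
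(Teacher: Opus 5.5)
Your proposal is correct and follows the paper's route: the paper proves this by transplanting the proof of Theorem~\ref{thm:RadicalCriterion} to $m$ variables, with the same steps as yours. These are the evaluation identity $D(\alpha_{i_1},\dots,\alpha_{i_m})=-V(E[i])$, the rank--nullity lower bound $\dim\Rr\Aaa\Ddd_j(D)\geq n-\Theta$ from the $j$-slice space, the support bound giving partial $q$-degrees at most $n-1-\Theta$, and the contradiction with the radical-dimension bound. Your explicit $m$-variable version of Corollary~\ref{corol:CorolRadicalPolynomialMulivar}, via the Kronecker power of the Moore matrix, supplies the step the paper leaves implicit when it calls the proof ``identical''.
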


\begin{Corol}
    Assume the same notation of Theorem~\ref{thm:RadicalCriterionHOT}. Suppose that
    \begin{equation} \label{eq:RadicalCriterionHOT}
        \wt_{\Sigma\!\slicesp}(E) := \wt_{\fibresp_{3}}(E) +  \min_{j \in \llbracket 1,m\rrbracket} \wt_{\slicesp_j}(E) \leq n - \mu -1.
    \end{equation}
    Then there exists $t \in \llbracket 0,n-1\rrbracket$ such that every solution $(V(Z),N(X_1,\dots,X_m))$ of Problem~\ref{pb:LinearisedproblemHOT} with given input $\SSS$, $t$, and any $R \in \CCC_\alpha(\SSS) + E$ is of the form $N(X_1,\dots,X_m) = V( f(X_1,\dots,X_m))$.\end{Corol}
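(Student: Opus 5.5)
The plan is to mirror the proof of Corollary~\ref{corol:DecodingConditionRadical} and reduce everything to Theorem~\ref{thm:RadicalCriterionHOT}. Here $\wt_{\fibresp_3}(E)$ in \eqref{eq:RadicalCriterionHOT} is read as $\wt_{\fibresp_{m+1}}(E)$, the $\F_q$-dimension of the span of the entries of $E$. I would set $t := \wt_{\fibresp_{m+1}}(E)$ and $\Theta := n-\mu-1-t$. Then \eqref{eq:RadicalCriterionHOT} gives directly $\min_{j \in \llbracket 1,m\rrbracket}\wt_{\slicesp_j}(E) \leq \Theta$ and $\Theta \geq 0$, since slice weights are non-negative.

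Before applying Theorem~\ref{thm:RadicalCriterionHOT}, I must check that $\Theta \in \llbracket 0,n-\mu-2\rrbracket$, i.e. that $t\geq 1$. If $E\neq 0$, its entries span a nonzero space, so $t\geq 1$. If $E=0$, I would handle it separately. In that case $R=C$, and any solution satisfies $N(\alpha_{i_1},\dots,\alpha_{i_m}) = V(f(\alpha_{i_1},\dots,\alpha_{i_m}))$ for all $i$. Both $N$ and $V\circ f$ have support in $\SSS+\llbracket 0,t\rrbracket$, and by Lemma~\ref{lemma:CoefCompositionVof} (applied in $m$ variables) this support lies inside $\llbracket 0,n-1\rrbracket^m$ provided $\mu+t\leq n-1$. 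The $m$-variable analogue of Proposition~\ref{prop:isomPolyAndTensors} then gives $N = V\circ f$. Alternatively, I can avoid this special case by choosing $t=1$ and $\Theta = n-\mu-2$ when $E=0$, since every slice weight of $E=0$ is $0\leq\Theta$. This choice is cleaner, so I would use it.

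With these choices, $t = n-\mu-1-\Theta$ lies in $\llbracket 0,n-1\rrbracket$ (because $\mu\leq n-2$), and the hypotheses of Theorem~\ref{thm:RadicalCriterionHOT} hold for $R = C+E$. Every solution of Problem~\ref{pb:LinearisedproblemHOT} with parameters $\SSS$, $t$, $R$ therefore satisfies $N = V\circ f$. The statement quantifies over any $R\in\CCC_\alpha(\SSS)+E$. Since $t$ depends only on $E$, the same $t$ works uniformly, with $f$ being the polynomial of whichever codeword $C$ satisfies $R = C+E$.

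The only real obstacle is that Theorem~\ref{thm:RadicalCriterionHOT} is asserted with a proof "identical" to the order-3 case. Its key ingredient is the $m$-variable version of Corollary~\ref{corol:CorolRadicalPolynomialMulivar}: a nonzero multilinearised polynomial with partial $q$-degrees below $n$ in the other variables has $\dim_{\F_q}\Rr\Aaa\Ddd_j \leq$ its $q$-degree in $X_j$. I would take the theorem as given, as the problem allows, so the corollary itself is a short parameter check.
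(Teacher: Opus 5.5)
Your proposal is correct and follows the paper's argument: this corollary has no written proof in the paper, and its order-3 counterpart, Corollary~\ref{corol:DecodingConditionRadical}, is proved exactly as you propose, by setting $t=\wt_{\fibresp_{m+1}}(E)$, taking $\Theta=n-\mu-1-t$, and invoking the radical criterion (here Theorem~\ref{thm:RadicalCriterionHOT}). The paper omits the check that $\Theta\le n-\mu-2$ (equivalently $t\ge 1$), and your choice of $t=1$ when $E=0$ is a valid way to cover that edge case.
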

Therefore, solving the linear system (\ref{eq:linearisedProblemHOT}) like in Algorithms~\ref{alg:RadicalDecoding} and~\ref{alg:RadicalDecodingFindMin} allows the correction of any error such that $\wt_{\fibresp_{m+1}}(E) +  \min_{j \in \llbracket 1,m\rrbracket}\wt_{\slicesp_j}(E) \leq n - \mu -1$. \refstepcounter{algorithm}\label{alg:generalradicaldecoder}
We will then denote by \textbf{Algorithm \thealgorithm} the algorithm that, with input $n$, $q$, the basis $\alpha$, the set $\SSS = \prod_{j = 1}^m \llbracket 0,\mu_j\rrbracket$, and a tensor $R \in \F_{q^n} \otimes (\F_q^n)^{\otimes m}$ that follows the following steps. First, it computes the smallest integer $\delta \in \llbracket 0,n-1\rrbracket$ such that there exists a non-zero solution of Problem~\ref{pb:LinearisedproblemHOT} such that $\qdeg V(Z) \leq \delta$ and $\Supp N(X,Y) \subseteq \SSS + \llbracket 0,\delta\rrbracket$. It returns an error message if the problem has no non-zero solutions. It then picks such a non-zero solution $(V(Z),N(X_1,\dots,X_m))$ of the problem above and computes a left-Euclidean like factorisation as in \textbf{Algorithm \ref{alg:divisionalgorithm}}, which returns $f(X_1,\dots,X_m)$ with $\Supp(f) \subseteq \SSS$ if $V(f(X_1,\dots,X_m)) = N(X_1,\dots,X_m)$. If that is not possible, it returns an error message. Finally, it computes the codeword $C \in \CCC_\alpha(\SSS)$ associated to $f$, and returns it.

\subsection{Comparison}
Following the same principle as in Proposition~\ref{prop:comparisonfibreradical}, we can show that for a given $\mu \in \llbracket 0,n-1\rrbracket$, any element $E \in (\F_{q^n}^n)^{\otimes m}$ satisfying (\ref{eq:RadicalCriterionHOT}) satisfies Equation (\ref{eq:ordermfibrewisecriterion}) for at least one choice of $j$, but not necessarily strictly more than one. Moreover, there exist tensors in $(\F_{q^n}^n)^{\otimes m}$ that satisfy (\ref{eq:ordermfibrewisecriterion}) for at least one choice of $j$ but do not satisfy  (\ref{eq:linearisedProblemHOT}). One can proceed with a generalisation of the elements in the proof of Proposition~\ref{prop:comparisonfibreradical}, but we can also construct such an $E$ satisfying (\ref{eq:linearisedProblemHOT}) satisfy equation (\ref{eq:ordermfibrewisecriterion}) for exactly one choice of $j$. We give such elements in the example below.
\begin{Ex}
   Let $a = n -\mu -2$, the generalisation of the matrices $E_1$ and $E_2$ mentioned in Proposition~\ref{prop:comparisonfibreradical}, respectively into the great-diagonal tensor $\Tilde{E_1} = (\ind{ \{i_1 = \cdots =  i_m\}})_{i \in \llbracket 1,n\rrbracket^m}$ and $\Tilde{E_2} = (\ind{\{i_1 \leq a\}}\alpha_{i_1})_{i \in \llbracket 1,n\rrbracket^m}$, are such counter-examples. Now consider $\Tilde{E_3}  \in (\F_{q^n}^n)^{\otimes m}$ given for each $i \in \llbracket 1,n\rrbracket^m$ by
    $$\Tilde{E_3}[i] = \left\{\begin{array}{rcl}  
\alpha_{i_1} &\text{if}& i_1 \leq a, i_2 = 1,\dots,i_{m-1} = 1,\\
\alpha_{i_2} &\text{if}& i_1 =1 , 2 \leq i_2 \leq a, i_3 = 1 ,\dots,i_{m-1} = 1 ,\\
&\vdots & \\
\alpha_{i_{m-2}} &\text{if}& i_1 =1,\dots, i_{m-3} =1,  2 \leq i_{m-2}\leq a,i_{m-1} =1,\\
\alpha_{i_{m-1}} &\text{if}& i_1 =1, i_2 =1,,\dots, i_{m-2}=1, 2 \leq i_{m-1} \leq a ,\\
0&&else.
    \end{array}\right. $$
With this definition, we can easily see that $\wt_{\fibresp_{m+1}}(\Tilde{E_3})= a$, that $\wt_{\slicesp_m}(\Tilde{E_3}) = 1$ for definition of $\Tilde{E_3}[i]$ does not depend on $i_m$, and that the maximum $\mathbb{F}_q$-rank of the $j$-fibres for $j<m$ is equal to $a$. Indeed, every line of the definition exhibits a fibre achieving this maximum.  It is also possible to check that $\wt_{\slicesp_j}(\Tilde{E_3}) = a$ for $j<m$ and that the maximum $\mathbb{F}_q$-rank of the $m$-fibres is $1$. For $m = 3$, we have the shape illustrated in Figure~\ref{fig:ParticularErrors}.
\end{Ex}

Like in Subsection~\ref{subsec:distancesinthecode}, we can show that for any $T \in\F_{q^n}^{n\times n}$, the weights $\wt_{\slicesp_j}(T), j \in \llbracket 1,m\rrbracket$ and $\wt_{\fibresp_{m+1}}(T)$ are bounded from above by $\trank_{\F_q}(T)$, and that $\wt_{\Sigma\!\slicesp}(T)\leq 2\trank_{\F_{q}}(T)$. In particular, we are in the same configuration as for the case $m = 2$ and we obtain the following corollary.

\begin{Corol}
    The $\mathbb F_{q}$-tensor code $\mathcal C_\alpha(\llbracket 0,\mu\rrbracket^m)$ of dimension $n(\mu +1)^m$ over $\F_q$ endowed with the tensor-rank metric is $\left\lfloor\frac{n-\mu-1}{2}\right\rfloor$-error correcting, and  Algorithms~\ref{alg:fibrewisegeneralonedirr}${}^{[j]}$,~\ref{alg:fibrewisegeneralmultidirr} and~\ref{alg:generalradicaldecoder} can correct errors of tensor-rank at most $\left\lfloor\frac{n-\mu-1}{2}\right\rfloor$.
\end{Corol}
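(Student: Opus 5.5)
The approach is to reduce the statement to two ingredients: a lower bound on the tensor-rank distance of $\CCC_\alpha(\llbracket 0,\mu\rrbracket^m)$, and the decoding guarantees of each algorithm expressed in the auxiliary weights. The bridge between them is the inequality that every auxiliary weight is at most $\trank_{\F_q}$.

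\textbf{Step 1 (auxiliary weights are bounded by tensor-rank).} For $T\in\F_{q^n}\otimes(\F_q^n)^{\otimes m}$ with associated tensor $\Gamma$, I would take a minimal decomposition $T=\sum_{r=1}^{\tau}\gamma^{(r)}u_1^{(r)}\otimes\cdots\otimes u_m^{(r)}$.
\begin{itemize}
\item Every entry of $T$ lies in $\Span_{\F_q}(\gamma^{(1)},\dots,\gamma^{(\tau)})$, so $\wt_{\fibresp_{m+1}}(T)\le\tau$.
\item Every $j$-slice lies in the $\F_q$-span of the $\tau$ tensors $\gamma^{(r)}\bigotimes_{k\ne j}u_k^{(r)}$, so $\wt_{\slicesp_j}(T)\le\tau$. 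This is the higher-order analogue of Proposition~\ref{prop:weightsandtensorrank}.
\item Every $\F_{q^n}$-fibre $T[i_1,\dots,:,\dots,i_m]$ equals $\sum_r c_r\gamma^{(r)}u_j^{(r)}$ with $c_r\in\F_q$. Its $\F_q$-rank is therefore at most $\tau$, since its entries lie in the span of the $\gamma^{(r)}$. This generalises Corollary~\ref{corol:maxrankslicesvstrank}.
\end{itemize}
It follows in particular that $\wt_{\Sigma\!\slicesp}(T)\le 2\tau$.

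\textbf{Step 2 (minimum distance and the dimension claim).} By the fibre observation of Subsection~\ref{subsec:ordermfibre}, each $j$-fibre of a codeword lies in $\GG_{\mu+1}(\alpha)$, which has minimum rank distance $n-\mu$. A nonzero codeword has some nonzero fibre, so Step 1 gives $\trank_{\F_q}\ge n-\mu$. Hence the code corrects $\lfloor (n-\mu-1)/2\rfloor$ tensor-rank errors, as for Gabidulin codes, by the triangle inequality for the metric $d_{\trank}$. The dimension claim follows directly: $\dim_{\F_{q^n}}=|\SSS|=(\mu+1)^m$, so $\dim_{\F_q}=n(\mu+1)^m$.

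\textbf{Step 3 (the algorithms).} Let $\trank_{\F_q}(E)\le\lfloor (n-\mu-1)/2\rfloor$.
\begin{itemize}
\item \emph{Algorithm~\ref{alg:fibrewisegeneralonedirr}${}^{[j]}$.} By Step 1 every $j$-fibre of $E$ has rank at most $\lfloor (n-\mu-1)/2\rfloor$. Criterion \eqref{eq:ordermfibrewisecriterion} therefore holds, and every \texttt{GabDec} call succeeds.
\item \emph{Algorithm~\ref{alg:fibrewisegeneralmultidirr}.} After the first pass the output is exactly $C$. Each subsequent pass receives a word whose fibres are Gabidulin codewords and returns it unchanged, which is the first point of the preceding proposition.
\item \emph{Algorithm~\ref{alg:generalradicaldecoder}.} Here $\wt_{\Sigma\!\slicesp}(E)\le 2\lfloor (n-\mu-1)/2\rfloor\le n-\mu-1$, so condition \eqref{eq:RadicalCriterionHOT} holds.
\end{itemize}

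The main obstacle is Algorithm~\ref{alg:generalradicaldecoder}. Theorem~\ref{thm:RadicalCriterionHOT} shows only that any solution at the right $t$ factors. One still needs the higher-order analogue of Corollary~\ref{corol:findsmallestsupport}: the minimal $\delta$ equals $\wt_{\fibresp_{m+1}}(E)$, and the solution found there factors.

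I would reproduce that argument verbatim. Fix $j$ minimising $\wt_{\slicesp_j}(E)$. For $\delta<\wt_{\fibresp_{m+1}}(E)$, Theorem~\ref{thm:RadicalCriterionHOT} applies with $\Theta=n-\mu-1-\delta$, so every solution satisfies $V(E)=0$. This forces $V=0$ by the $q$-degree bound on kernels, and hence $N=0$. At $\delta=\wt_{\fibresp_{m+1}}(E)$ the interpolating polynomial provides a nonzero solution, and the theorem makes it factor. The multivariate radical bound (Corollary~\ref{corol:CorolRadicalPolynomialMulivar}, applied after fixing all variables except the $j$-th) is what the proof of Theorem~\ref{thm:RadicalCriterionHOT} requires. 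I would check that it extends by treating $N-V\circ f$ as a linearised polynomial in $X_j$ whose coefficients are polynomials in the remaining variables.
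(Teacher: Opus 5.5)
Your proposal is correct and follows essentially the paper's route. You bound $\wt_{\slicesp_j}$, $\wt_{\fibresp_{m+1}}$ and the fibre ranks by $\trank_{\F_q}$ (so $\wt_{\Sigma\!\slicesp}\le 2\trank_{\F_q}$) and invoke each algorithm's weight-based guarantee; the differences are that you derive these bounds directly from a minimal decomposition instead of via \cite{Burgisser1997ch14} and Lemma~\ref{lemma:SliceSpacefibreSpaceDimension}, and that you spell out the higher-order analogue of Corollary~\ref{corol:findsmallestsupport} for the minimal-$\delta$ search in Algorithm~\ref{alg:generalradicaldecoder}, which the paper leaves implicit.
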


With the same arguments used in section~\ref{sec:Comparison}, we can give the complexity of the generalised algorithms introduced in the subsections above.
\begin{itemize}
    \item For each $j \in \llbracket 1,m\rrbracket$, the complexity of Algorithm \ref{alg:fibrewisegeneralonedirr}${}^{[j]}$ is $n^{m-1}$ times the complexity of the Gabidulin decoder, hence $\mathcal{O}(n^{m+2}\log n)$ computations on $\mathbb{F}_q$. 
    \item The complexity of Algorithm \ref{alg:fibrewisegeneralmultidirr} is $m n^{m-1}$ times the complexity of the Gabidulin decoder, hence $\mathcal{O}(m n^{m+2}\log n)$ computations on $\mathbb{F}_q$. 
    \item Let $\SSS = \llbracket 0,\mu\rrbracket^m$ and let $t \in \llbracket 0,n-1\rrbracket$, the set $\mathcal{S} + \llbracket 0,t\rrbracket$ is the disjoint union of $\mathcal{S}$ and of $t$ points of the form $(x_1,\dots,x_{j-1},\mu,x_{j+1},\dots,x_m)$ for $j \in \llbracket 1,m\rrbracket$ and $x_1,\dots,x_{j-1},x_{j+1},\dots,x_m \in \llbracket 0,\mu \rrbracket$; in particular we have that $|\mathcal{S} + \llbracket 0,t\rrbracket| = (\mu +1)^m + t\ o(\mu^m)$. And since equation (\ref{eq:linearisedProblemHOT}) corresponds to a linear system over $\mathbb{F}_{q^n}$ with $n^m$ equations and $t + 1  + |\mathcal{S} + \llbracket 0,t\rrbracket|$ unknowns. Therefore, the asymptotic number of Algorithm \ref{alg:generalradicaldecoder} is $\mathcal{O}(n^{3(m+1)})$.
\end{itemize}

\section*{Acknowledgements}
This work has emanated from research conducted with the financial support of the European Union MSCA Doctoral Networks, HORIZON-MSCA-2021-DN-01, Project 101072316, the French \emph{Agence Nationale de la Recherche} project ANR-21-CE39-0009-BARRACUDA and by \emph{Plan France 2030} ANR-22-PETQ-0008. We thank the anonymous reviewer of \cite{BCF25} for the interesting Remark~\ref{rq:ContextualisationTabular}.

\addcontentsline{toc}{section}{Bibliography}
\bibliographystyle{IEEEtran}
\bibliography{DecodingTensorCodes_bibio}

\appendix\newpage
\section{Figures}

\begin{sidewaystable}
    \centering
    \small 
    \begin{tabular}{|c|c|c|c|c|c|}
        \hline
        \makecell{\textbf{Algorithm}} & 
        \makecell{\textbf{On code}} & 
        \makecell{\textbf{Redundancy as}\\\textbf{ $\F_{q^n}$-vector space}} & 
        \makecell{\textbf{Best known condition for}\\\textbf{worst-case decoding}} & 
        \makecell{\textbf{Lower-bound on}\\\textbf{tensor-rank minimum}\\\textbf{distance}} & 
        \makecell{\textbf{Algorithm}\\\textbf{complexity}} \\
        \hline\hline
        \makecell{Alg~\ref{alg:fibrewise}\\{\tiny Column-wise dec.}} & $\CCC_\alpha(\llbracket 0,\mu_1\rrbracket \times \llbracket 0,\mu_2\rrbracket)$ & $n^2 - (\mu_1+1)(\mu_2+1)$&\makecell{$\forall i_2 \in \llbracket 1,n\rrbracket : $\\$\rank_{\F_q}E[:,i_2] \leq \frac{n-\mu_1-1}{2}$} &$n-\mu_1$&$O(n^4\log n)$\\\hline
        \makecell{Alg~\ref{alg:fibrewise}'\\{\tiny Row-wise dec.}} & $\CCC_\alpha(\llbracket 0,\mu_1\rrbracket \times \llbracket 0,\mu_2\rrbracket)$ & $n^2 - (\mu_1+1)(\mu_2+1)$&\makecell{$\forall i_1 \in \llbracket 1,n\rrbracket : $\\$\rank_{\F_q}E[i_1,:] \leq \frac{n-\mu_2-1}{2}$} 
        &$n-\mu_1$
        &$O(n^4\log n)$\\\hline
        \makecell{Alg~\ref{alg:fibreWiseTwoDirrs}\\{\tiny Fibre-wise dec.}} 
        & $\CCC_\alpha(\llbracket 0,\mu_1\rrbracket \times \llbracket 0,\mu_2\rrbracket)$ 
        & $n^2 - (\mu_1+1)(\mu_2+1)$
        &\makecell{$\exists J \subseteq \llbracket 1,n\rrbracket, |I| = \left\lceil \frac{n + \mu_2 + 1}{2} \right\rceil$\\$\forall i_2 \in J : $\\$\rank_{\F_q}E[:,i_2] \leq \frac{n-\mu_1-1}{2}$} &$n-\mu_1$
        &$O(n^4\log n)$\\\hline
        \makecell{Alg~\ref{alg:RadicalDecoding}\\{\tiny Fixed wt radical dec.}}&$\CCC_\alpha(\llbracket 0,\mu\rrbracket^2)$& $n^2 - (\mu+1)^2$&\makecell{With specified $t$,\\$\min_{j}\wt_{\slicesp_j}(E)\leq n-\mu-1-t$\\and $\wt_{\fibresp_3}(E)\leq t$}&$2\min\left\{{\begin{smallmatrix}
             t,\\n-\mu-t-1
        \end{smallmatrix}} \right\} +1$ & $O(n^9)$\\\hline
        \makecell{Alg~\ref{alg:RadicalDecodingFindMin}\\{\tiny Radical dec.}}&$\CCC_\alpha(\llbracket 0,\mu\rrbracket^2)$& $n^2 - (\mu+1)^2$&\makecell{$\wt_{\fibresp_3}(E) + \min_{j}\wt_{\slicesp_j}(E)$\\$ \leq n-\mu-1$}&$ n-\mu$ & $O(n^9)$\\\hline\hline
        \makecell{Roth decoder\\{\tiny Trank 1 correction}} & $\CCC(n,3,3;q)$ & $\leq 3$ & $\trank(E) \leq 1$ &3& $O(n^3)$ \\\hline
        \makecell{Roth decoder\\{\tiny Trank $ 2$ correction}} 
        &  $\CCC(n,5,3,q)$ 
        & $\leq 10$ 
        & $\trank(E) \leq 2$ 
        & 5
        &$O(qn^3)$ \\\hline
    \end{tabular}
    \caption{Summary of algorithms and code properties}
    \label{fig:sumupalgorithms}
\end{sidewaystable}

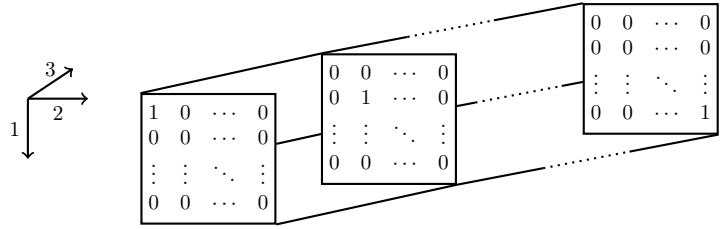
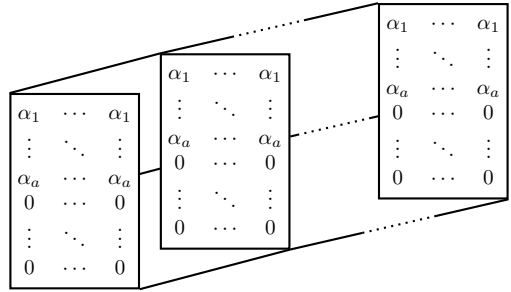
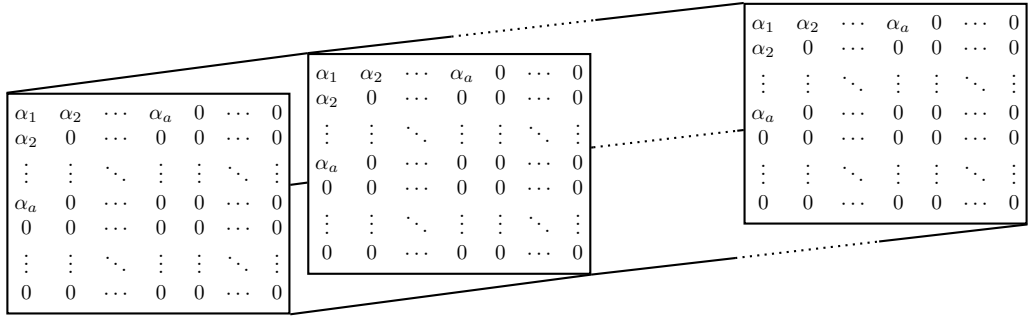
\begin{figure}[H]
    \centering
    \subfloat[Illustration of $\Tilde{E_1}$]{
   \begin{tikzpicture}[thick,scale=0.8, every node/.style={transform shape}]
    
    \draw[thick,->] (-3,1) -- (-3,0) node [midway, left] {1};
    \draw[thick,->] (-3,1) -- (-2,1) node [midway, below] {2};
    \draw[thick,->] (-3,1) -- (-2.25,1.5) node [midway, above] {3};

        \node[draw,rectangle] (M) at (0,0) {$\begin{matrix}
           1 & 0 & \cdots & 0 \\
            0 & 0 & \cdots & 0 \\
            \vdots &  \vdots &\ddots & \vdots \\ 
            0& 0&\cdots & 0
        \end{matrix}$};
        \node[draw,rectangle] (N) at (3,0.66) {$\begin{matrix}
          0 & 0 & \cdots & 0 \\
            0 & 1 & \cdots & 0 \\
            \vdots &  \vdots&\ddots & \vdots \\ 
            0& 0&\cdots & 0
        \end{matrix}$};
        \node[draw,rectangle] (P) at (7.35,1.495) {$\begin{matrix}
          0 & 0 & \cdots & 0 \\
            0 & 0 & \cdots & 0 \\
            \vdots &  \vdots&\ddots & \vdots \\ 
            0& 0&\cdots & 1
        \end{matrix}$};
        \draw (M)--(N);
        \draw (N) -- ($(N)!0.33!(P)$);
         \draw[dotted] ($(N)!0.33!(P)$) -- ($(N)!0.66!(P)$);
        \draw ($(N)!0.66!(P)$) -- (P);
        \draw (M.north west) -- (N.north west);
        \draw (M.south east) -- (N.south east);
        \draw (N.north west) -- ($(N.north west)!0.33!(P.north west)$);
         \draw[dotted] ($(N.north west)!0.33!(P.north west)$) -- ($(N.north west)!0.66!(P.north west)$);
        \draw ($(N.north west)!0.66!(P.north west)$) -- (P.north west);
	\draw (N.south east) -- ($(N.south east)!0.33!(P.south east)$);
         \draw[dotted] ($(N.south east)!0.33!(P.south east)$) -- ($(N.south east)!0.66!(P.south east)$);
        \draw ($(N.south east)!0.66!(P.south east)$) -- (P.south east);
    \end{tikzpicture}
    }\\
    \subfloat[Illustration of $\Tilde{E_2}$]{
\begin{tikzpicture}[thick,scale=0.8, every node/.style={transform shape}]
        \node[draw,rectangle] (M) at (0,0) {$\begin{matrix}
            \alpha_1 & \cdots & \alpha_1 \\
            \vdots & \ddots & \vdots \\ 
            \alpha_a & \cdots &\alpha_a \\
             0& \cdots & 0\\
            \vdots & \ddots & \vdots \\ 
            0& \cdots & 0
        \end{matrix}$};
        \node[draw,rectangle] (N) at (2.5,0.66) {$\begin{matrix}
          \alpha_1 & \cdots & \alpha_1 \\
            \vdots & \ddots & \vdots \\ 
            \alpha_a & \cdots &\alpha_a \\
             0& \cdots & 0\\
            \vdots & \ddots & \vdots \\ 
            0& \cdots & 0
        \end{matrix}$};
        \node[draw,rectangle] (P) at (6.125,1.495) {$\begin{matrix}
             \alpha_1 & \cdots & \alpha_1 \\
            \vdots & \ddots & \vdots \\ 
            \alpha_a & \cdots &\alpha_a \\
             0& \cdots & 0\\
            \vdots & \ddots & \vdots \\ 
            0& \cdots & 0
        \end{matrix}$};
        \draw (M)--(N);
        \draw (N) -- ($(N)!0.33!(P)$);
         \draw[dotted] ($(N)!0.33!(P)$) -- ($(N)!0.66!(P)$);
        \draw ($(N)!0.66!(P)$) -- (P);
        \draw (M.north west) -- (N.north west);
        \draw (M.south east) -- (N.south east);
        \draw (N.north west) -- ($(N.north west)!0.33!(P.north west)$);
         \draw[dotted] ($(N.north west)!0.33!(P.north west)$) -- ($(N.north west)!0.66!(P.north west)$);
        \draw ($(N.north west)!0.66!(P.north west)$) -- (P.north west);
	\draw (N.south east) -- ($(N.south east)!0.33!(P.south east)$);
         \draw[dotted] ($(N.south east)!0.33!(P.south east)$) -- ($(N.south east)!0.66!(P.south east)$);
        \draw ($(N.south east)!0.66!(P.south east)$) -- (P.south east);
    \end{tikzpicture}
    }\\
    \subfloat[Illustration of $\Tilde{E_3}$]{
\begin{tikzpicture}[thick,scale=0.8, every node/.style={transform shape}]
        \node[draw,rectangle] (M) at (0,0) {$\begin{matrix}
            \alpha_1 & \alpha_2 & \cdots & \alpha_a &0 &\cdots & 0\\
            \alpha_2 & 0  & \cdots & 0 & 0  & \cdots & 0 \\ 
            \vdots &  \vdots & \ddots &\vdots &  \vdots & \ddots &\vdots\\
            \alpha_a &0  & \cdots & 0 & 0  & \cdots & 0  \\
             0& 0&\cdots & 0 & 0  & \cdots & 0\\
            \vdots & \vdots & \ddots & \vdots &  \vdots & \ddots &\vdots \\ 
            0&0& \cdots & 0 & 0  & \cdots & 0
        \end{matrix}$};
        \node[draw,rectangle] (N) at (5,0.66) {$\begin{matrix}
            \alpha_1 & \alpha_2 & \cdots & \alpha_a &0 &\cdots & 0\\
            \alpha_2 & 0  & \cdots & 0 & 0  & \cdots & 0 \\ 
            \vdots &  \vdots & \ddots &\vdots &  \vdots & \ddots &\vdots\\
            \alpha_a &0  & \cdots & 0 & 0  & \cdots & 0  \\
             0& 0&\cdots & 0 & 0  & \cdots & 0\\
            \vdots & \vdots & \ddots & \vdots &  \vdots & \ddots &\vdots \\ 
            0&0& \cdots & 0 & 0  & \cdots & 0
        \end{matrix}$};
        \node[draw,rectangle] (P) at (12.25,1.495) {$\begin{matrix}
            \alpha_1 & \alpha_2 & \cdots & \alpha_a &0 &\cdots & 0\\
            \alpha_2 & 0  & \cdots & 0 & 0  & \cdots & 0 \\ 
            \vdots &  \vdots & \ddots &\vdots &  \vdots & \ddots &\vdots\\
            \alpha_a &0  & \cdots & 0 & 0  & \cdots & 0  \\
             0& 0&\cdots & 0 & 0  & \cdots & 0\\
            \vdots & \vdots & \ddots & \vdots &  \vdots & \ddots &\vdots \\ 
            0&0& \cdots & 0 & 0  & \cdots & 0
        \end{matrix}$};
        \draw (M)--(N);
        \draw (N) -- ($(N)!0.33!(P)$);
         \draw[dotted] ($(N)!0.33!(P)$) -- ($(N)!0.66!(P)$);
        \draw ($(N)!0.66!(P)$) -- (P);
        \draw (M.north west) -- (N.north west);
        \draw (M.south east) -- (N.south east);
        \draw (N.north west) -- ($(N.north west)!0.33!(P.north west)$);
         \draw[dotted] ($(N.north west)!0.33!(P.north west)$) -- ($(N.north west)!0.66!(P.north west)$);
        \draw ($(N.north west)!0.66!(P.north west)$) -- (P.north west);
	\draw (N.south east) -- ($(N.south east)!0.33!(P.south east)$);
         \draw[dotted] ($(N.south east)!0.33!(P.south east)$) -- ($(N.south east)!0.66!(P.south east)$);
        \draw ($(N.south east)!0.66!(P.south east)$) -- (P.south east);
    \end{tikzpicture}
    }
    \caption{Illustration of the uncorrectable errors for $m = 3$, \emph{i.e.} 4-tensors on the $\mathbb{F}_q$}
    \label{fig:ParticularErrors}
\end{figure}

\begin{figure}[H]
    \centering \includegraphics[width=\linewidth]{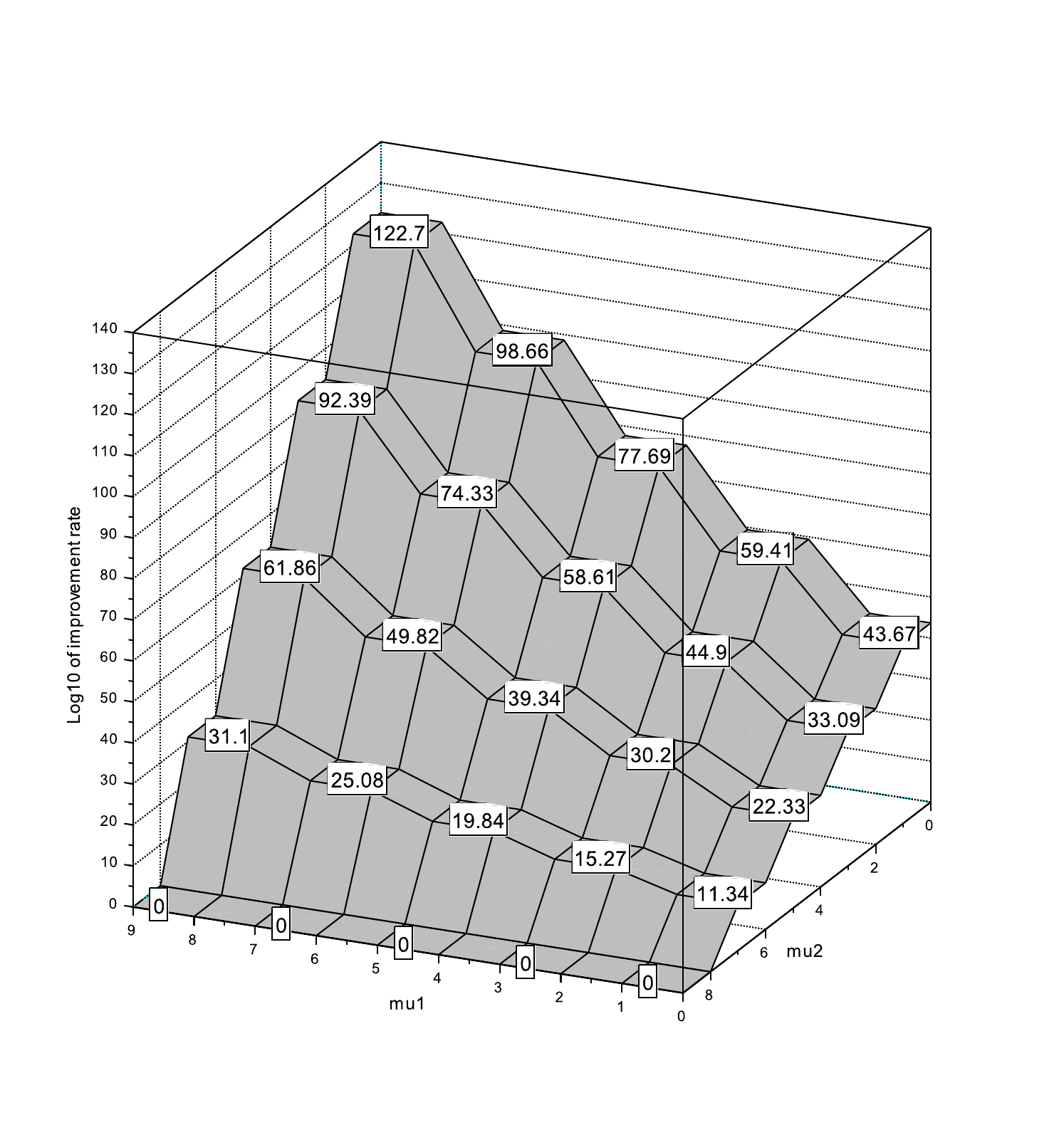} \\[-1cm]
    \begin{minipage}{0.8\linewidth}
        For $q = 2$ and $n = 10$, computes $\log_{10}(N_2/N_1)$ where $N_i$ is the number of correctable errors with Algorithm $i$ for the code $\mathcal{C}_\alpha(\llbracket 0,\mu_1\rrbracket \times \llbracket 0,\mu_2\rrbracket)$.
        \emph{e.g. Algorithm 2 corrects approximately $10^{39.3}$ times more errors than Algorithm 1 on $\mathcal{C}_\alpha(\llbracket 0,5\rrbracket^2)$.} 
    \end{minipage}  
       
    \caption{Improvement rate between the correctable errors of the two fibre-wise technique.}
    \label{fig:ImprovBoundFibreDec}
\end{figure}

\section{Details on the left Euclidean factorisation algorithm}
\label{sec:DetailsFactoringAlgorithm}
We will demonstrate a method to recover a bilinearised $q$-polynomial $f(X,Y)$ from a given pair of $q$-polynomials $(V(Z),N(X,Y))$ satisfying $N(X,Y) = V(f(X,Y))$ where $V(Z) \in \MM_{q,\F_{q^n}}[Z]$ is non-zero and $N(X,Y) \in \MM_{q,\F_{q^n}}[X,Y]$, as an alternative to the Euclidean division algorithm used in classical Gabidulin decoding in the non-commutative ring of linearised $q$-polynomials. For each $u \in \F_{q^n}$ and $\ell \in {\N_0}$, denote by $u^{1/{q^\ell}}$ the preimage of $u$ under the isomorphism $x \mapsto x^{q^\ell}$.

\begin{Lemma}\label{label:AlgoFindf}
Let $\mathcal{S}$ be a non-empty subset of $\llbracket 0,n-1 \rrbracket^2$. Let $V(Z) = \sum_{\ell = 0}^t v_\ell Z^{q^\ell}\in \mathscr{M}_{q,\mathbb{F}_{q^n}}[Z]$ be non-zero and let $f(X,Y) = \sum_{ s \in \mathcal{S}} f_{ s} X^{q^{ s_1}}Y^{q^{ s_2}} \in \mathcal{C}(\mathcal{S})$. Denote the integer $\ell_{min} := \min\{\ell \in \llbracket 0,t\rrbracket \ | \ v_\ell \neq 0 \}$. 

Let $N(X,Y) :=V(f(X,Y)) =\sum_{ r \in \mathcal{S} + \llbracket 0,t\rrbracket} n_{r} X^{q^{r_1}}Y^{q^{r_1}}$ defined by $N(X,Y) = V(f(X,Y))$. We have the following expression of the coefficients of $f(X,Y)$. Moreover, the coefficients of the polynomial $f(X,Y)$ are recursively computable solely from the coefficients of $V(Z)$ and $N(X,Y)$.
\begin{equation}
     \forall  s \in \mathcal{S} : \ \  f_{ s} = \left( \! v_{\ell_{min}}^{-1} \! \left(n_{ s + \ell_{\min}} - \!\!\!\!\!\!\sum_{\substack{\ell \in \llbracket {\ell_{min}} +1,t \rrbracket\\  s + {\ell_{min}}- \ell \in \mathcal{S}}}\!\!\!\!\!\! v_\ell f^{q^\ell}_{ s + \ell_{min}- \ell}\right)\!\!\right)^{\frac 1{q^{\ell_{min}}}}.
    \label{eq:CoefsfWithCoefsVN}
\end{equation}
\end{Lemma}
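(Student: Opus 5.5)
The plan is to read off the coefficient of $X^{q^{s_1+\ell_{min}}}Y^{q^{s_2+\ell_{min}}}$ in $N = V\circ f$ using Lemma~\ref{lemma:CoefCompositionVof}, isolate the term coming from $\ell=\ell_{min}$, and invert it. The recursion will then be justified by an ordering on $\mathcal{S}$ under which every other term involves only coefficients already computed.

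\emph{Step 1.} Apply Lemma~\ref{lemma:CoefCompositionVof} with $\mathcal{S}_V = \{\ell : v_\ell\neq 0\}\subseteq \llbracket \ell_{min},t\rrbracket$ and $\mathcal{S}_f=\mathcal{S}$. Here $f$ is taken as the representative with support in $\llbracket 0,n-1\rrbracket^2$, and $N$ is computed as a genuine polynomial, not reduced modulo $U_{X,Y}$. For $s\in\mathcal{S}$, put $r=s+\ell_{min}\in \mathcal{S}+\llbracket 0,t\rrbracket$. Then
\[
n_{s+\ell_{min}}=\sum_{\substack{\ell\in\llbracket \ell_{min},t\rrbracket\\ s+\ell_{min}-\ell\in\mathcal{S}}} v_\ell f_{s+\ell_{min}-\ell}^{q^\ell},
\]
where terms with $v_\ell=0$ contribute nothing. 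The summand with $\ell=\ell_{min}$ is $v_{\ell_{min}}f_s^{q^{\ell_{min}}}$, since $s\in\mathcal{S}$.

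\emph{Step 2.} Move the remaining terms to the other side and multiply by $v_{\ell_{min}}^{-1}$, which is nonzero by the definition of $\ell_{min}$. Then take the $q^{\ell_{min}}$-th root. This root is well defined because Frobenius is an automorphism of $\F_{q^n}$, which is exactly the notation $u^{1/q^\ell}$ fixed above. The result is \eqref{eq:CoefsfWithCoefsVN}.

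\emph{Step 3 (recursion).} Every other coefficient appearing in \eqref{eq:CoefsfWithCoefsVN} is $f_{s'}$ with $s' = s-(\ell-\ell_{min})$ for some $\ell>\ell_{min}$, so $s'$ is strictly smaller than $s$ in both coordinates. Order $\mathcal{S}$ by $s_1+s_2$, or equivalently by $s_1$ along each diagonal $s_2-s_1=\delta$. The formula then expresses $f_s$ in terms of $v$'s, $n$'s, and $f_{s'}$ with $s'$ earlier in this order. Base case: when $s$ is minimal on its diagonal within $\mathcal{S}$, the sum is empty and $f_s=(v_{\ell_{min}}^{-1}n_{s+\ell_{min}})^{1/q^{\ell_{min}}}$. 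Induction along this well-founded order shows all $f_s$ are computable from the coefficients of $V$ and $N$ alone. This is essentially Lemma~\ref{lemma:diagonaldecompositionbilinearpolys} applied diagonal by diagonal.

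The step needing most care is bookkeeping rather than depth. One must make sure that the coefficients $n_r$ are those of the unreduced polynomial $V(f(X,Y))$, since exponents may exceed $n-1$ and reduction modulo $U_{X,Y}$ could merge monomials. One must also check that the map $(s,\ell)\mapsto s+\ell$ used in Lemma~\ref{lemma:CoefCompositionVof} correctly lists all contributions, including those $\ell$ with $v_\ell=0$, which are harmless.
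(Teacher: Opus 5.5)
Your proposal is correct and follows essentially the same route as the paper: you read off $n_{s+\ell_{min}}$ via Lemma~\ref{lemma:CoefCompositionVof}, isolate and invert the $\ell=\ell_{min}$ term using $v_{\ell_{min}}\neq 0$ and bijectivity of Frobenius, and then recurse along each diagonal starting from the $s\in\mathcal{S}$ with $s-j\notin\mathcal{S}$ for all $j\geq 1$. Your explicit well-founded ordering and your remark that the $n_r$ are coefficients of the unreduced polynomial $V(f(X,Y))$ make precise what the paper states more briefly.
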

\begin{proof}
With Lemma~\ref{lemma:CoefCompositionVof}, we have the following for each $ s \in \mathcal{S}$.
$$n_{ s + \ell_{min}} =  \sum_{\substack{\ell \in \llbracket \ell_{\min}, t\rrbracket\\   s + \ell_{min}-\ell \in \mathcal{S}}} v_\ell f_{ s + \ell_{min}- \ell}^{q^\ell} = v_{\ell_{min}}f_{ s}^{q^{\ell_{min}}} +   \sum_{\substack{\ell \in \llbracket \ell_{\min} +1, t\rrbracket\\   s + \ell_{min}-\ell \in \mathcal{S}}} v_\ell f_{ s + \ell_{min}- \ell}^{q^\ell}$$
Since $v_{\ell_{min}}$ is invertible by definition of ${\ell_{min}}$, we obtain equation (\ref{eq:CoefsfWithCoefsVN}). Moreover, for each ${s} \in \mathcal{S}$, 
$f_{{s}}$ is a function of the coefficients in $V(Z)$, in $N(X,Y)$ and of $f_{s -1}, f_{{s}-2},\dots$ while $s-j\in \mathcal S$. In addition, the same expression of the coefficients in $N(X,Y)$ shows that $f_s$ such that $s \in \mathcal S$ and $s-j \not\in \mathcal S$ for all $j  \in \mathbb N$, can be computed knowing only $V(Z)$ and $N(X,Y)$, see Algorithm~\ref{alg:divisionalgorithm}.
\end{proof}

\begin{Ex}
    Given $V(Z) = Z^{q} + Z$, $f(X,Y) = aX^qY^q + bX^qY + cXY^q + dXY$, and $\mathcal{S} = \llbracket0,1\rrbracket^2$, we have $N(X,Y) = a^qX^{q^2}Y^{q^2} + b^qX^{q^2}Y^q + c^qX^{q}Y^{q^2} + (a + d^q)X^qY^q +  bX^qY + cXY^q + dXY$. First, we can recover the coefficient of the monomial $XY$ in $f(X,Y)$ since it is a function of the coefficient in $V(X,Y)$ and the coefficient in $N(X,Y)$ in front of $XY$. The minimum $q$-degree of $V$ is zero, and indeed $d$ is found on the last monomial of $XY$. Then, knowing that $f_{(0,0)} = d$, we can recover $f_{(1,1)}$ using $(a + d^q)$.
\end{Ex}

\begin{algorithm}
\caption{Factoring on the left}\label{alg:divisionalgorithm}
\begin{algorithmic}
\Require $n$ an integer, $q$ a prime power, $\mathcal{S}$ a subset of $\llbracket 0,n-1\rrbracket^2$, $V(Z)$ and $N(X,Y)$ $q$-polynomials such that $V(Z) \neq 0$ and $\Supp(N) \subseteq \mathcal{S} + \llbracket 0,\deg_q(V) \rrbracket $.
\State \phantom{x}
\State $M \gets \max(\max \pi_1(\mathcal{S}),\max \pi_2(\mathcal{S}))$
\State $\theta \gets \deg_q(V)$
\State $f \gets 0\in \mathscr{M}_{q,\mathbb{F}_{q^n}}[X,Y]$
\State $\ell_{\min} \gets \min\{\ell \in \llbracket 0,t \rrbracket \ | \ v_\ell \neq 0\}$
\State \phantom{x} \Comment{The function \texttt{Coef}($F,\Mmm$) returns the coefficient in front of the monomial $\Mmm$ in the polynomial $F$.} 

\If{$N = 0$}
    \State \textbf{return} $f$ 
\EndIf
\For{$ \delta \in \llbracket -M,M\rrbracket$}
\For{$\tau$\textbf{ from $0$ to }$M$}
\State $ s \gets ( \delta+\tau,\tau)$
\If{$s \in \mathcal{S}$}
\State $\Sigma \gets \sum_{\substack{\ell \in \llbracket {\ell_{min}} +1,\theta \rrbracket\\  s + {\ell_{min}}- \ell \in \mathcal{S}}} \texttt{Coef}(V,Z^{q^{\ell}}) \times  \left(\texttt{Coef}\left(f,X^{q^{ s_1+ \ell_{min}- \ell}} Y^{q^{ s_2+ \ell_{min}- \ell}} \right)\right)^{q^\ell} $
\State $c \gets\left( (\texttt{Coef}(V,Z^{q^{\ell_{min}}})^{-1}  \times\left(\texttt{Coef}\left(N,X^{q^{ s_1+ \ell_{min}}} Y^{q^{ s_2+ \ell_{min}}} \right) - \Sigma\right) \right)^{1/q^{\ell_{min}}} $
\State $f\gets f +c X^{q^{ s_1}}  Y^{q^{ s_2}}$
\EndIf
\EndFor
\EndFor
\State \textbf{return} $f$
\end{algorithmic}
\end{algorithm}

\section{Tensor-rank combinatorics}
We denote by $\sgbc abq$ the Gaussian binomial coefficient, \emph{i.e.} the number of $a$-dimensional subspaces of $\F_q^b$, see \cite[Ch 24]{Van_Lint_A_Course_In_Combinatorics}. We consider tensors in $\F_q^{m \times n \times k} =  \F_q^n \otimes \F_q^m \otimes \F_q^k$, for integers $k,m,n$. All the other definitions hold. It is known that the number of elements in $\F_q^{m \times n \times k}$ with tensor-rank precisely one is exactly $\frac{(q^m-1)(q^n-1)(q^k-1)}{(q-1)^2}$. We remind the reader that a rank one matrix is uniquely defined by the choice of a one-dimensional column-space and a one dimensional row-space.

\begin{proof}[Proof of Lemma~\ref{lemma:CountingNbTrank2}] \label{proof:CountingNbTrank2}
    By \cite[Prop~14.45]{Burgisser1997ch14}, let $\Gamma \in \mathbb{F}_q^{m \times n\times k}$, then $\trank(\Gamma) = 2$ if and only if $\slicesp_3(\Gamma)$ is contained in a space spanned by two rank one matrices, but is not contained in a space spanned by only one rank one matrix, so if and only if $\slicesp_3(\Gamma) = \Span_{\mathbb{F}_q} (A_1,A_2)$  where $A_1,A_2 \in \mathbb{F}_q^{m\times n}$ of rank one, or $\slicesp_3(\Gamma) = \mathbb{F}_q \cdot B$ where $B \in \mathbb{F}_q^{m \times n}$ is of rank $2$. 
    
    \begin{enumerate}
    	 \item Let $V$ be a $t$ dimensional subspace of $\mathbb{F}_q^{m \times n}$, with $t \leq k$. The number of tensors $\Gamma \in \mathbb{F}_q^{ m \times n \times k}$ that have $V$ as first slice-space is the number of (ordered) generating families of $V$ of length $k$, in other words, the number of rank $t$ matrices in $\mathbb{F}_q^{k \times t}$. This number is given in  \cite[Theorem 25.2]{Van_Lint_A_Course_In_Combinatorics}.
    
        \item There are exactly $\gbc 2mq \gbc 2nq q(q-1)(q^2-1) = \frac{q(q^m-1)(q^{m-1}-1)(q^n-1)(q^{n-1} -1)}{(q-1)(q^2-1)}$ matrices of rank exactly two in $\mathbb{F}_q^{m \times n}$ (see \cite[Theorem 25.2]{Van_Lint_A_Course_In_Combinatorics}). Thus, the number of spaces spanned by such matrices is
        $$\gbc 2mq \gbc 2nq q(q^2-1) =  \frac{q(q^m-1)(q^{m-1}-1)(q^n-1)(q^{n-1} -1)}{(q-1)^2(q^2-1)}. $$

        \item There are exactly $\frac{q^2}{2}\gbc 1mq \gbc 1nq\gbc 1{m-1}q\gbc 1{n-1}q$ subspaces $V$ of $\mathbb{F}_q^{m \times n}$ such that :
    \begin{itemize}[label = \textbullet]
        \item $V$ is generated by two rank one matrices,
        \item the row space of $V$ is two-dimensional,         \item and the column space of $V$ is two-dimensional,
    \end{itemize}
    where we define the row space and the column space of $V$ to be respectively the sum of the row spaces and column spaces of the elements in $V$. Indeed, to enumerate all the possible spaces, it suffices to fix two pairs $(u_1\mathbb{F}_q,u_2\mathbb{F}_q)$ and $(v_1\mathbb{F}_q,v_2\mathbb{F}_q)$ of distinct one dimensional subspaces respectively in $\mathbb{F}_q^m$ and in $\mathbb{F}_q^n$ and define $V = \Span_{\mathbb{F}_q}(u_1^\top v_1,u_2^\top v_2)$. There are $\frac{(q^m - 1)}{(q-1)}\frac{(q^m - q)}{(q-1)}\frac{(q^n - 1)}{(q-1)}\frac{(q^n - q)}{(q-1)} = q^2\gbc 1mq \gbc 1nq\gbc 1{m-1}q\gbc 1{n-1}q$ possibilities to fix these pairs of spaces. Additionally, given such a pair, since any linear combination with two non-zero scalars of two rank one (independent) matrices with different column space and different row space is a rank two matrix,     the only bases of $V$ spanned by two rank one matrices are $(\lambda u_1^\top v_1,\mu u_2^\top v_2)$ and $(\mu u_2^\top v_2,\lambda u_1^\top v_1)$ where $\lambda,\mu \in \mathbb{F}_q^\times$. In other words, there is only one other possibility in the ennumeration above that results in the same space. In the end, the number of such spaces is half the number of possibilities above, and we have the wanted result.

    \item There are exactly $\frac{1}{q+1}\gbc 1mq \gbc 1nq\gbc 1{m-1}q$ subspaces $V$ of $\mathbb{F}_q^{m \times n}$ such that :
    \begin{itemize}[label = \textbullet]
        \item $V$ is generated by two rank one matrices,
        \item the row space of $V$ is one-dimensional,
        \item and the column space of $V$ is two-dimensional.
    \end{itemize}
    Similarly, to enumerate all possible spaces $V$, it suffices to fix $(u_1\mathbb{F}_q,u_2\mathbb{F}_q)$ a pair of subspaces of $\mathbb{F}_q^m$ and $v\mathbb{F}_q$ a subspace of $\mathbb{F}_q^n$. For any such choice, one can construct the space $V = \Span_{\mathbb{F}_q}(u_1^\top v,u_2^\top v)$. There are $\frac{(q^m-1)(q^{m}-q)}{(q-1)^2}\frac{(q^n-1)}{(q-1)}$ choices to fix those spaces. Additionally, given such elements, then every non-zero matrix in $V$ has rank one and every couple of independent matrices in $V$ is basis $V$. Hence, the configurations that results in $V$ are exactly the configurations of the form $(\tilde{u_1}\F_q,\tilde{u_2}\F_q)$ and $v\F_q$ where $\tilde{u_1}\F_q$ and $\tilde{u_2}\F_q$ are two different lines in $\Span(u_1,u_2)$, \emph{i.e.} $V$ is listed exactly $\frac{q^2-1}{q-1}\frac{q^2-q}{q-1} = q(q+1)$ times.     In conclusion, the number of wanted subspaces is indeed
    $$ \frac{(q^m -1)(q^n -1)(q^{m-1}-1)}{(q-1)^3(q+1)} = \frac{1}{q+1}\gbc 1mq \gbc 1nq\gbc 1{m-1}q.$$
    
    \item Symmetrically, there are exactly $\frac{1}{q+1}\gbc 1mq \gbc 1nq\gbc 1{n-1}q$ subspaces $V$ of $\mathbb{F}_q^{m \times n}$ such that :
    \begin{itemize}[label = \textbullet]
        \item $V$ is generated by two rank one matrices,
        \item the row space of $V$ is two-dimensional,
        \item and the column space of $V$ is one-dimensional.
    \end{itemize}
    \end{enumerate}

To conclude, with the point 1. and the point 2., we know that the number of tensors of tensor-rank two with one dimensional first slice-space is the following:
 $$  \frac{q(q^m-1)(q^{m-1}-1)(q^n-1)(q^{n-1} -1)(q^k -1)}{(q-1)^2(q^2-1)} =\gbc 2mq \gbc 2nq \gbc 1kq q(q-1)(q^2-1). $$
 And with 1., 3., 4., and 5., since the number of rank two matrices in $\mathbb{F}_q^{k\times 2}$ is $\gbc{2}{k}{q} q(q-1)(q^2 -1)$, the number of tensors with tensor-rank one and two-dimensional first slice-space is
 $$q(q-1)(q^2-1)\gbc 1mq \gbc 1nq \gbc 2kq \left(\frac{q^2}{2}\gbc 1{m-1}q\gbc 1{n-1}q  + \frac{\gbc 1{m-1}q + \gbc 1{n-1}q}{q+1} \right). $$
 In the end, summing the two expressions above, one can check that we obtain the wanted number.
 \end{proof}

Consequently, for $k = n = m$ we have 
\begin{align*}
    &\frac{q(q^n-1)^3}{(q-1)^3(q^2-1)} \left(\frac{(q^{n-1} - 1)^3q^2(q+1)}{2} + (q-1)\left(\frac{q^{2n } +q^{2n} + q^{2n}}{q^2} -2 \frac{q^{n} + q^{n} +q^{n}}{q} +3 \right)\right)\\
    =&\frac{q(q^n-1)^3}{(q-1)^3(q^2-1)} \left(\frac{(q^{n-1} - 1)^3q^2(q+1)}{2} + (q-1)\left(3q^{2n-2} -6 q^{n-1}  +3 \right)\right) \\
    =&\frac{q(q^n-1)^3}{(q-1)^3(q^2-1)} \left(\frac{(q^{n-1} - 1)^3q^2(q+1)}{2} + 3(q-1)\left(q^{n-1} -1 \right)^2\right) \\
    =&\frac{q(q^n-1)^3(q^{n-1}-1)^2}{(q-1)^3(q^2-1)} \left(\frac{(q^{n-1} - 1)q^2(q+1)}{2} + 3(q-1)\right). \\
\end{align*}

\begin{proof}[Proof of Lemma~\ref{lemma:UBontensorsofrankatmost}]\label{proof:UBontensorofrankatmost}
    Consider $\Gamma \in \F_{q}^{n\times n\times n}$ such that $\trank(\Gamma) = r \leq R$. Then there exists $\Gamma_1,\dots,\Gamma_r \in \F_q^{n\times n\times n}$ with $\trank(\Gamma_t) = 1$ for each $t \in \llbracket 1,r\rrbracket$ and such that $\Gamma = \sum_{t = 1}^r \Gamma_t$. 

    First assume that $R-r$ is even. Let $\Gamma_0 \in \F_q^{n\times n\times n}$ be such that $\trank(\Gamma_0) = 1$. Then $\Gamma$ can be written as a sum of exactly $R$ elements of tensor-rank one by $\Gamma = \sum_{t = 1}^{r} \Gamma_t + \sum_{t = r+1}^R (-1)^t \Gamma_0$. Now assume that $R -r$ is odd.    
    \begin{itemize}
        \item Assume that $R-r \geq 3$. Notice that the zero tensor can be written as a sum of $3$ elements of tensor-rank one $0 = \Gamma^{(0)}_0 + \Gamma^{(1)}_0 + \Gamma^{(2)}_0$ defined for all $i,j,k \in \llbracket 1,n\rrbracket$ by 
         $$ \begin{array}{l}
              \Gamma^{(0)}_0[i,j,k] = \left\{\begin{array}{rl}
             1 &\text{if } i \in \{1,2\}, j=k = 1 \\
             0 &\text{otherwise;}
        \end{array}\right.  \\[0.5cm]
               \Gamma^{(1)}_0[i,j,k] = \left\{\begin{array}{rl}
             -1 &\text{if } i =j = k = 1 \\
            0 &\text{otherwise;}
        \end{array}\right.\\[0.5cm]
        \Gamma^{(2)}_0[i,j,k] = \left\{\begin{array}{rl}
             -1 &\text{if } i=2, j = k = 1 \\
            0 &\text{otherwise.}
        \end{array}\right.
         \end{array} $$
         Therefore, we have $ \Gamma = \Gamma_0^{(0)} + \Gamma_0^{(1)} + \Gamma_1^{(2)} + \sum_{t = 1}^{r} \Gamma_t + \sum_{t = r+1}^{R-3} (-1)^t \Gamma_0$ hence $\Gamma$ has an expression as a sum of exactly $3+r+(R-3)-(r+1)+1 = R$ tensors of tensor-rank one.
        \item Assume that $R-r = 1$ and that there exist $\tau \in \llbracket 1,r\rrbracket$ and $(i_0,j_0,k_0),(i_1,j_1,k_1)\in \llbracket 1,n\rrbracket^3$ with $(i_1,j_1,k_1)\neq (i_0,j_0,k_0)$ and $\Gamma_\tau[i_0,j_0,k_0] \neq 0$ and $\Gamma_\tau[i_1,j_1,k_1] \neq 0$.         Without loss of generality, assume that $i_0 \neq i_1$. Consider $\Gamma^{(0)}_\tau,\Gamma_{\tau}^{(1)} \in (\mathbb{F}_q^n)^{\otimes 3}$ defined by the following relations.
        $$\forall i,j,k \in \llbracket 1,n\rrbracket: \Gamma^{(0)}_\tau[i,j,k] = \left\{\begin{array}{rcl}
             \Gamma_\tau[i,j,k] &\text{if}& i = i_0 \\
             0 &\text{if}& i \neq i_0 \\
        \end{array}\right. \quad and \quad \Gamma^{(1)}_\tau[i,j,k] = \left\{\begin{array}{rcl}
             0 &\text{if}& i = i_0 \\
             \Gamma_\tau[i,j,k] &\text{if}& i \neq i_0 \\
        \end{array}\right.$$
                Clearly $\Gamma^{(0)}_\tau + \Gamma_{\tau}^{(1)} = \Gamma_\tau$, and notice that $\trank(\Gamma^{(0)}_\tau) = \trank(\Gamma_{\tau}^{(1)})= 1$. Therefore, $\Gamma = \Gamma^{(0)}_\tau + \Gamma_{\tau}^{(1)} + \sum_{t \in \llbracket 1,R-1\rrbracket \backslash\{\tau\}} \Gamma_t  $ can be written as a sum of $R$ elements of tensor-rank one.

        \item Finally, assume that $R-r = 1$ and that each tensor $\Gamma_t$, where $t \in \llbracket 1,R-1\rrbracket$, has a single non-zero entry. Denote by $a = \Gamma_1[i_1,j_1,k_1] \neq 0$ the single non-zero entries of $\Gamma_1$ with a certain $(i_1,j_1,k_1)\in \llbracket 1,n\rrbracket^3$. We can define $\Tilde{\Gamma_{1}},\Tilde{\Gamma_{1.5}}  \in (\mathbb{F}_q^n)^{\otimes 3}$ such that for each $i,j,k \in \llbracket 1,n\rrbracket$ we have          $$ \begin{array}{l}
              \Tilde{\Gamma_{1}}[i,j,k] = \left\{\begin{array}{rl}
             a &\text{if } (i,j,k) = (i_1,j_1,k_1) \\
             1 &\text{if } (i,j,k) = (i_2,j_1,k_1) \\
             0 &\text{otherwise;}
        \end{array}\right.  \\[0.5cm]
            \Tilde{\Gamma_{1.5}}[i,j,k] = \left\{\begin{array}{rl}
             -1 &\text{if } (i,j,k) = (i_2,j_1,k_1) \\
             0 &\text{otherwise;}
        \end{array}\right.
         \end{array} $$        
    with any $i_2\neq i_1$. Therefore, $\Gamma = \Tilde{\Gamma_{1}} + \Tilde{\Gamma_{1.5}} + \sum_{t = 2}^{R-1}\Gamma_t$ is an expression of $\Gamma$ as a sum of $R$ elements of tensor-rank one.    \end{itemize}
\end{proof}

\end{document}